\documentclass[10pt,reqno]{amsart}
\RequirePackage[OT1]{fontenc}
\RequirePackage{amsthm,amsmath}
\RequirePackage[colorlinks,linkcolor=blue,citecolor=blue,urlcolor=blue]{hyperref}
\usepackage{amssymb}
\usepackage{anysize}
\usepackage{hyperref}
\usepackage{extarrows}
\usepackage{multirow}
\usepackage{stmaryrd}
\usepackage{xcolor}
\usepackage{amsmath}
\usepackage{enumitem}
\usepackage{mathtools} 
\usepackage{dsfont} 
\usepackage{comment}
\usepackage{soul}
\usepackage{braket}
\usepackage{bm}

\usepackage{tikz}
\usetikzlibrary{decorations.pathreplacing,decorations.markings}
\usepackage{pgfplots,pgfmath}

\DeclareFontFamily{U}{mathx}{}
\DeclareFontShape{U}{mathx}{m}{n}{<-> mathx10}{}
\DeclareSymbolFont{mathx}{U}{mathx}{m}{n}
\DeclareMathAccent{\widecheck}{0}{mathx}{"71}

\numberwithin{equation}{section}
\theoremstyle{plain} 
\newtheorem{theorem}{Theorem}[section]
\newtheorem{lemma}[theorem]{Lemma}
\newtheorem{corollary}[theorem]{Corollary}
\newtheorem{proposition}[theorem]{Proposition}

\newtheorem{remark}[theorem]{Remark} 

\theoremstyle{definition}
\newtheorem{definition}[theorem]{Definition}
\newtheorem{assumption}[theorem]{Assumption}

\usepackage{caption} 
\renewcommand{\Re}{\mathrm{Re}\,}
\renewcommand{\Im}{\mathrm{Im}\,}

\newcommand{\E}{{\mathbb E }}

\newcommand{\R}{{\mathbb R }}
\newcommand{\N}{{\mathbb N}}

\newcommand{\HH}{{\mathbb H}}
\renewcommand{\P}{{\mathbb P}}
\newcommand{\C}{{\mathbb C}}

\newcommand{\s}{{\mathfrak{s}}}
\newcommand{\BM}{\mathfrak{B}}
\newcommand{\dift}{ {\text{\scriptsize $\Delta$}} t}

\newcommand{\ii}{\mathrm{i}}
\newcommand{\ee}{\mathrm{e}}

\newcommand{\dd}{\mathrm{d}}

\newcommand{\sgn}{\mathrm{sgn}}
\newcommand{\sfin}{s_{\mathrm{final}}}
\newcommand{\tfin}{t_{\mathrm{final}}}
\newcommand{\tinit}{t_{\mathrm{init}}}

\newcommand{\vertiii}[1]{{\left\vert\kern-0.3ex\left\vert\kern-0.3ex\left\vert #1 
		\right\vert\kern-0.3ex\right\vert\kern-0.3ex\right\vert}}

\newcommand{\nc}{\normalcolor}

\newcommand{\bs}{\boldsymbol}

\def\Tr{\mathrm{Tr}}

\def\<{\langle}
\def\>{\rangle}

\renewcommand{\mathbf}[1]{\bs{#1}}
 
\marginsize{25mm}{25mm}{25mm}{26mm}

\usepackage[style=numeric,backend=bibtex,natbib=true,maxnames=5,giveninits=true]{biblatex}

\DeclareFieldFormat*{title}{#1}
\AtEveryBibitem{\clearfield{month}}
\AtEveryBibitem{\clearfield{number}}
\AtEveryBibitem{\clearfield{publisher}}
\AtEveryBibitem{\clearfield{url}}
\DeclareFieldFormat*{eprint}{
	\hfil\penalty90\hfilneg\space \ifhyperref{
		\href{https://arxiv.org/abs/#1}{arXiv\addcolon#1} 
	}{
		arXiv\addcolon\nolinkurl{#1}
	} 
}
\AtEveryBibitem{\clearfield{doi}}
\AtEveryBibitem{\clearfield{issn}}
\allowdisplaybreaks

\title[On a Rosenzweig-Porter-type model]{On a Rosenzweig-Porter-type model}
 \author[Cipolloni \and Erd\H{o}s \and Henheik]{}
 \date{\today}
\begin{document}
	\maketitle
	
	\vspace{0.25cm}

\renewcommand{\thefootnote}{\fnsymbol{footnote}}

\noindent
\mbox{}%
\hfill%
\begin{minipage}{0.21\textwidth}
	\centering
	{Giorgio Cipolloni}\footnotemark[1]\\
	\footnotesize{\textit{cipolloni@axp.mat.uniroma2.it}}
\end{minipage}
\hfill%
\begin{minipage}{0.21\textwidth}
	\centering
	{L\'aszl\'o Erd\H{o}s}\footnotemark[2]\\
	\footnotesize{\textit{lerdos@ist.ac.at}}
\end{minipage}
\hfill%
\begin{minipage}{0.21\textwidth}
	\centering
	{Joscha Henheik}\footnotemark[3]\\
	\footnotesize{\textit{joscha.henheik@unige.ch}}
\end{minipage}
\hfill%
\mbox{}%
\footnotetext[1]{Dipartimento di Matematica, Universitá  di Roma Tor Vergata, Via della Ricerca Scientifica, 1, 00133 Roma RM, Italy.}
\footnotetext[2]{Institute of Science and Technology Austria, Am Campus 1, 3400 Klosterneuburg, Austria.}
		\footnotetext[3]{Section de Mathématiques, Université de Genève, Rue du Conseil Général 7--9, 1205 Genève, Switzerland.}

\renewcommand*{\thefootnote}{\arabic{footnote}}
\vspace{0.25cm}

 \begin{abstract}  We consider a very general Rosenzweig-Porter-type model, $H=H_0+\lambda W$,
 where $H_0$ is an arbitrary Hermitian matrix and $W$ is a standard Wigner matrix. 
 We precisely trace the localization properties of the eigenvectors 
 and the eigenstate thermalisation hypothesis (ETH) as  the coupling
 constant $\lambda$ interpolates between the trivial $\lambda=0$ case and the fully mean field regime
 of large $\lambda$. Our results hold uniformly in $H_0$ and $\lambda$, substantially generalising all previous local laws
on deformed Wigner matrices even in the mean field regime. Our proof precisely captures the deterministic approximation to the resolvent which exhibits a strongly inhomogeneous structure. As a byproduct, we conclude the emergence of a mobility edge and study the phenomenon of re-entrant localization.
\end{abstract}
\medskip
{\bf Key words:} Rosenzweig-Porter model, eigenvector behavior, zigzag strategy, characteristic flow, Green function comparison, Eigenstate Thermalization Hypothesis.

{\bf 2020 Mathematics Subject Classification}: 60B20,  82C10.

\section{Introduction} \label{sec:intro}

Inspired by E. Wigner's revolutionary vision from the 1950’s \cite{wigner1993characteristic}, random matrices have been extensively used
to model complex quantum systems. Many interesting phenomena are universal, i.e.~the exact details of the system are irrelevant, hence they can be accurately modeled 
by relatively simple random Hamiltonians as long as some basic properties, such as symmetries,
are respected. Many mathematical works in random matrix theory (RMT) in the last decades
focused on the simplest class of mean field models, the original $N\times N$ Wigner matrix $H=H^*$
with its independent, identically distributed (\emph{i.i.d.}) matrix elements being the primary example.
The Wigner matrix  is completely mean-field in the sense that all sites in the underlying configuration space of
$N$ points are connected to any other sites  with statistically identical strength.
 The most important results in RMT, such as concentration of resolvents (\emph{local laws}) \cite{erdHos2009local}, 
 eigenvalue rigidity \cite{erdHos2009local, erdHos2012rigidity}, Wigner-Dyson-Mehta universality of local eigenvalue statistics \cite{erdHos2011universality, tao2011random}, eigenfunction delocalization \cite{erdHos2009local, benigni2022optimal}, and eigenstate thermalisation \cite{cipolloni2021eigenstate} (also called quantum unique ergodicity),
  have all been proven first for the Wigner ensemble. Later, they have been systematically
  extended to a broader class of models by relaxing the i.i.d.~assumption or by allowing some non-trivial
  spatial structure in the underlying configuration space. In particular, for such models
  the density of states can substantially deviate from the original Wigner semicircle law and 
  can exhibit rich patterns such as spectral bands, internal spectral edges and cusps.
  However,   most of these extensions 
  still have a mean-field character, assuming, essentially, that the expectation of $H_{ij}$ 
  is bounded and its variances are comparable for any index pair $(i,j)$.

In the current paper we 
study a celebrated  non-mean-field random matrix model, originally 
initiated by Rosenzweig and Porter \cite{rosenzweig1960repulsion}. Consider a deterministic $N\times N$ Hermitian matrix $H_0$
(\emph{unperturbed Hamiltonian}) and add to it a random mean-field perturbation, i.e.~consider
\begin{equation}
\label{H}
H\equiv H_\lambda : = H_0 + \lambda W
\end{equation}
where $W$ is a standard Wigner matrix\footnote{A standard Wigner matrix has centred entries with $\E |W_{ij}|^2 =1/N$ ensuring that $\| W \| = 2+ o(1)$ with 
very high probability.} and $\lambda>0$ is a (typically small) coupling parameter. 
In the original RP model, $H_0$ was diagonal and $W$ was a Gaussian Wigner matrix (GUE or GOE)\footnote{Note that if $W$ is GUE/GOE,
then $H_0$ can always be diagonalized without changing the distribution of $W$. \label{diag}}. We will consider general $H_0$ and general distribution for the matrix elements of  $W$.
The nontrivial spatial structure is encoded in $H_0$ and it can practically be anything,
so a priori no general statement can be expected for $\lambda=0$.  It turns out that the
added  mean field noise, even with small $\lambda$,  allows us to draw less model-dependent conclusions;
the noise makes the model more universal.

Random matrices of the form~\eqref{H}, under the name of \emph{deformed Wigner matrices}, 
 have been extensively studied in the mean field regime \cite{knowles2014outliers, lee2015edge, lee2016bulk, cipolloni2023gaussian}
basically assuming that $\|H_0\|\lesssim \lambda$. With sufficient amount of randomness at hand in this case, $H$ 
fits the class of mean field RMT models and similar  results
as for the pure Wigner case hold naturally. In particular, local laws, strong concentration of the 
resolvent $G(z)=(H-z)^{-1}$, $z\in \C\setminus \R$,  around its deterministic approximation $M=M_\lambda(z)$
have been proven with optimal errors. In our case $M$ is given as the solution of a simple implicit matrix equation, the \emph{Matrix Dyson Equation} (MDE), see also \eqref{eq:MDE} below:
$$
    M = \frac{1}{H_0-z-\lambda^2\langle M\rangle}, \qquad \langle M\rangle: = \frac{1}{N}\Tr M.
$$

 A key assumption in these studies is that $\| M_\lambda(z)\|$ is bounded uniformly in $z$, 
even if $\eta=|\Im z|$ is very small, which usually requires  some important but restrictive technical conditions
on the regularity of the spectrum of $H_0$. Note that such strong boundedness cannot hold for $\lambda=0$:
if $\Re z$ equals an eigenvalue $\mu_i$ of $H_0$, then the norm of $G(z)=M(z)$  blows up as $1/\eta$.
Moreover, in the eigenbasis of $H_0$, for small $\eta$, only a few diagonal matrix elements 
$$
   \Im M(z)_{ii} = \frac{\eta}{(\mu_i-\Re z)^2+\eta^2}
$$
are large (where $\mu_i\approx\Re z$), while the rest are much smaller.
This \emph{strongly inhomogeneous} structure
is in contrast with 
mean field situations where typically $\Im M(z)_{ii}$ for different indices $i$ are the same
(Wigner case) or at least comparable.
 As the randomness is turned on, $\lambda>0$, this inhomogeneity 
mildens; the typical energy  scale of the Cauchy-like profile of $i\mapsto  \Im M(z)_{ii}$ broadens
from $\eta$ to  $\nu: =\eta+ \lambda^2 \rho$, where $\rho:= \pi^{-1}\langle \Im M\rangle$ is the local eigenvalue
density,  see Fig.~\ref{fig:profile}. This profile was first identified by Benigni in \cite{benigni2020eigenvectors} for diagonal $H_0$.

 \begin{figure}[h]
\includegraphics{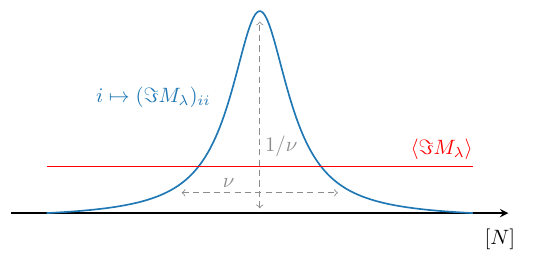}
\caption{ Typical shape of the profile $i \mapsto (\Im M_\lambda)_{ii}$ for a diagonal deterministic $H_0$, using the notation $\nu := \eta + \lambda^2 \rho$. In particular, $(\Im M_\lambda)_{ii}$ is very different from the average $\langle \Im M_\lambda \rangle$ as customary in mean-field models. }
\label{fig:profile}
 \end{figure}
Since understanding the  resolvent $G$  and its deterministic approximation $M$
 is a starting point of most current techniques in RMT,
the structure of $M$  has a profound effect on
the entire theory. The inhomogeneity discussed above is its most important 
feature; the uniform boundedness of $\|M \|$, a key previous assumption, is wrong
and it needs to be replaced by a more subtle analysis. In particular, the norm 
in which $M$ or $\Im M$ are estimated, matters a lot.
We call this regime
\emph{strongly inhomogeneous}.

Note that although  strong inhomogeneity naturally occurs for small randomness, $\lambda\ll \| H_0\|$, 
it may also be present in the mean field models where $\| H_0\|\lesssim \lambda$
if not enough spectral regularity on $H_0$ is assumed. Our current study covers this regime as well
since we make no assumption on $H_0$. In particular, we prove local laws
without an effective \emph{shape analysis}
that classified the possible singular behavior of the density\footnote{Besides the bulk of the spectrum, 
where the density is away from zero, 
there may be only two possible singularities of $\rho$: square root behaviour
at \emph{regular edges} and a cubic root profile at \emph{cusps}. } in the mean field models, see~\cite{shape}.

We highlight two main results: the single resolvent  local law, Theorem~\ref{theo:main1}, 
both in averaged and isotropic sense,
and its various natural consequences on eigenvector (de)localization,
and the eigenstate thermalisation (ETH), Theorem~\ref{theo:ETH},
which is based on a specific two-resolvent averaged local law with a \emph{regularized observable}. 
General multi-resolvent local laws for products of any number of resolvents, as well
as their isotropic versions could also be
proven without any conceptual novelty but lacking good applications we do not pursue them here.
Once the local laws for the strongly inhomogeneous 
models are understood, many other results follow with more or less standard and well-known 
procedures. We mention some of them in Section~\ref{sec:conseq}, focusing
on the physically more interesting ones such as the mobility edge and re-entrant localization \cite{ghosh2025reentrant}. We also compare our theorems
with previous results on $H_\lambda$. The closest and most relevant to us is the work of Benigni \cite{benigni2020eigenvectors} (see also \cite{bourgade2017eigenvectorsparse}), which partially motivated some questions in this manuscript. We also mention an earlier work from von Soosten and Warzel  \cite{von2019non} that studies closely similar questions. In these works
 a non-trivial profile of the eigenvectors, analogous to Fig.~\ref{fig:profile}, was
identified. Our results are much  more general in several  key aspects, most importantly
they hold for any $H_0$  while \cite{benigni2020eigenvectors, bourgade2017eigenvectorsparse, von2019non} assumed regular spectrum on $H_0$.
For our local laws we do not 
assume\footnote{Apart from a completely irrelevant technical one, see~\eqref{eq:kappa} later.} any  lower bound 
on $\lambda$, 
while the analogous
lower bound $\lambda\ge \lambda(H_0)$ in \cite{benigni2020eigenvectors, bourgade2017eigenvectorsparse, von2019non}  was $H_0$-dependent.
We also cover the entire spectrum, especially the more delicate singular regimes of the density,  while \cite{benigni2020eigenvectors, bourgade2017eigenvectorsparse, von2019non} focused on the  bulk. 
We also mention a related work \cite{von2018phase} where the very small $\lambda$ regime
was treated perturbatively to (de)localization phase transition.

For the proofs, we use the \emph{zigzag strategy}; an inductive application of a tandem of two flows on the space of matrices
 that allow us to successively  reduce $\eta=\Im z$, 
from a global scale to the optimal local one. The first flow, the \emph{zig-step}, consists
of dynamically adding a Gaussian component to $H$ by running an Ornstein-Uhlenbeck flow $H_t$.
It is complemented by a flow of the spectral parameter $z_t$ along 
a first order ODE, called the \emph{characteristic equation} in such a way that 
the time derivative of $G_t(z_t) =(H_t-z_t)^{-1}$ has a major cancellation, observed first by Pastur \cite{pastur1972spectrum} in 1972 and revived by von Soosten and Warzel in \cite{von2019non, von2019random}. In the second step, the \emph{zag-step},
this Gaussian component is removed by another flow using \emph{Green function comparison} (GFT) techniques. 
While various versions of both steps appeared in the literature long time ago (see, e.g., \cite{adhikari2020dyson, bourgade2021extreme, huang2019rigidity, lee2015edge} for the zig step
and \cite{erdHos2012bulk, knowles2013eigenvector, tao2011random, erdHos2017dynamical} for the original GFT arguments),
their successive tandem has been used to prove a local law only recently in \cite{cipolloni2024mesoscopic} (the 
name \emph{zigzag} first appeared in \cite{cipolloni2024out}). 
In the last years, the zigzag strategy superseded previous methods to prove local laws in mean field models, 
see, e.g., \cite{adhikari2023local, bourgade2022liouville, cipolloni2023eigenstate, landon2024single, landon2022almost, stone2025random, erdHos2025cusp, cipolloni2025maximum, osman2025bulk}.  Its pivotal mechanism, the cancellation in the zig step, is especially 
powerful in the spectral regimes with small density  (edges and cusps) where the earlier
static methods ran into an inherent instability problem. 
We mention that these  dynamical methods have also been essential in the recent spectacular progress in 
the mathematical theory of \emph{random band matrices} (RBM)  by proving most of the expected  analogues of earlier mean field RMT results \cite{dubova2024quantum, dubova2025delocalization2, dubova2025delocalization3, erdHos2025zigzag} (see also \cite{chen2022random, cipolloni2024dynamical, drogin2025localization, fyodorov1991scaling, erdHos2011quantum, erdHos2013delocalization, liu2023edge, peled2019wegner,schenker2009eigenvector, shcherbina2021universality, MR3623245, shcherbina2018universality, shcherbina2016transfer, sodin2010spectral, yau2025delocalization} for previous and related results in RBM, a prominent random matrix model which is not mean-field).

The main methodological message of this paper is that the zigzag strategy is very effective
in the current strongly inhomogeneous setting as well. The inhomogeneous profile of $M$
would introduce similar instabilities in the conventional static approach as the small density.
In fact, here the situation is more complex since we even do not have an effective shape analysis.

The second novelty is a streamlined version of the zigzag strategy that fully circumvents \emph{isotropic}
local laws and operates exclusively with averaged (tracial) quantities. In previous zigzag procedures for multi-resolvent local laws (see, e.g., \cite{cipolloni2023eigenstate, cipolloni2024eigenvector, nonHermdecay}),
the zag step, involving a higher order cumulant expansion, naturally required controlling
matrix elements of resolvents chain even if the final quantity of interest was tracial.
This meant that a separate zigzag analysis was needed first for matrix elements 
(often called \emph{isotropic quantities} in this context) before the zigzag for averaged quantities.
While the key mechanisms for these two zigzag procedures are very similar, the actual
estimates for the isotropic part would be longer and more complicated.  We found a way to 
estimate all isotropic quantities in higher cumulant expansions by a single tracial quantity.

\subsection*{Notations} 
For $k \in \N$ we set $[k] := \{1, ... , k\}$, and $\langle A \rangle := d^{-1} \mathrm{Tr}A$, $d \in \N$,
for the normalised trace of a $d \times d$-matrix $A$.
For positive quantities $f, g$ we write $f \lesssim g$ resp.~$f \gtrsim g$ and mean that $f \le C g$ resp.~$f \ge c g$ for some $N$-independent constants $c, C > 0$ that may depend only on the basic control parameters $C_p$, see
\eqref{cp}   in Assumption~\ref{ass:entries} below and on $\kappa_0$ in~\eqref{eq:kappa}. 
Moreover, we will also write $f \sim g$ in case that $f \lesssim g$ and $g \lesssim f$.

We denote vectors by bold-faced lower case Roman letters $\boldsymbol{x}, \boldsymbol{y} \in \C^{N}$, for some $N \in \N$, and define 
\begin{equation*}
	\langle \boldsymbol{x}, \boldsymbol{y} \rangle := \sum_i \bar{x}_i y_i\,, 
	\qquad A_{\boldsymbol{x} \boldsymbol{y}} := \langle \boldsymbol{x}, A \boldsymbol{y} \rangle\,.
\end{equation*}
Matrix entries are indexed by lower case Roman letters $a, b, c , ... ,i,j,k,... $ from the beginning or the middle of the alphabet and unrestricted sums over those are always understood to be over $\{ 1 , ... , N\}$. 

We will use the concept  \emph{'with very high probability'},  meaning that for any fixed $D > 0$, the probability of an $N$-dependent event is bigger than $1 - N^{-D}$ for all $N \ge N_0(D)$. Also, we will use the convention that $\xi > 0$ denotes an arbitrarily small positive exponent, independent of $N$.
Moreover, we introduce the common notion of \emph{stochastic domination} (see, e.g., \cite{loc_sc_gen}): For two families
\begin{equation*}
	X = \left(X^{(N)}(u) \mid N \in \N, u \in U^{(N)}\right) \quad \text{and} \quad Y = \left(Y^{(N)}(u) \mid N \in \N, u \in U^{(N)}\right)
\end{equation*}
of non-negative random variables indexed by $N$, and possibly an additional parameter $u$
from a parameter space $U^{(N)}$, we say that $X$ is stochastically dominated by $Y$, if for all $\epsilon, D >0$ we have 
\begin{equation*}
	\sup_{u \in U^{(N)}} \P \left[X^{(N)}(u) > N^\epsilon Y^{(N)}(u)\right] \le N^{-D}
\end{equation*}
for large enough $N \ge N_0(\epsilon, D)$. In this case we write $X \prec Y$. If for some complex family of random variables we have $\vert X \vert \prec Y$, we also write $X = O_\prec(Y)$. 

\subsection*{Acknowledgments} G.C.~and L.E.~are grateful for the hospitality of GSSI, where part of this work has been carried out. The research of G. C. is supported by the Italian Ministry of University and Research (MUR) - Fondo Italiano per la Scienza (FIS3) - 2024 Call, project UBLOCO, CUP F53C25000940001, and also partially supported by the MUR Excellence Department Project MatMod@TOV awarded to the Department of Mathematics, University of Rome Tor Vergata, CUP E83C23000330006. Additionally, G.C. thanks INdAM (Istituto Nazionale di Alta Matematica “Francesco Severi”) and the group GNFM. L.E. and J.H. acknowledge partial financial support by the ERC Advanced Grant ``RMTBeyond'' No. 101020331. Additionally, J.H. is partially supported by the ERC Consolidator Grant ``ProbQuant'' (jointly with the Swiss State Secretariat for Education, Research and Innovation).
\section{Main results} \label{sec:mainres}
We consider  deformed Wigner matrices of the form
\begin{equation}
\label{eq:smalldef}
H_\lambda = H_0 + \lambda W \quad \text{for} \quad \lambda>0.
\end{equation}
Here, $H_0 \in \C^{N \times N}$ is an arbitrary self-adjoint deterministic matrix, which we will call \emph{deformation}, and we  make a technical assumption  that
\begin{equation} \label{eq:kappa}
\lambda^{-1} \Vert H_0 \Vert \lesssim N^{\kappa_0} \quad  \text{for some} \quad \kappa_0 \in [ 0, \infty) 
\end{equation}
which we consider fixed throughout the entire paper. When $H_0$ is diagonal,  \eqref{eq:smalldef} is the Rosenzweig-Porter (RP) model \cite{rosenzweig1960repulsion}, which has attracted lots of attention as a simple toy model to study the presence of a non-ergodic delocalized states. The model \eqref{eq:smalldef} can thus be considered as a generalization of the RP model to arbitrary deformations $H_0$. Some interesting physical feature appearing only for non-diagonal $H_0$ are discussed in Section~\ref{subsubsec:reentrloc} below. 
Moreover, $W$  in~\eqref{eq:smalldef} is an $N \times N$ Wigner matrix, i.e.~a self-adjoint random matrix
$W = W^*$, with independent matrix elements $\{ w_{ab}\; : \; a\le b\}$ on and above the diagonal.
The diagonal and off-diagonal single entry distributions are given by 
$w_{aa} \stackrel{\dd}{=} N^{-1/2} \chi_{\mathrm{d}}$ and $w_{ab}  \stackrel{\dd}{=} N^{-1/2} \chi_{\mathrm{od}}$ for $a< b$, where on the random variables $\chi_{\mathrm{d}}, \chi_{\mathrm{od}}$ we impose the following assumption. 
\begin{assumption}[Moment assumption]
	\label{ass:entries}
The off-diagonal distribution $\chi_{\mathrm{od}}$ is a real or complex centered random variable, $\E \chi_{\mathrm{od}} = 0$, with $\E |\chi_{\mathrm{od}}|^2 = 1$, and for $\sigma := \E \chi_{\mathrm{od}}^2$ we have $|\sigma|< 1$. The diagonal distribution  $\chi_{\mathrm{d}}$ is a real centered random variable, $\E\chi_{\mathrm{d}}=0$.  Furthermore, we assume the existence of high moments, i.e.~for any $p\in \N$ there exists $C_p$ such that
	\begin{equation}\label{cp}
	\E \big[|\chi_{\mathrm{d}}|^p+|\chi_{\mathrm{od}}|^p\big]\le C_p.
	\end{equation}
\end{assumption}
For $z\in\C\setminus \R$, the resolvent of $H_\lambda$ is denoted by $G_\lambda(z) := (H_\lambda - z)^{-1}$.
For random matrix models with a substantial random component, the resolvent becomes deterministic  as $N$ increases 
and its approximation is obtained as the unique  solution of the \emph{Matrix Dyson equation (MDE)} \cite{stab_MDE, MR2376207}.
For deformed Wigner matrices, the MDE is particularly simple. 
Concretely, let $M_\lambda = M_\lambda(z)$ for $z \in \C\setminus \R$ be the unique solution to the MDE 
\begin{equation} \label{eq:MDE}
- \frac{1}{M_\lambda(z)} = z - H_0 + \lambda^2 \langle M_\lambda(z) \rangle \quad \text{under the condition} \quad \Im M_\lambda(z) \, \Im z > 0 \,. 
 \end{equation}
 Finally, we denote $\rho_\lambda(z) := \pi^{-1} |\langle \Im M_\lambda (z) \rangle|$ and $\eta := |\Im z|$. 
 
Starting from \eqref{eq:MDE}, by simple standard computations, which we skip, we obtain the following bounds on $M_\lambda$. 
 \begin{lemma}[$M$-bounds] \label{lem:Mbounds}
The unique solution to \eqref{eq:MDE} admits the following bounds:
\begin{equation} 
\label{eq:boundM}
\langle |M_\lambda(z)|^2 \rangle \lesssim \frac{\rho_\lambda(z)}{\eta+\lambda^{2} \rho_\lambda(z)} \qquad \text{and} \qquad \Vert M_\lambda(z) \Vert \lesssim \frac{1}{\eta+\lambda^{2} \rho_\lambda(z)}.
\end{equation}
 \end{lemma}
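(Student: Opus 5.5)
Since $M_\lambda$ is a function of the Hermitian matrix $H_0$, we work in the eigenbasis of $H_0$. Write $H_0=\sum_i \mu_i \boldsymbol{u}_i\boldsymbol{u}_i^*$ and set $w := z+\lambda^2\langle M_\lambda(z)\rangle$. Then \eqref{eq:MDE} reads $M_\lambda = (H_0-w)^{-1}$, which is a normal matrix with eigenvalues $(\mu_i-w)^{-1}$. We may take $\Im z>0$; the case $\Im z<0$ follows from $M_\lambda(\bar z)=M_\lambda(z)^*$. The side condition $\Im M_\lambda \Im z>0$ gives $\Im\langle M_\lambda\rangle = \pi\rho_\lambda>0$. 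Hence
\[
\Im w = \eta + \lambda^2 \pi \rho_\lambda(z) \sim \eta+\lambda^2\rho_\lambda(z).
\]

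\emph{Norm bound.} Every eigenvalue of $M_\lambda$ satisfies $|(\mu_i-w)^{-1}|\le 1/|\Im w|$, because $\mu_i\in\R$. Therefore
\[
\|M_\lambda\|\le \frac{1}{\eta+\pi\lambda^2\rho_\lambda} \lesssim \frac{1}{\eta+\lambda^2\rho_\lambda},
\]
which is the second bound in \eqref{eq:boundM}.

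\emph{Hilbert--Schmidt bound.} Here we use the Ward-type identity for resolvents. Since $M_\lambda=(H_0-w)^{-1}$ with $H_0$ self-adjoint,
\[
\Im M_\lambda = \frac{1}{2\ii}\bigl(M_\lambda-M_\lambda^*\bigr) = (\Im w)\, M_\lambda M_\lambda^* = (\Im w)\,|M_\lambda^*|^2 .
\]
This can be checked directly on the eigenvalues: $\Im\,(\mu_i-w)^{-1} = \Im w/|\mu_i-w|^2$. Taking the normalized trace, and using that $\langle |M^*|^2\rangle=\langle |M|^2\rangle$, gives
\[
\pi\rho_\lambda = \Im w\,\langle |M_\lambda|^2\rangle = (\eta+\pi\lambda^2\rho_\lambda)\,\langle |M_\lambda|^2\rangle .
\]
This is in fact an identity,
\[
\langle |M_\lambda|^2\rangle = \frac{\pi\rho_\lambda}{\eta+\pi\lambda^2\rho_\lambda},
\]
and it implies the first bound in \eqref{eq:boundM} with constants depending only on $\pi$.

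\emph{Main point.} There is no real obstacle. The only points needing care are the existence and uniqueness of the solution, and the sign of $\Im\langle M_\lambda\rangle$. Both are standard: uniqueness and existence of the MDE solution follow from \cite{MR2376207, stab_MDE}, and the sign is exactly the side condition imposed in \eqref{eq:MDE}. Once the equation is rewritten as $M_\lambda=(H_0-w)^{-1}$, the spectral calculus for the normal matrix $M_\lambda$ gives everything.
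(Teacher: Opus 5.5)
Your proof is correct, and it is exactly the standard computation the paper omits: you rewrite the MDE as $M_\lambda=(H_0-w)^{-1}$ with $\Im w=\eta+\pi\lambda^2\rho_\lambda$, then use the norm bound $\|M_\lambda\|\le 1/|\Im w|$ and the Ward identity $\Im M_\lambda=\Im w\, M_\lambda M_\lambda^*$. The paper relies on this same identity later, in the equivalent form $1-\lambda^2\langle MM^*\rangle=\eta/(\eta+\lambda^2|\langle\Im M\rangle|)$, in the proof of Lemma~\ref{lem:invstab}.
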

 In addition, we have the bound
 \begin{equation}
 \label{eq:boundrho}
 \rho_\lambda\lesssim \lambda^{-1},
 \end{equation}
 as a consequence of a simple Schwarz inequality together with the first bound in \eqref{eq:boundM}. 
 These results hold uniformly in $H_0$ and $\lambda$ even though previously they were mostly considered
 in the typical mean field regime, i.e.~when
  $\kappa_0=0$ in~\eqref{eq:kappa}, thus \nc $\| H_0\|\lesssim \lambda$. In a dimensionless formulation,
 after trivially scaling 
 out $\lambda$, one may think about the mean field models as $\| H_0\|\lesssim 1$,   $\lambda\sim 1$. 
  In this case, the density $\rho_\lambda$ is bounded and $\| M_\lambda(z)\|$ is bounded
 in the \emph{bulk of the spectrum}, i.e.~where $\rho_\lambda(\Re z)\ge c>0$. In our strongly
 inhomogeneous regime these conventional bounds are replaced by~\eqref{eq:boundM}--\eqref{eq:boundrho}.  \nc

\subsection{Single resolvent local law} Our first result is a local law for the resolvent of~$H_\lambda$. Notably, we cover the entire strongly inhomogeneous regime as well, i.e.~do not need to assume any lower bound on $\lambda > 0$, apart from the  irrelevant technical \eqref{eq:kappa}.  
However, any  local law can only be effective  in the regime where $\eta$ is larger than the local eigenvalue
spacing, i.e., $N\eta\rho_\lambda(z)\gg 1$, which implicitly carries a constraint on $\lambda$ depending on $\eta$.

\begin{theorem}[Single resolvent local law]
\label{theo:main1}
Fix any small $\epsilon>0$. Let $W$ be a Wigner matrix satisfying Assumption~\ref{ass:entries}, let $H_0=H_0^*\in \C^{N\times N}$ be an arbitrary  Hermitian matrix, let $H_\lambda=H_0+\lambda W$ be as in \eqref{eq:smalldef}, and $M_\lambda$ be defined in \eqref{eq:MDE}. Then, for $\eta:=|\Im z|$, $\rho_\lambda(z):=\pi^{-1}|\langle \Im M_\lambda(z)\rangle|$, and 
\begin{equation} \label{eq:Gammadef}
\Gamma_\lambda(z):=\frac{\Im M_\lambda(z)}{\langle \Im M_\lambda(z)\rangle} \,, 
\end{equation}
 we have the averaged and isotropic local laws:
\begin{align}
 \label{eq:avLL}
\big|\langle \big(G_\lambda(z)-M_\lambda(z)\big)B\rangle\big|&\prec \frac{\Vert B \Vert_{\rm hs}}{N \eta} \, \sqrt{\frac{\langle \Gamma_\lambda(z) B B^* \rangle}{\langle BB^*\rangle}}, \\
 \label{eq:isoLL}
\big|\langle {\bm x},\big(G_\lambda(z)-M_\lambda(z)\big){\bm y}\rangle\big|&\prec \left(\sqrt{\frac{\rho_\lambda(z)}{N\eta}}\lVert {\bm x}\rVert\lVert {\bm y}\rVert\right)\sqrt{\frac{(\Gamma_\lambda(z))_{{\bm x}{\bm x}}(\Gamma_\lambda(z))_{{\bm y}{\bm y}}}{\lVert {\bm x}\rVert^2\lVert {\bm y}\rVert^2}}, 
\end{align}
uniformly in spectral parameters $N\eta\rho_\lambda(z)\ge N^\epsilon$ with $|z|\le \lambda N^{\kappa_0 + 1/\epsilon}$, deterministic matrices $B\in\C^{N\times N}$, and deterministic vectors ${\bm x}, {\bm y}\in\C^{N}$.
\end{theorem}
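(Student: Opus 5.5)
We will prove Theorem~\ref{theo:main1} by the zigzag strategy outlined in the introduction, working almost entirely with averaged quantities. First we reduce the isotropic law \eqref{eq:isoLL} to the averaged one. Choose $B=\bm y\bm x^*$ (rank one) in \eqref{eq:avLL}; then $\Vert B\Vert_{\rm hs}=\Vert\bm x\Vert\Vert\bm y\Vert$ and $\langle \Gamma BB^*\rangle/\langle BB^*\rangle=\Gamma_{\bm y\bm y}/\Vert \bm y\Vert^2$. This only gives a bound of order $1/\eta$, so the reduction has to be done more carefully. We will use the Ward-type identity $G^*G=\Im G/\eta$ together with $\langle \bm x,\Im G\bm x\rangle \lesssim$ (averaged local law applied to $B=\bm x\bm x^*$ plus $\Im M_{\bm x\bm x}$). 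This controls the isotropic fluctuation through a quadratic quantity $\langle \Im G\, A\rangle$ with $A=\bm x\bm x^*$, and $\Im M_{\bm x\bm x}=\langle \Im M\rangle\Gamma_{\bm x\bm x}\sim\rho\,\Gamma_{\bm x\bm x}$. Concretely, the isotropic quantities that appear in the cumulant expansions of the zag step will be estimated by Schwarz inequalities of the form $|G_{\bm x\bm y}|^2\le (\Im G)_{\bm x\bm x}(\Im G)_{\bm y\bm y}/\eta^2$ and the averaged bound. The isotropic law itself then follows from an additional high-moment cumulant expansion of $(G-M)_{\bm x\bm y}$, in which every error term reduces to traces.

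\textbf{Zig step.} Run the Ornstein--Uhlenbeck flow $\dd W_t=-\tfrac12 W_t\dd t+\dd B_t/\sqrt N$, i.e.\ $H_t=H_0+\lambda W_t$ with a Gaussian component, together with the characteristic flow $\partial_t z_t=-\tfrac{\lambda^2}{2}\langle M_t(z_t)\rangle-\tfrac{z_t}{2}$, suitably matched to the MDE \eqref{eq:MDE}. The matching is chosen so that $M_t(z_t)$ stays the solution of the time-$t$ MDE and the linear term in $\dd\langle (G_t-M_t)B\rangle$ cancels. We will use Itô's formula for $\langle (G_t(z_t)-M_t(z_t))B\rangle$; the drift is $\lambda^2\langle G_t-M_t\rangle\langle G_t^2B\rangle$-type terms. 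The stochastic term has quadratic variation $\frac{\lambda^2}{N^2}\langle G^2BB^*G^{*2}\rangle$. Using $\Vert G\Vert\le 1/\eta_t$, Ward identities, and the two-resolvent bound $\langle \Im G\,B\,\Im G\,B^*\rangle\lesssim \rho\langle \Gamma BB^*\rangle/(\eta+\lambda^2\rho)$, obtained from the first bound in \eqref{eq:boundM} applied at the deterministic level, this is bounded by $\Vert B\Vert_{\rm hs}^2\langle\Gamma BB^*\rangle/(\langle BB^*\rangle N^2\eta_t^2)$ times an integrable factor in $t$. The key point is that along characteristics $\eta_t$ decreases at rate $\sim\lambda^2\rho$, so $\int_0^T\dd t/\eta_t^{2}\cdot\lambda^2\rho\lesssim 1/\eta_T$. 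A Gronwall/stopping-time argument then propagates \eqref{eq:avLL} from a global scale $\eta\sim\lambda N^{\kappa_0}$, where it is trivial by resolvent expansion, down to $\eta_T$, with $T$ of order $N^{-\epsilon}$ in the appropriate rescaled time.

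\textbf{Zag step and main obstacle.} The Gaussian component of size $T$ is then removed by a Green function comparison along the reverse OU flow at fixed $z$. We will use a cumulant expansion of $\frac{\dd}{\dd s}\E|\langle (G-M)B\rangle|^{2p}$; Gaussian parts cancel, and third and higher cumulants produce products of entries $G_{ab}$. These are bounded via the isotropic Schwarz reduction above, using the weighted norms $\Gamma_{aa}$, with $\sum_a\Gamma_{aa}=N$. This closes the estimate with the a priori input at the slightly larger $\eta$ and iterates the zig--zag tandem in $N^{-\epsilon}$ steps down to $N\eta\rho\ge N^{\epsilon}$. The main obstacle is the inhomogeneous weight: since $\Vert M\Vert$ can be as large as $1/(\eta+\lambda^2\rho)$ rather than $O(1)$, every step must carry $\Gamma$-weighted norms instead of operator norms. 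The hardest point is verifying that the characteristic flow keeps $\Gamma_t$ comparable along time and that $N\eta_t\rho_t$ is monotone. For this we would first derive, from \eqref{eq:MDE}, stability bounds for $1-\lambda^2\langle M^2\rangle$ of size $\gtrsim \eta/(\eta+\lambda^2\rho)$, and monotonicity of $\eta_t/\rho_t$, uniformly in $H_0$.
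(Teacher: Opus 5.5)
Your outer frame (characteristic flow plus Green function comparison, iterated downward in $\eta$) matches the paper's. However, two load-bearing pieces are missing, and both fail exactly where the model stops being mean-field.

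\textbf{The induction base is not trivial.} You start the zig flow at $\eta\sim\lambda N^{\kappa_0}$, where a Neumann expansion suffices. The characteristic flow cannot carry you from there to the local regime. Along a characteristic $M_t(z_t)=\ee^{t/2}M_0(z_0)$, and integrating $\partial_t\eta_t=-\eta_t/2-\pi\lambda^2\rho_t$ over a run of length $T$ ending at $z$ gives
\[
\eta_0=\ee^{T/2}\eta+\pi\lambda^2\rho(z)\big(\ee^{T/2}-\ee^{-T/2}\big),
\]
while the initial matrix $\ee^{T/2}H_0+\lambda W$ has norm $\asymp \ee^{T/2}\Vert H_0\Vert$. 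So for no $T$ is the starting point in the trivial regime unless $\eta+\lambda^2\rho\gtrsim\Vert H_0\Vert$, which is precisely what is not assumed. The same happens for any normalization of the flow, since the added Gaussian variance is at most $\lambda^2$. The zig step therefore needs as input a genuine global law on $\mathcal{D}^{\rm glob}=\{\eta\gtrsim\lambda^2\rho\}$. There only the stability operator $1-\lambda^2M\langle\cdot\rangle M$ is boundedly invertible, while $\eta$ may be far below $\Vert H_0\Vert$ and $M$ is strongly inhomogeneous. This is Proposition~\ref{prop:global}, proved by a downward bootstrap in $\eta$ using the explicit inverse from Lemma~\ref{lem:invstab}, the renormalization $\underline{\lambda WG}$, and cumulant expansions (isotropic part first). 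Your proposal has nothing in its place.

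\textbf{The isotropic law must come before the averaged one.} The cumulant expansion for $\langle (G-M)B\rangle$ produces products of entries such as $\lambda^3N^{-5/2}\sum_{a,b}(GBG)_{ab}G_{aa}G_{bb}$. It closes only with entrywise bounds $|G_{aa}|\lesssim(|M|)_{aa}$, $(\Im G)_{\bm x\bm x}\lesssim(\Im M)_{\bm x\bm x}$, $\max_b|G_{\bm x b}|\lesssim\lambda^{-1}\sqrt{(\Im M)_{\bm x\bm x}/\rho}$, combined with $\sum_a(|M|)_{aa}^2\le N\langle|M|^2\rangle\lesssim N/\lambda^2$. Your averaged-only substitutes cannot supply these:
\begin{itemize}
\item The inequality $|G_{\bm x\bm y}|^2\le(\Im G)_{\bm x\bm x}(\Im G)_{\bm y\bm y}/\eta^2$ is false; it is not even dimensionally consistent. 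Ward gives $|G_{\bm x\bm y}|^2\le\Vert\bm y\Vert^2(\Im G)_{\bm x\bm x}/\eta$.
\item Ward alone gives $\sum_a|G_{aa}|^2\le N\langle\Im G\rangle/\eta\approx N\rho/\eta$, a loss of $\lambda^2\rho/\eta$. The term above then carries $(\lambda^2\rho/\eta)^{3/2}$ instead of $\lambda\sqrt{\rho/\eta}$. The Gronwall budget cannot absorb this, since $\lambda^2\rho/\eta$ can be as large as $N^{k\delta}$ while the run time is only $s_k\lesssim N^{-(k-1)\delta}$.
\item The averaged law with $B=\bm x\bm x^*$, $\Vert\bm x\Vert=1$, gives at best $|(G-M)_{\bm x\bm x}|\prec\eta^{-1}\sqrt{\Gamma_{\bm x\bm x}/N}$. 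This exceeds $(\Im M)_{\bm x\bm x}=\pi\rho\Gamma_{\bm x\bm x}$ once $\Gamma_{\bm x\bm x}\ll(N\eta^2\rho^2)^{-1}$. That happens for every direction with $\Gamma_{\bm x\bm x}\lesssim1$ as soon as $\eta\ll N^{-1/2}\rho^{-1}$, and earlier where the Cauchy profile makes $\Gamma_{\bm x\bm x}$ small.
\end{itemize}
So the isotropic law must be propagated through every step, and established first. In the paper the zig step runs $\langle G-M\rangle$ and $(G-M)_{\bm x\bm y}$ under one joint stopping time. Each zag step first closes the isotropic Gronwall estimate self-consistently from the a priori bound \eqref{eq:GFTinput} at a scale $\eta_0\le N^\delta\eta_1$ (Lemmas~\ref{lem:monotone} and~\ref{lem:Mmax1}), and only then the averaged one. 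The paper's remark about ``avoiding isotropic quantities'' refers to two-resolvent chains $(GAG)_{ab}$, not to the single-resolvent isotropic law.

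Three further points in the zig step.
\begin{itemize}
\item Your bound $\langle\Im GB\Im GB^*\rangle\lesssim\rho\langle\Gamma BB^*\rangle/(\eta+\lambda^2\rho)$ is false. Already for $H_0=0$, $\lambda=1$, $B=I$ one has $\langle(\Im G)^2\rangle\approx\pi\rho/(2\eta)$; the missing factor is $1/(1-\lambda^2\langle MM^*\rangle)=(\eta+\lambda^2\langle\Im M\rangle)/\eta$. The correct route is $\langle\Im GB\Im GB^*\rangle\le\eta^{-1}\langle\Im GBB^*\rangle$ together with $\langle\Im GBB^*\rangle\lesssim\langle\Im MBB^*\rangle$. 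The latter again needs the isotropic law, via the spectral decomposition of $BB^*$.
\item With $H_0$ frozen, the term $GH_0G$ in the It\^o drift does not cancel. You need $H_{0,t}=\ee^{-t/2}H_0$ and $\partial_tz_t=-z_t/2-\lambda^2\langle M_t(z_t)\rangle$.
\item With that normalization, what you call the hardest point is immediate. $\Gamma_t$ is exactly constant along characteristics, and $\partial_t(\eta_t\rho_t)=-\pi\lambda^2\rho_t^2$.
\end{itemize}
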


The bounds in \eqref{eq:avLL}--\eqref{eq:isoLL} consist of two factors: the first one reflects the
correct size in a mean-field model, while the second one,  depending on the spatial structure of $H_0$, expresses
the nontrivial deviation from the mean-field bound. As we will explain below, in
the previously studied  mean-field models
we have $\Gamma\sim 1$, hence \nc the second factor is $\sim 1$, and so the errors $1/(N\eta)$ and $\sqrt{\rho/(N\eta)}$, for the averaged and isotropic laws, respectively, recover known local laws \cite{slow_corr, alt2020correlated, erdHos2025cusp}.
Additionally, we point out that the \eqref{eq:avLL} can be simplified to
\[
\big|\langle \big(G_\lambda(z)-M_\lambda(z)\big)B\rangle\big|\prec
 \frac{\lVert B\rVert}{N\eta},
\]
by a simple trace inequality.

As a corollary of our single resolvent local law we show a delocalization bound for the eigenvectors of~$H_\lambda$. The standard proof of this corollary is postponed to the beginning of Section~\ref{sec:conseq}.
\begin{corollary}[Delocalization of eigenvectors]
\label{cor:deloc}
Under the assumptions of Theorem~\ref{theo:main1}, let $\lambda_i, {\bm u}_i$ be the eigenvalues and the corresponding orthonormalized eigenvectors of $H_\lambda$.
For any index $i\in [N]$ we have the eigenvector profile bound
\begin{equation}
\label{eq:evdeloc}
\big|\langle {\bm u}_i,{\bm x}\rangle\big|\prec\sqrt{\frac{[\Gamma_\lambda(\lambda_i+\ii \eta_{\mathfrak{f}}(\lambda_i))]_{{\bm x}{\bm x}}}{N}},
\end{equation}
uniformly in deterministic vectors ${\bm x}\in\C^N$, where $\eta_{\mathfrak{f}}(E)$ is the typical eigenvalue spacing and is defined implicitly via\footnote{This relation uniquely defines $\eta_{\mathfrak{f}}(E)$ 
since the map $\eta\mapsto \eta\rho_\lambda(E+\ii\eta)$ is strictly monotone for any fixed $E\in\R$.} $N\eta_{\mathfrak{f}}(E)\rho_\lambda(E+\ii\eta_{\mathfrak{f}}(E))=1$.
\end{corollary}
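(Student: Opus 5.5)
We derive the bound from the isotropic local law \eqref{eq:isoLL} with ${\bm y}={\bm x}$, via the spectral decomposition of $\Im G_\lambda$. Fix $i$ and write $z=E+\ii\eta$ with $\eta>0$. Then
\[
\Im \langle {\bm x}, G_\lambda(z){\bm x}\rangle=\sum_j \frac{\eta\,|\langle {\bm u}_j,{\bm x}\rangle|^2}{(\lambda_j-E)^2+\eta^2}\ \ge\ \frac{|\langle {\bm u}_i,{\bm x}\rangle|^2}{\eta}\qquad\text{at } E=\lambda_i.
\]
Hence $|\langle {\bm u}_i,{\bm x}\rangle|^2\le \eta\,\Im G_{{\bm x}{\bm x}}(\lambda_i+\ii\eta)$. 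Since $\lambda_i$ is random, we first prove the estimate for deterministic $E$ on a fine grid. We then transfer it to $E=\lambda_i$ by a union bound and Lipschitz continuity of $G$ and $M$ in $z$; the Lipschitz constants are at most polynomial in $N$ since $\eta\ge N^{-C}$. Finally, a trivial norm bound handles $|E|\ge\lambda N^{\kappa_0+1/\epsilon}$, which lies outside the spectrum with very high probability because $\|H_\lambda\|\le\|H_0\|+3\lambda$.

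For the choice of $\eta$: \eqref{eq:isoLL} requires $N\eta\rho_\lambda\ge N^\epsilon$. So we take $\eta=N^{\epsilon}\eta_{\mathfrak f}(E)$, or more precisely the unique $\eta$ with $N\eta\rho_\lambda(E+\ii\eta)=N^\epsilon$; this $\eta$ is well defined by the monotonicity mentioned in the footnote. At this $\eta$, \eqref{eq:isoLL} gives
\[
\Im G_{{\bm x}{\bm x}}\le (\Im M_\lambda)_{{\bm x}{\bm x}}+O_\prec\Big(\sqrt{\tfrac{\rho_\lambda}{N\eta}}\,(\Gamma_\lambda)_{{\bm x}{\bm x}}\Big)
=\pi\rho_\lambda(\Gamma_\lambda)_{{\bm x}{\bm x}}\big(1+O_\prec(N^{-\epsilon/2}/\rho_\lambda)\big).
\]
The error term needs care: $\rho_\lambda$ may be small, so we keep the form $\pi\rho_\lambda(\Gamma)_{{\bm x}{\bm x}}+N^{\xi}\sqrt{\rho_\lambda/(N\eta)}\,(\Gamma)_{{\bm x}{\bm x}}$. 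Multiplying by $\eta$ and using $N\eta\rho_\lambda=N^\epsilon$, the main term becomes $\eta\rho_\lambda(\Gamma)_{{\bm x}{\bm x}}=N^{\epsilon-1}(\Gamma)_{{\bm x}{\bm x}}$. The error becomes $\eta\sqrt{\rho_\lambda/(N\eta)}\,(\Gamma)_{{\bm x}{\bm x}}=\sqrt{\eta\rho_\lambda/N}\,(\Gamma)_{{\bm x}{\bm x}}=N^{\epsilon/2-1}(\Gamma)_{{\bm x}{\bm x}}$. So $|\langle{\bm u}_i,{\bm x}\rangle|^2\prec N^{\epsilon}(\Gamma_\lambda(\lambda_i+\ii\eta))_{{\bm x}{\bm x}}/N$.

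The main obstacle is replacing $\Gamma_\lambda$ at $\eta=N^\epsilon\eta_{\mathfrak f}$ by $\Gamma_\lambda$ at $\eta_{\mathfrak f}$, since the local law is not available at the latter scale. We will need a monotonicity or regularity statement for $\eta\mapsto(\Gamma_\lambda(E+\ii\eta))_{{\bm x}{\bm x}}$ comparing scales $\eta$ and $N^{\epsilon}\eta$ up to an $N^{C\epsilon}$ factor. I would try to prove this from the MDE. We have $\Im M=|M|^2(\eta+\lambda^2\pi\rho)$, and hence
\[
\Gamma=\frac{|M|^2(\eta+\lambda^2\pi\rho)}{\pi\rho},\qquad M=(H_0-E-\lambda^2\Re\langle M\rangle-\ii(\eta+\lambda^2\pi\rho))^{-1}.
\]
Increasing $\eta$ by a factor $N^{\epsilon}$ changes $\nu=\eta+\lambda^2\pi\rho$ by at most a factor $N^{\epsilon}$, because $\eta\rho$ is increasing in $\eta$ while $\rho/\eta$ is decreasing. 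It also shifts the effective energy $E+\lambda^2\Re\langle M\rangle$; this shift should be controlled by $\nu$ via the Lipschitz bound $|\partial_\eta\langle M\rangle|\lesssim\langle|M|^2\rangle/(1-\lambda^2\langle|M|^2\rangle)$-type estimates from the stability of the MDE. A Cauchy profile of width $\nu$ changes by at most a factor $N^{C\epsilon}$ under such perturbations, since $\frac{\nu'}{(\mu-E')^2+\nu'^2}$ is comparable to $\frac{\nu}{(\mu-E)^2+\nu^2}$ up to $(\nu'/\nu+|E-E'|/\nu)^2$ factors. Thus $(\Gamma(\eta'))_{{\bm x}{\bm x}}\le N^{C\epsilon}(\Gamma(\eta))_{{\bm x}{\bm x}}$. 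Since $\epsilon$ is arbitrary, this yields \eqref{eq:evdeloc} in the $\prec$ sense.
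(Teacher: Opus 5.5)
Your skeleton matches the paper's. You bound $|\langle\bm{u}_i,\bm{x}\rangle|^2\le\eta\,(\Im G_\lambda(\lambda_i+\ii\eta))_{\bm{x}\bm{x}}$ by spectral decomposition, make \eqref{eq:isoLL} hold simultaneously in $z$ via a grid argument, apply it slightly above $\eta_{\mathfrak{f}}$, and then compare $\Gamma_\lambda$ across scales.

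\textbf{The comparison step.} This is where you genuinely differ. The paper does it in one line from the Stieltjes representation of $M_\lambda$. The map $\eta\mapsto(\Im M_\lambda(E+\ii\eta))_{\bm{x}\bm{x}}/\eta$ is decreasing and $\eta\mapsto\eta\rho_\lambda(E+\ii\eta)$ is increasing. Hence for $K\ge1$ one has $(\Im M_\lambda(E+\ii K\eta))_{\bm{x}\bm{x}}\le K(\Im M_\lambda(E+\ii\eta))_{\bm{x}\bm{x}}$ and $\rho_\lambda(E+\ii K\eta)\ge K^{-1}\rho_\lambda(E+\ii\eta)$, so $[\Gamma_\lambda(E+\ii K\eta)]_{\bm{x}\bm{x}}\le K^2[\Gamma_\lambda(E+\ii\eta)]_{\bm{x}\bm{x}}$. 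You already use these monotonicity facts for $\rho_\lambda$ when controlling $\nu$. Applying them to $(\Im M_\lambda)_{\bm{x}\bm{x}}$ as well makes the energy-shift and Cauchy-profile analysis unnecessary.

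That analysis does go through. From $\partial_z\langle M\rangle=\langle M^2\rangle/(1-\lambda^2\langle M^2\rangle)$ and $1-\lambda^2\langle|M|^2\rangle=\eta/\nu$ you get $\lambda^2|\partial_\eta\langle M\rangle|\le\lambda^2\pi\rho/\eta$. So $\lambda^2\Re\langle M\rangle$ moves by at most $K\nu$ between $\eta$ and $K\eta$, and the Cauchy kernels in the $H_0$-eigenbasis then compare up to a power of $K$. It is correct, but it uses the MDE where a soft argument suffices.

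\textbf{The choice of scale.} You treat $\eta=N^{\epsilon}\eta_{\mathfrak{f}}$ and the solution $\eta_0$ of $N\eta_0\rho_\lambda(E+\ii\eta_0)=N^{\epsilon}$ as interchangeable, but they are not. Monotonicity only gives $\eta_0\ge N^{\epsilon/2}\eta_{\mathfrak{f}}$, with no matching upper bound. So your comparison at ratio $N^{\epsilon}$ does not cover $\eta_0/\eta_{\mathfrak{f}}$, and this ratio can be polynomially large.

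For example, take $E$ near an isolated eigenvalue of $H_0$, with the rest of the spectrum at distance of order one, and $\lambda=N^{-\alpha}$. Then $N\eta\rho_\lambda$ stays of order one for all $\eta$ between roughly $\lambda N^{-1/2}\sim\eta_{\mathfrak{f}}$ and $N^{-1/2}\sim\eta_0$. In far-away directions, $\Gamma_\lambda$ at $\eta_0$ exceeds $\Gamma_\lambda$ at $\eta_{\mathfrak{f}}$ by a factor of order $\lambda^{-2}$.

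If instead you fix $\eta=N^{\xi}\eta_{\mathfrak{f}}$, as the paper does, the comparison costs only $N^{2\xi}$. But then monotonicity only guarantees $N\eta\rho_\lambda\ge1$, not the condition $N\eta\rho_\lambda\ge N^{\epsilon}$ required by \eqref{eq:isoLL}; the paper uses this tacitly.

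So with the choice $\eta=N^{\xi}\eta_{\mathfrak{f}}$ your argument is at the paper's level of rigor. With the choice $\eta_0$, the final comparison step is unjustified.
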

 Corollary \ref{cor:deloc} yields an \emph{upper} bound on the overlap of $\bm u_i$ with any deterministic vector $\bm x$, thus providing \emph{delocalization} information on $\bm u_i$. However, since $\sum_{j} [\Gamma_\lambda(z)]_{\bm x_j \bm x_j} = N$ for any orthonormal basis $\{\bm x_j\}_{j \in [N] }$, 
 and in many applications $[\Gamma_\lambda(z)]_{\bm x_j \bm x_j}$ is very small for most indices $j$, \nc
 \eqref{eq:evdeloc} can readily be used to infer \emph{lower} bounds on the overlap of $\bm u_i$ with certain deterministic directions $\bm x$, thus providing \emph{localization} information on $\bm u_i$ as well. We point out that all these information on eigenvector (de)localization is captured by the operator $\Gamma_\lambda$. Hence, extracting any explicit information is highly model (i.e.~$H_0$) dependent and, in particular, requires (approximately) solving the MDE \eqref{eq:MDE}. In Section \ref{sec:mobedge} below we carry out this task for a particular model exhibiting a \emph{mobility edge}. 

\medskip 

We now compare Theorem~\ref{theo:main1} and Corollary~\ref{cor:deloc} with previous results. 
 Local laws and eigenvector delocalization, as well as eigenvalue rigidity results\footnote{Standard rigidity results assert that $|\lambda_i-\gamma_i|\prec\eta_{\mathfrak{f}}(\gamma_i)$, where $\gamma_i$ is the $i$-th quantile of $\rho_\lambda(x)$
 defined in~\eqref{rholambda}.},  have been proven in great generality, even for certain correlated matrices \cite{alt2020correlated, slow_corr}, but under some strong assumptions. For our model~\eqref{eq:smalldef}, 
the standard setup in the literature
imposes two conditions\footnote{They are typically formulated by scaling out $\lambda$ in~\eqref{eq:smalldef}; alternatively, by setting $\lambda=1$.}: (a) $\| H_0\|\lesssim \lambda$ and (b) $\| M_\lambda(z)\|\lesssim 1/\lambda$ uniformly in $z\in \C_+$.
Note that the boundedness of $\lVert M_\lambda \rVert$ is a highly non-trivial assumption
that can be guaranteed only by assuming certain structure, e.g., piecewise H\"older continuity on $H_0$ (see, e.g., \cite[Section 9]{shape}). Assumptions (a) and (b) are necessary to give a fairly precise description of the solution 
of the MDE \eqref{eq:MDE} and its {\it self-consistent density of states} 
\begin{equation}\label{rholambda}
\rho_\lambda(x):=\pi^{-1}\langle \Im M_\lambda(x+\ii 0^+)\rangle,
\end{equation}
 which, in turn, are used as inputs to prove local laws. 
A key result in random matrix theory, often called \emph{shape analysis} \cite{shape}, is that under (a) and (b) we have:
\begin{itemize}

\item $\Gamma\sim \mathbf{1}$, i.e.~there is a constant $c>0$ such that $c\mathbf{1}\le \Gamma\le c^{-1}\mathbf{1}$ in sense of quadratic forms.

\item $\rho_\lambda(x)$ may admit only three qualitatively different behaviors. In the \emph{bulk} regime $\rho_\lambda$ is smooth, in the \emph{edge} regime $\rho_\lambda$ has a square root singularity, and in the \emph{cusp} regime $\rho_\lambda$ vanishes as a cubic root. No other singularities can occur.

\end{itemize}

The strong conditions (a) and (b) have already been relaxed 
previous to our current work in some special cases.
Namely, it was assumed that $H_0$ was diagonal (not necessarily bounded) 
 and it had a \emph{regular spectrum}, in the sense that locally around a certain $E_0\in\R$ the empirical measure of the eigenvalues of $H_0$, smoothed on a certain scale, was (i) either bounded from above and below (bulk regime) or (ii)  it had a square root behavior (edge regime). These conditions control the density $\rho_\lambda$, but
   do not necessarily imply that $\lVert M_\lambda(z)\rVert$ is bounded.  
   Under these assumptions, the results in Theorem~\ref{theo:main1} and Corollary~\ref{cor:deloc} have been proven
    in the bulk regime \cite{bourgade2017eigenvector, huang2019rigidity, landon2017convergence, landon2019fixed, von2019non} and an analog of \eqref{eq:avLL} in the edge regime as well \cite{adhikari2020dyson, aggarwal2024edge, landon2017edge}.  However,
     since these results were typically used as \emph{a priori}
    bounds for the analysis of the Dyson Brownian motion, the random matrix $W$ was assumed to have
    Gaussian entries.

   Our Theorem~\ref{theo:main1} and Corollary~\ref{cor:deloc} do not impose any condition either
    on $H_0$ or on the behavior of the density $\rho_\lambda$ or $M_\lambda$;  in particular,  both $\lVert M_\lambda \rVert$ and
$\rho_\lambda$ may become large in certain regimes, see \eqref{eq:boundM}--\eqref{eq:boundrho} for the optimal bounds.
The inhomogeneity in the spectral and spatial variables is reflected in the second factors in~\eqref{eq:avLL}--\eqref{eq:isoLL}.
     In fact, in this generality, the shape analysis in \cite{shape} does not necessarily hold, i.e.~the density may exhibit
     patterns other than the standard  bulk/edge/cusp regimes.
     Furthermore, 
       our results hold uniformly in the entire spectrum of $H_\lambda$. Finally,  we consider $W$ in \eqref{eq:smalldef} to be a general Wigner matrix satisfying Assumption~\ref{ass:entries} and not necessarily with Gaussian entries.  \nc

\subsection{Eigenstate Thermalization Hypothesis} 
We now state our second main result. To do so,  for every $i \in [N]$ and arbitrarily small  but fixed $\xi > 0$, we introduce the shorthand notation
\begin{equation} \label{eq:gammahat}
M_i := M_\lambda(\widehat{\lambda}_i) 
 \quad \text{with} \quad \widehat{\lambda}_i := \lambda_i + \ii N^\xi \eta_{\mathfrak{f}}(\lambda_i),
\end{equation}
 where $\lambda_1\le \lambda_2\le \ldots \le\lambda_N$ are the eigenvalues of $H_\lambda$. 
Moreover, for every pair of indices $i,j \in [N]$,
 we define the $(i,j)$-\emph{regularization} of an observable $A \in \C^{N \times N}$ as
\begin{equation} \label{eq:regobsij}
\mathring{A}^{i,j} := A - \theta_{ij}\frac{\langle M_i A M_j^* \rangle}{\langle M_i M_j^* \rangle} \mathbf{1} \quad \text{with} \quad \theta_{ij} := \theta(\lambda^2 \Re \langle M_i M_j^* \rangle - 1/2),
\end{equation}
where we recall that $\theta(x) = (1 + \sgn(x))/2$ is the Heaviside function\footnote{The cutoff function
$\theta_{ij}$ expresses the fact that regularization is a relevant concept only in a certain
regime of the two spectral parameters. It allows for clean, unified final formulas but the proofs 
will be different for the  $\theta_{ij}=0$ and $\theta_{ij}=1$ regimes. The reader may 
first focus on the more interesting $\theta_{ij}=1$ regime.}.
Finally, denoting $\Gamma_i :=\Im M_i /\langle \Im M_i \rangle$ and abbreviating $\mathring{A} = \mathring{A}^{i,j}$, we introduce the control parameter\footnote{We point out that this notation is different compared to \cite[Eq. (5.5)]{cipolloni2024out} where we used $\mathfrak{s}_2$ instead of $\mathfrak{s}_2^2$ to denote an analogous quantity.} $\s_2(i,j;A)$ via
\begin{equation} \label{eq:s2ijdef}
	\begin{split}
		\big(\s_2(i,j;A)\big)^2:=\langle\Gamma_i \mathring{A} \Gamma_j \mathring{A}^*\rangle + \theta_{ij} \, \lambda^2\bigg| \frac{\langle M_i \mathring{A} \Gamma_j \rangle \langle \Gamma_i \mathring{A}^* M_j \rangle }{1 - \lambda^2 \langle M_i M_j\rangle}\bigg|  + (1 - \theta_{ij}) \, \lambda^4 \max_{*} |\langle M_i^{(*)} \mathring{A} M^{(*)}_{j} \rangle|^2 \langle \Gamma_i \, \Gamma_j \rangle ,
	\end{split}
\end{equation}
where $\max_*$ is understood to be the maximum over all four  possible constellations of adjoints taken in $|\langle M_i^{(*)} \mathring{A} M^{(*)}_{j} \rangle|^2$. 
This quantity will determine the bound on quadratic forms of eigenvectors of $H_\lambda$ and deterministic matrices $A\in \C^{N \times N}$, which is the content of the following result on the \emph{Eigenstate Thermalization Hypothesis}. Note that no  lower bound on $\lambda$ is assumed (apart from \eqref{eq:kappa});
the natural deterioration of the bound in the  small $\lambda$ regime is expressed 
through  the $\lambda$-dependence of the control quantity $\s_2$.

\begin{theorem}[Eigenstate Thermalization Hypothesis]
\label{theo:ETH}
Let $W$ be a Wigner matrix satisfying Assumption~\ref{ass:entries}, let $H_0=H_0^*\in\C^{N\times N}$ be an arbitrary deterministic matrix, and let $H_\lambda=H_0+\lambda W$ be as in \eqref{eq:smalldef}. Let ${\bm u}_i$ be the $\ell^2$-normalized eigenvectors of $H_\lambda$.  Denoting $\Gamma_i :=\Im M_i /\langle \Im M_i \rangle$ 
for any index $i\in [N]$, 
and recalling the notations above, we have
\begin{equation}
 \label{eq:ETH}
\left| \big\langle \bm u_i, A \bm u_j \big\rangle - \delta_{ij} { \theta_{ij}}\langle\Gamma_i A\rangle\right| \prec \frac{\s_2(i,j;A)}{\sqrt{N}},
\end{equation}
uniformly in deterministic matrices $A\in \C^{N\times N}$ and in index pairs $(i,j)$.   
\end{theorem}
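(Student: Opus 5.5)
The plan is to reduce \eqref{eq:ETH} to a two-resolvent averaged local law with a regularized observable, and to prove that law by the zigzag strategy. Since $\Im G(z)=\sum_k \eta\,|\bm u_k\rangle\langle \bm u_k|/((\lambda_k-E)^2+\eta^2)$, we have for $z_1=\lambda_i+\ii\eta_1$ and $z_2=\lambda_j+\ii\eta_2$, with $\eta_l\sim N^{\xi}\eta_{\mathfrak f}$,
\[
|\langle \bm u_i,\mathring{A}\bm u_j\rangle|^2\lesssim N^{2\xi}\,\eta_{\mathfrak f}(\lambda_i)\eta_{\mathfrak f}(\lambda_j)\,N\,\langle \Im G(z_1)\mathring{A}\,\Im G(z_2)\mathring{A}^*\rangle .
\]
Here we also use spectral rigidity and the local law, Theorem~\ref{theo:main1}, to justify evaluating at the random points $\lambda_i$ via a grid and a union bound. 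Consequently it suffices to show
\[
\langle \Im G_1\mathring A\,\Im G_2\mathring A^*\rangle\prec \rho_1\rho_2\,\s_2^2 ,
\]
with $\s_2$ as in \eqref{eq:s2ijdef} evaluated at $z_1,z_2$. Using $N\eta_l\rho_l\sim N^{\xi}$, this gives $|\langle\bm u_i,\mathring A\bm u_j\rangle|\prec \s_2/\sqrt N$. The diagonal term $\delta_{ij}\theta_{ij}\langle\Gamma_iA\rangle$ then comes from $A-\mathring A$: we have $\langle M_iAM_i^*\rangle/\langle M_iM_i^*\rangle=\langle \Im M_i A\rangle/\langle \Im M_i\rangle$, because $\Im M=(\eta+\lambda^2\Im\langle M\rangle)MM^*$ by \eqref{eq:MDE}. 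The imaginary parts are handled by writing them as combinations of $G(z)$ and $G(\bar z)$.

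For the two-resolvent law, I would first identify the deterministic approximation
\[
M_{12}^{A}=\bigl(1-\lambda^2\langle M_1\cdot M_2\rangle\bigr)^{-1}\bigl[M_1AM_2\bigr].
\]
Its trace against the identity carries the factor $(1-\lambda^2\langle M_1M_2\rangle)^{-1}$, which is the source of the middle term in $\s_2^2$. The regularization is designed so that this potentially unstable direction is removed when $\theta=1$. When $\theta=0$ the denominator is $\gtrsim1$, and the resulting error yields the last term in \eqref{eq:s2ijdef}. The target bound is
\[
|\langle (G_1\mathring A G_2-M^{\mathring A}_{12})B\rangle|\prec \frac{\sqrt{\rho_1\rho_2}\,\s_2(\mathring A)\,\s_2(B)}{N\sqrt{\eta_1\eta_2}}\cdot\frac{1}{\sqrt{\rho_1\rho_2}}\times(\text{appropriate power}).
\]
The precise normalization is fixed by the Schwarz-type bound $|\langle G_1AG_2B\rangle|\le\langle \Im G_1 A\Im G_2A^*\rangle^{1/2}\langle\cdots\rangle^{1/2}/(\eta_1\eta_2)^{1/2}$, written in terms of $\Gamma$'s.

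The zig step runs the Ornstein--Uhlenbeck flow $H_t$ along the characteristic flow $\partial_t z_t=-\lambda^2\langle M(z_t)\rangle-z_t/2$, so that $M_t(z_t)$ stays fixed. Itô's formula then gives a linear equation for $\langle G_1\mathring AG_2B\rangle$. Its linear term is $\lambda^2\langle G_1\mathring AG_2\rangle\langle G_2BG_1\rangle$, and it comes with a martingale bounded via quadratic variation by three-resolvent chains. Those chains are reduced back to two-resolvent quantities by Schwarz with $\Im G$, with the $\Gamma$-weights carried along. The propagator of this equation is $(1-\lambda^2\langle M_1M_2\rangle)^{-1}$-type, and the regularization ensures that the $\eta$-growth integrates to the claimed bound. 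The zag step removes the Gaussian component via a Green function comparison with cumulant expansion. As announced in the introduction, all isotropic entries $(G\mathring AG)_{ab}$ appearing in the expansion are bounded by the single tracial quantity $\langle \Im G\mathring A\Im G\mathring A^*\rangle$, together with the isotropic law \eqref{eq:isoLL}.

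The main obstacle I expect is keeping the inhomogeneous weights $\Gamma_1,\Gamma_2$ correctly throughout. Crude bounds such as $\|M\|\lesssim1/(\eta+\lambda^2\rho)$ lose factors, so every Schwarz inequality must be taken with $\Im M$-weights, and the norm $\s_2$ must be shown to be stable under the flow (monotone in $t$ along the characteristics). I would try to prove this monotonicity directly from $\Im M_t=(\eta_t+\lambda^2\Im\langle M\rangle)M_tM_t^*$. A second source of difficulty is the case distinction $\theta_{ij}\in\{0,1\}$, which I would treat in two separate arguments.
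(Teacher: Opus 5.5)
The reduction of \eqref{eq:ETH} to $\langle \Im G_1\mathring A\,\Im G_2\mathring A^*\rangle\prec\rho_1\rho_2\s_2^2$ is exactly the paper's argument. It uses spectral decomposition at $\widehat\lambda_i,\widehat\lambda_j$, a grid argument for the random spectral parameters, and $M_iM_i^*/\langle M_iM_i^*\rangle=\Gamma_i$ for the diagonal term. Rigidity is not needed, and is not available here without shape analysis, because the statement is already phrased through the random $\lambda_i$. The gap is in your plan for the two-resolvent bound.

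\textbf{Non-$\Im$ chains do not obey $\Gamma$-weighted bounds.} You propose to control $\langle G_1\mathring A G_2 B\rangle$ \emph{without} imaginary parts, with an error proportional to $\s_2(\mathring A)\s_2(B)$, and then to recover the $\Im G$'s as combinations of $G(z)$ and $G(\bar z)$. In the inhomogeneous regime such a bound cannot hold.

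Take $H_0=\mathrm{diag}(\mu_c)$, the matrix unit $A=E_{ab}$ with $a\neq b$, and $B=A^*$; $A$ is regular because $M$ is diagonal. Then $\langle G_1AG_2B\rangle=N^{-1}(G_1)_{aa}(G_2)_{bb}$ and $\rho_1\rho_2\s_2(A)\s_2(B)\sim N^{-1}(\Im M_1)_{aa}(\Im M_2)_{bb}$. The fluctuation, however, contains $N^{-1}(G_1-M_1)_{aa}(M_2)_{bb}$, which is generically of the size $N^{-1}(\Im M_1)_{aa}|(M_2)_{bb}|/\sqrt{N\eta_1\rho_1}$ allowed by \eqref{eq:isoLL}.

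Since $\Im M_2=(\eta_2+\lambda^2\langle\Im M_2\rangle)M_2M_2^*$, the ratio $|(M_2)_{bb}|/(\Im M_2)_{bb}$ equals $|\mu_b-z_2-\lambda^2\langle M_2\rangle|/(\eta_2+\lambda^2\langle\Im M_2\rangle)$. This depends on the global geometry of $H_0$, and no factor harmless for \eqref{eq:2G2} can absorb it. The $\Gamma$-weights survive only through the cancellation between $G$ and $G^*$, and that cancellation is lost once the four sign combinations are estimated separately.

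Your martingale step has the same defect. The quadratic variation of $\langle G_1\mathring AG_2B\rangle$ contains
\[
\frac{\lambda^2}{N^2\eta_1^2}\big\langle \Im G_1\mathring A G_2 B\,\Im G_1 B^* G_2^*\mathring A^*\big\rangle ,
\]
and the $G_2$ sandwiched between the observables can be turned into $\Im G_2/\eta_2$ only at the cost of $\|B\Im G_1B^*\|$. Your own Schwarz normalization shows this too: it gives $\langle BB^*\rangle^{1/2}$, not a $\Gamma$-weighted norm of $B$. Also, for $B\neq\mathbf{1}$ the term $\lambda^2\langle G_1\mathring AG_2\rangle\langle G_2BG_1\rangle$ is a forcing term, not a linear one. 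The $(1-\lambda^2\langle M_1M_2\rangle)^{-1}$ propagator belongs only to the case $B=\mathbf{1}$.

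\textbf{What the paper does instead.} It closes the zigzag on two averaged objects, both compatible with $\Gamma$-weights, controlled jointly by one stopping time.
\begin{itemize}
\item The first is the one-observable chains $\langle G_1^{(*)}AG_2^{(*)}\rangle$, with error $\sqrt{\rho_1\rho_2/(N\eta_1\eta_2)}\,\s_2$. Here $GG^*=\Im G/\eta$ reduces both the quadratic variation and the forcing terms to $\langle\Im G_1A\Im G_2A^*\rangle$.
\item The second is the positive quantity $\langle\Im G_1A\Im G_2A^*\rangle$ itself. Its flow \eqref{eq:imGAimGA} is driven by products of one-observable chains. These products are bounded via Lemma~\ref{lem:m12bounds}, which is where regularity enters, through the meta-argument. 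Its quadratic variation is closed by the reduction inequality $\langle(\Im G_1A\Im G_2A^*)^2\rangle\le N\langle\Im G_1A\Im G_2A^*\rangle^2$.
\end{itemize}
The same positivity of $\Im G$ is what makes the isotropic-free zag step work (Lemma~\ref{lem:Schwarz}). Finally, your sketch omits the induction base: a global law for both objects, obtained in the paper from the master inequalities.
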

The proof of Theorem \ref{theo:ETH} is given in Section \ref{sec:proofmain} below. A key input for its proof will be a local law bound for the products of two resolvents, see Theorem~\ref{thm:main2G} later.
\begin{remark}
We have several comments on Theorem \ref{theo:ETH}, that are easy to verify.
\begin{itemize}
	\item[(a)] \textnormal{[Eigenvector delocalization]} For $i=j$ and $A = \ket{\bm x} \bra{\bm x}$ being a rank one projection, we recover eigenvector delocalization as formulated in \eqref{eq:evdeloc}. 
\item[(b)] \textnormal{[Wigner matrices]} In case of $H_0 = 0$ and $\lambda = 1$, we recover the previously proven ETH for Wigner matrices \cite{cipolloni2023eigenstate} (see also \cite{cipolloni2021eigenstate, cipolloni2022rank}):
\begin{equation*}
\max_{i,j \in [N]}\left| \big\langle \bm u_i, A \bm u_j \big\rangle - \delta_{ij} \langle A\rangle\right| \prec \frac{\langle |A - \langle A \rangle|^2\rangle^{1/2}}{\sqrt{N}},
\end{equation*}
since we have that $\delta_{ij} \theta_{ij} = \delta_{ij}$, $\Gamma_i = \mathbf{1}$ and $\s_2(i,j;A) = \langle |A - \langle A \rangle|^2\rangle^{1/2}$. 
\item[(c)] \textnormal{[Deformed Wigner matrices]} Let $\lambda = 1$ and consider $H_0$ which ensures the deterministic approximation $M$ to be bounded in operator norm, $\Vert M \Vert \lesssim 1$.\footnote{This is the case, e.g., if $H_0$ has Hölder-$1/2$ regularly spaced eigenvalues; see \cite{shape} and the discussion below Corollary~\ref{cor:deloc}.} Then the $\lambda_i$'s in the rhs. of \eqref{eq:ETH} 
 (hidden in the definition of $M_i$ \eqref{eq:gammahat})
 can be replaced by the corresponding quantiles, $\gamma_i$ of the self-consistent density
 (see the discussion below Corollary~\ref{cor:deloc}), and so we have
\begin{equation} \label{eq:ETHex}
\left| \big\langle \bm u_i, A \bm u_j \big\rangle - \delta_{ij} \langle\Gamma_i A\rangle\right| \prec \frac{\big\langle \big|\mathring{A}^{i,j}\big|^2\big\rangle^{1/2}}{\sqrt{N}} \times  \frac{1}{|1 - \langle M_i M_j \rangle|^{1/2}} \,. 
\end{equation}
   Since for indices $i,j$ in the bulk regime $|1 - \langle M_i M_j \rangle|\sim 1$, our result extends and improves upon \cite[Theorem 2.7]{cipolloni2023gaussian} (see also \cite[Theorem 3.4]{bao2021equipartition}), which only considered the bulk of the spectrum and estimated the observable in operator norm. 
Note that outside of the bulk spectrum there is a deterioration\footnote{We point out that the $N^{-1/2}$-bound can be obtained everywhere in the spectrum if we project $A$ in a specific direction~\cite{corrETH}.} of the usual $N^{-1/2}$ ETH-bound (i.e.~\eqref{eq:ETHex} without the $|1 - \langle M_i M_j \rangle|^{-1/2}$ factor): For example, for $i=j$ we have the deterioration factor
$|1 - \langle M_i M_i \rangle|^{-1/2}\sim N^{1/6}$ 
 in the edge regime  and  $|1 - \langle M_i M_i \rangle|^{-1/2}\sim N^{1/4}$  in the cusp regime. 
   In~\cite{corrETH}, this deterioration is fully analyzed and its optimality (in the cusp regime) is also shown. At regular edges, a sophisticated study in \cite{corrETH}, heavily relying on shape analysis, allows to recover the $N^{-1/2}$ ETH-bound. A similar improvement could, in principle, be investigated here too, under further conditions on the deformation $H_0$. However, as our main focus is the completely general situation without any shape analysis, this is beyond the scope of the current paper.   
\item[(d)] \textnormal{[Going from $\mathring{A}$ to $A$]} Our definition of $\s_2(i,j;A)$ involves only $\mathring{A}$ on the rhs.~of \eqref{eq:s2ijdef}. We have the following bound in terms of $A$ only:
\begin{align}
	(\mathfrak{s}_2(i,j;{A}))^2
	\lesssim \left( \langle \Gamma_i A \Gamma_j A^* \rangle + \lambda^4 \max_* \big| \langle M_i^{(*)} A M_j^{(*)}\rangle \big|^2 \langle \Gamma_i \Gamma_j  \rangle \right) \times \left(1 + \frac{\theta_{ij}}{|1 - \lambda^2 \langle M_i M_j\rangle|}\right) \label{eq:Abound}
\end{align}
\item[(e)] \textnormal{[Mean field bound]} We also have the mean field upper bound
\begin{equation*}
	(\mathfrak{s}_2(i,j;{A}))^2 \lesssim \frac{\Vert A \Vert^2}{\lambda^2 \rho_i \rho_j} \left(1 + \frac{\theta_{ij}}{|1 - \lambda^2 \langle M_i M_j\rangle|}\right),
\end{equation*}
which is much simpler than~\eqref{eq:Abound} but in general is far from optimal. 
\end{itemize}
\end{remark}

\section{Consequences of local laws and further results}
\label{sec:conseq}

In this section we first present the proof of Corollary~\ref{cor:deloc} and then discuss several further consequences of the single and multi-resolvents local laws. More precisely, in Section~\ref{sec:mobedge} we discuss localization/delocalization properties of the eigenvectors of $H_\lambda$, and in particular the emergence of a mobility edge. In Section~\ref{subsubsec:reentrloc} we connect our local law in \eqref{eq:isoLL} and the bound in \eqref{eq:evdeloc} to the phenomenon of \emph{re-entrant localization} \cite{ghosh2025reentrant} for a generalization of the Rosenzweig-Porter model. In Section~\ref{sec:comments} we discover several further possible results that follow by using similar proof techniques, but that we do not pursue here for the sake of brevity. Finally, in Section~\ref{sec:non-Herm} we comment on related questions for non-Hermitian random matrices.

\begin{proof}[Proof of Corollary~\ref{cor:deloc}]
We note that in the spectral regime $N\eta\rho_\lambda(z)\ge 1$ we have
\begin{equation}
\label{eq:gridargu}
\big|\langle {\bm x},\big(G_\lambda(z)-G_\lambda(w)\big){\bm y}\big|\le \frac{N^2}{\lambda^2} |z-w| \lVert {\bm x}\rVert \lVert {\bm y}\rVert,
\end{equation}
where we used  $\| G_\lambda(z)\|\le 1/\eta$,  and that $\rho_\lambda\lesssim \lambda^{-1}$. As a consequence of the Lipschitz continuity of the resolvent, using a standard grid argument, we obtain that \eqref{eq:isoLL} holds simultaneously for all $z$'s, satisfying the constraint below \eqref{eq:isoLL}, with very high probability. We thus have
\begin{equation}
\label{eq:boundelnew}
\big|\langle {\bm u}_i,{\bm x}\rangle\big|^2\le N^\xi \eta_\mathfrak{f}(\lambda_i) \langle {\bm x},\Im G_\lambda(\lambda_i+N^\xi\ii\eta_\mathfrak{f}(\lambda_i)){\bm x}\rangle\le 2N^\xi \frac{[\Gamma_\lambda(\lambda_i+\ii N^\xi\eta_{\mathfrak{f}}(\lambda_i))]_{{\bm x}{\bm x}}}{N},
\end{equation}
with very high probability. Here in the first step we used the spectral representation of $\Im G$ and 
in the \nc last step we used \eqref{eq:isoLL} and that $N\eta_{\mathfrak{f}}(\lambda_i)\rho_\lambda(\lambda_i+\ii\eta_{\mathfrak{f}}(\lambda_i))=1$. Additionally, by monotonicity of $\rho_\lambda/\eta$ and $\eta\rho_\lambda$, we can replace $\Gamma_\lambda(\lambda_i+\ii N^\xi\eta_{\mathfrak{f}}(\lambda_i))$ in the rhs. of \eqref{eq:boundelnew} with $N^{2\xi}\Gamma_\lambda(\lambda_i+\ii \eta_{\mathfrak{f}}(\lambda_i))$.
\end{proof}

\subsection{Mobility edge for $H_\lambda$}
\label{sec:mobedge}
 We now discuss the bound \eqref{eq:evdeloc} in a specific case, illustrating the emergence of a
  threshold between localized and delocalized states, called the \emph{mobility edge}.   Localized states, 
  by definition, are essentially supported on  finitely many ($N$-independent) sites. In contrary,
  the support of the delocalized states grows with $N$.
  If the support is comparable with $N$ we call the state 
  \emph{fully delocalized}, otherwise we talk about \emph{non-ergodic delocalization} 
 (see the discussion below \eqref{eq:Lasympt} for a precise definition). 
  
  We now illustrate these concepts
  in a concrete example. 
  \begin{figure}[h]
  	\begin{tikzpicture}[scale=1.1]
  		
  		\def\R{2}
  		\draw[thick, dashed, black]
  		plot[domain=-2:2,samples=400]
  		({\x},{sqrt(\R*\R - \x*\x)});
  		
  		\draw[thick,->, black] (-2.2,0) -- (2.2,0);
  		
  		\draw[line width=2pt,black] (-2,0) -- (-1.5,0);
  		\draw[line width=2pt,gray]  (-1.5,0) -- (1.5,0);
  		\draw[line width=2pt,black] (1.5,0) -- (2,0);
  		
  		\foreach \x in {-1.5,1.5} {
  			\draw[thick,black] (\x,0.08) -- (\x,-0.08);
  		}
  		
  		\node[above] at (-1.5,0.08) {\tiny \color{black}$E_{\mathrm{mob}}$};
  		\node[above] at (1.5,0.08) {\tiny \color{black}$E_{\mathrm{mob}}$};
  		\node[below] at (0,0) {\color{gray} \tiny non-ergodic deloc.};
  		\node[below] at (-1.75,0) {\color{black} \tiny loc.};
  		\node[below] at (1.75,0) {\color{black} \tiny loc.};
  	\end{tikzpicture}
  	\caption{Mobility edge: For $N^{-1/2} \ll \lambda \ll N^{-1/6}$, localized and non-ergodic delocalized states coexist with threshold satisfying $\big| |E_{\rm mob}| - 2\big| \sim \frac{1}{N \lambda^2}$. The semicircular density of states is 
	shown by a dashed curve.}
  	\label{fig:mobedge}
  \end{figure}
  For this purpose, we take $H_0$ to be a diagonal matrix with its eigenvalues being sampled by a Wigner matrix,
  then fixed and put on the diagonal in increasing order. For convenience, to keep the spectrum contained in $[-2,2]$ for the random matrix as well, we rescale $H_0$ by a factor $\sqrt{1 - \lambda^2}$ for $\lambda \in [0,1]$, i.e., we consider\footnote{We warn the reader that in this subsection, all quantities no longer carry their dimensionality.}
\begin{equation}
H_\lambda = \sqrt{1 - \lambda^2} H_0 + \lambda W \,. 
\end{equation}

Since the eigenvalues of $H_0$ are well controlled with very high probability by the standard rigidity result,  
and we can rely on the above mentioned shape analysis for $H_\lambda$,  we can cast our eigenvector bounds in terms of the deterministic \emph{quantiles} $(\gamma_i)_{i \in [N]}$ of the semicircular density instead of the random eigenvalues. Indeed, recall the implicit definition of the semicircular $\gamma_i$'s, 
\begin{equation}
	\label{eq:defquant}
	\int_{-\infty}^{\gamma_i} \rho_{\rm sc}(x)\,  \dd x=\frac{i}{N} \,, \qquad \rho_{\rm sc}(x) := \frac{\sqrt{[4-x^2]_+}}{2 \pi} \,. 
\end{equation}
Then, by explicit deterministic computations, for the $i$-th entry of the normalized eigenvector $ {\bm u}_{i_0}$
of $H_\lambda$ we have
\begin{equation}
	\label{eq:del}
	\big|\langle {\bm u}_{i_0},{\bm e}_i\rangle\big|^2\prec \frac{\omega_\lambda(i_0)}{\big(f_\lambda(i,i_0)\big)^2 + \big(\omega_\lambda(i_0)\big)^2} \quad \text{with} \quad \omega_\lambda(i_0) := 1 + N \lambda^2 \rho_{i_0} \,.
\end{equation}
Here we abbreviated $\rho_{i_0}:=\rho_{\rm sc}(\gamma_{i_0}+\ii \eta_{\mathfrak{f}}(\gamma_{i_0}))$ and denoted the harmonic extension of the semicircular density to the complex upper half plane by the same symbol. Moreover, the function $f_\lambda(i, i_0)$ is given by
\begin{equation} \label{eq:fdef}
	f_\lambda(i,i_0) = N \rho_{i_0} \left[\left(1 - \tfrac{\lambda^2}{2}\right) (\gamma_{i_0} - \gamma_{i}) + \left(  1 - \tfrac{\lambda^2}{2} - \sqrt{1 - \lambda^2}\right)\gamma_i  \right] \,. 
\end{equation}
 In short, for small $\lambda$, the profile $i\mapsto \big|\langle {\bm u}_{i_0},{\bm e}_i\rangle\big|^2$
exhibits an almost Cauchy-like quadratic decay in $\gamma_{i_0} - \gamma_{i}$ and note that
$|\gamma_{i_0} - \gamma_{i}| \sim |i-i_0|/N$ in the bulk. \nc

To quantify the localization/delocalization of eigenvectors in the standard basis, for (small) fixed $\epsilon > 0$, we introduce the \emph{$\epsilon$-essential support} of $\bm u_{i_0}$ as
\begin{equation*}
L_\lambda(i_0) = L_\lambda^{(\epsilon)}(i_0) := \inf \bigg\{ |\mathcal{J}| : \sum_{j \in \mathcal{J}}	\big|\langle {\bm u}_{i_0},{\bm e}_j\rangle\big|^2  \ge 1 - \epsilon  \bigg\} \,. 
\end{equation*}
Then, by simple explicit computations involving the definitions \eqref{eq:defquant}, \eqref{eq:del}, and \eqref{eq:fdef}, we find that 
\begin{equation} \label{eq:Lasympt}
L_\lambda(i_0) \sim 1 + N \lambda^2 (4 - |\gamma_{i_0}|^2) + N \lambda^6 |\gamma_{i_0}|^2 
\end{equation}
with implicit constants depending only on $\epsilon$. We call an eigenvector $\bm u_{i_0}$ (i) \emph{localized} if $L_\lambda(i_0) \sim 1$, (ii) \emph{non-ergodic delocalized} if $1 \ll L_\lambda(i_0) \ll N $, and (iii) \emph{delocalized} if $L_\lambda(i_0) \sim N$. Note that \eqref{eq:Lasympt} interpolates between the $N \lambda^2$ (bulk) and $N \lambda^6$ (edge) spatial spreading of the eigenvector $ {\bm u}_{i_0}$ 
as $\gamma_{i_0}$ approaches to the edges $\pm 2$. In fact, from \eqref{eq:Lasympt}, we can immediately read off the support of eigenvectors in the bulk and edge, summarized in Table~\ref{tab:1}. 

	\begin{table}[h]
	\centering 
	\begin{tabular}{c|c|c}
		&	\centering \textbf{{Edge}} $\rho_{i_0} \sim N^{-1/3}$& \textbf{{Bulk}} $\rho_{i_0} \sim 1$   \\ \hline
		{localized} &	$\lambda \ll N^{-1/6}$ &  $\lambda \ll N^{-1/2}$ \\ \hline 
		{non-ergodic deloc.}& 	$N^{-1/6} \ll \lambda \ll 1$ & $N^{-1/2} \ll \lambda \ll 1$   \\ \hline
		{full delocalized}& 	$\lambda \gtrsim 1$ & $\lambda \gtrsim 1$ 
	\end{tabular}
	\caption{Summary of eigenvector behavior.}
	\label{tab:1}
\end{table}

In particular, we infer a \emph{mobility edge}\footnote{By symmetry of the semicircular spectrum there are two mobility edges at $\pm E_{\rm mob}$.}, $E_{\rm mob}$, illustrated in Figure \ref{fig:mobedge}, between the localized and non-ergodic delocalized phase in the regime $N^{-1/2} \ll \lambda \ll \lambda^{-1/6}$ satisfying 
\begin{equation*}
2 - |E_{\rm mob}| \sim \frac{1}{N \lambda^2} \,. 
\end{equation*}

Finally, in Figure \ref{fig:simulate} below we illustrate the behavior of edge and bulk eigenvectors by a simulation.  Note that the theoretical upper bounds (red curves) properly describe the shape 
of the simulated blue curves up to a constant factor that our bound~\eqref{eq:del} does not track. 

\begin{figure}[ht]
	\centering
	\includegraphics[width=0.6\linewidth]{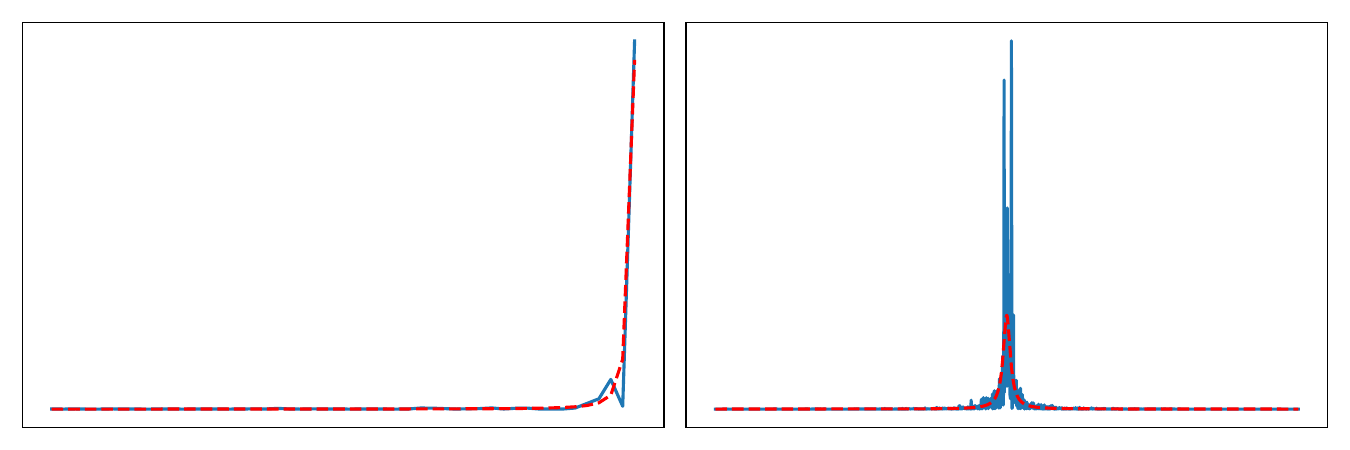}
	\vspace{2mm} 
	\includegraphics[width=0.6\linewidth]{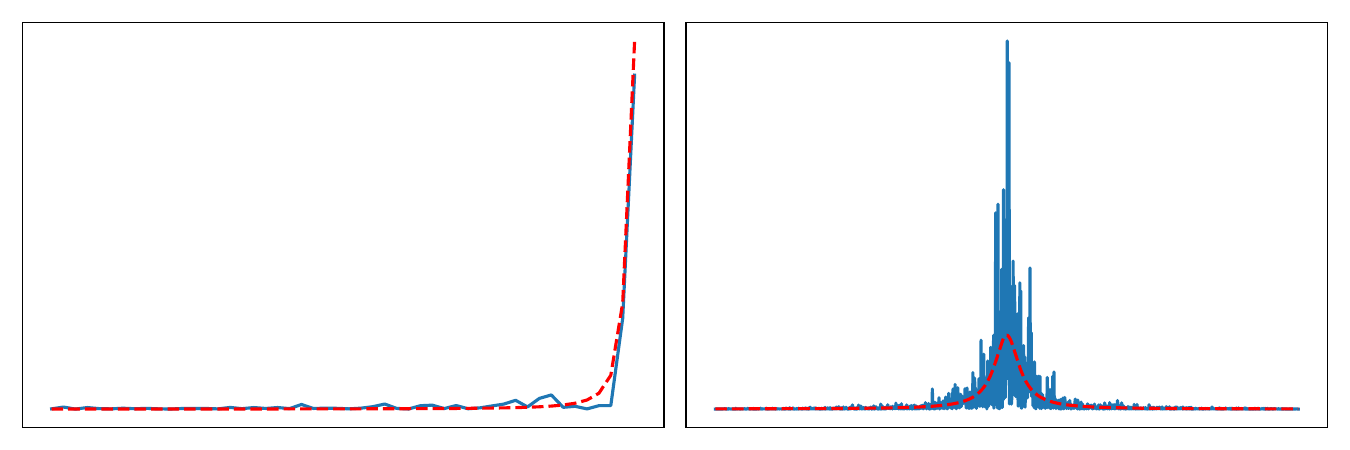}
	\vspace{2mm} 
	\includegraphics[width=0.6\linewidth]{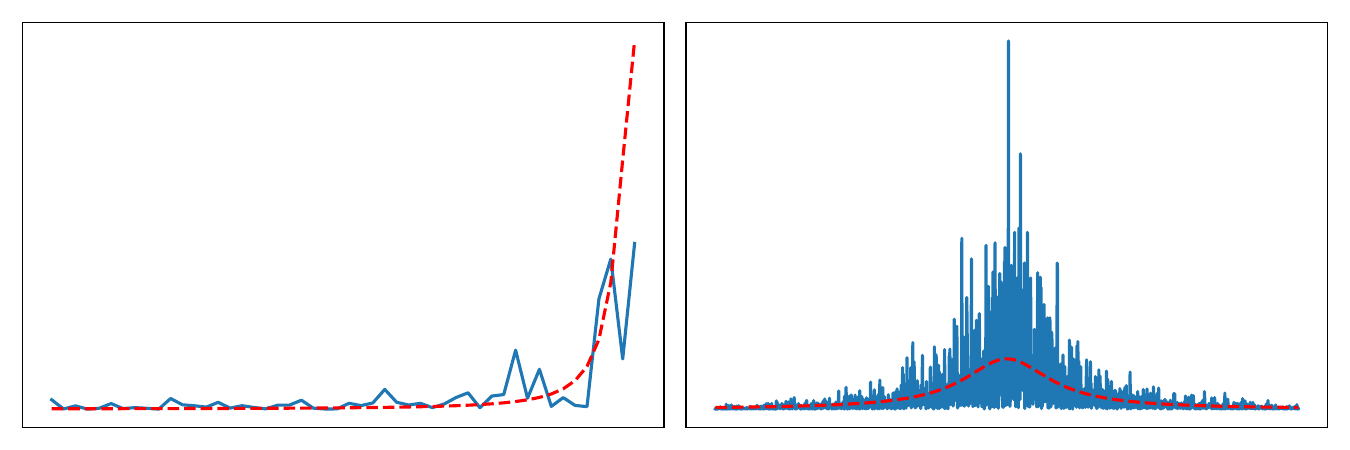}
	\caption{Simulation (blue) of the eigenvector overlap curve $i \mapsto |\langle \bm u_{i_0}, \bm e_i\rangle|^2$ for $i_0 =N$ (edge eigenvector, left column) and $i_0 = N/2$ (bulk eigenvector, right column) for $N=2000$ and three different $\lambda$-values: (i) $\lambda = N^{-1/4}$ in the first row, (ii) $\lambda = N^{-1/6}$ in the second row, (iii)  $\lambda = N^{-1/12}$ in the third row. The red curves sketch the upper bound on the rhs.~of \eqref{eq:del}. For illustrational purposes, the figures in the left column show only the indices $\{1950, ..., 2000\}$, while the figures in the right column cover the entire range of indices $\{1,...,2000\}$.  }
	\label{fig:simulate}
\end{figure}

\subsection{Re-entrant localization}
\label{subsubsec:reentrloc}  Following the very recent physics paper~\cite{ghosh2025reentrant}, 
consider the special $N\times N$ Hermitian matrix $H$ with entries
\begin{equation}
\label{eq:tridiagh}
H_{mn}:=h_n \delta_{m,n}+N^k(\delta_{m,n+1}+\delta_{m,n-1})+R_{mn} N^{-\gamma/2}.
\end{equation}
Here the real random variables $h_n$ are centered and have a joint distribution such that the following holds. The ordered statistics $h_{n_1}< h_{n_2}<\dots<h_{n_N}$ are such that the level spacings $s_i=h_{n_i+1}-h_{n_i}$, $i\in [N-1]$, 
are i.i.d.~Pareto distributed, i.e.~they are distributed according to the density
\[
p(s)=\frac{d\delta_{\mathrm{typ}}^d}{s^{d+1}}\theta(s-\delta_{\mathrm{typ}}),
\]
where $\theta(\cdot)$ is the Heaviside function. The parameter $d>0$ can be regarded as the fractal dimension of the disorder, while the cut-off parameter $\delta_{\mathrm{typ}}$ corresponds the typical level spacing of the $h_n$'s, and we choose it to be $\delta_{\mathrm{typ}}\sim N^{-1/d}$.  Additionally, $R=R^*$ is a
random matrix with independent entries (up to symmetry), \nc the $R_{mn}$'s are 
independent of the $h_n$'s, centered, normalized to $\E |R_{mn}|^2=1$. The two parameters in~\eqref{eq:tridiagh} satisfy $k\in \R, \gamma>0$. In particular, $H$ is a Laplacian with onsite (diagonal) i.i.d. potential plus a (small) Wigner matrix. In \cite{ghosh2025reentrant} it was observed that the orthonormal eigenvectors ${\bm u}_i$ of $H$ exhibit an interesting behavior, as $k,\gamma$ vary, called “re-entrant localization". In particular, by considering the inverse participation ratio,
\begin{equation}
\label{eq:IPR}
\mathrm{IPR}_q:=\sum_{a=1}^N |{\bm u}_i(a)|^{2q},
\end{equation}
for $q=2$, they (non-rigorously) showed the surprising effect that, as $k$ grows,  the system initially tends to localization from a non-ergodic delocalized phase, but then re-enters the fully ergodic phase.   In other words, contrary to basic intuition, in a certain regime the increasing
effect of the Laplacian enhances localization. This is because delocalization in the model~\eqref{eq:tridiagh}
for negligible Laplacian (very negative $k$) is of a different type than the approximate plane waves
exhibit for the Laplacian itself (very positive $k$) and the transition between these two types
of states is not monotone.

The Hamiltonian \eqref{eq:tridiagh} can be seen to be a particular case of the model \eqref{eq:smalldef}:
\begin{equation}
H=H_\lambda:=H_0+\lambda W, \qquad (H_0)_{mn}:=h_n \delta_{m,n}+N^k(\delta_{m,n+1}+\delta_{m,n-1}), \qquad \lambda:=N^{(1-\gamma)/2}.
\end{equation}
Combining our results \eqref{eq:isoLL}--\eqref{eq:evdeloc} with the Dyson Brownian motion (DBM) analysis\footnote{We point out that the DBM shows that moments of $|{\bm u}_i(a)|^2$ are well approximated by their local spectral average, i.e.~by the resolvent, which can in turn be well approximated by $\Gamma_\lambda$, as a consequence of \eqref{eq:isoLL}.} of \cite{benigni2020eigenvectors}, for bulk eigenvectors, we can write
\begin{equation}
\mathrm{IPR}_2=\big(1+o(1)\big)\sum_{a=1}^N\big(\big[(\Gamma_\lambda(z_i)\big]_{aa}\big)^2, \qquad\quad z_i:=\gamma_i+\ii N^\xi\gamma_i,
\end{equation}
for an arbitrary small $\xi>0$. Writing the explicit form of $\Gamma_\lambda(z_i)$, in terms of the eigenvalues and eigenvectors of $H_0$, we can rigorously deduce the phase diagram in \cite[Figure 1]{ghosh2025reentrant}. We point out that from our analysis 
a similar result for $\mathrm{IPR}_q$ for any integer $q\ge 2$ also follows, but we omit the details.

\subsection{Further results}
\label{sec:comments}
We present some additional results that can easily be obtained by minor modifications of the techniques presented in this paper.

\begin{itemize}

\item \textbf{No outliers:} Under the assumption of a \emph{shape analysis},  as explained below Corollary~\ref{cor:deloc},
 one can readily get an improvement of \eqref{eq:avLL} outside of the support of $\rho_\lambda$, where basically $\eta$ is replaced with the distance of $z$ to the support of the self consistent density $\rho_\lambda$. 
  In particular, one can show that there are no \emph{outliers}, i.e.~eigenvalues far away from this support, 
   and that the number of eigenvalues in each connected component of the support of $\rho_\lambda$ is a deterministic integer with very high probability (see, e.g., \cite[Eq. (3.18)]{campbell2024spectral} and \cite[Theorem 2.8]{erdHos2025cusp} for analogous results).

\medskip

\item \textbf{Toy model for quantum chaos:} A fundamental input for Theorem~\ref{theo:ETH} is the bound \eqref{eq:2G2} below for the product of two resolvents. By techniques analogous to those developed in this work we can prove not only a bound but
a local law for products of resolvents of arbitrary length, i.e.~we can identify its deterministic leading order 
behavior. \nc
 Once these local laws are achieved, we can easily obtain 
 physically relevant statements related to multi-resolvent local laws. For example, one can study the long time
asymptotic for thermalization of several observables \cite{cipolloni2021thermalisation}, out-of-time correlators (OTOC) \cite{cipolloni2024out}, Loschmidt echo \cite{erdHos2025loschmidt}, Prethermalization \cite{erdHos2025prethermalization},  the Spectral Form Factor  (SFF) \cite{cipolloni2023spectral}, and show the emergence of the Equipartition principle for these models  \cite{bao2021equipartition, cipolloni2023gaussian}. Now we can study all these quantities for the model $H_\lambda$, earlier only the case $\lambda=1$, $\| H_0\|\lesssim 1$ 
 was accessible.

Just as a concrete example, consider a diagonal $H_0 = \mathrm{diag}(\mu_1, ... , \mu_N)$ which we assume to have a smooth limiting density of states $\rho_0$.  Assume that $\mu_{i_0}$ lies in the \emph{bulk}, $\rho_0(\mu_{i_0}) > c$ with some small $c>0$, and denote $\alpha := 2 \pi \rho_0(\mu_{i_0})$. Then, for $\mathcal{J} \subset [N] $ with $|\mathcal{J}| \ge c N$ and (mildly) $N$-dependent $\lambda$, we have that 
\begin{equation*}
\,\,\,\,\,\,\,\,\,\,\,\,\,\,\,\,\,\,\,\,\,\,\,\frac{1}{|\mathcal{J}|} \sum_{j \in \mathcal{J}} \big| \big(\ee^{\ii t H_{\lambda}}\big)_{i_0 j} \big|^2 \approx \big(1 - \ee^{- \alpha \lambda^2 t}\big) \, \frac{\sum_{j \in \mathcal{J}} \left((\mu_j - \mu_{i_0})^2 + (\alpha \lambda^2)^2\right)^{-1}}{|\mathcal{J}| \sum_{j \in [N]} \left((\mu_j - \mu_{i_0})^2 + (\alpha \lambda^2)^2\right)^{-1}} + \ee^{- \alpha \lambda^2 t} \frac{\mathbf{1}(i_0 \in \mathcal{J})}{|\mathcal{J}|}, 
\end{equation*}
which shows quantum diffusion on kinetic time scales $t \sim 1/(\alpha \lambda^2)$. 
This follows similarly to the proof of \emph{prethermalization} in \cite{erdHos2025prethermalization} using an analog of Theorem~\ref{theo:main1} for products of two resolvents (or, alternatively, a variant of \eqref{eq:2G2A} in Theorem \ref{thm:2GLL} below without $\Im G$'s). 
\end{itemize}

\medskip

Finally, we mention other results that can be potentially achieved using similar techniques, but for which we omit the details for brevity.

\begin{itemize}

\item \textbf{Universal statistics of eigenvalues/eigenvectors:} In this work we focused on local laws for resolvents, however, using these bounds as an input one can prove universality of the distribution of eigenvalues and eigenvectors, if the shape analysis as above is assumed. For eigenvectors, one can prove the Gaussianity of $|{\bm u}_i(x)|^2$ and $\langle {\bm u}_i, A{\bm u}_j\rangle$, for bulk indices,  following the proof of \cite{benigni2020eigenvectors, bourgade2017eigenvectorsparse, marcinek2022high} (see also the very recent work \cite{benigni2026quantitative}) and of \cite{benigni2024fluctuations, cipolloni2023gaussian, cipolloni2022normal} respectively. For eigenvalues, in the bulk and at the edge of the spectrum universality follows analogously to \cite{landon2017convergence, landon2017edge, landon2019fixed} (or adapting the more recent \cite{bourgade2021extreme}), while for the cusp regime one can likely combine the technique of \cite{bourgade2021extreme} with the semicircular flow analysis of \cite[Section 4]{cipolloni2019cusp} to prove universality in this regime (and all the intermediate ones) as well.

\medskip

\item \textbf{Diverging rank perturbations:} The work \cite{peche2006largest} considered matrices of the form $GUE+H_0$, where $H_0$ is a deterministic deformation of diverging rank $N^\alpha$, for $\alpha\in (0,1)$. In this setting it was shown that 
\emph{“mini-bulks”}, i.e.~bumps of the limiting density of $GUE+H_0$ containing $\ll N$ eigenvalues, can emerge outside of the 
main bulk spectrum. In \cite{peche2006largest} it is also shown that locally one still see the emergence Wigner-Dyson statistics \cite{mehta1967random} in the bulk and Tracy-Widom statistics \cite{tracy1994level} at the edge
of the mini-bulk.  For $W+H_0$, when $W$ is a general Wigner matrix, the only known result in this direction is in \cite{huang2018mesoscopic}, where a weak form of rigidity is proven. We believe that the techniques developed in our work can be used to prove rigidity, delocalization/ETH bounds, and eigenvalues/eigenvectors universality in these mini-bulks. We also believe that with our local law as an input one could also study the emergence of the Pearcey \cite{tracy2006pearcey} process in the possible cusp regimes of the spectrum.
\end{itemize}

\subsection{Non-Hermitian Rosenzweig-Porter model}
\label{sec:non-Herm}

In this work we focused on Hermitian random matrices. However, it is very natural to also consider non-Hermitian matrices of the form
\begin{equation}
\label{eq:nonhermRP}
X_\lambda:=X_0+\lambda X,
\end{equation}
where $X_0$ is a general (possibly non-Hermitian) deterministic matrix (\emph{deformation})  and $X$ is a matrix with i.i.d. entries (e.g.~no symmetry assumption on $X$), satisfying a moment assumption similar to Assumption~\ref{ass:entries}. It is natural to ask localization/delocalization properties of the left/right eigenvectors of $X_\lambda$. For example, in the case when $X_0$ is diagonal (with i.i.d. complex standard Gaussian entries) and when $\lambda\ll 1$, the model \eqref{eq:nonhermRP} can be thought as a non-Hermitian version of the Rosenzweig-Porter model \cite{de2022non}. In \cite{de2022non}, it was observed that unlike in the Hermitian case, where for $\lambda:=N^{-\gamma}\ll 1$ the eigenvectors exhibit an interesting non-ergodic delocalized phase (see Section~\ref{sec:mobedge}), in the non-Hermitian case there is a sharp transition (on polynomial scales in $N$) from localization for $\gamma>0$ to a fully delocalized phase for $\gamma=0$. Surprisingly, in the recent work \cite{fyodorov2025kac} (see Section 2.3.1 therein), Fyodorov discovered that also in the non-Hermitian case non-ergodic delocalized states appear for $\lambda$ close to $1$ on logarithmic scales. More precisely, by computing the $\mathrm{IPR}_2$ (see \eqref{eq:IPR}) for either right or left eigenvectors and choosing $\lambda$ as $\lambda=[(\mu/2)\log N]^{-1/2}$, for an $N$-independent $\mu>0$, non-ergodic delocalized states indeed emerge in the non-Hermitian setting as well.

Similarly to the Hermitian case, by the methods developed in this work, we can study the non-Hermitian Rosenzweig-Porter model as well as its generalization when $X_0$ is not necessarily diagonal and not even normal. This can be achieved by considering the Hermitized matrix
\[
H_\lambda^z:=\left(\begin{matrix}
0 & X_\lambda-z \\
(X_\lambda-z)^* & 0
\end{matrix}\right)
\]
and studying its resolvent via an analog of the Zigzag strategy presented below\footnote{Studying this Hermitized matrix will also prove the localization/delocalization transition for the eigenvectors of the recently introduced Wishart-Rosenzweig-Porter model \cite{delapalme2025wishart}.}. This also gives a control on the singular vectors. Both the resolvent and the singular vectors are controlled uniformly in $z$ and with very high probability. We can thus use the idea of \cite[Proof of Corollary 2.4]{alt2021spectral} and \cite[Section 3.3.3]{cipolloni2024optimal} to conclude the same bounds for the actual left/right eigenvectors of $X_\lambda$. Moreover, we believe that the results \cite{nonHermdecay, cipolloni2024optimal, cipolloni2025non, cipolloni2023rightmost, cipolloni2025optimal, erdHos2024wegner}, obtained for $\lambda=1$ (and in some cases $X_0=0$), can be extended to the model \eqref{eq:nonhermRP}. In particular, by considering $X_0$ to be Hermitian and $\lambda\ll 1$, we can also study the eigenvalues/eigenvectors behavior in the case of the elliptic ensemble in the regime of weak non-Hermiticity (see \cite{byun2025progress} for a recent review discussing several results in this model). We do not present a detailed 
study as no new key inputs are needed and a similar strategy to the one used below applies.

\section{Two resolvent bound: Proof of Theorem \ref{theo:ETH}} \label{sec:proofmain}
The goal of this section is to give a proof of our ETH result in Theorem \ref{theo:ETH} based on a two-resolvent bound formulated in Theorem \ref{thm:main2G} below. We begin by defining the general regularization of observables, similarly to the special case introduced around \eqref{eq:regobsij}--\eqref{eq:s2ijdef} above. 
\begin{definition}[Regular observables and regularization]
	\label{def:regobs}
	Let $A \in \C^{N \times N}$ and $z_1, z_2 \in \C\setminus \R$ and denote $z_i^\pm := \Re z_i \pm \ii |\Im z_i|$. We say that $A$ is \emph{regular w.r.t.~$(z_1, z_2)$} (or, for short, \emph{$(z_1, z_2)$-regular}), if and only if $A$ is equal to its \emph{regularization} $\mathring{A}^{z_1, z_2}$ defined as 
	\begin{equation}
	\label{eq:regobs}
		\mathring{A}^{z_1, z_2} := A - \theta(z_1, z_2) \frac{\langle M(z_1^+) A M(z_2^-) \rangle }{\langle M(z_1^+)  M(z_2^-) \rangle} \mathbf{1} \,. 
	\end{equation}
	Here, with $\theta(x) = (1 + \sgn(x))/2 $ being the Heaviside function, we introduced the shorthand notation
	
\begin{equation}
	\label{eq:defcutoff}
\theta(z_1, z_2) := \theta\big( \lambda^2 \Re \langle M(z_1^+) M(z_2^-) \rangle - 1/2 \big) \,. 
	\end{equation}
\end{definition}
We point out that the definition of regularity in Definition~\ref{def:regobs} slightly differs from the one in \cite[Definition 2.2]{cipolloni2024eigenvector} (see also \cite[Definition 3.1]{cipolloni2024optimal}). In particular, the cut-off function in \cite[Eq. (2.9)]{cipolloni2024eigenvector} (see also \cite[Eq. (3.6)]{cipolloni2024optimal}) monitors the distance among the spectral parameters, rather than the size of $\langle M(z_1^+) M(z_2^-)\rangle$ as in \eqref{eq:defcutoff}. Here we made this different choice as it substantially simplifies the proof.
Similarly to the definition of regularity in \cite{cipolloni2024optimal}, we point out that $(z_1^{(*)}, z_2^{(*)})$-regularity is exactly the same for any constellation of adjoints taken. 
Moreover, abbreviating $\mathring{A} = \mathring{A}^{z_1, z_2}$, $M_i := M(z_i^+)$ and $\Gamma_i := \Im M_i /\langle \Im M_i \rangle $, we define
\begin{equation} \label{eq:defs2}
	\begin{split}
		\big(\s_2(z_1, z_2;A)\big)^2:=\langle\Gamma_1 \mathring{A} \Gamma_2 \mathring{A}^*\rangle &+ \theta(z_1, z_2) \, \lambda^2\bigg| \frac{\langle M(z_1^+) \mathring{A} \Gamma_2 \rangle \langle \Gamma_1 \mathring{A}^* M( z_2^+ ) \rangle }{1 - \lambda^2 \langle M(z_1^+) M( z_2^+)\rangle}\bigg|   \\[1mm]
		&+ (1 - \theta(z_1, z_2)) \, \lambda^4 \max_{*} |\langle M_1^{(*)} \mathring{A} M^{(*)}_{2} \rangle|^2 \langle \Gamma_1 \, \Gamma_2 \rangle ,
	\end{split}
\end{equation}
Then, our two resolvent bound takes the following form.

\begin{theorem}[Two resolvent bound] \label{thm:main2G} 
Fix $\epsilon > 0$.  Let $W$ be a Wigner matrix satisfying Assumption~\ref{ass:entries}, $H_0 \in \C^{N \times N}$ an arbitrary deterministic self adjoint matrix, and let $H_\lambda$ be as in \eqref{eq:smalldef}.  For $i=1,2$ take $z_i \in \C \setminus \R$ and denote $\eta_i := |\Im z_i|$, $\rho_i := \rho_\lambda(z_i)$ and $\ell:=\min_i(\eta_i\rho_i)$. 
	\begin{equation} \label{eq:2G2}
		\left| \big\langle \Im G(z_1) A \Im G(z_2) A^* 
		\big\rangle \right| \prec \rho_1 \rho_2\big( \s_2(z_1, z_2; A)\big)^2
	\end{equation}
uniformly in spectral parameters satisfying $N\ell\ge N^\epsilon$ and $\max_i |z_i| \le \lambda N^{\kappa_0 + 1/\epsilon}$, and deterministic $(z_1, z_2)$-regular matrices $A\in\C^{N\times N}$ (see Definition \ref{def:regobs}).
\end{theorem}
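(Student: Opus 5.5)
We will prove Theorem~\ref{thm:main2G} by the zigzag strategy, running it jointly with the single-resolvent local law of Theorem~\ref{theo:main1} and working only with averaged (tracial) quantities.

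\emph{Zig step.} Let $H_t$ be the Ornstein--Uhlenbeck flow $\dd H_t = -\tfrac12 (H_t - H_0)\,\dd t + \lambda N^{-1/2}\dd B_t$, which keeps $H_0+\lambda W_t$ within the model class. Let $z_{i,t}$ follow the characteristic flow $\partial_t z_t = -\lambda^2\langle M(z_t)\rangle - z_t/2$, adapted to the deformed MDE~\eqref{eq:MDE}, so that $M(z_t)$ evolves trivially.

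At the initial time $\eta$ is of order $\lambda N^{\kappa_0}$ or larger. There the bound~\eqref{eq:2G2} follows from the trivial estimates $\Vert G\Vert\le 1/\eta$ together with Lemma~\ref{lem:Mbounds}. We then write $\Im G = (G-G^*)/(2\ii)$, so that
\[
\langle \Im G(z_1) A \Im G(z_2) A^*\rangle
\]
becomes a combination of four chains $\langle G_1 A G_2 A^*\rangle$ with various adjoints. For each chain we derive the It\^o equation along the flow. The characteristic choice cancels the linear term up to a term of the form
\[
\lambda^2 \langle G_1 A G_2 A^*\rangle\bigl(\langle G_1 G_2\rangle - \langle M_1M_2\rangle\bigr),
\]
plus quadratic-variation terms of the form $\lambda^2 N^{-1}\langle G_1 A G_2 G_2 A^* G_1\rangle$ and a martingale.

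The deterministic part of the two-resolvent equation is governed by the stability factor $(1-\lambda^2\langle M_1M_2\rangle)^{-1}$. This is exactly where regularity of $A$ enters. When $\theta=1$, the regularization removes the leading component along $\mathbf 1$; the residual propagation contributes the second term of $\s_2$ in~\eqref{eq:defs2}. When $\theta=0$, the factor $|1-\lambda^2\langle M_1M_2\rangle|\gtrsim 1$ is stable, and the one-step expansion yields the $\lambda^4|\langle M_1AM_2\rangle|^2\langle\Gamma_1\Gamma_2\rangle$ term.

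The quadratic variation terms are reduced, via $|G|^2 = \Im G/\eta$, to chains like $\langle \Im G_1 A\Im G_2 A^*\rangle/(\eta_1\eta_2)$. These are closed by a Gronwall argument over the flow, using $\int \dd t/(\eta_t)\sim\log$ and the fact that $\eta_t\rho_t$ decreases along characteristics. The martingale is handled by BDG at high moments.

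\emph{Zag step.} We remove the Gaussian component by a GFT comparison along the flow, expanding via cumulants up to order $4$ and beyond. The main technical point claimed in the introduction is that every isotropic quantity such as $(G_1AG_2)_{ab}$ arising in the cumulant expansion can be bounded by the averaged quantity itself. For instance,
\[
\sum_{ab}|(G_1 A G_2)_{ab}|^2 = N\langle G_1AG_2G_2^*A^*G_1^*\rangle,
\]
which is again reduced through Ward identities to~\eqref{eq:2G2} at slightly smaller $\eta$. Entrywise single-$G$ factors are controlled by~\eqref{eq:isoLL}, with weights $(\Gamma)_{aa}$ absorbed by the $\Gamma$-weighted norms in $\s_2$. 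Iterating zig and zag with $\eta$ reduced by $N^{-\epsilon/10}$ per step reaches $N\ell\ge N^\epsilon$ in finitely many steps.

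I expect the main obstacle to be the zig-step Gronwall closure with the inhomogeneous weights $\Gamma$. The norms in $\s_2$ must propagate consistently in time, and the reduction of quadratic-variation and GFT terms must produce exactly $\Gamma$-weighted quantities rather than operator norms, which would lose factors $\lambda^{-2}\rho^{-1}$. To handle this I would first prove the monotonicity of $\s_2(z_{1,t},z_{2,t};A)\rho_{1,t}\rho_{2,t}$ along characteristics, and then check that this combination dominates every error term.
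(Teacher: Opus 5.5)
Your outline has the right architecture (a characteristic zig step, a purely tracial zag step, and a global law as base case), but several load-bearing steps fail as written.

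First, your zig flow is inconsistent with your characteristic. With $\dd H_t=-\frac12(H_t-H_0)\dd t+\lambda N^{-1/2}\dd\BM_t$ the deformation stays fixed. It\^o's formula for $G_t=(H_t-z_t)^{-1}$ with $\partial_t z_t=-z_t/2-\lambda^2\langle M(z_t)\rangle$ then leaves the uncancelled drift $-\frac12 G_tH_0G_t$. Moreover, $M(z_t)$ cannot satisfy $\partial_t M=M/2$ unless $H_0$ is scalar. That characteristic belongs to the flow $\dd H_t=-\frac12H_t\dd t+\lambda\dd\BM_t/\sqrt N$, in which $H_{0,t}=\ee^{-t/2}H_{0,0}$ shrinks. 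The link back to the model with fixed $H_0$ is Lemma~\ref{lem:zigzagrel}, and only the zag flow keeps $H_0$ and $M$ fixed.

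Second, the base case is not trivial. Along a time-one characteristic, $\eta_{t=0}\approx\ee^{1/2}\eta_{t=1}+c\lambda^2\rho_{t=1}$. So you start in the stability regime $\lambda^2\rho/\eta\lesssim 1$ of \eqref{eq:Dglobdef}, where $\eta$ may be much smaller than $\Vert H_0\Vert$, not at $\eta\gtrsim\lambda N^{\kappa_0}$. There $\Vert G\Vert\le 1/\eta$ only yields $\langle AA^*\rangle/(\eta_1\eta_2)$. This is far above $\rho_1\rho_2\s_2^2$ whenever $A$ lives on eigendirections of $H_0$ where $\Im M_i$ is small. A genuine global two-resolvent law is needed; the paper proves it (Proposition~\ref{prop:global2}) from \eqref{eq:2Gbasicident}, $|\beta_{12}|\gtrsim 1$ and master inequalities. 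Reducing $\eta$ multiplicatively from $\lambda N^{\kappa_0}$ instead would take flow times of order $\log N$ per step. The GFT Gronwall factor $\exp[(1+N^{-2\delta}\lambda^2\rho/\eta)\sfin]$ cannot absorb that.

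Third, the two-resolvent mechanism is misidentified. The bilinear It\^o term in $\dd\langle G_1AG_2A^*\rangle$ is $\lambda^2\langle G_1AG_2\rangle\langle G_2A^*G_1\rangle$, not $\lambda^2\langle G_1AG_2A^*\rangle\langle G_1G_2-M_1M_2\rangle$. Likewise, the drift of $\langle\Im G_1A\Im G_2A^*\rangle$ in \eqref{eq:imGAimGA} contains products of one-$A$ chains such as $\lambda^2\langle\Im G_1AG_2\rangle\langle\Im G_2A^*G_1\rangle$. The factor $(1-\lambda^2\langle M_1M_2\rangle)^{-1}$ lives in the flow \eqref{eq:GAG} of the one-$A$ chain. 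Closing the Gronwall therefore requires two ingredients you do not have:
\begin{itemize}
\item the local law \eqref{eq:2G1A} for $\langle G_1^{(*)}AG_2^{(*)}\rangle$, proved jointly via a common stopping time;
\item the deterministic bounds of Lemma~\ref{lem:m12bounds} (obtained by a meta-argument), which make the time integral of these products $\lesssim\log N\,\rho_1\rho_2\s_2^2$ as in \eqref{eq:diffllaw}.
\end{itemize}
This is where regularity and the second term of $\s_2$ really enter. The last bound of \eqref{eq:Mbounds} is also what turns \eqref{eq:2G2A} into \eqref{eq:2G2}.

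Finally, splitting $\Im G=(G-G^*)/(2\ii)$ and treating the four chains $\langle G_1^{(*)}AG_2^{(*)}A^*\rangle$ separately cannot reach the $\Gamma$-weighted target. Their deterministic parts (e.g.\ $\langle M_1AM_2A^*\rangle$) and their fluctuations carry $|M_i|=(\Im M_i)^{1/2}(\eta_i+\pi\lambda^2\rho_i)^{-1/2}$ instead of $\Im M_i$. On eigendirections of $H_0$ far from $\Re z_i$ they therefore exceed $\rho_1\rho_2\s_2^2$ by large factors; only the combination cancels. That is why the paper runs the flow for $\langle\Im G_1A\Im G_2A^*\rangle$ itself and exploits its positivity: the reduction inequality in the BDG bound, and the Schwarz bounds of Lemma~\ref{lem:Schwarz} in the GFT. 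This positivity is also what makes a purely tracial zag step possible.
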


Here we presented only a bound for the product of two (imaginary parts of) resolvents and two regular observables, since this suffices to prove our ETH result \eqref{eq:ETH}. However, within its proof we will consider also different products of two resolvents (not necessarily only their imaginary parts) and we will prove a stronger local law, i.e.~bounds on the fluctuations as well (see Theorem~\ref{thm:2GLL} below). We also point out that while in this work we focus on local laws for averaged observables of products of two resolvents, our proof is robust allowing us to also prove local laws in an isotropic sense, for arbitrary long products of resolvents, and for arbitrary products containing resolvents and their imaginary parts. We omit this for the sake of brevity and clarity of the presentation.

We are now ready to give the proof of Theorem \ref{theo:ETH}; the proof of Theorem \ref{thm:main2G} is postponed to Section~\ref{subsec:proof2G}. 
\begin{proof}[Proof of Theorem~\ref{theo:ETH}]
Fix an arbitrarily small $\xi>0$, and recall the notation $\widehat{\lambda}_i$ from \eqref{eq:gammahat}. Assume that $\mathring{A}$ is $(\widehat{\lambda}_i,\widehat{\lambda}_j)$-regular, which is now equivalent to it being equal to its $(i,j)$-regularization in the sense of \eqref{eq:regobsij}, but for random spectral parameters. Similarly to the grid argument around \eqref{eq:gridargu} we can ensure that the bound \eqref{eq:2G2} holds with very high probability simultaneously for all spectral parameters described below \eqref{eq:2G2}. Then, abbreviating $\rho_i = \pi^{-1} \langle \Im M(\widehat{\lambda}_i) \rangle$, by spectral decomposition and using the bound \eqref{eq:2G2},
together with the definition of $\eta_{\mathfrak{f}}$ below \eqref{eq:evdeloc}, we have 
\begin{equation*}
N\big|\langle {\bm u}_i, \mathring{A} {\bm u}_j\rangle\big|^2\lesssim (N^2 \eta_i \eta_j)\big\langle \Im G(\widehat{\lambda}_i) \mathring{A} \Im G(\widehat{\lambda}_j) \mathring{A}^* \big\rangle \lesssim N^{2 \xi } \frac{\big\langle \Im G(\widehat{\lambda}_i) \mathring{A} \Im G(\widehat{\lambda}_j) \mathring{A}^* \big\rangle }{\rho_i \rho_j }.
\end{equation*}
Additionally, by \eqref{eq:2G2} and abbreviating $\s_2 = \s_2(i,j;A) = \s_2(\widehat{\lambda}_i, \widehat{\lambda}_j; A)$,  we have
\[
\big\langle \Im G(\widehat{\lambda}_i) \mathring{A} \Im G(\widehat{\lambda}_j) \mathring{A}^* \big\rangle\lesssim \rho_i\rho_j (\s_2)^2.
\]
Combining these two bounds, and using that $M_i M_i^* /\langle M_i M_i^* \rangle = \Gamma_i$, we obtain
\begin{equation*}
\big|\langle {\bm u}_i, A {\bm u}_j\rangle - \delta_{ij} \theta_{ij} \langle \Gamma_i A \rangle\big|^2 = \big|\langle {\bm u}_i, \mathring{A} {\bm u}_j\rangle\big|^2\prec \frac{1}{N}(\s_2)^2. \qedhere
\end{equation*}
\end{proof}

\section{Local law proofs: Zigzag strategy} \label{sec:proofLL}

To prove the local laws in Theorems~\ref{theo:main1} and the bound in Theorem~\ref{thm:main2G} we follow the Zigzag strategy \cite{cipolloni2023eigenstate, cipolloni2024out, cipolloni2024eigenvector}, consisting of three steps:
\begin{itemize}

\item[\textbf{1.}] \textbf{Global law.} Proof of a local law for spectral parameters away from the spectrum of $H_\lambda$. 
\item[\textbf{2.}] \textbf{Characteristic flow.} Propagate the global law to a local law by considering the evolution of the Wigner component of $H_\lambda$ along the Ornstein-Uhlenbeck flow
\begin{equation}
	\label{eq:OU}
	\dd W_t = - \frac{W_t}{2} \dd t + \frac{\dd B_t}{\sqrt{N}} \quad \text{with initial condition} \quad W_0 = W,
\end{equation}
and the expectation matrix evolving according to $H_{0,t}:=e^{-t/2}H_0$. 
The evolution of spectral parameters of the resolvents from the global regime to the local regime is governed by the \emph{characteristic equation} (see, e.g., \cite[Eq. (5.3)]{cipolloni2024mesoscopic}):
\begin{equation}
	\label{eq:char}
	\partial_t z_t=-\frac{1}{2}z_t-\lambda^2\langle M_{\lambda,t}(z_t)\rangle,
\end{equation}
where $M_{\lambda,t}(z)$ is the solution of \eqref{eq:MDE} with $H_0$ being replaced by $H_{0,t}$.

\item[\textbf{3.}] \textbf{Green function comparison.} Remove the Gaussian component by a \emph{Green's function compari-
son (GFT)} argument.
\end{itemize}
To keep the presentation simpler, we shall henceforth assume $\sigma := \E \chi_{\mathrm{od}}^2 $ is real and and $\E \chi_{\mathrm{d}}^2 = 1 + \sigma$. The modifications to the general case are routine and hence left to the reader (see, e.g., \cite[Section~4.4]{cipolloni2023eigenstate} for some details concerning the second step). Now, in Section \ref{subsec:setup}, we start by setting up an  overarching inductive proof scheme. Afterwards, in Sections~\ref{subsec:proof1G}--\ref{subsec:proof2G} we provide the proofs of the single resolvent and two resolvent laws, respectively. 
\subsection{Setting up the inductive zigzag procedure} \label{subsec:setup} In this section, we setup the overarching framework of our inductive zigzag proof, analogously to \cite[Section 3]{erdHos2025cusp}. 
\subsubsection{Two random matrix flows}
Along the proof, we use two similar (but distinct) flows in the space of $N \times N$ matrices, the \emph{zig-flow} (a standard Ornstein-Uhlenbeck process, naturally scaled by $\lambda$), defined as
\begin{equation} \label{eq:zigflow}
\dd H_{\lambda, t} = - \frac{1}{2} H_{\lambda,t} \dd t + \lambda \frac{\dd \BM_t}{\sqrt{N}}\,, \quad t \ge 0
\end{equation}
and the \emph{zag-flow} (a modified Ornstein-Uhlenbeck process), distinguished by the superscript $t$, 
\begin{equation} \label{eq:zagflow}
\dd H_\lambda^t = - \frac{1}{2} \big(H_\lambda^t - \E H_\lambda^t  \big) \dd t +  \lambda \frac{\dd \BM_t}{\sqrt{N}}\,, \quad t \ge 0
\end{equation}
where $\BM_t$ denotes the real symmetric or complex Hermitian standard matrix valued Brownian motion, depending on the symmetry class of $H_\lambda$ (i.e.~whether $\chi_{\mathrm{od}}$ is a real or complex random variable). On the one hand, by taking the expectation in \eqref{eq:zigflow}, we see that the deterministic $H_{0,t}$ evolves as $H_{0,t} = \ee^{-t/2} H_{0,0}$. On the other hand, the zag flow \eqref{eq:zagflow} keeps the first two moments of $H_\lambda^t$ preserved. Therefore, in particular, the deterministic approximation $M$ is unchanged during the zag-flow. 

For any $t \ge 0$, we define the flow maps $\mathfrak{F}_{\rm zig}^t$ and $\mathfrak{F}_{\rm zag}^t$ on the space of probability distributions on $\C^{N \times N}$ as
\begin{alignat}{2}
\mathfrak{F}_{\rm zig}^t \big[H_\lambda\big] &:= H_{\lambda, t}\,, \qquad &&\text{where } H_{\lambda, t} \text{ solves \eqref{eq:zigflow} with initial condition } H_{\lambda, 0} = H_\lambda \\ 
\mathfrak{F}_{\rm zag}^t \big[H_\lambda\big] &:= H_{\lambda}^t\,, \qquad &&\text{where } H_{\lambda}^t \text{ solves \eqref{eq:zagflow} with initial condition } H_{\lambda}^0 = H_\lambda \,. 
\end{alignat}
The obvious (but key) relation between the two random matrix flows is captured in the following lemma. 
\begin{lemma}[Relation between the zig and zag flow] \label{lem:zigzagrel}
Fix $\lambda > 0$ and consider $H_\lambda = H_0 + \lambda W$ with a deterministic Hermitian $H_0 \in \C^{N \times N}$ and $W$ being a Wigner matrix satisfying Assumption \ref{ass:entries}. Then it holds that
\begin{equation*}
\mathfrak{F}_{\rm zig}^t\big[\ee^{t/2} H_0 + \lambda W\big] \stackrel{\dd}{=} \mathfrak{F}_{\rm zag}^t\big[H_\lambda\big] \qquad \text{for all} \quad t \ge 0 \,. 
\end{equation*}
\end{lemma}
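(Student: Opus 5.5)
Both flows are linear SDEs with the same noise, so I will solve each explicitly and compare distributions. For the zig-flow \eqref{eq:zigflow} with initial condition $H_{\lambda,0}=\ee^{t/2}H_0+\lambda W$, variation of constants gives
\begin{equation*}
H_{\lambda,t}=\ee^{-t/2}\big(\ee^{t/2}H_0+\lambda W\big)+\frac{\lambda}{\sqrt N}\int_0^t \ee^{-(t-s)/2}\,\dd\BM_s
= H_0+\lambda\Big(\ee^{-t/2}W+\frac{1}{\sqrt N}\int_0^t\ee^{-(t-s)/2}\,\dd\BM_s\Big).
\end{equation*}
For the zag-flow \eqref{eq:zagflow}, I first take expectations. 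Since the martingale term is centred, $\partial_t\E H^t_\lambda=-\tfrac12(\E H^t_\lambda-\E H^t_\lambda)=0$, so $\E H_\lambda^t=H_0$ for all $t$. The equation then becomes the linear SDE $\dd H^t_\lambda=-\tfrac12(H^t_\lambda-H_0)\dd t+\lambda\,\dd\BM_t/\sqrt N$. Setting $X_t:=H^t_\lambda-H_0$ gives an Ornstein–Uhlenbeck equation with $X_0=\lambda W$, hence
\begin{equation*}
H^t_\lambda=H_0+\lambda\Big(\ee^{-t/2}W+\frac{1}{\sqrt N}\int_0^t\ee^{-(t-s)/2}\,\dd\BM_s\Big).
\end{equation*}
To make this rigorous, I would solve the linear SDE with $H_0$ in place of $\E H^t_\lambda$ and check a posteriori that the solution has mean $H_0$. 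That solution then solves \eqref{eq:zagflow}, and uniqueness follows from the Lipschitz linear structure.

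The two expressions are identical functionals of the pair $(W,\BM)$, with $\BM$ independent of $W$ in both cases. The laws therefore coincide for every $t\ge0$; in fact the processes agree pathwise under the natural coupling. The only point needing care is the self-referential drift $\E H^t_\lambda$ in the zag-flow, which is a McKean–Vlasov-type term. It is handled by the observation that the mean is conserved, so that term reduces to the deterministic $H_0$. No real obstacle remains.
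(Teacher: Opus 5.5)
Your proof is correct and is the argument the paper intends; the paper calls the lemma obvious and omits the proof. Both flows are explicitly solvable Ornstein--Uhlenbeck equations, the zag-flow mean is conserved and equals $H_0$, and both time-$t$ values equal $H_0+\lambda\big(\ee^{-t/2}W+N^{-1/2}\int_0^t\ee^{-(t-s)/2}\,\dd\BM_s\big)$. One small correction to your remark that ``the processes agree pathwise'': only the time-$t$ values coincide, not the whole processes on $[0,t]$, because the zig-flow started at $\ee^{t/2}H_0+\lambda W$ has mean $\ee^{(t-s)/2}H_0\neq H_0$ at times $s<t$.
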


\subsubsection{Time dependent spectral domains and characteristic flow}
The induction base of our proof is laid by a \emph{global law} for the resolvent $G_\lambda(z) := (H_\lambda - z)^{-1}$ of a random matrix $H_\lambda = H_0 + \lambda W$. That is, we show that for spectral parameters $z$ "far away" from the spectrum of $H_\lambda$, the resolvent concentrates around the unique solution $M_\lambda(z)$ of the MDE \eqref{eq:MDE}. More precisely, for any $\epsilon, c > 0$, the global law (see Proposition \ref{prop:global}) holds on the domain (recall $\kappa_0$ from \eqref{eq:kappa})
	\begin{equation} \label{eq:Dglobdef}
		\mathcal{D}^{\rm glob} \equiv \mathcal{D}^{\rm glob}(\epsilon, c):= \left\{ z \in \C \setminus \R  : |\Re z| \vee |\Im z| \le \lambda N^{\kappa_0 + 1/\epsilon }, \, \lambda^2 \rho_\lambda(z)/|\Im z| \le 1/c  \right\}
	\end{equation}
	where we recall the notation $\rho_\lambda(z) := \pi^{-1} |\langle \Im M_\lambda(z) \rangle|$. The global domain is characterized by the \emph{stability operator} 
	\begin{equation} \label{eq:stabop}
		\mathcal{B}[\cdot] \equiv \mathcal{B}_\lambda(z)[\cdot] := \mathbf{1} - \lambda^2 M_\lambda(z) \langle \cdot \rangle M_\lambda(z)
	\end{equation}
	having a bounded inverse (bound by $1/c$).  In most previous zigzag proofs of local laws of a random matrix $H$, the
\emph{global} regime meant $\eta:=|\Im z| \gtrsim 1$ (or $\mathrm{dist}(z, \mathrm{supp}\rho)\gtrsim 1$, whenever the edge regime is included), 
under the overall condition that $\lVert H \rVert$ is typically of order 1. In our situation this would correspond
to $\eta\ge \lVert H_\lambda\rVert \sim \lVert H_0\rVert + \lambda$. We stress that the condition in \eqref{eq:Dglobdef}, $\eta\gtrsim \lambda^2\rho_\lambda$ is different. In particular, $\eta$ could be much smaller than $ \lVert H_0\rVert + \lambda$. More importantly, 
the new condition \eqref{eq:Dglobdef} does not depend on $\lVert H_0\rVert$, which could be very large for its far away 
eigenvalues that do not influence the zigzag argument around $z$. In short, \eqref{eq:Dglobdef} based upon the
boundedness of the stability operator is the canonically correct concept of the global regime. 
The goal of the zigzag proof is to propagate the global law in $\mathcal{D}^{\rm glob}$ to a \emph{local law}, valid in the \emph{above the scale} regime, where $N |\Im z| \rho_\lambda(z)$ is large. For $\epsilon > 0$ we define it as
	\begin{equation} \label{eq:Dabvdef}
	\mathcal{D}^{\rm abv}(\epsilon) := \left\{ z \in \C \setminus \R  : |\Re z| \vee  |\Im z|\le \lambda N^{\kappa_0 + 1/\epsilon}, \, N |\Im z| \rho_\lambda(z) \ge N^\epsilon  \right\}
\end{equation}
and note that studying local laws in this regime is natural, since $N |\Im z| \rho_\lambda(z)$ is the typical number of eigenvalues in an interval of length $|\Im z|$ around the energy $\Re z$. 
	
To achieve the propagation from global to local laws, we consider the time-dependent Matrix Dyson Equation (MDE)
\begin{equation} \label{eq:timedepMDE}
- \frac{1}{M_{\lambda, t}(z)} = z - H_{0,t} + \lambda^2 \langle M_{\lambda, t}(z) \rangle \quad z \in \C\setminus \R \quad \text{under the constraint} \quad \Im z \cdot \Im M_{\lambda, t}(z) > 0
\end{equation}
where $H_{0,t}$ solves the ordinary differential equation
\begin{equation}
\dd H_{0,t} = - \frac{1}{2} H_{0,t} \dd t \quad \text{for} \quad t \ge 0 \quad \text{with terminal condition} \quad H_{0,T} = H_0 \quad \text{for some} \quad T > 0\,. 
\end{equation}
In the following we will fix the terminal time $T$ to be given by $$T = 1\,.$$ That is, the total running time of our flows will be given by the interval of times $t \in [0,1]$. Given $M_t$, we consider the \emph{characteristic flow} for the time dependent spectral parameters $z_t$ evolving according to
\begin{equation} \label{eq:characteristic}
\dd z_t = - \frac{z_t}{2} \dd t - \lambda^2 \langle M_{\lambda, t}(z_t) \rangle \dd t \,. 
\end{equation}
Moreover, we also define the (inverse) flow map $\varphi_{s,t} : \C\setminus \R \to \C\setminus \R$, uniquely defined as
\begin{equation} \label{eq:flowmap}
\varphi_{s,t}(z_t) := z_s \quad \text{with $z_s$ solving \eqref{eq:characteristic}} \,. 
\end{equation}
One can immediately check that along the characteristic curves the solution to the MDE satisfies
\begin{equation} \label{eq:Mevolve}
\dd M_{\lambda, t}(z_t) =  \frac{1}{2} M_{\lambda, t}(z_t) \dd t \,. 
\end{equation}
In the following, we denote $\rho_{\lambda, t}(z) := \pi^{-1} |\langle \Im M_{\lambda, t}(z) \rangle|$ for any $z \in \C\setminus \R$ and $t \in [0,1]$. 
For constants $c', C' > 0$, we define the time dependent version of the \emph{above the scale} domain as
	\begin{equation*}
		\begin{split}
\mathcal{D}_t^{\rm abv} = \mathcal{D}_t^{\rm abv}(\epsilon, c', C') := \Big\{ z \in \C  \setminus \R : \ &|\Re z| \vee |\Im z|\le \lambda N^{\kappa_0 + 1/\epsilon} \cdot \big(1+ C'(1 - t)\big),  \, N |\Im z| \rho_{\lambda, t}(z) \ge N^\epsilon , \\
&\big(\lambda^2 \rho_{\lambda, t}(z)/|\Im z|\big)^{-1} \ge c' \cdot \big( N^{-1+\epsilon} + (1 - t)\big) \Big\}
		\end{split}
	\end{equation*}
Then we have the following key lemma on the propagation from the global to the local scale. The proof is completely analogous to that of \cite[Lemma 3.5]{erdHos2025cusp} and hence omitted. 
\begin{lemma}[Relations between time-dependent domains] \label{lem:Dtrel}
Fix a small $\epsilon > 0$. Then there exist constants $c', C' \sim 1$ such that the time dependent domains $\big(\mathcal{D}_t^{\rm abv}(\epsilon, c', C')\big)_{t \in [0,1]}$ satisfy 
\begin{equation} \label{eq:domaininterpolation}
\varphi_{s,t}(\mathcal{D}_t^{\rm abv}(\epsilon, c', C')) \subset \mathcal{D}_s^{\rm abv}(\epsilon, c', C')
\end{equation}
 for all $0 \le s \le t \le 1$. Moreover, there exists a constant $c \sim 1$ such that, when replacing $\rho_\lambda \to \rho_{\lambda,0}$ in the definition of $\mathcal{D}^{\rm glob}(\epsilon, c)$ in \eqref{eq:Dglobdef} and $\rho_\lambda \to \rho_{\lambda, 1}$ in the definition of $\mathcal{D}^{\rm abv}(\epsilon)$ in \eqref{eq:Dabvdef}, it holds that 
\begin{equation} \label{eq:initialandfinaldomains}
\mathcal{D}_0^{\rm abv}(\epsilon, c', C') \subset \mathcal{D}^{\rm glob}(\epsilon, c) \quad \text{and} \quad \mathcal{D}^{\rm abv}(\epsilon) \subset \mathcal{D}_1^{\rm abv}(\epsilon, c', C') \,. 
\end{equation}
\end{lemma}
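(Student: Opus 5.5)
The plan is to reduce everything to explicit monotonicity properties along the characteristic curves, using \eqref{eq:Mevolve}. Along a characteristic $z_t$ we have $M_{\lambda,t}(z_t)=\ee^{(t-s)/2}M_{\lambda,s}(z_s)$. Hence $\langle \Im M\rangle$ and $\rho_{\lambda,t}(z_t)$ grow like $\ee^{t/2}$, and all ratios involving $\Im M$ are preserved exactly. Taking imaginary parts of \eqref{eq:characteristic} gives, with $\eta_t:=\Im z_t>0$,
\begin{equation*}
\partial_t \eta_t = -\tfrac12 \eta_t - \lambda^2 \pi \rho_{\lambda,t}(z_t),
\end{equation*}
so $\eta_t$ is decreasing. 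Going backward in time ($s\le t$), $\eta_s\ge \eta_t$ while $\rho_{\lambda,s}(z_s)=\ee^{-(t-s)/2}\rho_{\lambda,t}(z_t)\sim\rho_{\lambda,t}(z_t)$.

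\textbf{Inclusion \eqref{eq:domaininterpolation}.} I would check the three defining conditions of $\mathcal{D}^{\rm abv}_s$ one at a time.

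First, the condition $N\eta\rho\ge N^\epsilon$. I would compute $\partial_t(\eta_t\rho_t)=\eta_t\rho_t/2+\rho_t\partial_t\eta_t=-\pi\lambda^2\rho_t^2\le0$. So $\eta\rho$ decreases in $t$, and going back to time $s$ it only increases.

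Second, the stability quantity $q_t:=\eta_t/(\lambda^2\rho_t)$. Here $\partial_t q_t=(\partial_t\eta_t)/(\lambda^2\rho_t)-q_t/2=-q_t-\pi$. Therefore $q_s=\ee^{t-s}(q_t+\pi)-\pi\ge q_t+\pi(1-\ee^{-(t-s)})\cdot$const. More precisely, $q_s\ge q_t+c(t-s)$ for $t-s\le1$. So if $q_t\ge c'(N^{-1+\epsilon}+1-t)$, then $q_s\ge c'(N^{-1+\epsilon}+1-s)$ provided $c'\le c\sim1$; this fixes $c'$.

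Third, the size condition $|\Re z|\vee|\Im z|$. Using $|\lambda^2\langle M\rangle|\le\lambda^2\|M\|\lesssim\lambda^2/(\eta+\lambda^2\rho)$ from Lemma~\ref{lem:Mbounds}, I would bound $|\partial_t z_t|\le|z_t|/2+C\lambda$. To get the last step I would use the crude bound $\lambda^2|\langle M\rangle|\le\lambda\langle|M|^2\rangle^{1/2}\lambda\lesssim\lambda\sqrt{\lambda^2\rho/(\eta+\lambda^2\rho)}\le\lambda$, via Lemma~\ref{lem:Mbounds} and Cauchy--Schwarz. Then Gronwall backward over a time interval of length at most $1$ gives $|z_s|\le \ee^{1/2}(|z_t|+C\lambda)$. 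This is absorbed by the factor $1+C'(1-s)$ for $C'$ large, since the bound $\lambda N^{\kappa_0+1/\epsilon}\gg\lambda$ dominates the additive $C\lambda$. Tracking the growth factor precisely, of the form $(1+C'(1-t))\ee^{(t-s)/2}\le 1+C'(1-s)$, requires choosing $C'$ appropriately, with an extra $O(\lambda)$ slack. I expect this bookkeeping to be the main (though routine) obstacle, along with ensuring $\Re$ and $\Im$ are controlled separately.

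\textbf{Inclusions \eqref{eq:initialandfinaldomains}.} For the initial-time inclusion, $z\in\mathcal{D}_0^{\rm abv}$ gives $q_0\ge c'(N^{-1+\epsilon}+1)\ge c'$, i.e.\ $\lambda^2\rho_{\lambda,0}/\eta\le1/c'$. Taking $c=c'$, the size bound $(1+C')\lambda N^{\kappa_0+1/\epsilon}$ fits into $\mathcal{D}^{\rm glob}$ after slightly adjusting $\epsilon$ in the exponent, or by enlarging $\mathcal{D}^{\rm glob}$ by a constant factor, which the global law tolerates.

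For the final-time inclusion, $t=1$, the factor $1+C'(1-t)=1$. Then $q_1\ge c'N^{-1+\epsilon}$ follows from $N\eta\rho\ge N^\epsilon$ together with $\rho\lesssim\lambda^{-1}$ from \eqref{eq:boundrho}: indeed $\eta/(\lambda^2\rho)=N\eta\rho/(N\lambda^2\rho^2)\ge N^{\epsilon}/(CN)$. So $\mathcal{D}^{\rm abv}(\epsilon)\subset\mathcal{D}_1^{\rm abv}$.
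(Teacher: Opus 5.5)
Your proposal is correct and follows the standard argument behind this lemma (the paper omits the proof and defers to an analogous lemma in earlier work). The argument rests on the exact scaling $M_{\lambda,t}(z_t)=\ee^{(t-s)/2}M_{\lambda,s}(z_s)$, the identity $\partial_t(\eta_t\rho_t)=-\pi\lambda^2\rho_t^2$, the linear ODE $\partial_t q_t=-q_t-\pi$, and $\lambda\rho_{\lambda,1}\lesssim 1$ for the final-time inclusion.

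The size bookkeeping you flag does close if you do it componentwise. Use the explicit backward formula $z_s=\ee^{(t-s)/2}z_t+\lambda^2\big(\ee^{(t-s)/2}-\ee^{-(t-s)/2}\big)\langle M_{\lambda,t}(z_t)\rangle$ together with $\lambda^2|\langle M\rangle|\le\lambda$; then $(1+C'(1-t))\ee^{(t-s)/2}\le 1+C'(1-s)$ holds with room to absorb the $O(\lambda)$ term, e.g.\ for $C'=2$. The crude $\ee^{1/2}$ bound on $|z_s|$ alone would not suffice as $s\uparrow t$.

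You are also right that the first inclusion in \eqref{eq:initialandfinaldomains} does not hold literally. It requires enlarging the size cutoff of $\mathcal{D}^{\rm glob}$ by the factor $1+C'$, for instance by using a slightly smaller $\epsilon$ there, which is harmless for the global law.
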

The parameters $\epsilon, c', C', c$ of the domains are henceforth considered fixed and shall often be omitted. 

\subsubsection{Preparing the inductive proof: Two sequences of random matrices}
Our zigzag proof will be conducted inductively. in order to do so, we fix an arbitrarily small step size exponent $\delta > 0$ and let $K = K(\delta, \epsilon)$ be the smallest integer such that $N^{-K \delta} \le N^{-1+\epsilon}$, where $\epsilon > 0$ is the \emph{above-the-scale exponent} from the definition of $\mathcal{D}^{\rm abv}$ in \eqref{eq:Dabvdef}. We now define a sequence of times $\{t_k\}_{k=0}^K$ as
\begin{equation} \label{eq:tkdef}
t_0 := 0\,, \quad t_k := 1 - N^{-k \delta} \,, \quad k \in \{1, ... , K-1\}\,, \quad t_K := 1 \,. 
\end{equation}
Let $\{\dift_k\}_{k=1}^K$ denote the consecutive difference sequence of the $t_k$'s, that is
\begin{equation} \label{eq:dtkdef}
\dift_k := t_k - t_{k-1} \quad \text{for} \quad k \in \{1, ... , K\} \,. 
\end{equation}
Given $H_\lambda$ as the target random matrix ensemble, we construct two sequences of random matrices $\{H_k\}_{k=0}^K$ and $\{H^k\}_{k=1}^K$ recursively by\footnote{The random matrix $H_{0}$ defined in \eqref{eq:twosequences} should not be confused with the deterministic matrix $H_0$. We will clarify possible ambiguities whenever needed by writing $H_{k=0}$ for the random matrix. }
\begin{equation} \label{eq:twosequences}
H_K := H_\lambda \,, \quad H^k := \mathfrak{F}_{\rm zag}^{\dift_k}\big[H_k\big]\,, \quad H_{k-1} := \big(\ee^{\dift_k/2} - 1\big) \E H_k + H_k \,. 
\end{equation}

\subsection{Proof of the single resolvent law} \label{subsec:proof1G}
We initiate the zigzag induction by the following global law. 
\begin{proposition}[Global law]\label{prop:global}
Fix $\lambda>0$ and consider $H_\lambda = H_0 + \lambda W$ as in \eqref{eq:smalldef} with $W$ being a Wigner matrix satisfying Assumption~\ref{ass:entries}. Then, denoting $\rho_\lambda(z) := \pi^{-1} |\langle \Im M_\lambda(z) \rangle|$ and $\eta := |\Im z|$, recalling the notation $\Gamma_\lambda(z) = \Im M_\lambda(z) /\langle \Im M_\lambda(z) \rangle$ from \eqref{eq:Gammadef},   we have
\begin{subequations} \label{eq:globallaw}
	\begin{equation} \label{eq:avGL}
		\left| \left\langle \big(G_\lambda (z) - M_\lambda(z)\big) B \right\rangle \right|  \prec \frac{\Vert B \Vert_{\rm hs}}{N \eta} \, \sqrt{\frac{\langle \Gamma_\lambda(z) B B^* \rangle}{\langle BB^* \rangle}}
	\end{equation}
	and
	\begin{equation} \label{eq:isoGL}
		\left| \left\langle \bm x, \big(G_\lambda (z) - M_\lambda(z)\big) \bm y \right\rangle \right|  \prec  \left(\sqrt{\frac{\rho_\lambda(z)}{N \eta}} \Vert \bm x \Vert \, \Vert \bm y \Vert\right) \ \sqrt{\frac{\big(\Gamma_\lambda(z)\big)_{\bm x \bm x} \, \big(\Gamma_\lambda(z)\big)_{\bm y \bm y}}{\Vert \bm x \Vert^2 \, \Vert \bm y \Vert^2 }} \,. 
	\end{equation}
\end{subequations}
uniformly in spectral parameters $z \in \mathcal{D}^{\rm glob}$, deterministic matrices $B \in \C^{N \times N}$ and deterministic vectors $\bm x, \bm y \in \C^N$. 
\end{proposition}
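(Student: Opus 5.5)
Since $z\in\mathcal{D}^{\rm glob}$ means that the stability operator $\mathcal{B}=\mathbf{1}-\lambda^2 M\langle\cdot\rangle M$ has a bounded inverse, I would prove the global law by the standard static route: derive an approximate self-consistent equation for $G$ via cumulant expansion, then invert $\mathcal{B}$. Concretely, write $G-M = -M\underline{\lambda WG} + \lambda^2 M\langle G-M\rangle G$, where $\underline{\lambda W G}:=\lambda WG+\lambda^2\langle G\rangle G$ is the renormalized product (the second-order cumulant term is exactly the subtraction, up to the $\sigma$-dependent $G^tG$ term, which is smaller by $1/N$). Tracing against $B$ gives
\[
\langle (G-M)B\rangle\,\big(1+O(\cdot)\big) = -\langle \underline{\lambda WG} BM\rangle + \lambda^2\langle G-M\rangle\langle GBM\rangle ,
\]
and taking $B$ such that the stability factor $1-\lambda^2\langle M^2\rangle$ appears, one solves first for $\langle G-M\rangle$ and then for general $B$. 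The boundedness of $\mathcal{B}^{-1}$ (i.e.\ $|1-\lambda^2\langle M^2\rangle|\gtrsim c$, which follows from $\lambda^2\langle |M|^2\rangle=\lambda^2\rho/(\eta+\lambda^2\rho)\le 1-c$ by the Ward identity $\Im M = (\eta+\lambda^2\pi\rho)MM^*$) makes this inversion harmless. The same for isotropic entries: $(G-M)_{\bm x\bm y}= -(M\underline{\lambda WG})_{\bm x\bm y}+\lambda^2\langle G-M\rangle (MG)_{\bm x\bm y}$.

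The core step is high-moment bounds for the fluctuation terms $\langle\underline{\lambda WG}BM\rangle$ and $(M\underline{\lambda WG})_{\bm x\bm y}$ with the inhomogeneous target sizes. I would compute $\E|\langle \underline{\lambda W G} A\rangle|^{2p}$ by cumulant expansion, with the Gaussian part producing terms like $\lambda^2 N^{-2}\langle G A A^* G^*\rangle$, which by the Ward identity equals $\lambda^2 N^{-2}\eta^{-1}\langle \Im G\, AA^*\rangle$. Replacing $\Im G$ by $\Im M$ a priori (bootstrap) and using $A=BM$, one gets $\langle \Im M\, M^*B^*BM\rangle$-type quantities; the key identity $M\Im M M^*\le \|M\|\,\Im M\,\|M\|$ is too lossy, so instead I use that $M$ commutes with $\Im M$ (both are functions of $H_0$), giving $\langle \Im M |M|^2 B B^*\rangle\le \|M\|^2\langle \Im M BB^*\rangle$ and $\lambda^2\|M\|^2/\eta\cdot$ is $\lesssim 1/(\eta\,\rho)\cdot(\lambda^2\rho/\eta)\cdot\eta^{-1}$... which in the global regime $\lambda^2\rho/\eta\le 1/c$ yields precisely $\frac{\|B\|_{\rm hs}^2}{N^2\eta^2}\frac{\langle\Gamma BB^*\rangle}{\langle BB^*\rangle}$. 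Analogously the isotropic variance is $\lambda^2 N^{-1}(M\Im G M^*)_{\bm x\bm x}\eta^{-1}|\cdot|$, controlled by $\frac{\rho}{N\eta}\Gamma_{\bm x\bm x}\Gamma_{\bm y\bm y}$.

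The main obstacle is closing the argument without an a priori bound on $G$ in the strongly inhomogeneous setting: higher cumulant terms involve entries $G_{ab}$, whose size must be tracked in terms of $\Gamma_{aa}$ rather than uniformly. I would run a continuity (bootstrap) argument in $\eta$ starting from $\eta\ge \lambda N^{\kappa_0+1/\epsilon}$ (where $\|G\|\le 1/\eta$ trivially suffices), decreasing $\eta$ in steps of $N^{-3}$ within $\mathcal{D}^{\rm glob}$, using at each step the isotropic bound \eqref{eq:isoGL} as an input for entries $G_{\bm e_a\bm e_b}$ in the cumulant expansion, and Lipschitz continuity \eqref{eq:gridargu} to pass between grid points; connectedness of the relevant subset of $\mathcal{D}^{\rm glob}$ along vertical lines (monotonicity of $\lambda^2\rho/\eta$ in $\eta$) makes this feasible.
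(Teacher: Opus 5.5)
Your proposal is essentially the paper's proof: the identity $\mathcal{B}[G-M]=-M\underline{\lambda WG}+\lambda^2\langle G-M\rangle M(G-M)$ with $\mathcal{B}^{-1}$ bounded on $\mathcal{D}^{\rm glob}$, cumulant expansions in which resolvent entries are controlled through $\Im M$ rather than uniformly, and a bootstrap that descends in $\eta$ from the Neumann-series regime. The differences are only in implementation, plus one small slip. The paper first proves the isotropic law by recursive moment estimates for $(G-M)_{\bm x\bm y}$ itself, so no $W$ appears in the differentiated factors, unlike when taking moments of the underlined term; it then derives the averaged law, and it descends by multiplicative steps $\eta\to N^{-\delta}\eta$ using a monotonicity lemma and a gap lemma instead of a fine Lipschitz grid. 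The slip: the isotropic Gaussian variance is $\frac{\lambda^2}{N\eta}(MM^*)_{\bm x\bm x}(\Im G)_{\bm y\bm y}$, not $(M\Im G M^*)_{\bm x\bm x}$; via $\Im M=(\eta+\lambda^2\pi\rho)MM^*$ the correct expression is indeed bounded by the square of your target.
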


The zigzag induction is then conducted along the following two propositions. 
\begin{proposition}[Zig step]\label{prop:zigstep}
	Fix $k \in \{1, ... , K\}$ and denote 
	\begin{equation*}
G_{\lambda, t}(z) := \big(\mathfrak{F}_{\rm zig}^{t-t_{k-1}}\big[H_{k-1}\big] - z\big)^{-1}\,, \quad t_{k-1} \le t \le t_k \,. 
	\end{equation*}
	Assume that $G_{\lambda, t}$ satisfies the single resolvent local laws \eqref{eq:globallaw} uniformly in $z \in \mathcal{D}^{\rm abv}_t$, $B \in \C^{N \times N}$, and $\bm x, \bm y \in \C^N$ at time $t= t_{k-1}$. Then $G_{\lambda, t}$ satisfies the single resolvent local laws \eqref{eq:globallaw} uniformly in $z \in \mathcal{D}^{\rm abv}_t$, $B \in \C^{N \times N}$, and $\bm x, \bm y \in \C^N$ uniformly in all times $t\in [ t_{k-1}, t_k]$. 
\end{proposition}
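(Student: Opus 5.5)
The zig step is proved by following the characteristic flow. Fix a terminal time $t\in[t_{k-1},t_k]$ and a spectral parameter $z_t\in\mathcal{D}^{\rm abv}_t$, and let $z_s=\varphi_{s,t}(z_t)$ for $s\in[t_{k-1},t]$. By Lemma~\ref{lem:Dtrel}, $z_s\in\mathcal{D}^{\rm abv}_s$ for all such $s$, and in particular $z_{t_{k-1}}\in\mathcal{D}^{\rm abv}_{t_{k-1}}$, where the laws \eqref{eq:globallaw} hold by assumption. We then apply It\^o's formula to $\langle (G_{\lambda,s}(z_s)-M_{\lambda,s}(z_s))B\rangle$ and to $\langle \bm x,(G_{\lambda,s}(z_s)-M_{\lambda,s}(z_s))\bm y\rangle$. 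The choice \eqref{eq:characteristic} of $z_s$ together with \eqref{eq:Mevolve} cancels the leading quadratic term. The remaining evolution has the form
\begin{equation*}
\dd \langle (G_s-M_s)B\rangle = \tfrac12\langle (G_s-M_s)B\rangle\,\dd s + \lambda^2\langle G_s-M_s\rangle\langle G_s^2B\rangle\,\dd s + \tfrac{\lambda}{\sqrt N}\sum_{ab}\partial_{ab}\langle G_sB\rangle\,\dd\mathfrak{B}_{ab} + \text{(lower order)},
\end{equation*}
and the isotropic version is analogous. The linear term $\tfrac12(\cdot)$ is harmless over a time interval of length at most $1$.

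Two terms need control. For the martingale term we use BDG inequalities. Its quadratic variation is $\frac{\lambda^2}{N^2}\langle G_s B G_s^* G_s^* B^* G_s\rangle$, and by Ward's identity this is at most $\frac{\lambda^2}{N^2\eta_s^2}\langle \Im G_s\, B\,\Im G_s\, B^*\rangle$ up to constants. To bound it we need a two-resolvent quantity $\langle \Im G B\Im G B^*\rangle\lesssim \rho_s^2\langle \Gamma_s BB^*\rangle/\lambda^2\rho_s\cdot(\ldots)$. We would obtain this crudely from the single resolvent isotropic law by spectral decomposition in the eigenbasis: expanding in eigenvectors and using \eqref{eq:isoGL} at $z_s$ gives $\langle \Im G B\Im G B^*\rangle\prec \|\Im G\|\,\langle \Im G BB^*\rangle$. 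The trace $\langle \Im G BB^*\rangle$ is then replaced via the averaged law by $\langle \Im M BB^*\rangle=\pi\rho\langle\Gamma BB^*\rangle$, and $\|\Im G\|$ is replaced by $\rho/(\lambda^2\rho+\eta)$ through Lemma~\ref{lem:Mbounds} plus the isotropic law. The factor $\lambda^2$ in the quadratic variation then cancels the $\lambda^{-2}$. Integrating along the characteristic, we use $\int_{t_{k-1}}^t \frac{\dd s}{\eta_s^2}\lesssim \frac{1}{\eta_t\cdot \lambda^2\rho_t}$, which follows from $\partial_s\eta_s\approx -\lambda^2\rho_s-\eta_s/2$. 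This yields exactly the size $\big(\frac{\|B\|_{\rm hs}}{N\eta_t}\big)^2\frac{\langle\Gamma BB^*\rangle}{\langle BB^*\rangle}$. The isotropic martingale term is handled in the same way, with $\bm x,\bm y$ and $(\Gamma)_{\bm x\bm x}$ in place of the traces.

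For the quadratic term, $\lambda^2\langle G-M\rangle\langle G^2B\rangle$, we write $\langle G^2 B\rangle$ using a contour integral or the bound $|\langle G^2B\rangle|\le \eta^{-1}\langle \Im G\,|B|\ldots\rangle$ via Schwarz, and we insert the a priori bound $|\langle G-M\rangle|\prec (N\eta_s)^{-1}$. The integral $\int \frac{\lambda^2\rho_s}{N\eta_s^2}\,\dd s\lesssim \frac{1}{N\eta_t}\le N^{-\epsilon}\cdot \rho_t$ gives a Gronwall-type contribution that is small relative to the target error. The whole argument runs under a stopping time: we define $\tau$ as the first time at which either law fails by a factor $N^{\xi}$, and a bootstrap shows that $\tau=t$ with very high probability. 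Uniformity in $z$ and $t$ then follows from a grid argument as in \eqref{eq:gridargu}.

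The main obstacle is the inhomogeneous weight. The error must scale with $\langle \Gamma_s BB^*\rangle$, but $\Gamma_s$ changes along the flow. We must check that $\langle \Gamma_s BB^*\rangle/\eta_s$-type quantities are comparable to, or monotone in relation to, their time-$t$ counterparts along characteristics. This comes from $M_s(z_s)=\ee^{(s-t)/2}M_t(z_t)$ in \eqref{eq:Mevolve}, which gives $\Im M_s\approx\Im M_t$ entrywise. Exploiting this exact identity is what makes the $\Gamma$-weighted bounds propagate, and we expect it to be the crux.
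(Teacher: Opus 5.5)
Your setup is the paper's: characteristics, It\^o's formula, stopping time, BDG, a grid argument, and $\Gamma_s=\Gamma_t$ along characteristics. However, you mishandle the one term that makes the zig step work, so the bootstrap does not close as described.

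\textbf{The critical term for $B=I$.} The drift $\lambda^2\langle G_s-M_s\rangle\langle G_s^2\rangle$ is \emph{not} lower order. By Ward, $\lambda^2|\langle G_s^2\rangle|\le \lambda^2\langle \Im G_s\rangle/\eta_s=(1+o(1))\pi\lambda^2\rho_s/\eta_s\le (1+o(1))|\partial_s\eta_s|/\eta_s$. Inserting the stopping-time bound $|\langle G_s-M_s\rangle|\le N^{2\xi}/(N\eta_s)$ therefore returns
\begin{equation*}
(1+o(1))\frac{N^{2\xi}}{N}\Big(\frac{1}{\eta_t}-\frac{1}{\eta_{t_{k-1}}}\Big),
\end{equation*}
which is the full threshold again. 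The $\tfrac12$-term, the martingale and the initial datum must still be added on top. Your comparison $\frac{1}{N\eta_t}\le N^{-\epsilon}\rho_t$ is beside the point, because for $B=I$ the target error \emph{is} $1/(N\eta_t)$.

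The missing idea is to treat this as a linear term in $\langle G_s-M_s\rangle$ and run Gronwall with the sharp rate $\exp\big(\int_s^t(\tfrac12+\lambda^2|\langle G_r^2\rangle|)\,\dd r\big)\lesssim \eta_s/\eta_t$. This gives $|\langle G_t-M_t\rangle|\le \frac{\eta_{t_{k-1}}}{\eta_t}|\langle G_{t_{k-1}}-M_{t_{k-1}}\rangle|+O\big(\frac{N^{\xi}\log N}{N\eta_t}\big)$. The amplification $\eta_{t_{k-1}}/\eta_t$ is absorbed exactly because the input error is $\prec 1/(N\eta_{t_{k-1}})$.

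The sharp constant matters. A rate $(\eta_s/\eta_t)^{1+c}$ would cost a factor up to $N^{c\delta}$ per zig step, since $\eta_{t_{k-1}}/\eta_{t_k}$ can be of order $N^{\delta}$. These losses accumulate over the $K\approx(1-\epsilon)/\delta$ steps into a polynomial loss. This compensation is the crux, not the comparability of $\Gamma_s$, which is an exact identity. Only once $B=I$ is settled can you plug $|\langle G_s-M_s\rangle|\prec 1/(N\eta_s)$ into the general-$B$ and isotropic drift terms. There it directly yields contributions of the target size, or smaller by $(N\eta\rho)^{-1/2}$ in the isotropic case.

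\textbf{The martingale term.} Your quadratic-variation bound for general $B$ contains a false step. The isotropic law controls $(\Im G)_{\bm x\bm x}$ only for deterministic $\bm x$, so it cannot bound $\|\Im G_s\|$. In fact, when $N\eta_s\rho_s\ge N^{\epsilon}$ there are eigenvalues within $\eta_s$ of $\Re z_s$, so $\|\Im G_s\|\gtrsim 1/\eta_s$. Moreover, even $\|\Im M\|$ is only $\lesssim 1/(\eta+\lambda^2\rho)$, not $\rho/(\eta+\lambda^2\rho)$.

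The correct route, as in the paper, has three ingredients:
\begin{itemize}
\item use the trivial bound $\|\Im G_s\|\le 1/\eta_s$;
\item use the isotropic law only to get $\langle \Im G_s BB^*\rangle\lesssim\langle \Im M_s BB^*\rangle$;
\item gain from the time integration $\int_{t_{k-1}}^t\lambda^2\rho_s\eta_s^{-3}\,\dd s\lesssim \eta_t^{-2}$, using that $\langle \Im M_sBB^*\rangle/\rho_s$ is constant along the characteristic.
\end{itemize}
This gives quadratic variation $\lesssim \langle\Gamma_t BB^*\rangle/(N\eta_t)^2$, as required.
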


\begin{proposition}[Zag step]\label{prop:zagstep}
	Fix $k \in \{1, ... , K\}$ and denote $s_k := \dift_k$. Let $H_k$ be the random matrix defined in \eqref{eq:twosequences}, and denote 
\begin{equation*}
	G_{\lambda}^s(z) := \big(\mathfrak{F}_{\rm zag}^{s}\big[H_{k}\big] - z\big)^{-1}\,, \quad 0 \le s \le s_k \,. 
\end{equation*}
Assume that $G_{\lambda}^s$ satisfies the single resolvent local laws \eqref{eq:globallaw} uniformly in $z \in \mathcal{D}^{\rm abv}_{t_k}$, $B \in \C^{N \times N}$, and $\bm x, \bm y \in \C^N$ at time $s= s_k$. Then $G_{\lambda}^s$ satisfies the single resolvent local laws \eqref{eq:globallaw} uniformly in $z \in \mathcal{D}^{\rm abv}_{t_k}$, $B \in \C^{N \times N}$, and $\bm x, \bm y \in \C^N$ uniformly in all times $s\in [0, s_k]$. 
\end{proposition}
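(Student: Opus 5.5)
The plan is a Green function comparison (GFT) argument run backwards in $s$, from $s=s_k$, where the local laws hold by assumption, down to $s=0$. Fix $z\in\mathcal{D}^{\rm abv}_{t_k}$, a deterministic $B$, and deterministic $\bm x,\bm y$. The zag flow \eqref{eq:zagflow} preserves the first two moments of $H^s_\lambda$, so $M_\lambda$ is the same for all $s\in[0,s_k]$. It is therefore enough to show that, for every fixed even $p$, the moments of the normalised errors
\[
X_s:=\frac{N\eta\,\langle (G^s_\lambda-M_\lambda)B\rangle}{\Vert B\Vert_{\rm hs}\sqrt{\langle\Gamma_\lambda BB^*\rangle/\langle BB^*\rangle}},\qquad
Y_s:=\frac{\langle \bm x,(G^s_\lambda-M_\lambda)\bm y\rangle}{\sqrt{\rho_\lambda/(N\eta)}\,\sqrt{(\Gamma_\lambda)_{\bm x\bm x}(\Gamma_\lambda)_{\bm y\bm y}}}
\]
change by at most a factor $N^{C\xi}$ along the flow. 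The concrete target is an inequality of the form
\[
\big|\tfrac{\dd}{\dd s}\E|X_s|^{p}\big|\lesssim N^{\xi}\big(1+\E|X_s|^p\big),
\]
and similarly for $Y_s$. Since $s_k=\dift_k\le 1$, Gronwall then transfers the bound $\E|X_{s_k}|^p\le N^{C\xi}$ down to all $s\in[0,s_k]$. Markov's inequality gives the stochastic domination, and a grid argument as around \eqref{eq:gridargu} makes it uniform in $z$.

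To compute the derivative, apply Itô's formula to $F(H^s)=|X_s|^p$. Write $H^s=\E H^s+\lambda W^s$. The drift and diffusion terms combine into
\[
\tfrac{\dd}{\dd s}\E F=\tfrac12\sum_{ab}\E\big[-\lambda w_{ab}\partial_{ab}F+\lambda^2\kappa_2(ab)\,\partial_{ab}^2F/N\big].
\]
A cumulant expansion of $\E[w_{ab}\partial_{ab}F]$ cancels the second-order cumulant terms exactly. What remains is the sum over $j\ge3$ of the terms $\lambda^{j}N^{-j/2}\sum_{ab}\kappa_j\,\E\partial_{ab}^{j}F$, plus a truncation error that is controlled by the high moments \eqref{eq:cp} and the trivial bound $\|G\|\le 1/\eta$. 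Each derivative $\partial_{ab}^j F$ is a product of factors $|X_s|^{p-q}$ and derivatives of $X_s$ or $Y_s$. Those derivatives produce resolvent entries $G_{ab},G_{aa},G_{bb}$, together with $(G B G)_{ab}$ and $(G\bm y)_a(\bm x^*G)_b$. All of these are controlled by the a priori local laws: the bootstrap assumption holding at time $s$, which we obtain via a continuity/stopping-time argument in $s$ combined with the monotonicity in $\eta$ of the relevant quantities.

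The main obstacle is power counting in this strongly inhomogeneous setting. The naive bound $|G_{ab}|\lesssim\|M\|+\dots$ from \eqref{eq:boundM} is too weak, because $\|M\|$ may be as large as $1/(\eta+\lambda^2\rho)$. The key is the isotropic law for $\bm x=\bm e_a$: it gives
\[
|G_{ab}-M_{ab}|\prec\sqrt{\rho/(N\eta)}\sqrt{(\Gamma)_{aa}(\Gamma)_{bb}}.
\]
The diagonal profile $(\Gamma)_{aa}$ is then summed using $\sum_a\Gamma_{aa}=N$, and the $M$-parts using $\sum_{ab}|M_{ab}|^2=N\langle|M|^2\rangle\lesssim N\rho/(\eta+\lambda^2\rho)$. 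The prefactor $\lambda^j$ combines with these factors through $\lambda^2\rho\le \lambda^2\rho+\eta$: each $\lambda$ absorbs one $M$ or $\Gamma$ weight via $\lambda^2\|M\|\rho\lesssim 1$, and $\rho\lesssim\lambda^{-1}$ from \eqref{eq:boundrho}. This leaves the extra factor $N^{-1/2}$ coming from the third cumulant, together with $(N\eta\rho)^{-1/2}\le N^{-\epsilon/2}$.

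For the averaged law, the factor $(GBG)_{ab}$ is not estimated isotropically. Instead I would use Cauchy–Schwarz, $\sum_{ab}|(GBG)_{ab}|^2=\Tr(GBGG^*B^*G^*)$, and reduce it to $\langle \Im G\,BB^*\rangle\|\Im G\|$-type expressions via the Ward identity $GG^*=\Im G/\eta$. In the spirit of the paper, all isotropic quantities would be bounded by the single tracial quantity $\langle \Im G\,BB^*\rangle\approx\rho\langle\Gamma BB^*\rangle$, which is exactly the weight in \eqref{eq:avLL}. The third-order terms with an unmatched off-diagonal index need the usual extra gain from summing over $a,b$, and I expect that to be the delicate part. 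There I would expand once more in the remaining $w_{ab}$, or exploit the sum $\sum_{ab}G_{ab}$, which is an isotropic quantity in the direction $N^{-1/2}\mathbf 1$, to gain the needed $N^{-1/2}$. Higher cumulants $j\ge4$ are easier, since they carry additional powers of $N^{-1/2}$.
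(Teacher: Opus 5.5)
Your proposal has a genuine gap in the power counting, and it sits exactly where the zag step depends on the structure of the zigzag scheme.

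You aim for a derivative bound with an essentially $O(1)$ coefficient and then integrate over $[0,s_k]$ using only $s_k\le 1$. Taken literally, a coefficient $N^{\xi}$ already gives the useless Gronwall factor $e^{N^{\xi}s_k}$. More seriously, the third-order cumulant terms do not admit an $O(1)$ coefficient at all. Take the simplest one, $\lambda^3N^{-3/2}\sum_{a,b}G_{\bm x a}G_{aa}G_{bb}G_{b\bm y}$. Sum $G_{\bm x a}$ either by the Ward identity or by the isotropic law with $\sum_a\Gamma_{aa}=N$. Sum the diagonal factors by $\sum_a|(|M|)_{aa}|^2\le N\langle|M|^2\rangle\lesssim N\lambda^{-2}$. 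Either way the result is $\lambda\sqrt{\rho/\eta}\,\Psi$, with $\Psi$ the isotropic target size; this is \eqref{eq:case1ex1}. It is not $N^{-1/2}\Psi$ or $(N\eta\rho)^{-1/2}\Psi$, as you claim.

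The factor $\sqrt{\lambda^2\rho/\eta}$ is $\gtrsim 1$ outside the global domain and can be as large as $N^{(1-\epsilon)/2}$. So the coefficient in front of $\E|S_s|^p$ is large; the paper bounds it by $1+N^{-2\delta}\lambda^2\rho/\eta_0$. The paper closes the argument only through the time-scale matching built into the scheme:
\begin{itemize}
\item on $\mathcal{D}^{\rm abv}_{t_k}$, the defining condition of the time-dependent domain gives $\lambda^2\rho/\eta\lesssim (N^{-1+\epsilon}+1-t_k)^{-1}\lesssim N^{k\delta}$;
\item by \eqref{eq:tkdef}--\eqref{eq:dtkdef}, $s_k=\dift_k\lesssim N^{-(k-1)\delta}$;
\item hence the Gronwall exponent is $\lesssim N^{-\delta}$.
\end{itemize}
Your argument never uses the relation between $s_k$ and the domain $\mathcal{D}^{\rm abv}_{t_k}$. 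It would therefore amount to a full-time comparison, which the estimates you outline cannot deliver. The repairs you mention, a further expansion or reading $\sum_a G_{\bm x a}G_{aa}$ as an isotropic quantity in a fixed direction, are not carried out. In the inhomogeneous setting $G_{aa}$ is far from constant, so they do not obviously recover the lost factor.

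A second, smaller issue concerns the a priori input. The cumulant expansion at a fixed time $s<s_k$ needs very-high-probability bounds on $G^s$ at that same time. A stopping time does not combine with a moment-based comparison. The expansion of $\E[w_{ab}\,\partial_{ab}F]$ is an integration by parts in the law of $H^s$ at fixed $s$, and inserting the indicator of a path-dependent event destroys it.

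The paper instead runs an inner induction over vertical truncations $\mathcal{D}_\gamma$ (Proposition~\ref{prop:zagiterate}). It starts from $\eta\sim\lambda N^{\kappa_0+1/\epsilon}$, where \eqref{eq:GFTinput} holds trivially. Lemma~\ref{lem:monotone} then transfers the bounds from $\eta_0$ to any $\eta_1\ge N^{-\delta}\eta_0$, uniformly in $s$. The cost is a factor $N^{C\delta}$, which only enters the error term of \eqref{eq:isoGron}. The isotropic law is closed first and then fed into the averaged estimate. A discretized continuity argument in $s$ might replace this induction, but it would not cure the coefficient problem above.
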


We now put together the above cardinal zig and zag steps in the proof of Theorem \ref{theo:main1}. 

\begin{proof}[Proof of Theorem \ref{theo:main1}]
By means of Lemmas \ref{lem:zigzagrel} and \ref{lem:Dtrel}, the proof Theorem \ref{theo:main1} follows by a simple induction involving Propositions \ref{prop:global}, \ref{prop:zigstep}, and \ref{prop:zagstep}. 
\end{proof}

\subsection{Proof of the two resolvent bound} \label{subsec:proof2G}
The goal of this section is to give the proof of the two resolvent bound in Theorem \ref{thm:main2G}. Throughout the entire argument we will fix two spectral parameters $z_1, z_2 \in \mathcal{D}^{\rm abv}(\epsilon)$ and focus on the more delicate regime with 
\begin{equation*}
\theta(z_1, z_2) = 1 \,, 
\end{equation*}
where the regularization $\mathring{A}^{z_1, z_2}$, defined in \eqref{eq:regobs}, of an observable $A$ is non-trivial. The modifications needed for the complementary case $\theta(z_1, z_2) = 0$ are outlined in Appendix \ref{subsec:regime2}. 

In the following, for any set $\mathcal{D} \subset \C \setminus \R$, we denote the subset of such $(z_1, z_2) \in \mathcal{D}^2$ satisfying $\theta(z_1, z_2) = 1$ by
\begin{equation} \label{eq:theta=1domain}
(\mathcal{D})^{\times 2}_{\theta = 1} := \left\{(z_1, z_2) \in \mathcal{D}^{ 2} \, : \, \theta(z_1, z_2) = 1 \right\}
\end{equation}
\subsubsection{Two resolvent local laws: Proof of Theorem \ref{thm:main2G}}
The resolvent bound in Theorem~\ref{thm:main2G} will follow from a two-resolvent local law, i.e.~a concentration estimate for $\langle \Im G(z_1) A \Im G(z_2) A^* \rangle$, naturally accompanied by local laws for $\langle G(z_1^{(*)}) A G(z_2^{(*)}) \rangle$ for a $(z_1, z_2)$-regular observable $A$ (see Theorem~\ref{thm:2GLL} below). In order to formulate the local laws, we introduce their deterministic approximation. For $z_1, z_2 \in \C \setminus \R$ and $A \in \C^{N \times N}$ let
\begin{equation}
	\label{eq:m12}
	M_\lambda(z_1,A,z_2):=M_\lambda(z_1)AM_\lambda(z_2) + \lambda^2 \frac{\langle M_\lambda(z_1)AM_\lambda(z_2) \rangle }{1-\lambda^2\langle M_\lambda(z_1)M_\lambda(z_2)\rangle} M_\lambda(z_1)M_\lambda(z_2)\,, 
\end{equation}
where $M_\lambda$ is the unique solution to \eqref{eq:MDE}. 
Then we have the following bounds on the deterministic approximations, whose proofs are given in Appendix~\ref{subsec:Mbounds}. 
\begin{lemma}
	\label{lem:m12bounds}
Fix $(z_1, z_2) \in (\C\setminus\R)_{\theta = 1}^{\times 2}$  and denote $\eta_i := |\Im z_i|$ and $\rho_i := \pi^{-1} |\langle \Im M(z_i) \rangle|$ for $i =1,2$. Let $A = \mathring{A}^{z_1, z_2} \in \C^{N\times N}$ be $(z_1,z_2)$-regular as in Definition~\ref{def:regobs} (recall also the $z_i^\pm$ notation from there). Then it holds that, for $\s_2 = \s_2(z_1, z_2;A)$, 
	\begin{equation}
		\label{eq:Mbounds}
		\begin{split}
			\big|\langle M_\lambda(z_1^+,A,{z_2^+})\rangle\big|&\lesssim \sqrt{\frac{\rho_2}{\eta_1}}\sqrt{\rho_1\rho_2} \mathfrak{s}_2, \\
			\big|\langle M_\lambda(z_1^-,A,z_2^-)\rangle\big|&\lesssim \sqrt{\frac{\rho_1}{\eta_2}}\sqrt{\rho_1\rho_2} \mathfrak{s}_2, \\
			\big|\langle M_\lambda({z_1^-},A,{z_2^+})\rangle\big|&\lesssim \sqrt{\frac{\rho_1}{\eta_2}\vee\frac{\rho_2}{\eta_1}}\sqrt{\rho_1\rho_2} \mathfrak{s}_2, \\
		\max_* 	\big|\langle \widehat{M}_\lambda(z_1^{(*)},A,z_2^{(*)})\rangle\big|&\lesssim \sqrt{\frac{\rho_1}{\eta_2}\wedge\frac{\rho_2}{\eta_1}}\sqrt{\rho_1\rho_2} \mathfrak{s}_2, \\
		\max_*	\big|\langle \widehat{M}_\lambda(z_1^{(*)},A,z_2^{(*)})A^*\rangle\big|&\lesssim {\rho_1\rho_2} \big(\mathfrak{s}_2\big)^2, 
		\end{split}
	\end{equation}
	where $\max_*$ is understood to be the maximum over all possible constellations of adjoints taken, and we introduced the shorthand notation 
\begin{equation} \label{eq:hatMfull}
	\widehat{M}_\lambda(z_1,A,z_2):=-\frac{M_\lambda(z_1,A,z_2)+M_\lambda(\overline{z_1},A,\overline{z_2})-M_\lambda(z_1,A,\overline{z_2})-M_\lambda(\overline{z_1},A,z_2)}{4}.
\end{equation}
\end{lemma}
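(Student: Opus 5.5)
Throughout I write $M_i := M_\lambda(z_i^+)$, so that $M_\lambda(z_i^-) = M_i^*$, and I use the identity $\Im M_i = \nu_i M_i M_i^*$ with $\nu_i := \eta_i + \lambda^2\pi\rho_i$. This identity follows by taking the imaginary part of the MDE \eqref{eq:MDE}. It gives $\langle M_i M_i^*\rangle = \pi\rho_i/\nu_i$ and $M_iM_i^* = \rho_i\pi\,\Gamma_i/\nu_i$. All bounds will come from Cauchy--Schwarz applied to traces of the form $\langle M_1 A M_2 B\rangle$, with $\Gamma_i$ appearing through $M_iM_i^*$. By Lemma~\ref{lem:Mbounds} we also have $\nu_i\gtrsim \eta_i$ and $\nu_i \ge \lambda^2\rho_i$.

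\textbf{Step 1 (the product term $M_1AM_2$).} For the first summand of \eqref{eq:m12}, Cauchy--Schwarz gives
\[
|\langle M_1 A M_2\rangle| \le \langle M_2M_2^*\rangle^{1/2}\langle M_1^*M_1 A A^*\rangle^{1/2}.
\]
However, $\Gamma_1$ is built from $M_1M_1^*$, and $M_1$ is normal: it is a function of $H_0$. Hence $M_1^*M_1 = M_1M_1^*$, and everything can be rewritten with $\Gamma$'s. For the genuinely two-sided term I pair as $\langle \Gamma_1 A\Gamma_2 A^*\rangle$: write $M_1 A M_2 = M_1 |M_1|^{-1}\,|M_1|A|M_2|\,|M_2|^{-1}M_2$ with unitary phases. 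Then $\langle M_1AM_2\,\cdot\rangle$ is bounded by $\langle |M_1|A|M_2|^2A^*|M_1|\rangle^{1/2}$ times the norm of the remaining factor, which gives the combination $\sqrt{\rho_1\rho_2/(\nu_1\nu_2)}\,\langle\Gamma_1A\Gamma_2A^*\rangle^{1/2}$.

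\textbf{Step 2 (trace bounds).} For the bounds on $\langle M(z_1,A,z_2)\rangle$, I insert $\mathbf 1 = |M_j|^{-1}|M_j|$ on one side. When the spectral parameters have the same sign, $\mathbf{1}$ against $|M_2|^{-1}$ produces $\langle M_2^*M_2\rangle^{1/2} = (\pi\rho_2/\nu_2)^{1/2}$ times $\langle \Gamma_1A\Gamma_2A^*\rangle^{1/2}\sqrt{\rho_1\rho_2/\nu_1\nu_2}\cdots$. Simplifying with $\nu\ge\eta$ gives the factor $\sqrt{\rho_2/\eta_1}\sqrt{\rho_1\rho_2}\,\mathfrak s_2$; the other orders of $\nu$'s are estimated by $\nu_2\ge\lambda^2\rho_2$ and absorbed. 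The second summand of \eqref{eq:m12} is handled directly by the middle term in the definition \eqref{eq:defs2} of $\mathfrak s_2$: the prefactor $\lambda^2\langle M_1AM_2\rangle/(1-\lambda^2\langle M_1M_2\rangle)$ multiplies $\langle M_1M_2\rangle$, and $|\lambda^2\langle M_1M_2\rangle|\le \lambda^2\sqrt{\langle|M_1|^2\rangle\langle|M_2|^2\rangle}\le 1$. The mixed case $z_1^-,z_2^+$ is where regularity enters. Here $1-\lambda^2\langle M_1^*M_2\rangle$ can be small, but $\theta=1$ and the regularity of $A$ make $\langle M_1^*AM_2\rangle=0$, so the dangerous second term vanishes. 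For the same-sign cases, the denominator $1-\lambda^2\langle M_1M_2\rangle$ is kept, and this is exactly the form appearing in $\mathfrak s_2$.

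\textbf{Step 3 (the combinations $\widehat M$).} For $\widehat M$ and the bound with $A^*$ inserted, I would use the resolvent-type identity
\[
M_1 - M_1^* = 2\ii\,\Im M_1 = 2\ii\,\nu_1 M_1M_1^*.
\]
Applying it in both variables turns $\widehat M$ into an expression where each $M_i$ is replaced by $\Im M_i$, plus corrections coming from the second summand of \eqref{eq:m12} together with $\Im$ of the scalar $(1-\lambda^2\langle\cdot\rangle)^{-1}$. The leading term is $\langle \Im M_1 A\Im M_2A^*\rangle = \pi^2\rho_1\rho_2\langle\Gamma_1A\Gamma_2A^*\rangle$, which is the claimed bound. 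The correction terms produce $\lambda^2\langle M_1A\Gamma_2\rangle\langle\Gamma_1A^*M_2\rangle/(1-\lambda^2\langle M_1M_2\rangle)$, which again matches $\mathfrak s_2$. Without $A^*$, the same computation with Cauchy--Schwarz against $\langle\Gamma_i\rangle=1$ gives the minimum of the two ratios.

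\textbf{Main obstacle.} The hardest step is controlling the correction terms in Step 3, where the imaginary parts of the scalar denominators $1-\lambda^2\langle M_1^{(*)}M_2^{(*)}\rangle$ appear. One must show that differences such as
\[
\frac{1}{1-\lambda^2\langle M_1M_2\rangle}-\frac{1}{1-\lambda^2\langle M_1M_2^*\rangle}
\]
are absorbed using the regularity of $A$ (vanishing of the mixed traces) together with the identity $\langle M_1M_2\rangle - \langle M_1M_2^*\rangle = 2\ii\,\nu_2\langle M_1M_2M_2^*\rangle$. I would try to write each such difference as a product of two traces of the form $\langle M_1 A\Gamma_2\rangle$ and bound it by the middle term of $\mathfrak s_2$.
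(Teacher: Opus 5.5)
There is a genuine gap, and it begins with a false claim about what regularity gives. Definition~\ref{def:regobs} only gives $\langle M_1AM_2^*\rangle=0$, with your $M_i=M_\lambda(z_i^+)$. It does not give $\langle M_1^*AM_2\rangle=0$. Indeed $\langle M_1^*AM_2\rangle=\langle M_2M_1^*A\rangle$ while $\langle M_1AM_2^*\rangle=\langle M_2^*M_1A\rangle$, and $M_2M_1^*=(M_2^*M_1)^*\neq M_2^*M_1$ once $z_1\neq z_2$. Already for diagonal $H_0$ and $N=2$ this trace is nonzero after regularization.

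Moreover $\langle M_\lambda(z_1,A,z_2)\rangle=\langle M_1AM_2\rangle/(1-\lambda^2\langle M_1M_2\rangle)$ exactly. So there is no separate ``dangerous second term'': the full inverse of the $\beta$-factor multiplies a single $A$-trace. Consequently:
\begin{itemize}
\item The third bound is the quantity $|\langle M_1^*AM_2\rangle|/|1-\lambda^2\langle M_1^*M_2\rangle|$. Its numerator does not vanish, and its denominator can be small precisely in the $\theta=1$ regime.
\item In $\langle\widehat M_\lambda\rangle$ (the fourth bound) the term $\lambda^2\langle M_1^*AM_2\rangle\langle M_1^*M_2\rangle/(1-\lambda^2\langle M_1^*M_2\rangle)$ survives.
\end{itemize}
Your plan addresses neither.

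For the fifth bound the mixed-sign terms do vanish, but for a different reason: $\langle M_1AM_2^*\rangle=0$, and $\langle M_1^*M_2A^*\rangle=\overline{\langle M_1AM_2^*\rangle}=0$. Writing $\langle M_1AM_2\rangle=2\ii\langle M_1A\Im M_2\rangle$ and $\langle M_1M_2A^*\rangle=2\ii\langle \Im M_1A^*M_2\rangle$, the two same-sign terms reproduce the middle term of $\s_2$ exactly. So no differences of reciprocals of $\beta$'s ever arise, and the obstacle you identified is not where the difficulty lies.

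The same-sign bounds also fail as written. Your Step 1 estimate
\[
|\langle M_1AM_2\rangle|\lesssim\sqrt{\rho_1\rho_2/(\nu_1\nu_2)}\,\langle\Gamma_1A\Gamma_2A^*\rangle^{1/2}
\]
ignores regularity. It exceeds the target by a factor $(\rho_2\nu_2)^{-1/2}$, which is large in the strongly inhomogeneous regime. The inequality $\nu_2\ge\lambda^2\rho_2$ goes the wrong way and cannot absorb it.

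Regularity does fix the numerator, since $\langle M_1AM_2\rangle=2\pi\ii\rho_2\langle M_1A\Gamma_2\rangle$. But you are still left with a full factor $|1-\lambda^2\langle M_1M_2\rangle|^{-1}$, which can be large, for example near edges. By contrast, $\s_2$ contains only the square root of $\lambda^2|\langle M_1A\Gamma_2\rangle\langle\Gamma_1A^*M_2\rangle|/|1-\lambda^2\langle M_1M_2\rangle|$, a product of two different traces. The required inequality does not follow from Cauchy--Schwarz on traces of $M$'s.

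What is missing is a positivity/Schwarz inequality at the level of the two-body deterministic objects. The paper obtains it by a meta-argument:
\begin{itemize}
\item Only the fifth bound is proved by direct algebra.
\item $\langle M_\lambda(\cdot,A,\cdot)\rangle$ and $\langle\widehat M_\lambda(\cdot)A^*\rangle$ are realized as $d\to\infty$ limits of $\E\langle GAG\rangle$ and $\E\langle\Im GA\Im GA^*\rangle$ for the tensorized model $H_0\otimes I_d+\lambda W^{(d)}$ with GUE $W^{(d)}$.
\item One writes $G(z_2)=G(\overline{z_2})+2\ii\Im G(z_2)$ and uses $\langle M_\lambda(z_1,A,\overline{z_2})\rangle=0$.
\item Then $|\langle G_1A\Im G_2\rangle|^2\le\eta_1^{-1}\langle\Im G_1A\Im G_2A^*\rangle\langle\Im G_2\rangle$, where positivity of $\Im G$ is available; the fifth bound is then applied in the limit.
\item The mixed case uses $G_1^*AG_2=G_1AG_2^*-2\ii\Im G_1AG_2^*+2\ii G_1^*A\Im G_2$, and the fourth bound uses $|\langle\Im G_1A\Im G_2\rangle|^2\le\langle\Im G_1A\Im G_2A^*\rangle\langle\Im G_1\Im G_2\rangle$.
\end{itemize}
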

We have the following two-resolvent local laws, whose proofs are given in Section \ref{subsubsec:zigzagproof} below. 
\begin{theorem}[Two resolvent local laws] \label{thm:2GLL}
	Fix $\epsilon > 0$. Let $W$ be a Wigner matrix satisfying Assumption~\ref{ass:entries}, $H_0$ an arbitrary deterministic self adjoint matrix, and let $H_\lambda$ be as in \eqref{eq:smalldef}. Take $(z_1, z_2) \in \big(\mathcal{D}^{\rm abv}(\epsilon)\big)^{\times 2}_{\theta = 1}$ and denote $G_i:=(H_\lambda-z_i)^{-1}$, $M_i:=M_\lambda(z_i)$, $\rho_i:=\pi^{-1}|\langle \Im M_i\rangle|$, $\eta_i:=|\Im z_i|$, and $\ell := \min_i (\eta_i \rho_i)$. Then it holds that 
		\begin{equation} \label{eq:2G1A}
		\left| \big\langle G_1^{(*)} A G_2^{(*)} \big\rangle - \big\langle M_\lambda\big(z_1^{(*)},  A ,z_2^{(*)}\big) \big\rangle \right| \prec \sqrt{\frac{\rho_1 \rho_2}{N \eta_1 \eta_2}} \s_2(z_1, z_2; A)
	\end{equation}
	and 
	\begin{equation} \label{eq:2G2A}
		\left| \big\langle \Im G_1 A \Im G_2 A^* \big\rangle - \big\langle \widehat{M}_\lambda(z_1,  A ,z_2)A^* \big\rangle \right| \prec \frac{\rho_1 \rho_2}{\sqrt{N \ell}}\big( \s_2(z_1, z_2; A)\big)^2
	\end{equation}
	uniformly in $(z_1, z_2)$-regular matrices $A = \mathring{A}^{z_1, z_2} \in \C^{N \times N}$. 
\end{theorem}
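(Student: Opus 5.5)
We prove Theorem~\ref{thm:2GLL} by running the same zigzag induction as in Section~\ref{subsec:proof1G}, now for the two-resolvent quantities. Throughout we use the single resolvent local laws of Theorem~\ref{theo:main1} as input, valid along both flows.

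\emph{Global law (base case).} For $(z_1,z_2)$ in $\mathcal{D}^{\rm glob}$ with $\theta=1$, we expand $G_1AG_2$ by the identity $G = M - M\,\underline{WG}\,\lambda + \lambda^2 M\langle G-M\rangle G$. The result is a self-consistent equation for $G_1AG_2$ governed by the two-body stability operator $1-\lambda^2 M_1\langle\cdot\rangle M_2$. Its inverse produces exactly the correction term with denominator $1-\lambda^2\langle M_1M_2\rangle$ in \eqref{eq:m12}. The underlined renormalized term is then bounded by a cumulant expansion. In the global regime $\eta_i\gtrsim\lambda^2\rho_i$, so fluctuations are already small relative to $\s_2$. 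The key point is to always estimate traces such as $\langle \Im G_1 A \Im G_2 A^*\rangle$ by Schwarz in the $\Gamma_i$-weighted form $\langle\Gamma_1A\Gamma_2A^*\rangle$, using $\Im M_i=\pi\rho_i\Gamma_i$, rather than in operator norm.

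\emph{Zig step.} Along the characteristic flow \eqref{eq:characteristic} we apply It\^o's formula to $\langle G_{1,t}AG_{2,t}\rangle-\langle M_t(z_{1,t},A,z_{2,t})\rangle$. By \eqref{eq:Mevolve} and the choice of characteristics, the linear terms cancel up to a factor $\lambda^2\langle M_1M_2\rangle\cdot(\ldots)$. Together with the trivial $\tfrac12$ (or $1$) drift this gives a linear ODE whose propagator is bounded by $|1-\lambda^2\langle M_{1,s}M_{2,s}\rangle|/|1-\lambda^2\langle M_{1,t}M_{2,t}\rangle|$. This is where regularity of $A$ enters: $\theta=1$ guarantees that the $\mathbf{1}$-component, the only dangerous direction, is removed by $\mathring A$. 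Since the regularization depends on $t$, we must track the commutator of the flow with $A\mapsto\mathring A^{z_{1,t},z_{2,t}}$. The remaining terms are as follows.
\begin{itemize}
\item A quadratic term $\lambda^2\langle G_1-M_1\rangle\langle G_1AG_2\rangle$, controlled by the single-resolvent law.
\item A term $\lambda^2\langle G_1AG_2G_1\rangle$ arising from the quadratic variation; the three-resolvent chain is reduced to two-resolvent chains by the Ward identity and Schwarz.
\item A martingale term, whose quadratic variation is $\frac{\lambda^2}{N^2}\langle \Im G_2 A^*\Im G_1 A\rangle/(\eta_1\eta_2)$-type. This is the term that couples \eqref{eq:2G1A} to \eqref{eq:2G2A}.
\end{itemize}
We therefore run a Gronwall/stopping-time argument simultaneously for both quantities, using the integrated bound $\int_s^t \lambda^2\rho_r/\eta_r\,dr\lesssim\log N$ along characteristics. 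The $\Im G\,A\,\Im G$ law \eqref{eq:2G2A} follows by writing $\Im G=(G-G^*)/2\ii$ and summing the four constellations, which gives $\widehat M$.

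\emph{Zag step.} By Lemma~\ref{lem:zigzagrel} we need a GFT comparing $H^k$ with its time-$s_k$ evolution over the short time $\dift_k$. The target quantity is $\E|\langle G_1AG_2\rangle-\langle M\rangle|^{2p}$, differentiated in $s$ and handled by cumulant expansion to order $O(1)$. The third and higher cumulants produce isotropic entries $(G_1AG_2)_{ab}$ and $(G_i)_{ab}$. Following the streamlined strategy, we bound these by tracial quantities, e.g.\ $|(G_1AG_2)_{ab}|^2\le (G_1AG_2G_2^*A^*G_1^*)_{aa}$ summed over $a$. This lets every isotropic factor be absorbed into $\langle \Im G_1A\Im G_2A^*\rangle/(\eta_1\eta_2)$ together with single-resolvent isotropic laws weighted by $\Gamma$. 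The smallness of $\dift_k\sim N^{-(k-1)\delta}$ against the allowed $N\eta$ growth closes the step.

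\emph{Main obstacle.} The hardest step is the zig-step propagator estimate in the inhomogeneous regime. There $|1-\lambda^2\langle M_1M_2\rangle|$ can be small and the $\Gamma$-weighted control parameter $\s_2$ must be shown to be essentially monotone along characteristics, i.e.\ $\s_2(z_{1,s},z_{2,s};\mathring A_s)\lesssim\s_2(z_{1,t},z_{2,t};\mathring A_t)$ up to the propagator factor. I would try to prove this first, purely deterministically, via \eqref{eq:Mevolve} and Lemma~\ref{lem:m12bounds}.
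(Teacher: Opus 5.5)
\textbf{The main gap: how you obtain \eqref{eq:2G2A}.} Your overall architecture matches the paper's: a global law, a zig step along characteristics with a joint stopping time, and a zag step via a GFT in which isotropic entries are bounded by tracial quantities through Schwarz inequalities. The problem is your derivation of \eqref{eq:2G2A} ``by writing $\Im G=(G-G^*)/(2\ii)$ and summing the four constellations.'' This turns $\langle \Im G_1A\Im G_2A^*\rangle$ into the chains $\langle G_1^{(*)}AG_2^{(*)}A^*\rangle$. Those carry two observables, so \eqref{eq:2G1A} does not cover them.

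More importantly, no law for the individual constellations can yield the $\Gamma$-weighted error in \eqref{eq:2G2A}. Each constellation fluctuates on a scale set by $|M|$, not by $\Im M$. Take diagonal $H_0$, $\rho\sim1$, $\eta\ll\lambda^2\rho$, and $A$ the regularization of $|\bm e_j\rangle\langle\bm e_j|$ with $|\mu_j-\Re z|\sim1$.
\begin{itemize}
\item Here $(\Gamma_i)_{jj}\sim\lambda^2$ and $\s_2^2\sim\lambda^4/N$, so the target error in \eqref{eq:2G2A} is of order $\lambda^4/(N\sqrt{N\eta})$.
\item But $\langle G_1AG_2A^*\rangle$ contains $N^{-1}(M_1)_{jj}(G_2-M_2)_{jj}$ with $|(M_1)_{jj}|\sim1$. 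This term is typically of size $\lambda^2/(N\sqrt{N\eta})$, a factor $(\lambda^2\rho)^{-1}\gg1$ too large.
\item Only in the fourfold $\Im$-combination is the prefactor $(M_1)_{jj}$ replaced by $(\Im M_1)_{jj}\sim\lambda^2$. That replacement is exactly the gain encoded in \eqref{eq:2G2A}.
\end{itemize}
So $\langle \Im G_1A\Im G_2A^*\rangle$ must be treated from the start as a single, positive object.

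\textbf{What the paper does instead, in both steps.} In the zig step, $\langle \Im G_{1,t}A\Im G_{2,t}A^*\rangle$ has its own SDE \eqref{eq:imGAimGA}, run jointly with \eqref{eq:GAG} under one stopping time.
\begin{itemize}
\item Its martingale term is controlled via the reduction inequality $\langle \Im G_1A\Im G_2A^*\Im G_1A\Im G_2A^*\rangle\le N\langle \Im G_1A\Im G_2A^*\rangle^2$.
\item The terms $\lambda^2\langle G_2^*A^*G_1\rangle\langle \Im G_1A\Im G_2\rangle$ are controlled by the asymmetric bounds of Lemma~\ref{lem:m12bounds}. Their products are symmetric and integrate to $\log N\,\rho_1\rho_2\s_2^2$.
\end{itemize}
In the zag step, your GFT for $\langle G_1AG_2\rangle$ absorbs its isotropic factors into $\langle \Im G_1A\Im G_2A^*\rangle$. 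You therefore need that quantity controlled on all of $s\in[0,s_k]$, not only at $s=s_k$ where the induction hypothesis gives it. The paper first proves a separate GFT for $\langle \Im G_1A\Im G_2A^*-\widehat M^A_{12}\rangle$ that closes in terms of this quantity alone (Proposition~\ref{prop:zagavGron2G}, using positivity through Lemma~\ref{lem:Schwarz}). It then feeds this bound into the $\langle GAG\rangle$ GFT.

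\textbf{The ``main obstacle'' you flag is not one.} Along characteristics $M_{i,t}=\ee^{(t-1)/2}M_i$, so $\Gamma_i$ and $\mathring A$ do not depend on $t$, and there is no commutator to track. The monotonicity $\s_2(s)\lesssim\s_2(t)$ for $s\le t$ only requires comparing $|1-\ee^{s-1}\lambda^2\langle M_1M_2\rangle|$ with $|1-\ee^{t-1}\lambda^2\langle M_1M_2\rangle|$; cf.~\eqref{eq:s2time}.
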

We can now give the proof of Theorem \ref{thm:main2G}. 
\begin{proof}[Proof of Theorem \ref{thm:main2G}]
For spectral parameters $z_1, z_2$ satisfying $\theta(z_1, z_2) = 1$, Theorem~\ref{thm:main2G} immediately follows from \eqref{eq:2G2A} in Theorem~\ref{thm:2GLL}, the last $M$-bound in \eqref{eq:Mbounds}, and using that $N \ell \ge 1$. 

As mentioned above, for spectral parameters $z_1, z_2$ satisfying $\theta(z_1, z_2) = 0$ the modifications needed to the proof in the complementary regime $\theta(z_1, z_2) = 1$ will be discussed in Appendix \ref{subsec:regime2}. 
\end{proof}

In the remainder of this section, we will drop the subscript $\lambda$. 

\subsubsection{Zigzag induction for two resolvents: Proof of Theorem \ref{thm:2GLL}} \label{subsubsec:zigzagproof}
Throughout our zigzag induction, we fix a target random matrix $H_K = H_\lambda = H_0 + \lambda W$ (cf.~\eqref{eq:twosequences}) and target spectral parameters $z_1, z_2$ with $(z_1, z_2) \in \big(\mathcal{D}^{\rm abv}(\epsilon)\big)^{\times 2}_{\theta = 1}$. Moreover, we consider a fixed deterministic matrix $A = \mathring{A} \in \C^{N \times N}$, which we assume to be $(z_1, z_2)$-regular (cf.~Definition \ref{def:regobs}; the $M$ being the deterministic approximation to the resolvent of $H_K$).  Recall that in Section \ref{subsec:setup} we have fixed the terminal time to be given by $T=1$. So, having fixed $z_1, z_2$ and $A$, for $t \in [0,1]$, we define a time-dependent version of the basic control parameter $\s_2$ as 
\begin{equation} \label{eq:s2time}
	\big(\s_2(t)\big)^2 = 	\big(\s_2(z_1, z_2;A; t)\big)^2:=\langle\Gamma_1 \mathring{A} \Gamma_2 \mathring{A}^*\rangle + \lambda^2\bigg| \frac{\langle M(z_1^+) \mathring{A} \Gamma_2 \rangle \langle \Gamma_1 \mathring{A}^* M( z_2^+ ) \rangle }{1 - \ee^{t-1}\lambda^2 \langle M(z_1^+) M( z_2^+)\rangle}\bigg| 
\end{equation}
which satisfies the condition $\s_2(t=1) = \s_2(z_1, z_2;A)$ with the time-independent $\s_2(z_1, z_2;A)$ defined in \eqref{eq:defs2}. Finally, for $i \in [2]$, we denote by $z_{i,t}$ the solution to \eqref{eq:characteristic} with terminal condition $z_{i,1} = z_i$. Correspondingly, we will use the shorthand notations $M_{i,t}:= M_\lambda(z_{i,t})$, $M_{1^{(*)}2^{(*)},t}^A:=M_{\lambda,t}(z_{1,t}^{(*)},A,z_{2,t}^{(*)})$, $\widehat{M}_{12,t}^A:=\widehat{M}_{\lambda,t}(z_{1,t}, z_{2,t}; A)$, $\rho_{i,t}:=\pi^{-1}|\langle \Im M_{i,t}\rangle|$, $\eta_{i,t}:=|\Im z_{i,t}|$, and $\ell_t := \min_i (\eta_{i,t} \rho_{i,t})$. Note that, by definition of the regularization, if $A$ is $(z_1, z_2)$-regular for the target spectral parameters, then it is also regular w.r.t.~$(z_{1,t}, z_{2,t})$ because $M_{i,t} = \ee^{t/2} M_{i,0}$ (recall \eqref{eq:Mevolve}). 

We initiate the zigzag induction by the following global laws. 
\begin{proposition}[Global law for two resolvents] \label{prop:global2}
Using the above notations and conventions, we have that the resolvents
\begin{equation*}
G_{i,0} := \big(H_{k=0} - z_{i,0}\big)^{-1} \quad i \in [2]\,,
\end{equation*}
satisfy the local laws
	\begin{equation} \label{eq:2G1Aglobal}
		\left| \big\langle G_{1,0}^{(*)} A G_{2,0}^{(*)} \big\rangle - \big\langle {M}_{1^{(*)}2^{(*)},0}^A \big\rangle \right| \prec \sqrt{\frac{\rho_{1,0} \rho_{2,0}}{N \eta_{1,0} \eta_{2,0}}} \s_2(0)
	\end{equation}
	and 
	\begin{equation} \label{eq:2G2Aglobal}
		\left| \big\langle \Im G_{1,0} A \Im G_{2,0} A^* \big\rangle - \big\langle \widehat{M}_{12,0}^A \big\rangle \right| \prec \frac{\rho_{1,0} \rho_{2,0}}{\sqrt{N \ell_0}}\big( \s_2(0)\big)^2 
	\end{equation}
		uniformly in $(z_1, z_2)$-regular matrices $A = \mathring{A}^{z_1, z_2} \in \C^{N \times N}$. 
\end{proposition}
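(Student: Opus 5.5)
We prove Proposition~\ref{prop:global2} by a direct static argument: cumulant expansion of the two-resolvent chain, closed by the single-resolvent global law of Proposition~\ref{prop:global} and by the bounded invertibility of the stability operator in the global domain. By Lemma~\ref{lem:Dtrel}, $z_{i,0}\in\mathcal{D}^{\rm glob}$. There $\lambda^2\rho_{i,0}\lesssim \eta_{i,0}$, so $\|M_{i,0}\|\lesssim 1/\eta_{i,0}$, and $\mathcal{B}_i^{-1}$ is bounded. In the same way we use that $|1-\lambda^2\langle M_{1,0}^{(*)}M_{2,0}^{(*)}\rangle|\gtrsim 1$, obtained from Schwarz together with $\lambda^2\langle |M_{i,0}|^2\rangle\le \lambda^2\rho_{i,0}/\eta_{i,0}\le 1-c$. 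The point is that the global regime behaves like a mean-field regime once all quantities are measured in the weighted norms $\langle \Gamma\cdot\Gamma\cdot\rangle$ that appear in $\s_2$.

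\emph{Step 1: expansion.} Write $G_1AG_2 - M_1AM_2$ (at time $0$, dropping indices). We use $G_1 = M_1 - M_1\underline{WG_1}\lambda + \lambda^2 M_1\langle G_1-M_1\rangle G_1$, where $\underline{\cdot}$ denotes renormalisation by subtracting the Gaussian second-order term. This yields the exact identity
\[
\big(\mathbf 1-\lambda^2 M_1\langle\cdot\rangle M_2\big)[G_1AG_2]= M_1AM_2 - \lambda M_1\underline{WG_1AG_2} + \lambda^2 M_1\langle G_1-M_1\rangle G_1AG_2 + \lambda^2 M_1\langle G_1AG_2\rangle(G_2-M_2).
\]
Tracing it and inverting the scalar factor $1-\lambda^2\langle M_1M_2\rangle$ produces $\langle M^A_{12}\rangle$ from \eqref{eq:m12}. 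The error consists of the fluctuation term $\lambda\langle \underline{WG_1AG_2 M_1}\rangle$ and of products in which the single-resolvent errors are controlled by \eqref{eq:avGL}. For the product terms we use that $\|G_i\|\le 1/\eta_i$ and that the single-resolvent error carries an extra factor $1/(N\eta)$. Since $N\eta_{i,0}\rho_{i,0}$ is polynomially large, these terms are smaller than the target.

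\emph{Step 2: fluctuation term (main obstacle).} We bound $\E|\langle\underline{WG_1AG_2M_1}\rangle|^{2p}$ by a cumulant expansion. The Gaussian part gives terms of the form $\frac{\lambda^2}{N^2}\langle G_2M_1G_1AG_2G_2^*A^*G_1^*M_1^*G_1^*\rangle$. We estimate these by Ward identities $G G^* = \Im G/\eta$ and then reduce via Schwarz to $\langle \Im G_1 A\Im G_2 A^*\rangle$ times powers of $\eta_i^{-1}$, in the same fashion as we treat the corresponding chains in the zig step. The difficulty is that we need this in the sharp inhomogeneous form $\rho_1\rho_2\s_2^2$, not merely $\|A\|^2$. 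We therefore set up a self-consistent (bootstrap) bound: we treat $\Psi:=\langle \Im G_1A\Im G_2A^*\rangle/(\rho_1\rho_2\s_2^2)$ as an unknown, show $\Psi\prec 1+\Psi\cdot N^{-c}$ in the global domain, and close the bound. The needed smallness comes from $\lambda^2\rho/\eta\le 1/c$ together with the factor $1/(N\eta)$. Higher cumulants involve isotropic entries $(G_1AG_2)_{ab}$. Following the paper's philosophy, we bound these via $|(G_1AG_2)_{ab}|^2\le (G_1AG_2G_2^*A^*G_1^*)_{aa}$, sum over $a$, and thereby return to a tracial quantity, using \eqref{eq:isoGL} only for single resolvents.

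\emph{Step 3: the $\Im G$ law \eqref{eq:2G2Aglobal}.} We expand $\Im G_i=(G_i-G_i^*)/(2\ii)$, which turns $\langle \Im G_1A\Im G_2A^*\rangle$ into four chains $\langle G_1^{(*)}AG_2^{(*)}A^*\rangle$. Each chain is a two-resolvent quantity with observables $A$ and $A^*$, and we treat it by the same expansion as in Steps 1--2, now with $B=A^*$ placed in the trace. The deterministic terms combine exactly into $\langle\widehat M^A_{12,0}A^*\rangle$. The error is $\rho_1\rho_2\s_2^2/\sqrt{N\ell_0}$. Here we use the regularity of $A$, which ensures that the leading $\lambda^2\langle M_1AM_2\rangle$-type terms are exactly those recorded in the second summand of $\s_2(0)^2$ in \eqref{eq:s2time}. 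Finally, the bounds of Lemma~\ref{lem:m12bounds}, applied at $t=0$, convert the remaining product errors into the stated form.
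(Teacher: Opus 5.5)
You follow the paper's route in Steps 1--2: the two-body stability identity, the underlined fluctuation term, a cumulant expansion, and a self-consistent bootstrap (the paper organises the bootstrap as master inequalities for $\Psi_1,\Psi_2$). Step 3, however, has a genuine gap. You cannot run the expansion on each of the four chains $\langle G_1^{(*)}AG_2^{(*)}A^*\rangle$ separately and combine only the deterministic terms at the end. A single chain is governed by $|M_i|$, not by $\Im M_i$. So its deterministic part is not bounded by $\rho_1\rho_2\s_2^2$, and its fluctuation is not bounded by $\rho_1\rho_2\s_2^2/\sqrt{N\ell}$.

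For instance, let $H_0$ be diagonal and $A=\bm e_k\bm e_k^*$, with $|(M_i)_{kk}|\approx d^{-1}$ and $(\Im M_i)_{kk}\approx \eta d^{-2}$ for some $d\gg\eta$. Work near the boundary $\lambda^2\rho\sim\eta$ of the global domain. Then $\rho_1\rho_2\s_2^2\approx N^{-1}\eta^2d^{-4}$, up to negligible corrections from the regularisation and the second summand of $\s_2^2$. The linear error term $\langle M_1A(G_2-M_2)A^*\rangle=N^{-1}(M_1)_{kk}(G_2-M_2)_{kk}$, on the other hand, is genuinely of size $N^{-1}d^{-1}\cdot\eta d^{-2}(N\eta\rho)^{-1/2}$. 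This exceeds the target by a factor $d/\eta$; here $(G_2)_{kk}$ really fluctuates on the isotropic scale. Likewise $\langle M^A_{12}A^*\rangle\approx N^{-1}d^{-2}$ is far larger than $\rho_1\rho_2\s_2^2$.

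This term is the $A^*$-analogue of the term $M_1A(G_2-M_2)$ that is missing from your Step 1 identity. The right-hand side there must start with $M_1AG_2$, not $M_1AM_2$. The cumulant expansion has the same problem. The Schwarz and reduction inequalities that bring everything back to your bootstrap quantity need $\Im G$'s, i.e.\ positivity, inside the chain; an example is $\langle\Im G_1A\Im G_2A^*\Im G_1A\Im G_2A^*\rangle\le N\langle\Im G_1A\Im G_2A^*\rangle^2$.

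The missing idea is to form the fourfold linear combination \emph{before} estimating, as in Lemma~\ref{lem:underlineGAGA}:
\begin{itemize}
\item Trace the identity against $\widetilde A^*_{1^{(*)}2^{(*)}}=A^*+\lambda^2\beta_{1^{(*)}2^{(*)}}^{-1}\langle M_1^{(*)}A^*M_2^{(*)}\rangle$.
\item Keep $\Im G_2$ inside the underlined terms $\langle M_1^{(*)}\underline{\lambda WG_1^{(*)}A\Im G_2A^*}\rangle$, so that every Schwarz or Ward step lands on the positive quantity $\langle\Im G_1A\Im G_2A^*\rangle$.
\item In the non-underlined terms coming from the adjoint-independent part $A^*$ of $\widetilde A^*$, combine $G_1$ with $G_1^*$, so that $M_1$ becomes $\Im M_1$.
\end{itemize}
Then $\langle\Im M_1A(G_2-M_2)A^*\rangle$ is controlled by the positivity bound $|\langle (G-M)B\rangle|\prec\langle\Im M B\rangle/\sqrt{N\eta\rho}$ for $B=A^*\Im M_1A\ge0$. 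This bound is a consequence of the isotropic single-resolvent law, and it gives exactly $\rho_1\rho_2\s_2^2/\sqrt{N\ell}$. Only the pieces carrying the prefactor $\lambda^2\langle M_1^{(*)}A^*M_2^{(*)}\rangle/\beta$ must be estimated constellation by constellation. For those, Lemma~\ref{lem:m12bounds}, which is where the regularity of $A$ enters, supplies the second factor of $\s_2$.
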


The zigzag induction is then conducted along the following two propositions.   We point out that the bound in \eqref{eq:2G1Aglobal} is not optimal. In fact, the optimal bound would have an additional small factor $1/\sqrt{N \ell}$. The method presented here could also achieve this optimal bound, however, we do not pursue this direction to keep the presentation shorter.
\begin{proposition}[Zig step for two resolvents]
\label{prop:zigstep2}
	Fix $k \in \{1, ... , K\}$ and denote 
\begin{equation*}
	G_{i,t} := \big(\mathfrak{F}_{\rm zig}^{t-t_{k-1}}\big[H_{k-1}\big] - z_{i,t}\big)^{-1}\,, \quad t_{k-1} \le t \le t_k \,, \quad i \in [2]\,. 
\end{equation*}
Assume that the $G_{i, t}$'s satisfy the two resolvent local laws 
	\begin{equation} \label{eq:2G1Azig}
	\left| \big\langle G_{1,t}^{(*)} A G_{2,t}^{(*)} \big\rangle - \big\langle {M}_{1^{(*)}2^{(*)},t}^A \big\rangle \right| \prec 	\sqrt{\frac{\rho_{1,t} \rho_{2,t}}{N \eta_{1,t} \eta_{2,t}}} \s_2(t)
\end{equation}
and 
\begin{equation} \label{eq:2G2Azig}
	\left| \big\langle \Im G_{1,t} A \Im G_{2,t} A^* \big\rangle - \big\langle \widehat{M}_{12,t}^A \big\rangle \right| \prec \frac{\rho_{1,t} \rho_{2,t}}{\sqrt{N \ell_t}}\big( \s_2(t)\big)^2 
\end{equation}
	uniformly in $(z_1, z_2)$-regular matrices $A = \mathring{A}^{z_1, z_2} \in \C^{N \times N}$ at time $t= t_{k-1}$. Then the $G_{i, t}$'s satisfy the two resolvent local laws \eqref{eq:2G1Azig}--\eqref{eq:2G2Azig} 	uniformly in $(z_1, z_2)$-regular matrices $A = \mathring{A}^{z_1, z_2} \in \C^{N \times N}$ uniformly in all times $t\in [ t_{k-1}, t_k]$. 
\end{proposition}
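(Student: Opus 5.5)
We run the standard characteristic-flow argument for the Ornstein--Uhlenbeck evolution along $[t_{k-1},t_k]$, with the single-resolvent laws of Theorem~\ref{theo:main1} (available along the zig flow by Proposition~\ref{prop:zigstep}) as input.

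\textbf{Step 1: the evolution equations.} By It\^o's formula and \eqref{eq:characteristic}, for $G_{i,t}=G_{i,t}(z_{i,t})$ we derive
\begin{equation*}
\dd\langle G_{1,t}AG_{2,t}\rangle=\langle G_{1,t}AG_{2,t}\rangle\,\dd t+\lambda^2\langle G_{1,t}-M_{1,t}\rangle\langle G_{1,t}^2AG_{2,t}\rangle\,\dd t+\lambda^2\langle G_{2,t}-M_{2,t}\rangle\langle G_{1,t}AG_{2,t}^2\rangle\,\dd t+\lambda^2\langle G_{1,t}G_{2,t}\rangle\langle G_{1,t}AG_{2,t}\rangle\,\dd t+\dd\mathcal{E}_t,
\end{equation*}
where $\dd\mathcal{E}_t$ is a martingale term. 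The terms $-\lambda^2\langle M_{i,t}\rangle\langle\cdots\rangle$ coming from $\dd z_t$ exactly cancel the leading part of the self-energy terms. This is the key cancellation.

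The deterministic $M^A_{12,t}$ from \eqref{eq:m12}, evaluated along the characteristics, solves the corresponding linearized ODE. This uses \eqref{eq:Mevolve}; regularity is preserved in time since $M_{i,t}=e^{(t-1)/2}M_{i,1}$. Subtracting, the error $Y_t:=\langle G_{1,t}AG_{2,t}-M^A_{12,t}\rangle$ satisfies
\begin{equation*}
\dd Y_t=\Big(1+\frac{\lambda^2\langle M_{1,t}M_{2,t}\rangle}{1-\lambda^2\langle M_{1,t}M_{2,t}\rangle}\Big)Y_t\,\dd t+\text{(forcing)}\,\dd t+\dd\mathcal{E}_t .
\end{equation*}
More precisely, we treat the coupled linear system for $\langle G_1AG_2\rangle$ and $\langle G_1G_2\rangle$. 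The forcing consists of products of a single-resolvent fluctuation $\langle G_i-M_i\rangle$, bounded by $1/(N\eta_{i,t})$ via \eqref{eq:avLL}, with three-resolvent chains.

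\textbf{Step 2: bounding the three-resolvent chains.} We reduce the chains $\langle G_1^2AG_2\rangle$ to two-resolvent quantities by Ward identities and Schwarz:
\begin{equation*}
|\langle G_1^2AG_2\rangle|\le\langle \Im G_1 A\Im G_2A^*\rangle^{1/2}\,\eta_1^{-1}\cdots
\end{equation*}
This is where \eqref{eq:2G2Azig} enters, so the two local laws are propagated simultaneously. The quadratic variation of $\mathcal{E}_t$ is $\frac{\lambda^2}{N^2}\langle G_1AG_2^2A^*G_2^*G_1^*\rangle$-type, and is bounded again by $\eta^{-2}\langle\Im G_1A\Im G_2A^*\rangle$-type quantities. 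The Burkholder--Davis--Gundy inequality then gives a martingale contribution of order $\frac{\lambda}{N}\int(\eta_{1,s}\eta_{2,s})^{-1}\rho_1\rho_2\s_2^2\,\dd s$.

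\textbf{Step 3: Gronwall in a stopping-time bootstrap.} We introduce the stopping time at which \eqref{eq:2G1Azig}--\eqref{eq:2G2Azig} first fail by a factor $N^{\xi}$. We then integrate using the propagator
\begin{equation*}
\exp\int_s^t\frac{\dd r}{1-\lambda^2\langle M_{1,r}M_{2,r}\rangle}.
\end{equation*}
The characteristic identities $\partial_t\eta_{i,t}\approx -\lambda^2\rho_{i,t}$ convert $\int \lambda^2\rho/\eta\,\dd s$ into logarithms, and $\int(N\eta_s^2)^{-1}\lambda^2\rho\,\dd s\lesssim (N\eta_t)^{-1}$. For the analogous $\Im G$ version, we expand $\Im G=(G-G^*)/2\ii$ into the four constellations of \eqref{eq:2G1Azig}.

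\textbf{Main obstacle.} The crux is controlling the propagator factor $|1-\lambda^2\langle M_{1,r}M_{2,r}\rangle|^{-1}$ along the flow without any shape analysis. We rely on the specific form of $\s_2(t)$ in \eqref{eq:s2time}: its denominator $1-e^{t-1}\lambda^2\langle M_1M_2\rangle$ is exactly what appears after integrating the propagator, so the amplification is absorbed into the growth of $\s_2(t)$. Showing monotonicity/comparability of $\s_2(s)$ versus $\s_2(t)$ for $s\le t$ is the delicate point. Equally delicate is using the inhomogeneous norms, via $\Gamma_{i}$ and \eqref{eq:boundM}, rather than $\|M\|$ in all Schwarz steps. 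Once this holds, the bootstrap closes and the stopping time equals $t_k$ with very high probability.
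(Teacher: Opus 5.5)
\textbf{Main gap: propagating \eqref{eq:2G2Azig}.} Your treatment of $\langle G_{1,t}AG_{2,t}\rangle$ follows the paper's architecture: It\^o's formula with the characteristic cancellation, a joint stopping time, BDG, the Schwarz reduction of $\langle G_1^2AG_2\rangle$ to $\langle\Im G_1A\Im G_2A^*\rangle^{1/2}$, and Gronwall with $1/(1-\lambda^2\langle M_{1,t}M_{2,t}\rangle)$. That bootstrap rests on \eqref{eq:2G2Azig}, and your route for propagating it does not work.

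Expanding $\Im G=(G-G^*)/2\ii$ does not reduce $\langle\Im G_1A\Im G_2A^*\rangle$ to the four constellations of \eqref{eq:2G1Azig}. That law concerns the one-observable chain $\langle G_1AG_2\rangle$; the expansion produces the two-observable chains $\langle G_1^{(*)}AG_2^{(*)}A^*\rangle$.
\begin{itemize}
\item These chains are not in the induction hypothesis at $t_{k-1}$, and neither the zag step nor the global law supplies them.
\item They are not individually of size $\rho_1\rho_2\s_2^2$. Already for $H_0=0$, $\lambda=1$ near a spectral edge (where $M_i=m_iI$), their deterministic parts are $m_1m_2\langle|A|^2\rangle$, whereas $\langle\widehat M^A_{12}A^*\rangle=\Im m_1\Im m_2\langle|A|^2\rangle$.
\item The $\rho$- and $\Gamma$-gains built into $\s_2$ appear only after cancellation among the four terms. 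Separate error bounds therefore cannot give $\rho_1\rho_2\s_2^2/\sqrt{N\ell}$.
\end{itemize}

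The paper instead writes the It\^o equation \eqref{eq:imGAimGA} for $\langle\Im G_{1,t}A\Im G_{2,t}A^*\rangle$ itself and uses positivity throughout.
\begin{itemize}
\item The martingale is controlled by the reduction inequality $\langle(\Im G_1A\Im G_2A^*)^2\rangle\le N\langle\Im G_1A\Im G_2A^*\rangle^2$.
\item The $\langle G_i-M_i\rangle$-terms use $|\langle\Im G_1A\Im G_2A^*G_1\rangle|\le\eta_1^{-1}\langle\Im G_1A\Im G_2A^*\rangle$.
\item The leading forcing terms are products of one-observable chains, such as $\lambda^2\langle G_2^*A^*G_1\rangle\langle\Im G_1A\Im G_2\rangle$ and $\lambda^2\langle\Im G_1AG_2\rangle\langle\Im G_2A^*G_1\rangle$. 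Here \eqref{eq:2G1Azig} is fed back, together with the asymmetric bounds of Lemma~\ref{lem:m12bounds}, which is where regularity of $A$ is essential.
\item These products pair up to $\lambda^2\sqrt{\rho_1\rho_2/(\eta_1\eta_2)}\,\rho_1\rho_2\s_2^2$, which integrates along the characteristics to $\log N\cdot\rho_1\rho_2\s_2^2$, as in \eqref{eq:diffllaw}.
\end{itemize}
This direction of the coupling is absent from your plan.

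\textbf{Smaller points.}
\begin{itemize}
\item \emph{Propagator.} The amplification is not absorbed by the growth of $\s_2(t)$. With $\beta_{12,t}:=1-\lambda^2\langle M_{1,t}M_{2,t}\rangle$, one has exactly $\exp\int_s^t\beta_{12,r}^{-1}\dd r=e^{t-s}\beta_{12,s}/\beta_{12,t}$. But $\s_2(t)/\s_2(s)\le\max(1,|\beta_{12,s}/\beta_{12,t}|^{1/2})$, which gives nothing when $\langle\Gamma_1A\Gamma_2A^*\rangle$ dominates. What works is $|\beta_{12}|^{-1}\le1+\frac\pi2\sum_i\lambda^2\rho_i/\eta_i$, obtained from $1-\lambda^2\langle|M_i|^2\rangle=\eta_i/(\eta_i+\pi\lambda^2\rho_i)$ and AM--GM. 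Combined with $\partial_t\eta_{i,t}=-\eta_{i,t}/2-\pi\lambda^2\rho_{i,t}$, this yields $\exp\int_s^t|\beta_{12,u}|^{-1}\dd u\lesssim\sqrt{\eta_{1,s}\eta_{2,s}/(\eta_{1,t}\eta_{2,t})}$. That is exactly the growth of the target error $\sqrt{\rho_1\rho_2/(N\eta_1\eta_2)}$. The monotonicity $\s_2(s)\lesssim\s_2(t)$ only serves to pull $\s_2$ out of the time integrals.
\item \emph{Missing forcing term.} Your forcing omits $\lambda^2\langle M^A_{12,t}\rangle\langle G_{1,t}G_{2,t}-M^I_{12,t}\rangle$. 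Since the identity is not regular, this term needs \eqref{eq:boundcauch} (Cauchy integral formula plus the single resolvent law) together with the first bound in \eqref{eq:Mbounds}.
\item \emph{BDG.} Your BDG output should carry a square root and be linear in $\s_2$, not quadratic.
\end{itemize}
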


\begin{proposition}[Zag step for two resolvents] \label{prop:zagstep2}
	Fix $k \in \{1, ... , K\}$ and denote $s_k := \dift_k$. Let $H_k$ be the random matrix defined in \eqref{eq:twosequences}, and denote 
\begin{equation*}
	G_{i}^s := \big(\mathfrak{F}_{\rm zag}^{s}\big[H_{k}\big] - z_{i,t_k}\big)^{-1}\,, \quad 0 \le s \le s_k \,, \quad i \in [2]\,. 
\end{equation*}
Assume that the $G_{i}^s$'s satisfy the two resolvent local laws 
	\begin{equation} \label{eq:2G1Azag}
	\left| \big\langle (G_{1}^s)^{(*)} A (G_{2}^s)^{(*)} \big\rangle - \big\langle {M}_{1^{(*)}2^{(*)},t_k}^A \big\rangle \right| \prec \sqrt{\frac{\rho_{1,t_k} \rho_{2,t_k}}{N \eta_{1,t_k} \eta_{2,t_k}}} \s_2(t_k)
\end{equation}
and 
\begin{equation} \label{eq:2G2Azag}
	\left| \big\langle \Im G_{1}^s A \Im G_{2}^s A^* \big\rangle - \big\langle \widehat{M}_{12,t_k}^A \big\rangle \right| \prec \frac{\rho_{1,t_k} \rho_{2,t_k}}{\sqrt{N \ell_{t_k}}}\big( \s_2(t_k)\big)^2 
\end{equation}
	uniformly in $(z_1, z_2)$-regular matrices $A = \mathring{A}^{z_1, z_2} \in \C^{N \times N}$ at time $s= s_k$. Then the $G_{i}^s$'s satisfy the two resolvent local laws \eqref{eq:2G1Azag}--\eqref{eq:2G2Azag} 	uniformly in $(z_1, z_2)$-regular matrices $A = \mathring{A}^{z_1, z_2} \in \C^{N \times N}$ uniformly in all times $s\in [0, s_k]$. 
\end{proposition}

\begin{proof}[Proof of Theorem \ref{thm:2GLL}]
Similarly to the proof of Theorem \ref{theo:main1}, the proof of Theorem \ref{thm:2GLL} follows by a simple induction involving Propositions \ref{prop:global2}, \ref{prop:zigstep2}, and \ref{prop:zagstep2}. 
\end{proof}

\section{Single resolvent law: Proof of Propositions \ref{prop:global}, \ref{prop:zigstep}, and \ref{prop:zagstep}}
The proofs of the three propositions are given in Sections \ref{subsec:zigstep}, \ref{subsec:zagstep}, and \ref{subsec:global} below. Throughout the entire section, we will very often drop the subscript $\lambda$ from $G$, $M$'s, and $\rho$'s. 
\subsection{Characteristic flow: Proof of Proposition \ref{prop:zigstep}} \label{subsec:zigstep} 
We conduct the proof in the complex Hermitian case with $\sigma = 0$, the obvious modifications in the real symmetric case\footnote{For a detailed treatment, we refer to \cite[Section 4]{cipolloni2023eigenstate}.} are left to the reader. Also, note that it suffices to prove the statement only for fixed $z$ and $t$, since uniformity follows by a standard simple grid argument. 

In the following, we often drop the index $k$ as it remains fixed and introduce $t_{\rm init} := t_{k-1}$ and $t_{\rm fin} := t_{k}$. For $t \in [t_{\rm init}, t_{\rm final}]$, we denote $G_t:=\big(\mathfrak{F}_{\rm zig}^{t - t_{\rm init}}[H_{k-1}] - z_t\big)^{-1}$ for $z_t := \varphi_{t, t_{\rm final}}(z)$ with $\varphi$ defined in \eqref{eq:flowmap}, and $M_t:= M_{\lambda, t}(z_t)$. 

By It\^{o}'s formula and $\partial_t M_t=M_t/2$, we obtain
\begin{equation}
	\label{eq:flows}
	\begin{split}
		\dd \langle (G_t-M_t)B\rangle&=\dd N_{\mathrm{av},t}+\frac{1}{2}\langle (G_t-M_t)B\rangle\dd t+\lambda^2\langle G_t-M_t\rangle\langle G_t^2B\rangle\dd t, \\
		\dd (G_t-M_t)_{{\bm x}{\bm y}}&=\dd N_{\mathrm{iso},t}+\frac{1}{2}(G_t-M_t)_{{\bm x}{\bm y}}\dd t+\lambda^2\langle G_t-M_t\rangle (G_t^2)_{{\bm x}{\bm y}}\dd t,
	\end{split}
\end{equation}
where
\[
\dd N_{\mathrm{av},t}=\frac{\lambda }{\sqrt{N}}\sum_{a,b=1}^N\partial_{ab}\langle G_t\rangle \,\d (\dd B_t)_{ab}, \qquad\quad \dd N_{\mathrm{iso},t}=\frac{\lambda}{\sqrt{N}}\sum_{a,b=1}^N\partial_{ab}(G_t)_{{\bm x}{\bm y}} \,\d (\dd B_t)_{ab}
\]
and $\partial_{ab}$ denotes the partial derivative with respect to the matrix entry $(\mathfrak{F}_{\rm zig}^{t - t_{\rm init}}[H_{k-1}])_{ab}$. 
In the following we first consider the case $B=I$, then we explain the minor changes needed in the general case $B\ne I$. For an arbitrarily small $\xi > 0$ (in particular, $\xi \le \epsilon/10$ with $\epsilon$ in the condition $N \eta_t \rho_t \ge N^{\epsilon}$), we define the stopping time 
\begin{equation}
	\label{eq:defstoptime}
	\begin{split}
	\tau:=\inf\Bigg\{t \in [t_{\rm init}, t_{\rm final}] : \  &\big| \langle G_t-M_t\rangle\big|\ge \frac{N^{2\xi}}{N\eta_t}\sqrt{\frac{\langle \Im M_t BB^*\rangle}{\rho_t}}, \\ &\sup_{{\bm v},{\bm w}\in \{{\bm x},{\bm y}\}}\big|(G_t-M_t)_{{\bm v}{\bm w}}\big|\ge  N^{2\xi}\sqrt{\frac{(\Im M_t)_{{\bm v}{\bm v}}(\Im M_t)_{{\bm w}{\bm w}}}{N\eta_t \rho_t }}\Bigg\}
	\end{split}
\end{equation}
where we used the notation $\rho_t:= \pi^{-1}|\langle \Im M_t\rangle|$.

We start estimating the quadratic variation of the stochastic terms:
\begin{equation*}
	\begin{split}
		\mathrm{QV}\big[N_{\mathrm{av},t}\big]&\lesssim \int_{t_{\rm init}}^{t\wedge \tau } \frac{\lambda^2}{N^2\eta_s^2} \langle \Im G_s\Im G_s\rangle\, \dd s \lesssim  \int_{t_{\rm init}}^{t\wedge \tau} \frac{\lambda^2\rho_s}{N^2\eta_s^3}\, \dd s\lesssim \frac{1}{N^2\eta_{t\wedge \tau}^2}, \\
		\mathrm{QV}\big[N_{\mathrm{iso},t}\big]&\lesssim  \int_{t_{\rm init}}^{t \wedge \tau} \frac{\lambda^2}{N\eta_s^2} (\Im G_s)_{{\bm x}{\bm x}}(\Im G_s)_{{\bm y}{\bm y}}\, \dd s \\
		&\lesssim   \int_{t_{\rm init}}^{t\wedge \tau} \frac{\lambda^2}{N\eta_s^2} |(\Im M_s)_{{\bm x}{\bm x}}||(\Im M_s)_{{\bm y}{\bm y}}|\, \dd s\lesssim \frac{\rho_{t\wedge \tau}}{N\eta_{t\wedge \tau}}\cdot\frac{|(\Im M_{t\wedge \tau})_{{\bm x}{\bm x}}||(\Im M_{t\wedge \tau })_{{\bm y}{\bm y}}|}{\rho_{t\wedge \tau}^2}.
	\end{split}
\end{equation*}
We point out that here we used that $\Im M_s\sim \Im M_t$, for any $s, t$, by the line above \eqref{eq:flows} and that for $\alpha>1$ we have
\[
\int_0^t \frac{\lambda^2 \rho_s}{\eta_s^\alpha}\, \dd s\lesssim \frac{1}{\eta_t^{\alpha-1}}.
\]
Hence, by an application of the path-wise Burkholder-Davis-Gundy (BDG) inequality (see Appendix B.6, Eq.~(18) in \cite{shorack2009empirical}) we have
\begin{equation*}
\sup_{t_{\rm init}\le s\le t}\big| N_{\mathrm{av},s}\big|\lesssim \frac{N^\xi}{N\eta_{t\wedge \tau}}, \qquad\quad \sup_{t_{\rm init}\le s\le t}\big| N_{\mathrm{iso},s}\big|\lesssim N^\xi \sqrt{\frac{\rho_{t\wedge \tau}}{N\eta_{t\wedge \tau}}}\sqrt{\frac{(\Im M_{t\wedge \tau})_{{\bm v}{\bm v}}(\Im M_{t\wedge \tau})_{{\bm w}{\bm w}}}{\rho_{t \wedge \tau}^2}}.
\end{equation*}
Next, using the estimates
\begin{equation*}
	\big|\langle G_t^2\rangle\big|\le \frac{\rho_t}{\eta_t}, \qquad\quad \big|(G_t^2)_{{\bm x}{\bm y}}\big|\le \frac{\rho_t}{\eta_t} \sqrt{\frac{(\Im M_t)_{{\bm x}{\bm x}}(\Im M_t)_{{\bm y}{\bm y}}}{\rho_t^2}}.
\end{equation*}
for $t_{\rm init} \le t\le\tau$, we obtain
\begin{equation*}
	\begin{split}
		\big|\langle (G_{t\wedge \tau }-M_{t \wedge \tau})\rangle\big|&\le\big|\langle (G_{t_{\rm init}}-M_{t_{\rm init}})\rangle\big|+ \int_{t_{\rm init}}^{t\wedge \tau}\big|\langle G_s-M_s\rangle\big| \left(\frac{1}{2}+\frac{\lambda^2\rho_s}{\eta_s}\right)\dd s+\mathcal{O}\left(\frac{N^\xi}{N\eta_{t\wedge \tau}}\right), \\
		\big|(G_{t \wedge \tau }-M_{t\wedge \tau})_{{\bm x}{\bm y}}\big|&\le\big|(G_{t_{\rm init}}-M_{t_{\rm init}})_{{\bm x}{\bm y}}\big|+\mathcal{O}\left(\sqrt{\frac{\rho_{t\wedge \tau}}{N\eta_{t\wedge \tau}}}\sqrt{\frac{(\Im M_{t \wedge \tau})_{{\bm x}{\bm x}}(\Im M_{t\wedge \tau})_{{\bm y}{\bm y}}}{\rho_{t\wedge \tau}^2}}\right).
	\end{split}
\end{equation*}
Finally, using that 
\[
\exp\left(\int_s^t \frac{\lambda^2\rho_r}{\eta_r}\,\dd r\right)\lesssim \frac{\eta_s}{\eta_t},
\]
we conclude
\begin{equation}
\label{eq:boundfinsg}
	\begin{split}
		\big|\langle (G_{t\wedge \tau}-M_{t\wedge \tau})\rangle\big|&\le \frac{\eta_{t_{\rm init}}}{\eta_{t\wedge \tau}}\big|\langle (G_{t_{\rm init}}-M_{t_{\rm init}})\rangle\big|+\mathcal{O}\left(\frac{N^\xi\log N}{N\eta_{t\wedge \tau}}\right), \\
		\big|(G_t-M_t)_{{\bm x}{\bm y}}\big|&\le\big|(G_{t_{\rm init}}-M_{t_{\rm init}})_{{\bm x}{\bm y}}\big|+\mathcal{O}\left(\sqrt{\frac{\rho_{t\wedge \tau}}{N\eta_{t\wedge \tau}}}\sqrt{\frac{(\Im M_{t \wedge \tau})_{{\bm x}{\bm x}}(\Im M_{t\wedge \tau})_{{\bm y}{\bm y}}}{\rho_{t\wedge \tau}^2}}\right).
	\end{split}
\end{equation}
Finally, using the assumption in Proposition \ref{prop:zigstep} to estimate the initial condition in \eqref{eq:boundfinsg}, we obtain
\[
\big|\langle (G_t-M_t)\rangle\big|\le \frac{N^\xi\log N}{N\eta_t}, \qquad\quad \big|(G_t-M_t)_{{\bm x}{\bm y}}\big|\le\sqrt{\frac{\rho_t}{N\eta_t}}\sqrt{\frac{(\Im M_t)_{{\bm x}{\bm x}}(\Im M_t)_{{\bm y}{\bm y}}}{\rho_t^2}}
\]
for $t_{\rm init} \le t\le \tau$, and we thus have $\tau=t_{\rm final}$, concluding the proof for $B=I$.

For general $B\in\C^{N\times N}$ the analysis of the first equation in \eqref{eq:flows} is simpler since for its last term we do not need to use a Gronwall inequality but rather just plug in the result we just obtained for $B=I$. More precisely, we can bound this term by (here we consider ${t_{\rm init}} \le t\le \tau$)
\begin{equation*}
\lambda^2 \int_{t_{\rm init}}^{t} \frac{N^\xi}{N\eta_s^2}\langle \Im G_s\rangle^{1/2}\langle \Im G_s BB^*\rangle^{1/2}\,\dd s\lesssim \lambda^2 \int_{t_{\rm init}}^{t} \frac{N^\xi \rho_s}{N\eta_s^2}\langle \Im M_s BB^*\rangle^{1/2}\,\dd s\lesssim \frac{N^\xi}{N\eta_t}\sqrt{\frac{\langle \Im M_{t} BB^*\rangle}{\rho_{t }}}
\end{equation*}
Similarly, we estimate the quadratic variation of the stochastic term in first equation in \eqref{eq:flows} by
\begin{equation*}
\int_{t_{\rm init}}^{t } \frac{1}{N^2\eta_s^2}\langle \Im G_s B\Im G_s B^*\rangle\, \dd s\lesssim \int_{t_{\rm init}}^{t} \frac{1}{N^2\eta_s^3}\langle \Im G_s BB^*\rangle\, \dd s\lesssim \frac{1}{N^2\eta_t^2} \frac{\langle \Im M_t BB^*\rangle}{\rho_t}.
\end{equation*}
Proceeding as in \eqref{eq:boundfinsg} (and the formula below it) we conclude the proof also for general $B\in\C^{N\times N}$. \qed

\subsection{Green function comparison: Proof of Proposition \ref{prop:zagstep}} \label{subsec:zagstep} 
The goal of this section is to prove Proposition \ref{prop:zagstep} and thereby conclude the argument for the \emph{zag} step of our inductive zigzag proof. Since throughout the argument, the time $t_k$ defined in \eqref{eq:tkdef} remains fixed, we shall henceforth drop the subscript $t_k$. Additionally, to further condense the notation, we abbreviate $\mathcal{D} := \mathcal{D}_{t_k}^{\rm abv}$ and $s_{\rm final} := s_k$. 

Our proof will be conducted in an iterative way along vertical truncations of $\mathcal{D}$ given by
\begin{equation}
	\mathcal{D}_\gamma := \{ z  \in \mathcal{D} : |\Im z| \ge \lambda N^{-1+ \gamma} \} \quad \text{for} \quad  \gamma \in (0, \infty)\,. 
\end{equation}
This is the content of the following proposition, whose proof is given in Section \ref{subsubsec:Gronwall} below. 
\begin{proposition}[Zag bootstrap] \label{prop:zagiterate}
	Fix $\gamma_0 \in (0,\infty)$ and assume that the resolvent $G^s = G^s_\lambda$ from Proposition \ref{prop:zagstep} satisfies the bounds  
	\begin{equation} \label{eq:GFTinput}
		|(G^s(z))_{\bm u \bm v}| \lesssim |M(z)_{\bm u \bm v}| + \sqrt{ (\Im M(z))_{\bm u \bm u} (\Im M(z))_{\bm v \bm v}}\,, \quad  |(\Im G^s(z))_{\bm u \bm v}| \lesssim  \sqrt{ (\Im M(z))_{\bm u \bm u} (\Im M(z))_{\bm v \bm v}} 
	\end{equation}
	with very high probability, uniformly in $z \in \mathcal{D}_{\gamma_0}$, deterministic vectors $\bm u, \bm v \in \C^{N}$, and $s \in [0, \sfin]$. 
	
	Fix an arbitrarily small step size exponent $\delta > 0$ and let $\gamma_1 \ge \gamma_0 - \delta$. Assume that the resolvent $G^s$ satisfies the local laws \eqref{eq:globallaw} uniformly in $z \in \mathcal{D}_{\gamma_1}$, $B \in \C^{N \times N}$, and $\bm x, \bm y \in \C^N$ at time $s = \sfin$. Then the resolvent $G^s$ satisfies the local laws \eqref{eq:globallaw} uniformly in $z \in \mathcal{D}_{\gamma_1}$ $B \in \C^{N \times N}$, and $\bm x, \bm y \in \C^N$, uniformly for all times $s \in [0, \sfin]$.  
\end{proposition}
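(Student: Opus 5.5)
We prove Proposition~\ref{prop:zagiterate} by a Green function comparison along the zag flow: a Gronwall argument for high moments of the error quantities, run backwards in $s$ from $\sfin$, where the local laws hold by assumption. Fix $z\in\mathcal{D}_{\gamma_1}$, $B$, $\bm x,\bm y$ (uniformity then follows by a grid argument). We consider
\[
X_s:=\frac{\langle (G^s-M)B\rangle}{\Psi_{\rm av}}, \qquad Y_s:=\frac{(G^s-M)_{\bm x\bm y}}{\Psi_{\rm iso}},
\]
where $\Psi_{\rm av},\Psi_{\rm iso}$ are the right-hand sides of \eqref{eq:avGL}--\eqref{eq:isoGL}. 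Since the zag flow \eqref{eq:zagflow} preserves the first two moments, $M$ does not change with $s$. For $p\in\N$ we differentiate $\E|X_s|^{2p}$ and $\E|Y_s|^{2p}$ in $s$. By It\^o's formula and a cumulant expansion in the entries $h_{ab}$, the second-order terms cancel exactly, as in every GFT. What remains are terms of order $n\ge3$: the $n$-th cumulant of $h_{ab}$ (of size $\lambda^n N^{-n/2}$) times $n-1$ derivatives $\partial_{ab}$ acting on $|X|^{2p}$. Each derivative produces chains of resolvent entries $G_{ab},G_{aa},G_{bb}$ and isotropic entries such as $(GBG)_{ab}$, $G_{\bm x a}G_{b\bm y}$.

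The key tool is the a priori input \eqref{eq:GFTinput}. To use it on the smaller domain $\mathcal{D}_{\gamma_1}$ we rely on monotonicity of $\eta\mapsto\eta\,\Im G_{\bm u\bm u}(E+\ii\eta)$, which costs at most a factor $N^{\delta}$ when passing from $\mathcal{D}_{\gamma_0}$ to $\gamma_1\ge\gamma_0-\delta$. This gives the entrywise bounds
\[
|G_{ab}|\lesssim N^{\delta}\bigl(|M_{ab}|+\sqrt{(\Im M)_{aa}(\Im M)_{bb}}\bigr),\qquad |G_{\bm x a}|\lesssim N^{\delta}\bigl(|M_{\bm x a}|+\sqrt{(\Im M)_{\bm x\bm x}(\Im M)_{aa}}\bigr).
\]
Summing over $a,b$, we pair the inhomogeneous weights $(\Im M)_{aa}$ with the normalised traces, using $\sum_a(\Im M)_{aa}=N\pi\rho$ and $\sum_{a}|M_{\bm x a}|^2=(MM^*)_{\bm x\bm x}\lesssim(\Im M)_{\bm x\bm x}/(\eta+\lambda^2\rho)$ via the Ward identity $MM^*=\Im M/(\eta+\lambda^2\langle\Im M\rangle)$, which follows from the MDE.

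The averaged law needs one further ingredient, following the streamlined strategy announced in the introduction: the off-diagonal isotropic quantities $(GBG)_{ab}$ are controlled by a single tracial quantity. Concretely, by Cauchy--Schwarz,
\[
\sum_{a,b}|(GBG)_{ab}|^2=N\langle G B GG^*B^*G^*\rangle\le \frac{N}{\eta^2}\,\langle \Im G\, B\,\Im G\, B^*\rangle ,
\]
and we then bound $\langle \Im G B\Im G B^*\rangle\le\|\Im G\|\,\langle\Im G BB^*\rangle$, with $\langle\Im G BB^*\rangle\lesssim\langle\Im M BB^*\rangle+\Psi_{\rm av}|X|$ self-consistently. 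This keeps the argument closed in terms of $|X|$ and $|Y|$ and the a priori bounds. The result is a differential inequality of the form
\[
\Bigl|\tfrac{\dd}{\dd s}\E|X_s|^{2p}\Bigr|\lesssim N^{C\delta}\Bigl(\frac{\lambda^2}{N\eta}\cdots\Bigr)\bigl(\E|X_s|^{2p}+1\bigr),
\]
and similarly for $Y$. Because $\sfin=\dift_k\lesssim N^{-(k-1)\delta}$ and $\eta\ge \lambda N^{-1+\gamma_1}$, the prefactor integrates to at most $N^{O(\delta)}$, and Gronwall transfers the bound from $s=\sfin$ to all $s\in[0,\sfin]$. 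Markov's inequality then gives the claimed $\prec$ bounds.

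We expect the \textbf{main obstacle} to be the third- and fourth-order cumulant terms in the strongly inhomogeneous setting. The naive bound $|G_{ab}|\le\|M\|\lesssim(\eta+\lambda^2\rho)^{-1}$ is far too large here. Each $\lambda^n N^{-n/2}$ must instead be balanced against sums weighted by $(\Im M)_{aa}$ and normalised by $\rho$. For example, the term $\lambda^3N^{-3/2}\sum_{ab}G_{aa}G_{bb}G_{ab}$ has to be shown to be $\lesssim N^{C\delta}\Psi$ using $\lambda^2\rho\lesssim\eta$-type relations and $\rho\lesssim\lambda^{-1}$ from \eqref{eq:boundrho}. If the odd third-order term does not close directly, we would first try gaining an extra factor from one more cumulant expansion of the off-diagonal $G_{ab}$ (the standard unmatched-index argument). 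Failing that, we would use the isotropic law already available to bound $\sum_b G_{ab}\bm e_b$-type sums as an isotropic quantity with vector $\bm 1$.
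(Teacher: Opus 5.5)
Your overall architecture matches the paper's: a cumulant expansion along the zag flow with exact cancellation of the second-order terms, monotonicity in $\eta$ to move \eqref{eq:GFTinput} from $\mathcal{D}_{\gamma_0}$ to $\mathcal{D}_{\gamma_1}$ at a cost $N^{\delta}$, Ward identities together with $MM^*=\Im M/(\eta+\pi\lambda^2\rho)$, and Gronwall run backwards from $\sfin$. However, the step that actually makes the zag step close is missing, and the closure you describe fails.

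The third- and higher-order cumulant terms are in general not of size $N^{C\delta}\Psi$. With the tools available (Schwarz, the summation extension $\sum_a (|M|)_{aa}^2\le N\langle |M|^2\rangle\lesssim N\lambda^{-2}$, Ward), a term like $\lambda^3N^{-3/2}\sum_{a,b}G_{\bm x a}G_{aa}G_{bb}G_{b\bm y}$ is bounded only by $N^{C\delta}\lambda\sqrt{\rho/\eta}\,\Psi$. On $\mathcal{D}=\mathcal{D}^{\rm abv}_{t_k}$ the factor $\lambda\sqrt{\rho/\eta}$ can be as large as $N^{k\delta/2}$. The ``$\lambda^2\rho\lesssim\eta$-type relations'' you want to invoke hold only in the global regime. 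In $\mathcal{D}^{\rm abv}_{t_k}$ you only have $\lambda^2\rho/\eta\lesssim N^{k\delta}$, which comes from the domain condition $(\lambda^2\rho_t/|\Im z|)^{-1}\ge c'(N^{-1+\epsilon}+(1-t))$ at $t=t_k$. So aiming for $N^{C\delta}\Psi$, with unmatched-index or $\bm 1$-vector fallbacks, is aiming at the wrong target. The paper accepts the large factor and compensates for it by the shortness of the zag interval.

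Concretely, Young's inequality must move all $N^{C\delta}$ losses into the inhomogeneous term. This gives $|\partial_s\E|S_s|^p|\lesssim(1+N^{-2\delta}\lambda^2\rho/\eta)\,[\E|S_s|^p+(N^{10\delta}\Psi)^p]$, with $\rho/\eta$ evaluated at $E+\ii\eta_0$. Then $\sfin\lesssim N^{-(k-1)\delta}$ and $\lambda^2\rho/\eta\lesssim N^{k\delta}$ give a Gronwall exponent $\lesssim N^{-(k-1)\delta}+N^{-\delta}=O(1)$.

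Your inequality instead keeps the $N^{C\delta}$ loss in the coefficient. You then conclude, using the irrelevant constraint $\eta\ge\lambda N^{-1+\gamma_1}$, that the coefficient integrates to $N^{O(\delta)}$. Gronwall then produces a factor $\exp(N^{O(\delta)})$, which destroys the moment bound. Even with the correct coefficient $N^{C\delta}\lambda\sqrt{\rho/\eta}$ left in front of $\E|S_s|^p$, the exponent would be $N^{(C+1-k/2)\delta}\gg 1$ for the first $\approx 2C$ steps. So this rearrangement is essential, not cosmetic.

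A smaller point concerns the averaged law. Your self-consistent bound $\langle\Im G BB^*\rangle\lesssim\langle\Im M BB^*\rangle+\Psi_{\rm av}|X|$ does not close. The deviation is $\Im\langle (G-M)BB^*\rangle$, which involves the observable $BB^*$, not the $B$ that defines $X$. The paper instead proceeds in order:
\begin{enumerate}
\item It closes the isotropic Gronwall estimate first, using only the a priori input.
\item In the averaged estimate it uses the spectral decomposition of $BB^*$ together with the isotropic bound $(\Im G)_{\bm v\bm v}\lesssim(\Im M)_{\bm v\bm v}$ to get $\langle\Im G BB^*\rangle\lesssim\langle\Im MBB^*\rangle$.
\item It splits off-diagonal entries as $G_{ab}=M_{ab}+(G-M)_{ab}$ and controls the fluctuation by the isotropic law.
\end{enumerate}
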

Armed with Proposition \ref{prop:zagiterate}, we can readily conclude the proof of Proposition \ref{prop:zagstep}. 
\begin{proof}[Proof of Proposition \ref{prop:zagstep}] 
	Analogously to \cite[Proof of Proposition 3.7]{erdHos2025cusp}, the argument goes via induction in $\gamma(k) := \kappa_0 + 1/\epsilon - k \delta$ by iteratively applying Proposition \ref{prop:zagiterate}. For the base case, we clearly have the validity of the bounds \eqref{eq:GFTinput} for $\gamma(0) = \kappa_0 + 1/\epsilon$ using trivial elementary bounds on $G$ and $M$.\footnote{In fact, it even holds deterministically.} After $K := \lceil (\kappa_0 + 1/\epsilon - \epsilon)/\delta \rceil $ steps, we have that $\mathcal{D}_{\gamma(K)} = \mathcal{D}$, and hence proven Proposition \ref{prop:zagiterate}. 
\end{proof}

\subsubsection{Gronwall estimates: Proof of Proposition \ref{prop:zagiterate}} \label{subsubsec:Gronwall}
The goal of this section is to give the proof of Proposition \ref{prop:zagiterate}. The argument is based on two Gronwall estimates, formulated in Propositions \ref{prop:zagisoGron}--\ref{prop:zagavGron} below, which, in turn, are proven in the following Section \ref{subsubsec:cumexp}. We note that the isotropic part of Proposition \ref{prop:zagiterate} will be proven in an entirely self-consistent way, based solely on the \emph{isotropic Gronwall estimate} in Proposition \ref{prop:zagisoGron} below. This serves as an input for the \emph{average Gronwall estimate} in Proposition \ref{prop:zagavGron}. 

The basic tool for establishing the Gronwall estimates is a well known \emph{cumulant expansion}, which we recall within the proof in Section \ref{subsubsec:cumexp}. A second key input in the arguments is the following monotonicity estimate on resolvents (Lemma \ref{lem:monotone}), whose proof is identical to the one of \cite[Lemma~5.3]{erdHos2025cusp} and hence omitted. For simplicity of the notation, we shall henceforth suppose that our spectral parameter $z = E + \ii \eta \in \HH$ is in the upper half plane and identify any domain $\mathcal{D}$ with its restriction $\mathcal{D} \cap \HH$. 

\begin{lemma}[Monotonicity estimate] \label{lem:monotone}
	Fix a constant $\gamma_0 \in (0,\infty)$ and assume the very-high-probability-bounds \eqref{eq:GFTinput} hold uniformly in $z \in \mathcal{D}_{\gamma_0}$, $s \in [0, \sfin]$, and deterministic vectors $\bm u, \bm v \in \C^N$. 
	
	Then, for any fixed $\gamma_1 \ge \gamma_0 - \delta$, we have that
	\begin{equation}
\begin{split}
	|(G^s(E + \ii \eta_1 ))_{\bm u \bm v}| &\lesssim \frac{\eta_0}{\eta_1}\left(|M(E+ \ii \eta_0 )_{\bm u \bm v}| + \sqrt{ (\Im M(E + \ii \eta_0))_{\bm u \bm u} (\Im M(E + \ii \eta_0))_{\bm v \bm v}}\right) \\
	  |(\Im G^s(E + \ii \eta_1))_{\bm u \bm v}| &\lesssim  \frac{\eta_0}{\eta_1}\sqrt{ (\Im M(E+ \ii \eta_0))_{\bm u \bm u} (\Im M(E + \ii \eta_0))_{\bm v \bm v}} 
\end{split}
	\end{equation}
	with very high probability, uniformly in $z = E + \ii \eta_1 \in \mathcal{D}_{\gamma_1}$ for any $\eta_0 \ge \lambda N^{-1+\gamma_0 } \vee \eta_1$, time $s \in [0, \sfin]$, and deterministic vectors $\bm u, \bm v \in \C^N$. 
\end{lemma}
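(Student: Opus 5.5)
The plan is to combine the standard monotonicity of $\eta\mapsto\eta\,\Im G(E+\ii\eta)$ with a resolvent-identity bound for the full resolvent. Both estimates are taken at the larger scale $\eta_0$, where \eqref{eq:GFTinput} is available since $E+\ii\eta_0\in\mathcal{D}_{\gamma_0}$. The key structural input is that $\eta\mapsto \eta\,(\Im G(E+\ii\eta))_{\bm u\bm u}$ is nondecreasing. This follows from the spectral decomposition, because $\eta^2/((\lambda_i-E)^2+\eta^2)$ increases in $\eta$.

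\textbf{Step 1 (diagonal imaginary part).} For $\eta_1\le\eta_0$, the spectral decomposition gives
\[
(\Im G^s(E+\ii\eta_1))_{\bm u\bm u}\le \frac{\eta_0}{\eta_1}(\Im G^s(E+\ii\eta_0))_{\bm u\bm u}\lesssim \frac{\eta_0}{\eta_1}(\Im M(E+\ii\eta_0))_{\bm u\bm u}.
\]
The last step uses \eqref{eq:GFTinput} at $E+\ii\eta_0$, which lies in $\mathcal{D}_{\gamma_0}$ because $\eta_0\ge\lambda N^{-1+\gamma_0}$. We also need $E+\ii\eta_0\in\mathcal{D}$. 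The upper constraints on $|\Re z|\vee|\Im z|$ are fine for $\eta_0$ not too large, and larger $\eta_0$ is trivial via $\|G\|\le 1/\eta$. The constraint $N\eta\rho\ge N^\epsilon$ is preserved because $\eta\rho$ is increasing in $\eta$. The stability constraint $\eta/(\lambda^2\rho)$ is likewise monotone increasing in $\eta$, since $\rho/\eta$ is decreasing. If $\eta_0$ exceeds the domain's vertical range, one caps it and uses trivial bounds. The off-diagonal bound on $\Im G$ then follows from positivity of $\Im G$ via Cauchy--Schwarz, $|(\Im G)_{\bm u\bm v}|\le \sqrt{(\Im G)_{\bm u\bm u}(\Im G)_{\bm v\bm v}}$.

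\textbf{Step 2 (full resolvent).} I would write the integral identity
\[
G(E+\ii\eta_1)=G(E+\ii\eta_0)-\ii\int_{\eta_1}^{\eta_0}G(E+\ii\eta)^2\,\dd\eta ,
\]
and bound the integrand by Cauchy--Schwarz together with the Ward identity $GG^*=\Im G/\eta$:
\[
|(G^2)_{\bm u\bm v}|\le \sqrt{(GG^*)_{\bm u\bm u}(G^*G)_{\bm v\bm v}}=\frac{\sqrt{(\Im G)_{\bm u\bm u}(\Im G)_{\bm v\bm v}}}{\eta}.
\]
Step 1 applied at each intermediate $\eta$ gives $(\Im G(E+\ii\eta))_{\bm u\bm u}\lesssim (\eta_0/\eta)(\Im M(E+\ii\eta_0))_{\bm u\bm u}$. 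Integrating $\eta_0/\eta^2$ over $[\eta_1,\eta_0]$ gives at most $\eta_0/\eta_1$. The boundary term $G(E+\ii\eta_0)$ is controlled by \eqref{eq:GFTinput}, which yields the claimed bound with prefactor $\eta_0/\eta_1\ge1$.

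\textbf{Uniformity.} The "very high probability uniformly in $\eta$" statements come from the input holding uniformly on $\mathcal{D}_{\gamma_0}$ via a grid and Lipschitz argument. The only mildly delicate point I expect is the domain bookkeeping: checking that vertical lifts of points in $\mathcal{D}_{\gamma_1}$ stay in $\mathcal{D}_{\gamma_0}$, which relies on the monotonicity of $\eta\rho_\lambda$ and $\rho_\lambda/\eta$ as already used after \eqref{eq:boundelnew}. Note that $\gamma_1\ge\gamma_0-\delta$ is not actually needed beyond ensuring the factor $\eta_0/\eta_1\le N^{\delta}$ is used later.
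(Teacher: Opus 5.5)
Your argument is correct and matches the standard proof that the paper cites from \cite[Lemma~5.3]{erdHos2025cusp}. The deterministic monotonicity of $\eta\mapsto\eta\,(\Im G(E+\ii\eta))_{\bm u\bm u}$ reduces everything to the single point $E+\ii\eta_0\in\mathcal{D}_{\gamma_0}$, where \eqref{eq:GFTinput} applies; integrating $\partial_\eta G=\ii G^2$ with Schwarz and the Ward identity then handles the full resolvent and produces the factor $\eta_0/\eta_1$, while your domain bookkeeping and your remark that $\gamma_1\ge\gamma_0-\delta$ only matters for later use are also accurate.
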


We are now prepared to formulate our isotropic and average Gronwall estimates and remind the reader that within the \emph{zag} step, the deterministic approximation $M$ is time-independent. 
\begin{proposition}[Isotropic Gronwall estimate] \label{prop:zagisoGron}
	Assume the conditions of Proposition \ref{prop:zagiterate}. Fix $\bm x, \bm y \in \C^N$, $z = E + \ii \eta_1 \in \mathcal{D}_{\gamma_1}$ and $\eta_0 \ge \lambda N^{-1+\gamma_0} \vee \eta_1$ such that $\eta_0 /\eta_1 \le N^\delta$. For $s \in [0,\sfin]$, define
	\begin{equation}
		S_s := (G^s(E + \ii \eta_1) - M(E + \ii \eta_1))_{\bm x \bm y} \,. 
	\end{equation}
	Then, for any large (even) $p\in \N$ it holds that
	\begin{equation} \label{eq:isoGron}
		\left| \frac{\dd }{\dd s} \E |S_s|^p \right| \lesssim  \left(1 +  N^{-2 \delta} \lambda^2 {\frac{\rho(E+ \ii \eta_0)}{\eta_0}} \right)\, \left[\E |S_s|^p + (N^{10\delta}\Psi(\eta_1; \bm x, \bm y))^p\right]
	\end{equation}
	uniformly in $s \in [0,\sfin]$, $\bm x, \bm y \in \C^N$, and $z \in \mathcal{D}_{\gamma_1}$. Here, for $\eta \in [\eta_1, \eta_0]$,  we denoted
	\begin{equation}
		\Psi(\eta; \bm x, \bm y) := \sqrt{\frac{\rho(E+ \ii \eta)}{N \eta}} \ \sqrt{\frac{\big(\Im M(E + \ii \eta)\big)_{\bm x \bm x} \, \big(\Im M(E+ \ii \eta)\big)_{\bm y \bm y}}{ \rho(E + \ii \eta)^2}} \,. 
	\end{equation}
\end{proposition}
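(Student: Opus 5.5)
We differentiate $\E|S_s|^p$ along the zag flow \eqref{eq:zagflow} and use a cumulant expansion. Since the zag flow preserves the first two moments of the entries, Itô's formula gives
\[
\frac{\dd}{\dd s}\E|S_s|^p=-\frac12\sum_{a,b}\E\big[w_{ab}\,\partial_{ab}F\big]+\frac{\lambda^2}{2N}\sum_{a,b}\E\big[\partial_{ba}\partial_{ab}F\big],
\]
where $F=|S_s|^p$ and $w_{ab}=(H^s-\E H^s)_{ab}$. Expanding the first term by the cumulant expansion, the second-order cumulants exactly cancel the Itô (Laplacian) term. What remains is a sum over cumulant orders $m\ge 3$: each term carries a prefactor $\lambda^m N^{-m/2}$ times $\sum_{a,b}\E\,\partial^{m}_{ab}F$, plus a truncation error at large order that is negligible by the moment bounds \eqref{cp}. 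The derivatives $\partial_{ab}^{m}F$ expand into products of $|S|^{p-q}$ with $q$ factors $\partial^{k_j}_{ab}S$, where $\sum_j k_j=m$. Each $\partial^k_{ab}S$ is a product of resolvent entries of the form $G_{\bm x a}G_{ba}\cdots G_{b\bm y}$ (with all $a,b$ orderings). These are the chains to estimate.

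Next we bound the resolvent entries. Entries involving the fixed vectors, $G_{\bm x a}$ and $G_{b\bm y}$, are controlled by the a priori input \eqref{eq:GFTinput}, transported from $\eta_0$ down to $\eta_1$ via Lemma~\ref{lem:monotone}, which costs a factor $\eta_0/\eta_1\le N^\delta$ per entry. Diagonal entries $G_{aa}$ are bounded by $\|M\|\lesssim(\eta+\lambda^2\rho)^{-1}$ plus $(\Im M)_{aa}$. Off-diagonal entries $G_{ab}$ are handled through sums rather than sup bounds: Ward identities $\sum_b|G_{ab}|^2=(\Im G)_{aa}/\eta$ and $\sum_a|G_{\bm x a}|^2=(\Im G)_{\bm x\bm x}/\eta$, combined with $\sum_a(\Im M)_{aa}=N\pi\rho$.

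Power counting then shows that each summation over $a$ or $b$ of $|G_{\bm x a}|^2$-type quantities yields $\rho(\Im M)_{\bm x\bm x}/(\eta\rho)$ instead of a factor $N$. Two kinds of terms arise. Terms where all derivatives hit a single copy of $S$ (so $q=1$) are the most dangerous; they are bounded by $\Psi^p$-type quantities through $|S|^{p-1}\cdot(\text{error})$ and Young's inequality. Terms with $q\ge 2$ are bounded by $|S|^{p-q}\Psi^{q}N^{O(\delta)}$, and Young's inequality again gives $\E|S|^p+(N^{10\delta}\Psi)^p$.

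The main obstacle is the strong inhomogeneity. The factor $\|M\|\lesssim 1/(\lambda^2\rho)$ can be large, and diagonal factors $G_{aa}$ cannot simply be bounded by $O(1)$. The step I expect to be hardest is the third-order terms with one $G_{aa}$ or $G_{bb}$ factor, e.g.
\[
\lambda^3N^{-3/2}\sum_{a,b}G_{\bm x a}G_{bb}G_{a\bm y}.
\]
Here I plan to use $\lambda|G_{bb}|\lesssim\lambda\|M\|+\lambda(\Im M)_{bb}$. Together with $\lambda^2\sum_a|G_{\bm xa}|^2\le\lambda^2(\Im G)_{\bm x\bm x}/\eta$, this produces precisely the factor $\lambda^2\rho(E+\ii\eta_0)/\eta_0$ in \eqref{eq:isoGron}, with the $N^{-2\delta}$ gained from the surplus $N^{-1/2}$ and the $N\eta\rho\ge N^\epsilon$ condition. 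This factor is harmless for Gronwall because $\int\lambda^2\rho/\eta$ over the zag time $s_k\lesssim N^{-(k-1)\delta}$ is $O(1)$ in $\mathcal D_{t_k}$ by the domain condition $\lambda^2\rho/\eta\lesssim 1/(1-t_k)$. Higher orders $m\ge4$ carry extra $\lambda N^{-1/2}\|G\|$ factors, which are bounded by $(N\eta\rho)^{-1/2}$ and are hence smaller, so they are absorbed in the same way.
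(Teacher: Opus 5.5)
Your overall architecture matches the paper: the generator of the zag flow, cancellation of the second-order cumulants, expansion of $\partial^m|S|^p$ into $|S|^{p-q}$ times resolvent chains, and Young's inequality. The gap is in the one concrete mechanism you propose for the inhomogeneity, which does not work.

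You bound diagonal entries pointwise, $|G_{aa}|\lesssim N^{\delta}\big(\|M\|+(\Im M)_{aa}\big)\lesssim N^{\delta}(\lambda^2\rho)^{-1}$. Your test example $\sum_{a,b}G_{\bm x a}G_{bb}G_{a\bm y}$ has only three resolvents, so it is a second-derivative term, and those cancel. Test the bound instead on a genuine term from $\partial^3 S$, e.g.\ $\lambda^3N^{-3/2}\sum_{a,b}G_{\bm xa}G_{aa}G_{bb}G_{b\bm y}$. Your bounds together with $\sum_a|G_{\bm x a}|\le\sqrt{N(\Im G)_{\bm x\bm x}/\eta_1}$ give
\[
\frac{N^{2\delta}}{\lambda\rho^2\sqrt N\,\eta_1}\sqrt{(\Im G)_{\bm x\bm x}(\Im G)_{\bm y\bm y}} .
\]
The correct route gives $N^{2\delta}\lambda N^{-1/2}\eta_1^{-1}\sqrt{(\Im G)_{\bm x\bm x}(\Im G)_{\bm y\bm y}}\lesssim N^{8\delta}\lambda\sqrt{\rho/\eta_0}\,\Psi$, so yours is larger by $(\lambda\rho)^{-2}$. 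The budget in \eqref{eq:isoGron} leaves room only for an extra factor of order $\lambda\sqrt{\rho/\eta_0}$, so you overshoot by roughly $\lambda^{-3}\rho^{-5/2}\eta^{1/2}$.

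Concretely, take $H_0$ with an order-one density near $E$, so $\rho\sim1$, and set $\lambda=N^{-0.3}$, $\eta=N^{-0.9}$. Then $N\eta\rho\ge N^{\epsilon}$ for $\epsilon<0.1$ and $\lambda^2\rho/\eta\gg1$, yet the overshoot is $N^{0.45}$. Neither the surplus $N^{-1/2}$ nor $N\eta\rho\ge N^\epsilon$ can absorb this. The loss is exactly the gap between $\|M\|^2\sim(\lambda^2\rho)^{-2}$ and the average $\langle|M|^2\rangle\lesssim\lambda^{-2}$, i.e.\ the strong inhomogeneity itself.

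The missing idea is never to replace $(|M|)_{aa}$ by $\|M\|$. Keep the diagonal factors inside the $a,b$-sums, apply Cauchy--Schwarz, and extend the summation:
$\sum_a(|M|)_{aa}^2\le\sum_{a,b}|M_{ab}|^2=N\langle|M|^2\rangle\lesssim N/\lambda^2$.
This uses the Hilbert--Schmidt bound of Lemma~\ref{lem:Mbounds} rather than the norm bound.

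Off-diagonal entries that must be taken in sup need the same care. Examples are $G_{ba}$ in $G_{\bm xa}G_{ba}G_{bb}G_{a\bm y}$, or one factor $G_{\bm xa}$ in the $(\partial S)(\partial^2 S)$ terms. For these you need the row bound of Lemma~\ref{lem:Mmax1}, $\sup_{\|\bm y\|=1}|M_{\bm x\bm y}|\lesssim\lambda^{-1}\sqrt{(\Im M)_{\bm x\bm x}/\rho}$, paired with $\sum_b(\Im M)_{bb}/\rho=\pi N$. A Ward identity combined with $\|M\|$ or $\|\Im M\|$ is again too lossy when $\lambda\rho\ll1$.

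With these tools every third-order term is $\lesssim N^{8\delta}\lambda\sqrt{\rho/\eta_0}\,\Psi^q$. This includes the $q\ge2$ terms: they are not $N^{O(\delta)}\Psi^q$ as you claim, but carry the same, possibly polynomially large, factor $\lambda\sqrt{\rho/\eta_0}$. Then $\lambda\sqrt{\rho/\eta_0}\le1+\lambda^2\rho/\eta_0$ and Young's inequality give \eqref{eq:isoGron}.

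Your closing Gronwall remark is also off by $N^{\delta}$. Since $\lambda^2\rho/\eta\lesssim N^{k\delta}$ and $s_k\sim N^{-(k-1)\delta}$, one only gets $\int\lambda^2\rho/\eta\lesssim N^{\delta}$. That is precisely why the $N^{-2\delta}$ in \eqref{eq:isoGron} is needed.
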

Armed with Proposition \ref{prop:zagavGron}, we can now apply Gronwall's lemma: Uniformly in $s \in [0, \sfin]$, from \eqref{eq:isoGron}, we find that 
\begin{equation} \label{eq:integrateisoGron}
\begin{split}
\E |S_s|^p &\lesssim \exp\left[ \bigg(1 + N^{-2 \delta} \lambda^2 \frac{\rho(E+ \ii \eta_0)}{\eta_0}\bigg) (\sfin - s)\right] \, \left[\E |S_{\sfin}|^p + (N^{10\delta}\Psi(\eta_1; \bm x, \bm y))^p\right] \\
&\lesssim \exp(N^{-\delta}) \left[\E |S_{\sfin}|^p + (N^{10\delta}\Psi(\eta_1; \bm x, \bm y))^p\right] \lesssim \left[\E |S_{\sfin}|^p + (N^{10\delta}\Psi(\eta_1; \bm x, \bm y))^p\right]
\end{split}
\end{equation}
where we used that $\lambda^2 \rho(E+ \ii \eta_0)/\eta_0 \lesssim N^{k \delta}$ by \eqref{eq:tkdef} and $\sfin \lesssim N^{-(k-1)\delta}$ by \eqref{eq:dtkdef}. 

To estimate $\E |S_{\sfin}|^p$, we recall that $G^s$ satisfies the isotropic local law in \eqref{eq:globallaw} uniformly in $\mathcal{D}_{\gamma_1}$ and $\bm x, \bm y \in \C^N$ at $s = \sfin$. Hence, because $p$ and $\delta$ were arbitrary, we deduce that 
\begin{equation} \label{eq:isolaw}
\big|\big(G^s(E + \ii \eta_1) - M(E + \ii \eta_1)\big)_{\bm x \bm y}\big| \prec \Psi(\eta_1; \bm x, \bm y)
\end{equation}
uniformly in $z = E + \ii \eta_1 \in \mathcal{D}_{\gamma_1}$, $\bm x, \bm y \in \C^N$ and times $s \in [0, \sfin]$, which concludes the proof of the isotropic part of Proposition \ref{prop:zagiterate}. 

We are hence left with proving the average part of Proposition \ref{prop:zagiterate}. 
\begin{proposition}[Average Gronwall estimate] \label{prop:zagavGron}
	Assume the conditions of Proposition \ref{prop:zagiterate}. Fix $B \in \C^{N\times N}$, $z = E + \ii \eta_1 \in \mathcal{D}_{\gamma_1}$ and $\eta_0 \ge \lambda N^{-1+\gamma_0} \vee \eta_1$ such that $\eta_0 /\eta_1 \le N^\delta$. For $s \in [0,\sfin]$, define
	\begin{equation}
		R_s := \langle (G^s(E + \ii \eta_1) - M(E + \ii \eta_1))B \rangle \,. 
	\end{equation}
	Then, for any large (even) $p\in \N$ it holds that
	\begin{equation} \label{eq:avGron}
		\left| \frac{\dd }{\dd s} \E |R_s|^p \right| \lesssim \left(1 +   N^{-2 \delta} \lambda^2 {\frac{\rho(E+ \ii \eta_0)}{\eta_0}}\right) \, \left[\E |R_s|^p + \big(N^{20\delta} \Phi(\eta_1; B)\big)^p\right]
	\end{equation}
	uniformly in $s \in [0,\sfin]$, $B \in \C^{N \times N}$, and $z \in \mathcal{D}_{\gamma_1}$. Here, for $\eta \in [\eta_1, \eta_0]$,  we denoted
	\begin{equation}
		\Phi(\eta; B) := {\frac{1}{N \eta}} \ \sqrt{\frac{\langle \Im M(E + \ii \eta) BB^* \rangle}{ \rho(E + \ii \eta)}} \,. 
	\end{equation}
\end{proposition}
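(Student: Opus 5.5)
We compute $\frac{\dd}{\dd s}\E|R_s|^p$ along the zag flow \eqref{eq:zagflow}, where the first two moments of the entries are preserved. By It\^o's formula and a cumulant expansion in the entries $h_{ab}=(H^s_\lambda)_{ab}$, the time derivative becomes
\[
\frac{\dd}{\dd s}\E|R_s|^p=\sum_{a,b}\sum_{k\ge 3}\frac{\kappa_{k}(ab)}{(k-1)!}\,\E\,\partial_{ab}^{k-1}\big(\text{derivative of }|R_s|^p\text{ hit by }\partial_{ab}\big)\,,
\]
i.e.\ a sum over terms with at least three derivatives. Indeed, the Gaussian (second-order) contribution cancels exactly against the Brownian term. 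Here $\kappa_k(ab)\lesssim \lambda^k N^{-k/2}$. A truncation at order $k\sim p/\delta$ leaves a negligible remainder thanks to the high moment assumption \eqref{cp}.

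Each derivative $\partial_{ab}$ acting on $\langle GB\rangle$ produces $-\frac1N (GBG)_{ba}$, and further derivatives produce chains of resolvent entries $G_{aa},G_{bb},G_{ab},G_{ba}$ together with one factor $(GBG)_{\cdot\cdot}$. Using Young's inequality, each term is bounded by
\[
\E|R_s|^{p-m}\times\big(\text{deterministic factor}\big)^m,\qquad m\ge1,
\]
and the deterministic factors must be controlled by $\Phi(\eta_1;B)$ up to $N^{20\delta}$, times the prefactor $1+N^{-2\delta}\lambda^2\rho(E+\ii\eta_0)/\eta_0$. The estimates use three inputs:
\begin{itemize}
\item[(i)] The isotropic law \eqref{eq:isolaw}, already established uniformly in $s$, controls the single entries $G_{ab}$ in terms of $\sqrt{(\Im M)_{aa}(\Im M)_{bb}}$ (plus $|M_{ab}|$).
\item[(ii)] For the entries $(GBG)_{ab}$ we avoid a separate isotropic two-resolvent law. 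In the spirit of the paper's remark about estimating everything by tracial quantities, we apply Cauchy--Schwarz and the Ward identity,
\[
\sum_{a,b}|(GBG)_{ab}|^2=\Tr\, G^*B^*G^*GBG\le \eta_1^{-2}\Tr\,\Im G\,B\,\Im G\,B^*\,,
\]
and then bound $\langle \Im G B \Im G B^*\rangle\le \|\Im G\|\langle \Im G BB^*\rangle$ crudely. Entrywise we use $|(GBG)_{ab}|\le \big((GBB^*G^*)_{aa}(G^*G)_{bb}\big)^{1/2}$ together with $G^*G=\Im G/\eta$. This reduces everything to $(\Im G)_{bb}$ and $\langle \Im G\,BB^*\rangle$-type quantities.
\item[(iii)] These are in turn controlled via \eqref{eq:GFTinput} and Lemma~\ref{lem:monotone} in terms of $\Im M$ at height $\eta_0$, at the price of $\eta_0/\eta_1\le N^\delta$. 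Here $\langle \Im G BB^*\rangle\prec N^{\delta}\langle \Im M(E+\ii\eta_0)BB^*\rangle$.
\end{itemize}
The sums over $a,b$ of the diagonal $\Im M$ factors give $\sum_a(\Im M)_{aa}=N\pi\rho$, which converts the $\lambda^k N^{-k/2}$ cumulant prefactors together with the $N^{-1}$'s into powers of $\lambda^2\rho/(N\eta)$.

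The third-order terms ($k=3$) are the critical ones. Naive counting gives a factor $\lambda^3N^{-3/2}\cdot N^2\cdots$ that is not small enough. For these terms we exploit the structure as in \cite{erdHos2025cusp}: in terms with an odd number of off-diagonal $G_{ab}$ entries, at least one off-diagonal entry appears, and we sum it isotropically, $\sum_b G_{ab}x_b=(G\bm x)_a$, or estimate $\sum_{a,b}|G_{ab}|^2=\Tr\Im G/\eta$ via Ward. This gains the missing $\sqrt{\rho/(N\eta)}$. In terms with only diagonal entries, the derivative falls on $(GBG)$ at least twice, and we gain from the Ward bound. The leftover growth factors, coming from $\lambda^2\rho/\eta$ after using the monotonicity with $\eta_0/\eta_1\le N^\delta$, are exactly what produce the factor $N^{-2\delta}\lambda^2\rho(E+\ii\eta_0)/\eta_0$ (times $N^{O(\delta)}$ absorbed in $N^{20\delta}\Phi$), and this factor is harmless after Gronwall because $\sfin\lambda^2\rho/\eta_0\lesssim N^{-\delta}$. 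Higher cumulants ($k\ge4$) carry an extra $\lambda N^{-1/2}$ per order and are simpler.

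The main obstacle is precisely this $k=3$ bookkeeping in the strongly inhomogeneous setting, where $\Im M_{aa}$ is far from constant. The weights $\langle \Im M BB^*\rangle/\rho$ must emerge exactly, rather than $\|B\|^2$. I would first try to write every term as products of $(\Im M)_{aa}$, $(\Im M)_{bb}$ and $(BB^*)$-weighted traces, and to check by Cauchy--Schwarz that the $a,b$ sums close with $\Gamma$-weights in the form required by $\Phi$.
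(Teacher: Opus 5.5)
\textbf{Gap: the decisive $k=3$ estimate is missing.} Your skeleton is the paper's. It uses the cumulant expansion along the zag flow with the Gaussian term cancelling, and the already established isotropic law for single entries. It uses Schwarz plus Ward for $(GBG)$-entries together with $\langle \Im G B\Im G B^*\rangle\le \eta^{-1}\langle \Im G BB^*\rangle$. It uses the entrywise bound $|(GBG)_{ab}|\le ((GBB^*G^*)_{aa}(G^*G)_{bb})^{1/2}\lesssim N^{3/2}\lambda^{-1}\Phi$, then monotonicity and Young. However, you leave open exactly the step you call the main obstacle, and the ingredient that closes it is not among your inputs.

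Most $k=3$ terms contain diagonal factors $G_{aa},G_{bb}$. Their dominant part is $M_{aa}$, not $(\Im M)_{aa}$. They must not be bounded pointwise by $\|M\|\lesssim(\eta+\lambda^2\rho)^{-1}$. In $\lambda^3N^{-5/2}\sum_{a,b}|(GBG)_{ab}G_{aa}G_{bb}|$ that costs a factor $(\lambda\rho)^{-2}$, which is of order $\lambda^{-2}$ in the bulk when $\lambda\ll 1$. The paper controls them in $\ell^2$ by extending the summation:
\[
\sum_a |G_{aa}|^2\lesssim \sum_a \big|(|M|)_{aa}\big|^2\le \sum_{a,b}|M_{ab}|^2=N\langle |M|^2\rangle\lesssim N/\lambda^2
\]
by Lemma~\ref{lem:Mbounds}. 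Equivalently, $|M_{aa}|^2\le (MM^*)_{aa}=(\Im M)_{aa}/(\eta+\lambda^2\pi\rho)$; this is what your plan to ``sum the diagonal $\Im M$ factors to $N\pi\rho$'' has to amount to.

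Combine this with $\sum_{a,b}|(GBG)_{ab}|^2\le N\eta^{-3}\langle \Im G BB^*\rangle\lesssim N^3\eta^{-1}\rho\,\Phi^2$. The term above is then $\lesssim\lambda\sqrt{\rho/\eta}\,\Phi$. Terms like $\lambda^3N^{-5/2}\sum_{a,b}|(GBG)_{aa}G_{bb}G_{ab}|$ are $\lesssim\lambda^2(\rho/\eta)\Phi$. For these you can use Ward on $\sum_{a,b}|G_{ab}|^2$ as you suggest, or split $G_{ab}=M_{ab}+(G-M)_{ab}$ as the paper does. Monotonicity in $\eta$ and Young then give the right-hand side of \eqref{eq:avGron}.

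\textbf{Two structural claims to correct.} First, at $k=3$ there is no ``only diagonal'' class. Each of $a,b$ occurs three times, so every term has an odd number of off-diagonal entries once $(GBG)$-entries are counted. In each term the gain comes from Schwarz plus Ward on whichever off-diagonal object is present, combined with the $\ell^2$ diagonal bound.

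Second, higher cumulants do not gain $\lambda N^{-1/2}$ per order when many chains are hit once. Each extra factor $N^{-1}(GBG)_{ab}\lesssim N^{1/2}\lambda^{-1}\Phi$ exactly cancels that gain. So the term with $k$ chains each differentiated once is of size $\lambda^2(\rho/\eta)\Phi^k$. This is still fine for \eqref{eq:avGron}, but only if you keep two once-differentiated chains for the Ward summation. Bounding all of them entrywise loses a factor $N^2\eta/(\lambda^2\rho)$. This is why the paper splits the $k\ge4$ terms according to how many chains are differentiated.
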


Completely analogously to \eqref{eq:integrateisoGron}, armed with \eqref{eq:avGron}, we can apply Gronwall's lemma, to find that 
\begin{equation} \label{eq:integrateavGron}
	\begin{split}
		\E |R_s|^p &\lesssim \exp\left[ \bigg(1 + N^{-2 \delta} \lambda^2 \frac{\rho(E+ \ii \eta_0)}{\eta_0}\bigg) (\sfin - s)\right] \, \left[\E |R_{\sfin}|^p + (N^{20\delta}\Phi(\eta_1; B))^p\right] \\
		&\lesssim \exp(N^{-\delta}) \left[\E |R_{\sfin}|^p + (N^{20\delta}\Psi(\eta_1; \bm x, \bm y))^p\right] \lesssim \left[\E |S_{\sfin}|^p + (N^{20\delta}\Phi(\eta_1; B))^p\right]
	\end{split}
\end{equation}
again using that $\lambda^2 \rho(E+ \ii \eta_0)/\eta_0 \lesssim N^{k \delta}$ by \eqref{eq:tkdef} and $\sfin \lesssim N^{-(k-1)\delta}$ by \eqref{eq:dtkdef}. 

To estimate $\E |R_{\sfin}|^p$, we recall that $G^s$ satisfies the average local law in \eqref{eq:globallaw} uniformly in $\mathcal{D}_{\gamma_1}$ and $B\in \C^{N\times N}$ at $s = \sfin$. Hence, because $p$ and $\delta$ were arbitrary, we deduce that 
\begin{equation} \label{eq:avlaw}
	\big|\big\langle \big(G^s(E + \ii \eta_1) - M(E + \ii \eta_1)\big)B \big\rangle\big| \prec \Phi(\eta_1; B)
\end{equation}
uniformly in $z = E + \ii \eta_1 \in \mathcal{D}_{\gamma_1}$, $B \in \C^{N \times N}$ and times $s \in [0, \sfin]$, which concludes the proof of the average part of Proposition \ref{prop:zagiterate} and thus the entire proof of Proposition \ref{prop:zagiterate}. \qed
\subsubsection{Cumulant expansion: Proof of the Gronwall estimates in Propositions \ref{prop:zagisoGron}--\ref{prop:zagavGron}}
In this section, we give the proofs of Propositions \ref{prop:zagisoGron} and \ref{prop:zagavGron}. 
\subsubsection*{Isotropic case: Proof of Proposition \ref{prop:zagisoGron}} \label{subsubsec:cumexp}
We perform a standard cumulant expansion (see, e.g., \cite[Proposition 3.2]{slow_corr}) to find that 
\begin{equation} \label{eq:cumexbasiciso}
	\left| \frac{\dd }{\dd s} \E |S_s|^p \right| \lesssim \sum_{k=3}^L \frac{\lambda^k}{N^{k/2}}\sum_{l=0}^k \left| \sum_{a,b} \E\big[ \partial_{ab}^l \partial_{ba}^{k-l} |S_s|^p\big]\right| + \mathcal{O}\left( \big(\Psi(\eta_1; \bm x, \bm y)\big)^p  \right)
\end{equation}
for some $L = \mathcal{O}(1)$ large enough. More precisely, $L$ will be chosen as follows: As every derivative acting on $S_s$ creates a new resolvent, we can estimate the resulting expression by means of Lemma \ref{lem:monotone}, and additionally using Lemma \ref{lem:Mbounds} and monotonicity of $\eta \mapsto \Im M(E + \ii \eta)/\eta $ (decreasing) and $\eta \mapsto \eta \rho(E + \ii \eta)$ (increasing) as
\begin{equation*}
	\begin{split}
&\frac{\lambda^{L+1}}{N^{(L+1)/2}} N^2 N^{(\kappa_0 + 1/\epsilon )p}N^{(p+L)\delta}\frac{ \left(\sqrt{(\Im M(E + \ii \eta_0))_{\bm x \bm x} (\Im M(E + \ii \eta_0))_{\bm y \bm y}}\right)^p}{(\eta_0 \lambda^2 \rho(E + \ii \eta_0))^{(L+1)/2}} \\
\lesssim \, &N^2 N^{(\kappa_0 + 1/\epsilon) p} N^{(2p+L) \delta} \frac{1}{(N \eta_0 \rho(E + \ii \eta_0))^{(L+1-p)/2}} \big(\Psi(\eta_1; \bm x, \bm y)\big)^p \\ \lesssim \, &N^{2 + (\kappa_0 + 1/\epsilon +\epsilon/2 +2 \delta) p} N^{-L(\epsilon/2 - \delta)} \big(\Psi(\eta_1; \bm x, \bm y)\big)^p \lesssim \big(\Psi(\eta_1; \bm x, \bm y)\big)^p
	\end{split}
\end{equation*}
where the last estimate holds when choosing 
\begin{equation} \label{eq:Ldef}
L := \left\lceil \big( 2 + (\kappa_0 + 1/\epsilon +\epsilon/2 +2 \delta) p \big)/ (\epsilon/2 - \delta) \right\rceil \,. 
\end{equation}
We start by discussion terms of order $k=3$ in \eqref{eq:cumexbasiciso} and henceforth drop the sub-/super-script $s$ for ease of notation. We also ignore taking adjoints and complex conjugations, since they are irrelevant for our estimates. To further simplify the notation, we follow the convention that every resolvent is understood having $E+ \ii \eta_1$ as an argument, i.e.~$G \equiv G(E+ \ii \eta_1)$, while the deterministic approximation and the density is understood with argument $E + \ii \eta_0$, i.e.~$\rho \equiv \rho(E+ \ii \eta_0)$ and $M \equiv M(E + \ii \eta_0)$. The three derivatives can have hit up to three factors of $S_s$, producing terms of the forms (i) $(\partial^3 S) |S|^{p-1}$, (ii) $(\partial S) (\partial^2 S)|S|^{p-2}$, and (iii) $(\partial S)^3 |S|^{p-3}$. 

As the first exemplary term in case (i), we estimate (neglecting a $|S|^{p-1}$ factor)
\begin{equation} \label{eq:case1ex1}
	\begin{split}
		&\lambda^3 N^{-3/2} \left|  \sum_{a,b}  G_{\bm x a} G_{bb} G_{aa} G_{b \bm y}  \right|  \\
		\lesssim &\lambda^3 N^{-3/2} N^{2 \delta} \sum_{a,b} \left| G_{\bm x a} (|M|)_{aa} (|M|)_{bb} G_{b \bm y} \right| \\
		\lesssim  &\lambda^3 N^{-3/2} N^{2 \delta}  \sqrt{\sum_a |G_{\bm x a}|^2} \sum_{a,b} \big|M_{ab}\big|^2 \, \sqrt{\sum_b |G_{b \bm y}|^2} \\
		\lesssim &  \lambda^3 N^{-1/2} N^{2 \delta}\langle |M|^2 \rangle \eta_1^{-1} \sqrt{(\Im G)_{\bm x \bm x} (\Im G)_{\bm y \bm y}} \\
		\lesssim & \lambda N^{-1/2} N^{4 \delta} \eta_1^{-1}\sqrt{(\Im M)_{\bm x \bm x} (\Im M(E + \ii \eta_0))_{\bm y \bm y}} 
		\lesssim  \lambda N^{8 \delta}\sqrt{\frac{\rho}{\eta_0}} \Psi(\eta_1; \bm x, \bm y) \,. 
	\end{split}
\end{equation}
In the first step, we used \eqref{eq:GFTinput} together with $(\Im M)_{\bm u \bm u} + |M_{\bm u \bm u}| \lesssim (|M|)_{\bm u \bm u}$ and Lemma \ref{lem:monotone} to bound $G_{bb}$ and $G_{aa}$. To go to the third line, we employed a Schwarz inequality and estimated 
\begin{equation} \label{eq:extendsum}
	\sum_a |(|M|)_{aa}|^2 \le \sum_{ab} |(|M|)_{ab}|^2 = \sum_{ab} |M_{ab}|^2 = N \langle |M|^2 \rangle
\end{equation}
by \emph{extending the summation}. 
In the following step, we used a Ward identity $\sum |G_{\bm xa}|^2 = (\Im G)_{\bm x \bm x}/\eta_1$. To go to the ultimate line, we employed \eqref{eq:GFTinput} together with Lemma \ref{lem:monotone} to bound
\begin{equation*}
	(\Im G)_{\bm x \bm x} \lesssim N^\delta (\Im M)_{\bm x \bm x}
\end{equation*}
uniformly in $\bm x \in \C^N$, and Lemma \ref{lem:Mbounds}. In the last step, we used that $\eta \mapsto \rho(E+ \ii \eta)/\eta$ and $\eta \mapsto (\Im M(E + \ii \eta))_{\bm u \bm u}/\eta$ (for any $\bm u \in \C^N$) are monotonically decreasing, and $\eta \mapsto \eta \rho(E + \ii \eta)$ is monotonically increasing. 

Next, also in case (i), we estimate (again neglecting a $|S|^{p-1}$ factor inside the expectation)
\begin{equation} \label{eq:case1ex2}
	\begin{split}
		&\lambda^3 N^{-3/2} \left|  \sum_{a,b}  G_{\bm x a} G_{ba} G_{bb} G_{a \bm y}  \right|  \\
		\lesssim &\lambda^2 N^{-3/2} N^{2 \delta} \sum_{a,b} \left| G_{\bm x a} \sqrt{(\Im M)_{bb}/\rho} \,  (|M|)_{bb} G_{a \bm y} \right| \\
		\lesssim  &\lambda^2 N^{-3/2} N^{2 \delta}  \lambda^{-1}\sqrt{\sum_a |G_{\bm x a}|^2} \,   \sqrt{\sum_a |G_{a \bm y}|^2} \left(\sum_{b} \big|(|M|)_{bb}\big|^2\right)^{1/2} \, \left(\sum_b \frac{(\Im M)_{bb}}{\rho}\right)^{1/2} \\
		\lesssim &  \lambda^2 N^{-1/2} N^{2 \delta}\langle |M|^2 \rangle^{1/2} \eta_1^{-1} \sqrt{(\Im G)_{\bm x \bm x} (\Im G)_{\bm y \bm y}} \\
		\lesssim & \lambda N^{-1/2} N^{4 \delta} \eta_1^{-1}\sqrt{(\Im M)_{\bm x \bm x} (\Im M)_{\bm y \bm y}} 
		\lesssim \lambda N^{8 \delta}\sqrt{\frac{\rho}{\eta_0}} \Psi(\eta_1; \bm x, \bm y) \,. 
	\end{split}
\end{equation}
similarly to \eqref{eq:case1ex1}. More precisely, in the first step, we used \eqref{eq:GFTinput} together with Lemma \ref{lem:Mbounds} and Lemma~\ref{lem:Mmax1} below, to estimate 
\begin{equation*}
	\max_a |G_{\bm x a}| \lesssim N^\delta \lambda^{-1} \sqrt{(\Im M)_{\bm x \bm x}/\rho} 
\end{equation*} 
uniformly in $\bm x \in \C^N$. 
To go to the third line, we then used a Schwarz inequality. In the following step, we used a Ward identity, \eqref{eq:extendsum}, and the fact that $\langle \Im M \rangle = \pi \rho$. To go to the ultimate line, we then used Lemma \ref{lem:Mbounds}, and \eqref{eq:GFTinput} together with Lemma \ref{lem:monotone}. 

\begin{lemma} \label{lem:Mmax1}
	Uniformly in $\bm x \in \C^N$ and $z \in \mathbb{H}$, it holds that 
	\begin{equation} \label{eq:Mmax1}
		\sup_{\Vert \bm y \Vert = 1} |M_{\bm x \bm y}(z)|  + \sup_{\Vert \bm y \Vert = 1} |M_{\bm y\bm x}(z)| \lesssim \lambda^{-1} \sqrt{(\Im M(z))_{\bm x \bm x}/\rho(z)}
	\end{equation}
\end{lemma}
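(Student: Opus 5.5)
A Schwarz inequality on the imaginary part of $M$, combined with the MDE, should give this. Write $z=E+\ii\eta$ with $\eta>0$, set $m:=\langle M\rangle$, and note that $M=(H_0-z-\lambda^2 m)^{-1}$ is the resolvent of the self-adjoint $H_0$ at the spectral parameter $w:=z+\lambda^2 m$. Its imaginary part is
\[
\Im w=\eta+\lambda^2\Im m=\eta+\pi\lambda^2\rho \ge \pi \lambda^2\rho .
\]

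The key step is the resolvent (Ward-type) identity for a normal matrix with this structure:
\[
\Im M=\frac{1}{2\ii}(M-M^*)=\Im w\; M M^* = \Im w\; M^*M .
\]
This holds because $M$ and $M^*$ commute, being functions of $H_0$.

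Hence for any $\bm y$ with $\Vert \bm y\Vert=1$ we can apply Cauchy--Schwarz:
\[
|M_{\bm x\bm y}|=|\langle M^*\bm x,\bm y\rangle|\le \Vert M^*\bm x\Vert=\sqrt{(MM^*)_{\bm x\bm x}}=\sqrt{\frac{(\Im M)_{\bm x\bm x}}{\Im w}}\le \sqrt{\frac{(\Im M)_{\bm x\bm x}}{\pi\lambda^2\rho}} ,
\]
which is the claimed bound $\lambda^{-1}\sqrt{(\Im M)_{\bm x\bm x}/\rho}$ up to the constant $\pi^{-1/2}$. The term $M_{\bm y\bm x}=\langle \bm y,M\bm x\rangle$ is bounded in the same way by $\Vert M\bm x\Vert=\sqrt{(M^*M)_{\bm x\bm x}}$, which equals the same quantity.

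For $z$ in the lower half plane the argument is identical after replacing $\Im M$ by $|\Im M|$, or one can simply use $M(\bar z)=M(z)^*$; the statement is in any case for $z\in\HH$. There is no real obstacle here. The only point to check is the identity $\Im M=\Im w\,MM^*$, and the fact that the definition of $\rho$ makes $\Im w\ge\lambda^2\pi\rho$ exactly; the term $\eta$ is simply discarded. The bound is uniform in $H_0$, $\lambda$ and $z$, as required.
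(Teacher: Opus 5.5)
Your proof is correct and is essentially the paper's argument written basis-free. The paper expands $M$ in the eigenbasis of $H_0$, applies Cauchy--Schwarz over the spectral sum, and inserts the factor $(\eta+\lambda^2\rho)/(\lambda^2\rho)\ge 1$ to recognize $(\Im M)_{\bm x\bm x}$; this is exactly your identity $\Im M=\Im w\, MM^*$ combined with $\Im w\ge \pi\lambda^2\rho$.
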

\begin{proof}
	We only prove the bound for the first term in \eqref{eq:Mmax1}; the other one is analogous and hence omitted. We also omit the argument $z$ of $M$ and $\rho$. Then, since $\bm y \mapsto |M_{\bm x \bm y}|$ is continuous, it holds that hat $\sup_{\Vert \bm y \Vert = 1} |M_{\bm x \bm y}| = |M_{\bm x\bm y_*}|$ for some $\bm y_* \in \C^N$ of norm one. Hence, denoting the spectral decomposition of $H_0$ by $H_0 = \sum_j \mu_j \ket{\bm u_j} \bra{\bm u_j}$, we have
	\begin{equation*}
		\begin{split}
			|M_{\bm x\bm y_*}| &\le \sum_{j}\frac{|\langle \bm x, \bm u_j \rangle \langle \bm u_j, \bm y_{*}\rangle|}{|\mu_{j} - z - \lambda^2 \langle M_\lambda\rangle|} 
			\lesssim \sqrt{\sum_j\frac{|\langle \bm x, \bm u_{j}\rangle|^2}{|\mu_{j} - z - \lambda^2 \langle M\rangle|^2}} \sqrt{\sum_j |\langle \bm u_j, \bm y_{*}\rangle|^2} \\
			&\lesssim \lambda^{-1}\sqrt{\sum_a\frac{|\langle \bm x, \bm u_{j}\rangle|^2 (\eta + \lambda^2 \rho)}{|\mu_{j} - z - \lambda^2 \langle M\rangle|^2 \rho}} = \lambda^{-1} \sqrt{(\Im M)_{\bm x \bm x}/\rho} \,. \qedhere
		\end{split}
	\end{equation*}
\end{proof}

We now turn to case (ii) and continue neglecting the factors of $|S|$, that are not differentiated. Again, we consider two exemplary terms. For the first one, we estimate
\begin{equation} \label{eq:case2ex1}
	\begin{split}
		&\lambda^3 N^{-3/2} \left|  \sum_{a,b}  G_{\bm x a} G_{b\bm y} G_{\bm x a} G_{bb} G_{a \bm y}  \right|  \\
		\lesssim &\lambda^2 N^{-3/2} N^{2 \delta} \sum_{a,b} \left| G_{\bm x a} G_{b \bm y} \sqrt{(\Im M)_{\bm x \bm x}/\rho} \, (|M|)_{bb} G_{a \bm y} \right| \\
		\lesssim  &\lambda^2 N^{-3/2} N^{2 \delta} \sqrt{(\Im M)_{\bm x \bm x}/\rho}  \sqrt{\sum_a |G_{\bm x a}|^2} \,  \left(\sum_{a,b} \big|M_{ab}\big|^2\right)^{1/2} \, \sum_a |G_{a \bm y}|^2 \\
		\lesssim &  \lambda^2 N^{-1} N^{2 \delta}\langle |M|^2 \rangle^{1/2} \sqrt{(\Im M)_{\bm x \bm x}/\rho} \,  \eta_1^{-3/2}\sqrt{(\Im G)_{\bm x \bm x} } (\Im G)_{\bm y \bm y} \\
		\lesssim & \lambda N^{-1} N^{4 \delta} \sqrt{(\Im M)_{\bm x \bm x}/\rho} \,  \eta_1^{-3/2}\sqrt{(\Im M)_{\bm x \bm x} } (\Im M(E + \ii \eta_0))_{\bm y \bm y} 
		\lesssim   N^{8 \delta}  \, \lambda\sqrt{\frac{\rho}{\eta_0}} \big(\Psi(\eta_1; \bm x, \bm y)\big)^2 \,. 
	\end{split}
\end{equation}

For the second exemplary term, we estimate 
\begin{equation} \label{eq:case2ex2}
	\begin{split}
		&\lambda^3 N^{-3/2} \left|  \sum_{a,b}  \big[ G_{\bm x a} G_{b\bm y} G_{\bm x a} G_{ba} G_{b \bm y} \big] \right|  \\
		\lesssim &\lambda^2 N^{-3/2} N^{2 \delta} \sqrt{(\Im M)_{\bm y \bm y}/\rho} \, \sum_{a,b} \left| G_{\bm x a} G_{b \bm y} G_{\bm x a} \big(\max_a |G_{ba}|\big)  \right| \\
		\lesssim  &\lambda N^{-3/2} N^{2 \delta} \sqrt{(\Im M)_{\bm y \bm y}/\rho} \, {\sum_a |G_{\bm x a}|^2} \,  \sqrt{ \sum_b |G_{b \bm y}|^2} \sqrt{\sum_b \frac{(\Im M)_{bb}}{\rho}} \\
		\lesssim &  \lambda N^{-1} N^{2 \delta} \sqrt{(\Im M)_{\bm y \bm y}/\rho} \, \eta_1^{-3/2} {(\Im G)_{\bm x \bm x} } \sqrt{(\Im G)_{\bm y \bm y}} \\
		\lesssim & \lambda N^{-1} N^{4 \delta} \eta_1^{-3/2} (\rho(E + \ii \eta_0))^{-1/2}{(\Im M)_{\bm x \bm x} } (\Im M(E + \ii \eta_0))_{\bm y \bm y} 
		\lesssim   N^{8 \delta} \, \lambda\sqrt{\frac{\rho}{\eta_0}} \big(\Psi(\eta_1; \bm x, \bm y)\big)^2  \,. 
	\end{split}
\end{equation}	
Finally, we turn to case (iii), in which case we estimate 
\begin{equation} \label{eq:case3}
	\begin{split}
		&\lambda^3 N^{-3/2} \left|  \sum_{a,b}  G_{\bm x a} G_{b\bm y} G_{\bm x a} G_{b \bm y} G_{\bm x a} G_{b \bm y} \right|  \\
		\lesssim &\lambda N^{-3/2} N^{2 \delta} \sum_{a,b} \left| G_{\bm x a} G_{b \bm y} G_{\bm x a} G_{b \bm y} \sqrt{(\Im M)_{\bm x \bm x}(\Im M)_{\bm y \bm y}/ \rho^2}  \right| \\
		\lesssim  &\lambda N^{-3/2} N^{2 \delta}   \sqrt{(\Im M)_{\bm x \bm x} (\Im M)_{\bm y \bm y} /\rho^2} \,  {\sum_a |G_{\bm x a}|^2} \,   \sum_b |G_{b \bm y}|^2 \\
		\lesssim &  \lambda N^{-3/2} N^{2 \delta} \eta_1^{-2} \sqrt{(\Im M)_{\bm x \bm x} (\Im M)_{\bm y \bm y}/\rho^2} \,  {(\Im G)_{\bm x \bm x} } (\Im G)_{\bm y \bm y} \\
		\lesssim & \lambda N^{-3/2} N^{4 \delta} \eta_1^{-2} \sqrt{(\Im M)_{\bm x \bm x} (\Im M)_{\bm y \bm y}/\rho^2} \, {(\Im M)_{\bm x \bm x} } (\Im M)_{\bm y \bm y} 
		\lesssim   N^{8 \delta} \lambda   \, \sqrt{\frac{\rho}{\eta_0}} \big(\Psi(\eta_1; \bm x, \bm y)\big)^3  \,. 
	\end{split}
\end{equation}

We now turn to terms of order $k \ge 4$, where we aim to control
\begin{equation} \label{eq:higherorderISO}
	\lambda^k N^{-k/2} \sum_{l=0}^k\left| \sum_{a,b} \E  \left[  \partial_{ab}^l \partial_{ba}^{k-l} |S_s|^p \right] \right| \,. 
\end{equation}
In the following, we focus on $l=0$; the other summands in \eqref{eq:higherorderISO} can be treated analogously and are hence omitted. Now, distributing the $k$ derivatives on $n \in [k]$ factors of $S$, we find that for $k_i \in \N$ such that $\sum_{i=1}^n k_i = k$  and (w.l.o.g.) $k_1 \le k_2 \le ... \le k_n$, the $l=0$ term of \eqref{eq:higherorderISO} can be rewritten as (ignoring the "untouched" factor $|S|^{p-n}$ and the expectation)
\begin{equation} \label{eq:higherorderISO2}
	\lambda^k N^{-k/2} \sum_{a,b} \prod_{i=1}^{n} \left|G_{\bm x a_{i_1}} G_{b_{i_1}a_{i_2}} ... G_{b_{i_{k_i-1}}a_{i_{k_i}}} G_{b_{i_{k_i}} \bm y}\right|
\end{equation}
where $a_{i_j} = a$ and $b_{i_j} = b$. First, if $k_2 = 1$, since there are at least two factors of $S$ which are hit by a single derivative, we find that 
\begin{equation} \label{eq:higherorderISOcase1}
	\begin{split}
		\eqref{eq:higherorderISO2} \lesssim& N^{-k/2} \lambda^{k} \left(\frac{1}{\lambda^2 \rho}\right)^{k-n} \lambda^{-2(n-2)} N^{(k+n-4)\delta}\left(\frac{(\Im M)_{\bm x \bm x} (\Im M)_{\bm y \bm y}}{\rho^2}\right)^{\frac{n-2}{2}} \sum_{a,b} |G_{\bm x a} G_{b \bm y}|^2 \\
		\lesssim& N^{(k+n-2)\delta} \lambda^4 \frac{\rho^2}{\eta_1^2} \left(\frac{1}{\sqrt{N} \lambda \rho}\right)^k \rho^n \left(\frac{(\Im M)_{\bm x \bm x} (\Im M)_{\bm y \bm y}}{\rho^2}\right)^{\frac{n}{2}}  \\
		\lesssim & N^{(k+3n) \delta}  \lambda \sqrt{\frac{\rho}{\eta_0}}  \left(\frac{1}{\sqrt{N} \lambda \rho}\right)^{k-n} \left(\frac{\eta_0}{\rho \lambda^2}\right)^{\frac{n-3}{2}}\big(\Psi(\eta_1; \bm x, \bm y) \big)^n  
		\lesssim  N^{8 \delta} \lambda \sqrt{\frac{\rho}{\eta_0}} \big(\Psi(\eta_1; \bm x, \bm y) \big)^n
	\end{split}
\end{equation}
where we used Lemma \ref{lem:Mmax1} and Lemma \ref{lem:Mbounds} together with Lemma \ref{lem:monotone} and \eqref{eq:GFTinput} in the first step. Additionally, we employed a Ward identity. In the penultimate step, we again used monotonicity of $\eta \mapsto \eta \rho(E + \ii \eta)$. Finally, in the last step, we used $N \lambda^2 \rho^2 \ge N^{10\delta}$ and that $\lambda^2\rho/\eta_0 \ge N^{10\delta}$ together with $n \ge 3$ (as a consequence of $n_2 = 1$).

Second, if $n_2 > 1$, we only use the first factor to bound
\begin{equation}
	\begin{split}
		\eqref{eq:higherorderISO2} \lesssim &N^{-k/2} \lambda^{k}  \left(\frac{1}{\lambda^2 \rho}\right)^{k-n-1} \lambda^{-2(n-1)} N^{(n+k-2)\delta} \left(\frac{(\Im M)_{\bm x \bm x} (\Im M)_{\bm y \bm y}}{\rho^2}\right)^{\frac{n-1}{2}} \sum_{a,b} |G_{\bm x a} G_{b \bm y} G_{ab}| \\
		\lesssim & N^{(n+k)\delta}\left(\frac{1}{\sqrt{N} \lambda \rho}\right)^{k-1} \rho^n \left(\lambda^2 \frac{\rho}{\eta_0}\right)^{3/2}\left(\frac{(\Im M)_{\bm x \bm x} (\Im M)_{\bm y \bm y}}{\rho^2}\right)^{\frac{n}{2}}   \\
		\lesssim & N^{(k+3n)\delta} \lambda \sqrt{\frac{\rho}{\eta_0}} \left(\frac{1}{\sqrt{N} \lambda \rho}\right)^{k-n-2} \left(\frac{\eta_0}{\rho \lambda^2}\right)^{\frac{n-1}{2}} 
		\big(\Psi(\eta_1; \bm x, \bm y) \big)^n 
		\lesssim N^{8 \delta}\lambda \sqrt{\frac{\rho}{\eta_0}}\big(\Psi(\eta_1; \bm x, \bm y) \big)^n
	\end{split}
\end{equation}
similarly to \eqref{eq:higherorderISOcase1}. In the last step, we additionally used that, since $n_2 > 1$, it necessarily holds that $k \ge n+2$. 

By collecting all the above estimates and application of Young inequalities, this concludes the proof of Proposition \ref{prop:zagisoGron}. \qed

\subsubsection*{Average case: Proof of Proposition \ref{prop:zagavGron}}
Analogously to the proof of Proposition \ref{prop:zagisoGron}, we perform a standard cumulant expansion to find that 
\begin{equation} \label{eq:cumexbasicav}
	\left| \frac{\dd }{\dd s} \E |R_s|^p \right| \lesssim \sum_{k=3}^L \frac{\lambda^k}{N^{k/2}}\sum_{l=0}^k \left| \sum_{a,b} \E\big[ \partial_{ab}^l \partial_{ba}^{k-l} |R_s|^p\big]\right| + \mathcal{O}\left( \big(\Phi(\eta_1; B)\big)^p  \right)
\end{equation}
for some $L = \mathcal{O}(1)$ large enough, chosen similarly to \eqref{eq:Ldef}. 

Just as in the proof of Proposition \ref{prop:zagisoGron}, we start by discussion terms of order $k=3$ in \eqref{eq:cumexbasicav} and henceforth drop the sub-/super-script $s$ for ease of notation. We also ignore taking adjoints and complex conjugations, since they are irrelevant for our estimates. However, now, since the isotropic law is already established, all resolvents, $M$, and $\rho$'s are understood to have the argument $E+ \ii \eta_1$, unless differently stated explicitly. The three derivatives can have hit up to three factors of $R_s$, producing terms of the forms (i) $(\partial^3 R) |R|^{p-1}$, (ii) $(\partial R) (\partial^2 R)|R|^{p-2}$, and (iii) $(\partial R)^3 |R|^{p-3}$.

As the first exemplary term in case (i), we estimate (note that the additional $N^{-1}$ arises from the averaged trace), neglecting a $|R|^{p-1}$ factor, similarly to \eqref{eq:case1ex1}, 
\begin{equation} \label{eq:case1ex1av}
	\begin{split}
		&\lambda^3 N^{-5/2} \left| \sum_{a,b} (GBG)_{ab} G_{aa} G_{bb}\right|  \\
		\lesssim &\lambda^3 N^{-5/2} \sqrt{\sum_{a,b} |(GBG)_{ab}|^2} \sum_a |(|M|)_{aa}|^2 \\
		\lesssim & \lambda N^{-1} \langle GBG G^* B^* G^* \rangle^{1/2} \lesssim N^{-2\delta} \lambda \sqrt{\frac{\rho(E+ \ii \eta_0)}{\eta_0}} \big(N^{20\delta} \Phi(\eta_1; B)\big) \,. 
	\end{split}
\end{equation}
In the first step, we employed the already established isotropic law to bound $|G_{aa}| \lesssim (|M|)_{aa}$ (same for $b$) and used a Schwarz inequality. To go to the last line, we used \eqref{eq:extendsum} together with Lemma \ref{lem:Mbounds}, and additionally a Ward identity together with the bound $\langle \Im G B \Im G B^* \rangle \le \eta^{-1} \langle \Im G BB^*\rangle$ following from a simple norm bound on $\Im G$ and positivity of the remaining matrix in the trace. In the ultimate step, we employed a spectral decomposition of $BB^* = \sum_j |\sigma_j|^2 \ket{\bm v_j} \bra{\bm v_j}$ to bound
\begin{equation} \label{eq:ImGImM}
	\langle \Im G BB^* \rangle = \frac{1}{N} \sum_j |\sigma_j|^2 (\Im G)_{\bm v_j \bm v_j} \lesssim \frac{1}{N} \sum_j |\sigma_j|^2 (\Im M)_{\bm v_j \bm v_j} = \langle \Im M BB^* \rangle 
\end{equation}
where the estimate follows from the isotropic law \eqref{eq:isoLL}. Moreover, we used the fact that  $\eta \mapsto \eta \rho(E + \ii \eta)$ is monotonically increasing. 

As another term in case (i) we consider
\begin{equation} \label{eq:case1ex2avstart}
	\lambda^3 N^{-5/2} \left| \sum_{a,b} (GBG)_{aa} G_{bb} G_{ab}\right|
\end{equation}
for which we decompose $G_{ab} = M_{ab} + (G_{ab} - M_{ab})$ and apply the isotropic law \eqref{eq:isoLL} for the $(G-M)$-term. Indeed, for this part of \eqref{eq:case1ex2avstart}, bounding $|(G-M)_{ab}| \lesssim N^\xi \Psi(\eta_1; \bm e_a, \bm e_b)$ we find
\begin{equation} \label{eq:case1ex2av}
	\begin{split}
		&\lambda^3 N^{-5/2} \left| \sum_{a,b} (GBG)_{aa} G_{bb} (G-M)_{ab}\right|  \\
		\lesssim &\lambda^3 N^{-5/2} \sqrt{\frac{\rho}{N \eta_1}} \sqrt{\sum_{a} |(GBG)_{aa}|^2} \sqrt{\sum_b |(|M|)_{bb}|^2 } \sum_a \frac{(\Im M)_{aa}}{\rho}\\
		\lesssim & N^{-2\delta} \lambda^2 {\frac{\rho(E+ \ii \eta_0)}{\eta_0}} \big(N^{20\delta} \Phi(\eta_1; B)\big) \,. 
	\end{split}
\end{equation}
In the last step we extended the summations over $a,b$ which are reduced by a square root, used that $\langle \Im M \rangle \sim \rho$, and argued similarly to \eqref{eq:case1ex1av}. The term with $M_{ab}$ instead of $(G-M)_{ab}$ can be handled similarly. 

Next, we study one exemplary terms in case (ii) and estimate
\begin{equation} \label{eq:case2av}
	\begin{split}
		&\lambda^3 N^{-7/2} \left| \sum_{a,b} (GBG)_{ab} (GBG)_{aa} G_{bb}\right| \\
		\lesssim & \lambda^3 N^{-7/2} \sqrt{\sum_{a,b} |(GBG)_{ab}|^2} \sqrt{\sum_a |(GBG)_{aa}|^2} \sqrt{\sum_b |(|M|)_{bb}|^2} \\
		\lesssim & N^{-2 \delta} \lambda^2 \frac{\rho(E+ \ii \eta_0)}{\eta_0} \big(N^{20 \delta} \Phi(\eta_1; B)\big)^2
	\end{split}
\end{equation}
by a Schwarz inequality, extending the single $a$ and $b$ summations, using \eqref{eq:extendsum} together with Lemma \ref{lem:Mbounds}, and arguing similarly to \eqref{eq:case1ex1av}. 
All other possible constellations of indices $a,b$ can be handled similarly and are hence left to the reader.

For $k=3$, we finally turn to case (iii), where we estimate
\begin{equation} \label{eq:case3av}
	\begin{split}
		\lambda^3 N^{-9/2} \left| \sum_{a,b} \big((GBG)_{ab}\big)^3 \right| 
		\lesssim  \lambda^2 N^{-3} \eta_1^{-1} \sum_{a,b} \left|(GBG)_{ab} \right|^2 \,  \Phi(\eta_1; B) 
		\lesssim  N^{-2 \delta} \lambda^2 \frac{\rho}{\eta_0} \big(N^{20 \delta} \Phi(\eta_1; B)\big)^3
	\end{split}
\end{equation} 
where in the first step we employed that 
\begin{equation} \label{eq:GBGtriv}
	|(GBG)_{ab}| \le \sqrt{(GBB^* G^*)_{aa}} \sqrt{(GG^*)_{bb}} \lesssim \sqrt{N} \frac{\sqrt{\rho}}{\eta} \sqrt{\frac{\langle \Im M BB^* \rangle}{\rho}} \frac{1}{\lambda \sqrt{\rho}} = N^{3/2} \lambda^{-1} \Phi(\eta;B)
\end{equation}
as follows by using Ward identities, \eqref{eq:ImGImM}, and Lemma \ref{lem:Mbounds}. In the last step in \eqref{eq:case3av} we argue similarly to \eqref{eq:case1ex1av}, \eqref{eq:case1ex2av}, and \eqref{eq:case2av} above.

For the higher order terms with $k \ge 4$, we assume that $n  \in [k]$ factors of $R$ were differentiated and distinguish three cases, (i) $2 \le n \le k-2$, (ii) $n=1$, and (iii) $n \in \{k-1, k\}$. In case (i), we assume, for concreteness,  that the product of resolvent entries contains the term $\big((GBG)_{aa} G_{bb}\big)^2$; other index constellations can be handled similarly. Now, keeping the $\big((GBG)_{aa} G_{bb}\big)^2$-term, we find
\begin{equation} \label{eq:higherorderav}
	\begin{split}
		&\lambda^k N^{-k/2 - n} \lambda^{-n+2} N^{\frac{3(n-2)}{2}} \left(\frac{1}{\lambda^2 \rho}\right)^{k-n-2} \sum_{a,b} \big|(GBG)_{aa} G_{bb}\big|^2 \, \big(\Phi(\eta_1; B)\big)^{n-2} \\
		\lesssim & N^{-2\delta}\lambda^4 N^{-4} \left(\frac{1}{\sqrt{N} \lambda \rho}\right)^{k-n-2} N^4 \frac{\rho (E+ \ii \eta_0)}{\eta_0} \lambda^{-2} \, \big(N^{3 \xi}\Phi(\eta_1; B)\big)^{n} \\
		\lesssim &N^{-2 \delta} \lambda^2  \frac{\rho(E+ \ii \eta_0)}{\eta_0} \, \big(N^{20 \delta}\Phi(\eta_1; B)\big)^{n}
	\end{split}
\end{equation}
where in the beginning we estimated $n-2$ factors of $(GBG)$ trivially as in \eqref{eq:GBGtriv}, and $k-n-2$ factors of $G$ by $1/(\lambda^2 \rho)$, as a consequence of the isotropic law and Lemma \ref{lem:Mbounds}. To go to the second line, we then argued similarly as in \eqref{eq:case2av}. In the ultimate step we simply used that $N \lambda^2 \rho^2 \ge N^\delta$ and $n \le k-2$ by assumption. 

Case (ii) with $n=2$ can be handled analogously with the sole difference that there is only a single $(GBG)$ factor. If this term turns out to be off-diagonal, $(GBG)_{ab}$, then there necessarily also exists an off-diagonal $G$ term. If, in contrast, $(GBG)_{aa}$ is diagonal, then there exists a complementing diagonal $G_{bb}$. Keeping these constellations makes the summation over $a,b$ effective. 

Finally, in case (iii), we can argue similarly to above. However, now, by simple combinatorics, there exist at least two factors of $R$ which are differentiated exactly once and thus producing an off-diagonal $(GBG)_{ab}$. Estimating every other factor trivially (via \eqref{eq:GBGtriv} and as described below \eqref{eq:higherorderav}), the analog of \eqref{eq:higherorderav} becomes 
\begin{equation} \label{eq:higherorderav2}
	\begin{split}
		&\lambda^k N^{-k/2 - n} \lambda^{-n+2} N^{\frac{3(n-2)}{2}} \left(\frac{1}{\lambda^2 \rho}\right)^{k-n} \sum_{a,b} \big|(GBG)_{ab} \big|^2 \, \big(\Phi(\eta_1; B)\big)^{n-2} \\
		\lesssim & N^{-2\delta}\lambda^2 N^{-3} \left(\frac{1}{\sqrt{N} \lambda \rho}\right)^{k-n} N^3 \frac{\rho(E+ \ii \eta_0)}{\eta_0}  \, \big(N^{20 \delta}\Phi(\eta_1; B)\big)^{n} 
		\lesssim N^{-2 \delta} \lambda^2  \frac{\rho(E+ \ii \eta_0)}{\eta_0} \, \big(N^{20 \delta}\Phi(\eta_1; B)\big)^{n}
	\end{split}
\end{equation} 
Similarly to the proof of Proposition \ref{prop:zagisoGron}, by collecting the above bounds and application of various Young inequalities, this concludes the proof of Proposition \ref{prop:zagavGron}. 
\qed

\subsection{Global law: Proof of Proposition \ref{prop:global}} \label{subsec:global} 
In this section, we prove the global law from Proposition~\ref{prop:global}. Similarly to the \emph{zag} step, the proof is conducted iteratively on truncated global domains, 
\begin{equation}
\mathcal{D}_\gamma^{\rm glob} := \left\{ z = E + \ii \eta : z \in \mathcal{D}^{\rm glob}\,, \ \eta \ge N^{-1 + \gamma}\right\} \quad \text{for} \quad \gamma > 0 \,. 
\end{equation}
Once the global law is established in a domain $\mathcal{D}^{\rm glob}_{\gamma_0}$ for some $\gamma_0 > 0$, a simple monotonicity argument similarly to Lemma \ref{lem:monotone} together with the elementary bound from Lemma \ref{lem:Mmax1} shows that 
\begin{equation} \label{eq:globalinput}
\max_{\Vert \bm v \Vert = 1} |G_{\bm u \bm v}(z)| + \max_{\Vert \bm v \Vert = 1} |G_{\bm v \bm u}(z)| \lesssim N^\delta \lambda^{-1} \sqrt{\frac{(\Im M(z))_{\bm u \bm u}}{\rho(z)}} \quad \text{and} \quad (\Im G(z))_{\bm u \bm u} \lesssim N^{2\delta} (\Im M(z))_{\bm u \bm u}
\end{equation}
hold uniformly in $z \in \mathcal{D}_{\gamma_1}^{\rm glob}$ for any $\gamma_1 \ge \gamma_0 - \delta$ uniformly in deterministic vectors $\bm u \in \C^N$. Here, just as in Section \ref{subsec:zagstep}, $\delta > 0$ is an arbitrarily small step size exponent. 

The key step in the iteration is thus going from the resolvent bounds \eqref{eq:globalinput} to a bound on $G-M$. This is formalized in the following bootstrapping lemma.
\begin{lemma}[Bootstrapping: Gap in the values of $G-M$] \label{lem:bootstrap}
Fix a spectral parameter $z \in \mathcal{D}^{\rm glob}_{\gamma_1}$ with some $\gamma_1>0$ such that \eqref{eq:globalinput} holds on $\mathcal{D}_{\gamma_1}^{\rm glob}$. Then we have that 
\begin{equation} \label{eq:bootstrap}
|\langle G(z) - M(z) \rangle| \prec  N^{-10 \delta } \rho(z)  \quad \implies \quad \eqref{eq:globallaw} \quad \text{holds}
\end{equation}
\end{lemma}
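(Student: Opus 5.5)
We plan a self-consistent (static) argument around the resolvent identity, using the stability operator $\mathcal{B}$ from \eqref{eq:stabop}, whose inverse is bounded by $1/c$ on $\mathcal{D}^{\rm glob}$. Let $\underline{WG}:=WG+\langle G\rangle G$ be the renormalised product; $\lambda\,\underline{WG}$ is the fluctuating term. From $(H_0+\lambda W-z)G=1$ and the MDE \eqref{eq:MDE} we get
\begin{equation*}
G-M=-\lambda M\underline{WG}+\lambda^2 M\langle G-M\rangle G .
\end{equation*}
Write the last term as $\lambda^2 M\langle G-M\rangle M+\lambda^2M\langle G-M\rangle (G-M)$. Hence $\mathcal{B}[G-M]=-\lambda M\underline{WG}+\lambda^2 M\langle G-M\rangle(G-M)$.

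\textbf{Step 1 (high-moment bounds on the fluctuation).} Using a cumulant expansion as in Section~\ref{subsubsec:cumexp}, with the a priori input \eqref{eq:globalinput}, we bound $\langle \lambda M\underline{WG}B\rangle\prec \Phi(\eta;B)$ and $\lambda (M\underline{WG})_{\bm x\bm y}\prec\Psi(\eta;\bm x,\bm y)$. Here $\Phi,\Psi$ are the control parameters of Propositions~\ref{prop:zagisoGron}--\ref{prop:zagavGron} at $\eta_0=\eta_1=\eta$. The Gaussian (second-order) part of the expansion is controlled by Ward identities, $\sum_a|G_{\bm x a}|^2=(\Im G)_{\bm x\bm x}/\eta$, and by $(\Im G)_{\bm u\bm u}\lesssim N^{2\delta}(\Im M)_{\bm u\bm u}$. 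This is exactly what produces the $\Gamma$-weighted error sizes. Higher cumulants are estimated as in the zag step. There we used the factor $\lambda^2\rho/\eta\lesssim 1/c$, which is now bounded, in place of the $N^{-2\delta}$ gain. The $N^{\delta}$ losses from \eqref{eq:globalinput} come only as factors $N^{C\delta}$. They are compensated by iterating: the output bound is plugged back as improved input.

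\textbf{Step 2 (inverting $\mathcal{B}$ and using the gap).} In the average case we test against $B$ and use $\mathcal{B}^{-1}[X]=X+\frac{\lambda^2\langle X\rangle}{1-\lambda^2\langle M^2\rangle}M^2$. The quadratic term $\lambda^2\langle G-M\rangle\langle (G-M)BM\rangle$ is where the hypothesis enters. By assumption $|\langle G-M\rangle|\prec N^{-10\delta}\rho$, and $\lambda^2\rho\lesssim\eta/c$. So this term is bounded by $N^{-10\delta}(\eta/c)\,|\langle (G-M)BM\rangle|$. Since $\|M\|\lesssim 1/(\eta+\lambda^2\rho)\lesssim 1/\eta$, it is at most $N^{-10\delta}$ times a quantity of the same type as the one being estimated. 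It can therefore be absorbed, first for $B=I$ and then for general $B$ by a supremum over normalised $B$. This gives \eqref{eq:avGL}. For the isotropic law we take the $(\bm x,\bm y)$ entry. The quadratic term is $\lambda^2\langle G-M\rangle (M(G-M))_{\bm x\bm y}$, and we treat it the same way, using Lemma~\ref{lem:Mmax1} to handle $M$ acting on $\bm x$. The $\mathcal{B}^{-1}$ correction is proportional to $\langle G-M\rangle (M^2)_{\bm x\bm y}$. It is controlled by the now established averaged law together with $|(M^2)_{\bm x\bm y}|\le \|M\bm x\|\|M^*\bm y\|\lesssim \sqrt{(\Im M)_{\bm x\bm x}(\Im M)_{\bm y\bm y}}/(\eta+\lambda^2\rho)$.

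\textbf{Main obstacle.} The hardest part is Step 1: obtaining the fluctuation bounds with the \emph{inhomogeneous} weights $\Gamma$ rather than operator norms. Every use of $\|M\|$ or $\max|G_{ab}|$ must be replaced by the $\ell^2$-type bounds, namely Lemma~\ref{lem:Mbounds} via \eqref{eq:extendsum} and Lemma~\ref{lem:Mmax1}, so that sums over indices reproduce $\langle\Im M BB^*\rangle$ and $(\Im M)_{\bm x\bm x}$. For this we will copy the term-by-term bookkeeping of \eqref{eq:case1ex1}--\eqref{eq:higherorderav2}.
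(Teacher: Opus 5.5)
Your proof is correct, with a few slips noted below. It uses the same ingredients as the paper:
\begin{itemize}
\item the identity $\mathcal{B}[G-M]=-M\underline{\lambda WG}+\lambda^2M\langle G-M\rangle(G-M)$;
\item cumulant expansions fed only by the a priori input \eqref{eq:globalinput};
\item the gap hypothesis, used only for the quadratic term;
\item absorption of that term through $\lambda^2\rho\,\|M\|\le 1$ and a bound uniform in the test matrix or vectors.
\end{itemize}
The difference is the order of the two laws and how the rank-one part of $\mathcal{B}^{-1}$ is handled.

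\textbf{The paper's order.} It proves the isotropic law first (Lemma~\ref{lem:bootstrapiso}) by one moment recursion for $S=(G-M)_{\bm x\bm y}$, with an ansatz $\Psi\prec\psi$ iterated down to $\psi=1$. There the rank-one part of $\mathcal{B}^{-1}$ produces two terms:
\begin{itemize}
\item $\beta^{-1}\lambda^2(M^2)_{\bm x\bm y}\langle M\underline{\lambda WG}\rangle$, which needs its own cumulant expansion;
\item $\beta^{-1}\lambda^4(M^2)_{\bm x\bm y}\langle G-M\rangle\langle M(G-M)\rangle$, handled by summing the isotropic ansatz over $\bm e_a$ with \eqref{eq:Mtricks}.
\end{itemize}
The isotropic law is then input to the averaged bootstrap (Lemma~\ref{lem:bootstrapav}). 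It gives $\langle\Im G BB^*\rangle\lesssim\langle \Im M BB^*\rangle$ without loss, $|G_{aa}|\lesssim(|M|)_{aa}$, and the splitting $G_{ab}=M_{ab}+(G-M)_{ab}$ in \eqref{eq:case1ex2av}.

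\textbf{Your order.} You get the averaged law from \eqref{eq:globalinput} alone. For the isotropic law you use that the $\mathcal{B}^{-1}$-correction is exactly $\lambda^2M^2\langle G-M\rangle$, since $\langle \mathcal{B}[G-M]\rangle=(1-\lambda^2\langle M^2\rangle)\langle G-M\rangle$. The $B=I$ averaged law then disposes of it. In fact your isotropic step needs no absorption at all. Write $(G-M)_{\bm x\bm y}=-\lambda(M\underline{WG})_{\bm x\bm y}+\lambda^2\langle G-M\rangle(MG)_{\bm x\bm y}$. Using $\|M^*\bm x\|^2=(\Im M)_{\bm x\bm x}/(\eta+\pi\lambda^2\rho)$, $\|G\bm y\|^2=(\Im G)_{\bm y\bm y}/\eta$ and $\lambda^2\rho\lesssim\eta$, the last term is $\prec N^{\delta}(N\eta\rho)^{-1/2}\s^{\rm iso}(\bm x,\bm y)$.

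\textbf{The price.} Your averaged expansion runs on the lossy input. The splitting in \eqref{eq:case1ex2av} must be replaced by a Ward/Schwarz bound, which suffices because $\lambda^2\rho/\eta\lesssim1$. The resulting $N^{C\delta}$ losses must then be removed by your re-iteration; the paper instead appeals to the arbitrariness of $\delta$.

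Separating Step 1 from Step 2 is only cosmetic. Since $\lambda M\underline{WG}=-(G-M)+\lambda^2M\langle G-M\rangle G$ is a function of $G$, expanding $\E|\langle \lambda M\underline{WG}B\rangle|^p$ gives essentially the terms of the paper's expansion of $\E[\langle M\underline{\lambda WG}\widetilde B\rangle\overline{R}|R|^{p-2}]$.

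\textbf{Slips to fix.}
\begin{itemize}
\item The renormalisation must be $\underline{WG}=WG+\lambda\langle G\rangle G$. Your identity is the right one, but your definition drops the $\lambda$.
\item ``First for $B=I$'' does not decouple, because for $B=I$ the quadratic term already contains $\langle (G-M)M\rangle$. The absorption must be uniform in $B$ from the start: either an ansatz $\Phi\prec\phi$ as in Lemma~\ref{lem:bootstrapav}, or a deterministic supremum over $B$ of normalised $p$-th moments. A literal random supremum over all $B$ is a Hilbert--Schmidt-type quantity and is not small.
\item The zag-step higher-order bookkeeping cannot be copied verbatim. It bounds extra entries by $1/(\lambda^2\rho)$ and uses $\eta\lesssim\lambda^2\rho$, while $\mathcal{D}^{\rm glob}$ allows $\eta\gg\lambda^2\rho$. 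There, pay for each extra derivative with \eqref{eq:globalinput} and $\|\Im M\|\le(\eta+\pi\lambda^2\rho)^{-1}$. This gives a factor $N^{\delta}(N\rho(\eta+\lambda^2\rho))^{-1/2}\le N^{\delta}(N\eta\rho)^{-1/2}$ per derivative.
\end{itemize}
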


Given Lemma \ref{lem:bootstrap}, we can easily conclude the proof of Proposition \ref{prop:global}: By trivial elementary bounds on $G$ and $M$, as indicated in the proof of Proposition \ref{prop:zagstep}, together with Lemma \ref{lem:Mmax1}, we have that \eqref{eq:globalinput} holds when choosing $\gamma_0 = \kappa_0 + 1/\epsilon$. Moreover, by a simple Neumann expansion $|\langle G(z) - M(z) \rangle| \prec  N^{-10 \delta } \rho(z)$ holds for $\eta = \lambda N^{\gamma_0}$ as well. This concludes the proof of Proposition \ref{prop:global} by simple iteration. \qed 

 \bigskip 

The isotropic \eqref{eq:isoGL} and average \eqref{eq:avGL} part of the implication in Lemma \ref{lem:bootstrap} will be proven in Sections \ref{subsec:isoGL} and \ref{subsec:avGL}, respectively. 
\subsubsection{Proof of the isotropic bootstrap} \label{subsec:isoGL}
 We begin with some preliminaries. First, we recall the \emph{stability operator} $\mathcal{B}$, defined in \eqref{eq:stabop}. We control the inverse of the stability operator with the aid of the following lemma, whose proof is given in Appendix \ref{app:techn}. 
\begin{lemma}[Inverse of the stability operator] \label{lem:invstab}
	Let $M_\lambda(z)$ be the solution to the MDE \eqref{eq:MDE} at spectral parameter $z \in \C$. Then it holds that 
	\begin{equation} \label{eq:invstab}
		\big(\mathcal{B}_\lambda(z)\big)^{-1}[X] = X + \frac{\lambda^2 M_\lambda(z)^2}{1 - \lambda^2 \langle M_\lambda(z)^2\rangle} \langle X \rangle \quad \text{for all} \quad X \in \C^{N \times N}\,, 
	\end{equation}
	where the denominator is lower bounded as
	\begin{equation} \label{eq:stabbound}
		\big|1 - \lambda^2 \langle M_\lambda(z)^2 \rangle \big| \gtrsim \big(\big(\lambda^{2} \rho_\lambda(z)/ |\Im z|\big)^{-1} \wedge 1\big)+ \lambda^2 \rho_\lambda(z)^2
	\end{equation}
\end{lemma}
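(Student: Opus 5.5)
The plan has two parts: an algebraic inversion of the rank-one perturbation $\mathcal{B}$, and a lower bound on the resulting denominator obtained from the imaginary part of the MDE.

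\emph{Inversion formula.} Since $\mathcal{B}[X] = X - \lambda^2 M\langle X\rangle M$ is the identity minus a rank-one map, we make the ansatz $\mathcal{B}^{-1}[X] = X + c\,\langle X\rangle M^2$. Applying $\mathcal{B}$ to it gives
\[
X + c\langle X\rangle M^2 - \lambda^2\big(\langle X\rangle + c\langle X\rangle\langle M^2\rangle\big)M^2 .
\]
This equals $X$ exactly when $c\big(1-\lambda^2\langle M^2\rangle\big) = \lambda^2$, which yields \eqref{eq:invstab}, provided the denominator is nonzero. Nonvanishing follows from the lower bound below.

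\emph{Lower bound \eqref{eq:stabbound}.} Take $\Im z>0$. Taking imaginary parts in the MDE \eqref{eq:MDE}, written as $M^{-1} = H_0 - z - \lambda^2\langle M\rangle$, gives
\[
\Im M = M^*\big(\eta + \lambda^2\langle \Im M\rangle\big)M .
\]
Averaging and writing $\langle \Im M\rangle = \pi\rho$ yields
\[
\lambda^2\langle |M|^2\rangle = \frac{\lambda^2\pi\rho}{\eta + \lambda^2\pi\rho}.
\]
Therefore
\[
1 - \lambda^2\langle |M|^2\rangle = \frac{\eta}{\eta+\lambda^2\pi\rho} \sim \Big(\big(\lambda^2\rho/\eta\big)^{-1}\wedge 1\Big).
\]
The remaining task is to compare $\langle M^2\rangle$ with $\langle |M|^2\rangle$. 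Note that $M$ is normal, being a function of $H_0$. Diagonalise $H_0$ with eigenvalues $\mu_j$ and set $m_j = (\mu_j - w)^{-1}$, where $w := z + \lambda^2\langle M\rangle$ has $\Im w = \eta + \lambda^2\pi\rho$. Using $|a|\ge \Re a$, the triangle inequality gives
\[
|1-\lambda^2\langle M^2\rangle| \ge 1 - \lambda^2\Re\langle M^2\rangle = \big(1-\lambda^2\langle |M|^2\rangle\big) + \lambda^2\big(\langle |M|^2\rangle - \Re\langle M^2\rangle\big).
\]
The last bracket equals $2\langle (\Im M)^2\rangle$, since $|m|^2 - \Re m^2 = 2(\Im m)^2$. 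By Cauchy--Schwarz, $\langle(\Im M)^2\rangle \ge \langle \Im M\rangle^2 = \pi^2\rho^2$. Combining the two contributions gives
\[
|1-\lambda^2\langle M^2\rangle| \ge \frac{\eta}{\eta+\lambda^2\pi\rho} + 2\pi^2\lambda^2\rho^2,
\]
which is exactly \eqref{eq:stabbound}, with no constants lost beyond $\pi$'s. The case $\Im z<0$ follows by complex conjugation, since $M(\bar z) = M(z)^*$.

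The main obstacle is recognizing that the denominator should be handled through its real part split as $\big(1-\lambda^2\langle|M|^2\rangle\big) + 2\lambda^2\langle(\Im M)^2\rangle$, rather than by bounding $|\langle M^2\rangle|\le\langle|M|^2\rangle$ alone. The crude bound only captures the $\eta$-term and loses the $\lambda^2\rho^2$ gain, which is essential near small $\eta$. Once that decomposition is in place, everything reduces to the scalar identities above.
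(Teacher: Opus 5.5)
Your proof is correct and follows essentially the same route as the paper: you verify the inverse directly from the rank-one structure of $\mathcal{B}$, and you bound the denominator from below by its real part, split as $1-\lambda^2\langle MM^*\rangle + 2\lambda^2\langle (\Im M)^2\rangle$, using $1-\lambda^2\langle MM^*\rangle = \eta/(\eta+\lambda^2\pi\rho)$ from the imaginary part of the MDE and $\langle(\Im M)^2\rangle\ge\langle\Im M\rangle^2$. The only cosmetic difference is that you obtain this splitting eigenvalue-wise from the normality of $M$, whereas the paper writes $M = M^*+2\ii\Im M$ and uses cyclicity of the trace.
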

In the following, we will only use the first term in the stability bound \eqref{eq:stabbound} in the form 
\begin{equation} \label{eq:betadef}
	\big|1 - \lambda^2 \langle M_\lambda(z)^2 \rangle \big| \gtrsim \big(\lambda^{2} \rho_\lambda(z)/ |\Im z|\big)^{-1} \wedge 1=: \beta_\lambda(z)  \equiv \beta
\end{equation}
and the fact that, in the global domain, by definition, $\beta \ge c\wedge 1  \gtrsim 1$.

Our proof is based on the following \emph{isotropic bootstrap lemma}. 
\begin{lemma}[Isotropic bootstrap] \label{lem:bootstrapiso}
Under the conditions of Lemma \ref{lem:bootstrap}, we have the following: Assume that, uniformly in $\bm x, \bm y \in \C^N$ it holds that 
\begin{equation} \label{eq:ansatz}
\Psi(z; \bm x, \bm y) := \s^{\rm iso} (z; \bm x, \bm y)^{-1} \big| \big(G(z)-M(z)\big)_{\bm x \bm y} \big| \prec  \psi \,,  \quad  \s^{\rm iso}(z; \bm x, \bm y) := \sqrt{\frac{\rho(z)}{N \eta}} \sqrt{\frac{(\Im M)_{\bm x \bm x} (\Im M)_{\bm y \bm y}}{\rho(z)^2}}
\end{equation}
for some deterministic control quantity $\psi \ge 1$. 
Then we have 
\begin{equation} \label{eq:globaloutcome}
\Psi(z; \bm x, \bm y) \prec 1+ N^{-\delta/2 } \psi 
\end{equation}
again uniformly in $\bm x, \bm y \in \C^N$. 
\end{lemma}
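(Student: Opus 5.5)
We prove the isotropic bootstrap through a self-consistent equation for $G-M$ combined with high-moment estimates of the fluctuation term. Write $\underline{WG}$ for the renormalized product $WG+\langle G\rangle G$ (here $\langle G\rangle G$ is the second-order cumulant contribution). Using the MDE \eqref{eq:MDE}, we rewrite $H_\lambda=H_0+\lambda W$ as
\begin{equation*}
G-M=-\lambda M\underline{WG}+\lambda^2 M\langle G-M\rangle G .
\end{equation*}
We write the second term as $\lambda^2 M\langle G-M\rangle M+\lambda^2 M\langle G-M\rangle(G-M)$. The first of these pieces has the form $\lambda^2 M\langle X\rangle M$, so we can absorb it by applying the inverse stability operator of Lemma~\ref{lem:invstab}. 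In the global domain $\beta\gtrsim 1$, so $\mathcal B^{-1}$ costs only a factor $1+\lambda^2\|M^2\|/|1-\lambda^2\langle M^2\rangle|$ on the trace part. Taking the $\bm x\bm y$ entry then gives
\begin{equation*}
(G-M)_{\bm x\bm y}=-\lambda\big(\mathcal B^{-1}[M\underline{WG}]\big)_{\bm x\bm y}+\text{(quadratic error)}.
\end{equation*}

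\textbf{Quadratic error.} This term is $\lambda^2\langle G-M\rangle (M(G-M))_{\bm x\bm y}$, plus its trace-part counterpart after applying $\mathcal B^{-1}$. The hypothesis of Lemma~\ref{lem:bootstrap} gives $|\langle G-M\rangle|\prec N^{-10\delta}\rho$. We bound $(M(G-M))_{\bm x\bm y}$ using the ansatz \eqref{eq:ansatz} in the direction $M^*\bm x$, together with Lemma~\ref{lem:Mmax1} and the bound $\|M\|\lesssim(\eta+\lambda^2\rho)^{-1}$ from Lemma~\ref{lem:Mbounds}. The claim to verify is $(\Im M)_{M^*\bm x,M^*\bm x}\lesssim (\lambda^2\rho)^{-1}(\Im M)_{\bm x\bm x}$ up to constants in the global regime, which should follow from the identity $\Im M=(\eta+\lambda^2\pi\rho)MM^*$ implied by the MDE. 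Granting this, the quadratic error is at most $\lambda^2\rho\,N^{-10\delta}\cdot(\lambda^2\rho)^{-1}\s^{\rm iso}\psi=N^{-10\delta}\s^{\rm iso}\psi$, which is well within the target $N^{-\delta/2}\psi$.

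\textbf{Fluctuation term.} The main work is showing that $\lambda(M\underline{WG})_{\bm x\bm y}\prec \s^{\rm iso}(1+N^{-\delta}\psi)$; we handle the trace part of $\mathcal B^{-1}$ the same way, with $B=M^2$-type observables. We compute $\E|\lambda(M\underline{WG})_{\bm x\bm y}|^{2p}$ by a cumulant expansion in the entries of $W$, as in Section~\ref{subsec:zagstep}.

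At second order, the Gaussian part cancels against the underline. What remains are terms in which the derivative hits other factors. These produce products such as $\lambda^2N^{-1}(GG^*)_{\bm y\bm y}(MM^*)_{\bm x\bm x}$-type contractions. By the Ward identity, $(GG^*)_{\bm y\bm y}=\eta^{-1}(\Im G)_{\bm y\bm y}$. We then use $(\Im G)_{\bm y\bm y}\lesssim N^{2\delta}(\Im M)_{\bm y\bm y}$ from \eqref{eq:globalinput}, and the identity $MM^*=\Im M/(\eta+\lambda^2\pi\rho)$. Together these give exactly $(\s^{\rm iso})^2$ up to $N^{2\delta}$ factors.

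For the higher cumulants, we bound each resolvent entry by \eqref{eq:globalinput} and Lemma~\ref{lem:Mmax1}. We extend sums via \eqref{eq:extendsum} and gain from $N\lambda^2\rho^2\gg1$, exactly as in the estimates \eqref{eq:case1ex1}--\eqref{eq:case3}. Wherever a factor $(G-M)_{\bm u\bm v}$ appears and must be controlled sharply, we insert the ansatz $\psi$; this is where the factor $N^{-\delta/2}\psi$ in the conclusion originates.

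\textbf{Main obstacle.} We expect the main difficulty to be the inhomogeneous bookkeeping. Every bound must carry $(\Im M)_{\bm x\bm x}(\Im M)_{\bm y\bm y}/\rho^2$ rather than a norm, and the losses $N^{2\delta}$ from \eqref{eq:globalinput} must be compensated. We plan to obtain this compensation from the fact that in the global domain $\lambda^2\rho/\eta\lesssim 1$, so the gain $\sqrt{N\eta\rho}^{-1}\le N^{-\gamma_1/2}$ per extra cumulant order dominates. Finally, Young's inequality and Markov's inequality turn the moment bounds into \eqref{eq:globaloutcome}.
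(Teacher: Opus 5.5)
Your proposal is correct in outline. It follows the paper's architecture: the identity \eqref{eq:G-Mident} inverted with Lemma~\ref{lem:invstab}; the quadratic term controlled by $|\langle G-M\rangle|\prec N^{-10\delta}\rho$, the ansatz in direction $M^*\bm x$ and \eqref{eq:Mtricks}; and the fluctuation controlled by a cumulant expansion fed by \eqref{eq:globalinput}, Ward identities and extended sums.

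\textbf{Where the routes differ.} The difference is how the moments are organized. The paper never bounds the fluctuation on its own. It starts from \eqref{eq:isoGLstart} and tests $(\mathcal B^{-1}[M\underline{\lambda WG}])_{\bm x\bm y}$ against $\overline S|S|^{p-2}$. All derivatives then fall on $S=(G-M)_{\bm x\bm y}$, with the clean formula $\partial_{ab}S=-\lambda G_{\bm x a}G_{b\bm y}$. The resulting terms have the same index structure as in the zag step (compare \eqref{eq:global3rdiso} with \eqref{eq:case3}), and the untouched $S$-factors are absorbed by Young into $N^{-\delta}\E|S|^p$.

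You instead first prove $|X|\prec N^{C\delta}\s^{\rm iso}$ for $X=(M\underline{\lambda WG})_{\bm x\bm y}$ via $\E|X|^{2p}$, and only then close through the equation.
\begin{itemize}
\item \emph{What this buys:} the fluctuation bound is decoupled from the ansatz. Contrary to your remark, $\psi$ is not needed there; the $N^{-\delta/2}\psi$ comes solely from the quadratic term, exactly as in \eqref{eq:quadiso}.
\item \emph{What it costs:} $X$ contains $W$ explicitly. Differentiating it gives your $\frac{\lambda^2}{N}(MM^*)_{\bm x\bm x}(G^*G)_{\bm y\bm y}$ term, but also terms with $\lambda(MWG)_{\bm x b}G_{a\bm y}$. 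These must be rewritten via $\lambda MWG=M-G-\lambda^2\langle M\rangle MG$ before \eqref{eq:globalinput} applies. Equivalently, $X=-S+\lambda^2\langle G-M\rangle(MG)_{\bm x\bm y}$, so $\partial X\approx-\partial S$ and both schemes produce essentially the same terms.
\item The averaged trace part $\langle M\underline{\lambda WG}\rangle$ also needs its own high-moment bound; order $N^{C\delta}\sqrt{\rho/(N\eta)}$ suffices.
\end{itemize}

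\textbf{Statements that need correcting.}
\begin{itemize}
\item The first bound in \eqref{eq:Mtricks} is $(\Im M)_{M^*\bm x M^*\bm x}\le(\lambda^2\rho)^{-2}(\Im M)_{\bm x\bm x}$. This follows from $\Im M=(\eta+\lambda^2\pi\rho)MM^*$ and $\|M\|\le(\eta+\lambda^2\pi\rho)^{-1}$. You state it with $(\lambda^2\rho)^{-1}$, though your arithmetic for the quadratic term actually uses the squared version.
\item The trace part of $\mathcal B^{-1}$ cannot be paid for in operator norm. $\lambda^2\|M\|^2$ is not $O(1)$ in general on $\mathcal D^{\rm glob}$, and this would destroy the profile. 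Use instead $\lambda^2|(M^2)_{\bm x\bm y}|\le\lambda^2\sqrt{(MM^*)_{\bm x\bm x}(MM^*)_{\bm y\bm y}}\lesssim\rho^{-1}\sqrt{(\Im M)_{\bm x\bm x}(\Im M)_{\bm y\bm y}}$, as the paper does.
\item $N\lambda^2\rho^2\gg1$ is a local-regime fact and can fail in $\mathcal D^{\rm glob}$, where $N\lambda^2\rho^2\lesssim N\eta\rho$. The relevant global-regime inputs are:
  \begin{itemize}
  \item $\lambda^2\rho/\eta\lesssim1$, which makes factors like $\lambda\sqrt{\rho/\eta}$ in \eqref{eq:global2ndiso} harmless;
  \item $N\eta\rho\ge N^{\epsilon}$ for the higher cumulants. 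This does not follow from $\eta\ge N^{-1+\gamma_1}$ alone, since $\rho$ may be small, so your claim $\sqrt{N\eta\rho}^{-1}\le N^{-\gamma_1/2}$ is unjustified as stated.
  \end{itemize}
\end{itemize}
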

By iteration, Lemma \ref{lem:bootstrapiso} yields that 
\begin{equation*}
\Psi(z; \bm x, \bm y) \prec 1
\end{equation*}
uniformly in $\bm x, \bm y \in \C^N$. This finishes the proof of the isotropic part of Lemma \ref{lem:bootstrap} and we are hence left with proving Lemma \ref{lem:bootstrapiso}. 

\begin{proof}[Proof of Lemma \ref{lem:bootstrapiso}]
By definition of the stability operator \eqref{eq:stabop} and using the MDE \eqref{eq:MDE} together with the definition of the resolvent $G = (H-z)^{-1}$ (henceforth neglecting the spectral parameter, since it remains fixed), we have the identity 
\begin{equation} \label{eq:G-Mident}
	G-M = - \mathcal{B}^{-1}\big[ M \underline{\lambda WG} \big] + \lambda^2 \mathcal{B}^{-1}\big[ M(G-M) \langle G-M \rangle \big]
\end{equation}
where we introduced the \emph{self-energy renormalization} (underline) as
\begin{equation*}
	\underline{\lambda W G} := \lambda WG + \lambda^2 \langle G \rangle G
\end{equation*}
Then, using \eqref{eq:G-Mident}, for 
\begin{equation*}
S = S^{\bm x \bm y} := \big(G-M\big)_{\bm x \bm y}
\end{equation*}
it holds that, for any even (large) $p \in \N$, 
\begin{equation} \label{eq:isoGLstart}
\E \big[|S|^p\big] \le \left| \E \big[ \big(\mathcal{B}^{-1}[M \underline{\lambda WG}]\big)_{\bm x \bm y} \overline{S} |S|^{p-2} \big] \right| + \lambda^2 \left| \E \big[ \big(\mathcal{B}^{-1}[M \langle G-M \rangle (G-M)]\big)_{\bm x \bm y} \overline{S} |S|^{p-2} \big] \right| \,. 
\end{equation}
We begin by estimating the second term on the rhs.~of \eqref{eq:isoGLstart}. Ignoring the factor $\overline{S} |S|^{p-2}$ and the expectation, we estimate 
\begin{equation} \label{eq:quadiso}
\begin{split}
&\lambda^2  \left|\big(\mathcal{B}^{-1}[M \langle G-M \rangle (G-M)]\big)_{\bm x \bm y} \right| \\
\lesssim &\lambda^2 N^{-10 \delta } \rho \left(\left|\big(M (G-M)\big)_{\bm x \bm y} \right| + \left|\beta^{-1} \lambda^2 (M^2)_{\bm x \bm y} \langle M (G-M) \rangle \right|\right) \\
\lesssim & \lambda^2 N^{-8 \delta} \rho \left(\s^{\rm iso}(M^* \bm x, \bm y) + \beta^{-1} \lambda^2 \sqrt{(|M|^2)_{\bm x \bm x} (|M|^2)_{\bm y \bm y}} N^{-1} \sum_a\s^{\rm iso}(M^* \bm e_a, \bm e_a)\right) \psi  \\
\lesssim & N^{-8 \delta} (1 + \beta^{-1}) \s^{\rm iso}(\bm x, \bm y) \, \psi \lesssim N^{-8 \delta}\s^{\rm iso}(\bm x, \bm y) \, \psi  \,,
\end{split}
\end{equation}
all with very high probability. 
To go to the second line, we used Lemma \ref{lem:invstab} and the input bound \eqref{eq:bootstrap}. In the penultimate step, we then plugged in the ansatz \eqref{eq:ansatz} and used a Schwarz inequality. To go to the last line, we used
\begin{equation} \label{eq:Mtricks}
\big(\Im M\big)_{M^* \bm u M^* \bm u} \le \frac{(\Im M)_{\bm u \bm u}}{\lambda^4 \rho^2} \quad \text{and} \quad (|M|^2)_{\bm u \bm u} \lesssim \frac{(\Im M)_{\bm u \bm u}}{\lambda^2 \rho}
\end{equation}
uniformly in $\bm u \in \C^N$, and $N^{-1}\sum_a \s^{\rm iso}(\bm e_a, \bm e_a) \lesssim   \sqrt{\rho/(N \eta)}$. Hence, by a weighted Young inequality for handling the previously ignored $S$-factors, we find that 
\begin{equation} \label{eq:isoglobquadratic}
\lambda^2 \left| \E \big[ \big(\mathcal{B}^{-1}[M \langle G-M \rangle (G-M)]\big)_{\bm x \bm y} \overline{S} |S|^{p-2} \big] \right| \lesssim \big( N^{-5 \delta} \s^{\rm iso}(\bm x, \bm y) \psi \big)^p + N^{-\delta} \E[|S|^p] \,. 
\end{equation}

We now turn to estimating the first term in \eqref{eq:isoGLstart}, which, by means of Lemma \ref{lem:invstab}, we control as
 \begin{equation} \label{eq:isoglobstart}
 	\begin{split}
 	&	\left| \E \left[ \big(\mathcal{B}^{-1}[M \underline{\lambda WG}]\big)_{\bm x \bm y} \overline{S} |S|^{p-2} \right] \right| \\
 		& \qquad \lesssim \left| \E \left[\big(M \underline{\lambda WG}\big)_{\bm x \bm y} \overline{S} |S|^{p-2}\right] \right| + \beta^{-1} \lambda^2 |(M^2)_{\bm x \bm y}|\left|\E \left[\langle M \underline{\lambda WG}\rangle \overline{S} |S|^{p-2}\right] \right|
 	\end{split}
 \end{equation}
and discuss each of the two terms on the rhs.~of \eqref{eq:isoglobstart} separately. 

We begin first the first term and employ a cumulant expansion as in \eqref{eq:cumexbasiciso} to find that 
\begin{equation} \label{eq:isocumex}
\begin{split}
\left| \E \left[\big(M \underline{\lambda WG}\big)_{\bm x \bm y} \overline{S} |S|^{p-2}\right] \right| \lesssim& \frac{\lambda^2}{N} \bigg| \E \bigg[ \sum_{a,b} M_{\bm x a} G_{b \bm y} (\partial_{ab} + \partial_{ba})\{\overline{S} |S|^{p-2}\} \bigg] \bigg| \\
& + \sum_{k=3}^L \frac{\lambda^k}{N^{k/2}} \sum_{l = 0}^k \bigg| \sum_{a,b} \E\bigg[ M_{\bm x a} (\partial_{ab}^l \partial_{ba}^{k-l})\{ G_{b \bm y} \overline{S} |S|^{p-2}\} \bigg] \bigg| + |\Omega_L^{\bm x, \bm y}|
\end{split}
\end{equation}
where the last error term  can be controlled as $|\Omega_L^{\bm x, \bm y}| \lesssim \big(\s^{\rm iso}(\bm x, \bm y)\big)^p$ by choosing $L = \mathcal{O}(1)$ large enough, similarly to \eqref{eq:Ldef}. 

First we estimate the second order terms with $\lambda^2$. We focus on the contribution coming from $\partial_{ab}$; the contribution from $\partial_{ba}$ can be bounded completely analogously. Ignoring the difference between $S$ and $\overline{S}$ and dropping the non-differentiated factor $|S|^{p-2}$, we estimate
\begin{equation} \label{eq:global2ndiso}
\begin{split}
\frac{\lambda^2}{N} \bigg| \sum_{a,b} M_{\bm x a} G_{b \bm y} G_{\bm x a} G_{b \bm y} \bigg| &\lesssim \frac{\lambda^2}{N} \eta^{-3/2} \sqrt{(|M|^2)_{\bm x \bm x}} \sqrt{(\Im G)_{\bm x \bm x}} (\Im G)_{\bm y \bm y} \\
&\lesssim N^{5 \delta} \frac{\lambda^2}{N} \frac{1}{\sqrt{\lambda^2 \rho}} \frac{\rho^2}{\eta^{3/2}} \left(\sqrt{\frac{(\Im M)_{\bm x \bm x} (\Im M)_{\bm y\bm y}}{\rho^2}}\right)^2
\lesssim N^{5 \delta} \big(\s^{\rm iso}(\bm x, \bm y)\big)^2 \,. 
\end{split}
\end{equation}
In the first step, we used a Schwarz inequality and three times a Ward identity. To go to the second line, we used the second input bound from \eqref{eq:globalinput} as well as the second $M$-estimate from \eqref{eq:Mtricks}. In the ultimate step, we then use that $\beta^{-1} \le 1/c \lesssim 1$. 

Next, we move on to terms of third and higher order in the second line of \eqref{eq:isocumex}. As an exemplary term, we consider (again neglecting non-differentiated $S$-factors)
\begin{equation} \label{eq:global3rdiso}
\begin{split}
&\frac{\lambda^3}{N^{3/2}} \bigg| \sum_{a,b} M_{\bm x a} G_{b \bm y} G_{\bm x b} G_{a \bm y} G_{\bm x a} G_{b \bm y}\bigg| \\
\lesssim &N^{2\delta}\lambda N^{-3/2} \eta^{-3/2} \sqrt{(|M|^2)_{\bm x \bm x}} \sqrt{(\Im G)_{\bm x \bm x}} (\Im G)_{\bm y \bm y} \sqrt{\frac{(\Im M)_{\bm x \bm x} (\Im M)_{\bm y \bm y}}{\rho^2}} \\
\lesssim &N^{7 \delta} \lambda N^{-3/2} \frac{1}{\sqrt{\lambda^2 \rho}} \frac{\rho^{2}}{\eta^{3/2}} \left(\sqrt{\frac{(\Im M)_{\bm x \bm x} (\Im M)_{\bm y \bm y}}{\rho^2}}\right)^3\lesssim N^{7 \delta} \big(\s^{\rm iso}(\bm x , \bm y)\big)^3
\end{split}
\end{equation}
estimated similarly to the second order term considered above. Observe that the structure of the terms in 
\eqref{eq:global3rdiso} is identical to \eqref{eq:case3}, the only difference being that $G_{\bm x a}$ in \eqref{eq:case3} is replaced by $M_{\bm x a}$ in \eqref{eq:global3rdiso}. The improved estimate in \eqref{eq:global3rdiso} is then a consequence of using 
\begin{equation*}
\left(\sum_a |M_{\bm x a}|^2\right)^{1/2} \lesssim \lambda^{-1} \sqrt{\frac{(\Im M)_{\bm x \bm x}}{\rho}} \quad \text{instead of} \quad \left(\sum_a |G_{\bm x a}|^2\right)^{1/2} \lesssim N^{2 \delta} \sqrt{\frac{\rho}{\eta}} \sqrt{\frac{(\Im M)_{\bm x \bm x}}{\rho}}
\end{equation*}
which follows by a Ward identity and the second estimate in \eqref{eq:globalinput}. 

Therefore, arguing exactly as in the proof of Proposition \ref{prop:zagisoGron}, using a weighted Young inequality, we conclude that
\begin{equation} \label{eq:isoglobfirstterm}
\left| \E \left[\big(M \underline{\lambda WG}\big)_{\bm x \bm y} \overline{S} |S|^{p-2}\right] \right| \lesssim \big( N^{4 \delta} \s^{\rm iso}(\bm x, \bm y) \big)^p + N^{-\delta } \E [|S|^p]
\end{equation}

We now consider the second term on the rhs.~of \eqref{eq:isoglobstart} and two exemplary terms of order two (as in \eqref{eq:global2ndiso}) and order $k\ge 3$ (as in \eqref{eq:global3rdiso}) obtained via cumulant expansion as in \eqref{eq:isocumex}. As an analog of \eqref{eq:global2ndiso} we obtain (note that the additional $N^{-1}$ emerged from the normalized trace)
\begin{equation*}
\begin{split}
&\beta^{-1}\lambda^4 N^{-2} |(M^2)_{\bm x \bm y}| \, \bigg| \sum_{j,a,b} M_{ja} G_{bj} G_{\bm x a} G_{b \bm y} \bigg| \\
\lesssim & \lambda^4 N^{-2}  \frac{\rho}{\eta} \sqrt{(|M|^2)_{\bm x \bm x} (|M|^2)_{\bm y \bm y}} \sqrt{\sum_{a,b} |(GM)_{ba}|^2} \sqrt{(\Im G)_{\bm x \bm x}/\rho} \sqrt{(\Im G)_{\bm y \bm y}/\rho} \\
\lesssim &N^{4 \delta} \lambda \sqrt{\frac{\rho}{\eta}}  \left(\sqrt{\frac{\rho}{N \eta}}\sqrt{\frac{(\Im M)_{\bm x \bm x} (\Im M)_{\bm y \bm y}}{\rho^2}}\right)^2 \lesssim N^{4 \delta}  \big(\s^{\rm iso}(\bm x, \bm y)\big)^2
\end{split}
\end{equation*}
where we used that 
\begin{equation*}
\sqrt{\sum_{a,b} |(GM)_{ab}|^2} = \sqrt{N} \langle GMM^*G^*\rangle^{1/2} \le N \langle GG^* \rangle^{1/2} \langle MM^* \rangle^{1/2} \lesssim  N^{1+2 \delta} \lambda^{-1} \sqrt{\frac{\rho}{\eta}} \,. 
\end{equation*}
and the fact that $\beta = (\lambda^2 \rho/\eta)^{-1} \ge c \gtrsim 1$ in $\mathcal{D}^{\rm glob}$. 
Next, as an analog of \eqref{eq:global3rdiso}, we obtain
\begin{equation*}
\begin{split}
&\beta^{-1}\lambda^5 N^{-5/2} |(M^2)_{\bm x \bm y}| \, \bigg| \sum_{j,a,b} M_{ja} G_{bj} G_{\bm x a} G_{b \bm y} G_{\bm x a} G_{b \bm y}\bigg| \\
\lesssim &N^{4 \delta} \lambda^3 N^{-5/2} \frac{\rho}{\eta} \sqrt{(|M|^2)_{\bm x \bm x} (|M|^2)_{\bm y \bm y}} \sqrt{\sum_{a,b} |(GM)_{ba}|^2} \left(\sqrt{\frac{(\Im M)_{\bm x \bm x} (\Im M)_{\bm y \bm y}}{\rho^2}}\right)^2 \\
\lesssim &N^{6 \delta}  \left( \sqrt{\frac{\rho}{N \eta}} \sqrt{\frac{(\Im M)_{\bm x \bm x} (\Im M)_{\bm y \bm y}}{\rho^2}}\right)^3 \lesssim N^{6 \delta}  \big(\s^{\rm iso}(\bm x, \bm y)\big)^3 \,. 
\end{split}
\end{equation*}
Other terms of order three or higher can be treated analogously. Hence, by a weighted Young inequality, just as in \eqref{eq:isoglobfirstterm}, we obtain
\begin{equation*}
\beta^{-1} \lambda^2 |(M^2)_{\bm x \bm y}|\left|\E \left[\langle M \underline{\lambda WG}\rangle \overline{S} |S|^{p-2}\right] \right| \lesssim \big(N^{ 4 \delta  } \s^{\rm iso}(\bm x, \bm y)\big)^p + N^{-\delta} \E [|S|^p]
\end{equation*}
and thus, together with \eqref{eq:isoglobquadratic} and \eqref{eq:isoglobfirstterm}
\begin{equation*}
\E [|\Psi(\bm x, \bm y)|^p] \lesssim  N^{ 4 \delta p}  + N^{-\delta p} \psi^p 
\end{equation*}
 Because $p$ and $\delta$ are arbitrary, this immediately proves \eqref{eq:globaloutcome} (by standard relations between and high moment bounds and properties of stochastic domination) and we have thus concluded the proof of Lemma~\ref{lem:bootstrapiso}. 
\end{proof}

\subsubsection{Proof of the average average bootstrap and completion of the proof of Lemma \ref{lem:bootstrap}} \label{subsec:avGL}
Our proof of the average part of Lemma \ref{lem:bootstrap} is based on the following \emph{average bootstrap lemma}, relying, in particular on the already established isotropic law from Section \ref{subsec:isoGL}. 
\begin{lemma}[Average bootstrap] \label{lem:bootstrapav}
	Under the conditions of Lemma \ref{lem:bootstrap}, we have the following: Assume that, uniformly in $B \in \C^{N\times N}$ it holds that 
	\begin{equation} \label{eq:ansatzav}
	\Phi(z; B) := 	\s^{\rm av}(z; B)^{-1} \big| \big\langle (G(z)-M(z)) B\big\rangle \big| \prec \phi \quad \text{with} \quad \s^{\rm av}(z; B) := \frac{1}{N \eta} \sqrt{\frac{\langle \Im M(z) BB^* \rangle}{\rho(z)}}
	\end{equation}
	for some deterministic control quantity $\phi \ge 1$,  and the isotropic law \eqref{eq:isoGL} holds on $\mathcal{D}_{\gamma_1}^{\rm glob}$. 
	Then
	\begin{equation} \label{eq:globaloutcomeav}
		\Phi(z; B) \prec  1 + N^{-\delta/2} \phi \,, 
	\end{equation}
	again uniformly in $B \in \C^{N \times N}$. 
\end{lemma}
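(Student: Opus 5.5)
We follow the proof of Lemma~\ref{lem:bootstrapiso}, now for the tracial quantity $R := \langle (G-M)B\rangle$. The starting point is to take the normalized trace of the identity \eqref{eq:G-Mident} against $B$:
\[
\langle (G-M)B\rangle = -\langle \mathcal{B}^{-1}[M\underline{\lambda WG}]B\rangle + \lambda^2\langle \mathcal{B}^{-1}[M(G-M)]B\rangle\langle G-M\rangle .
\]
Lemma~\ref{lem:invstab} gives the adjoint action explicitly: $\langle \mathcal{B}^{-1}[X]B\rangle = \langle X B'\rangle$ with
\[
B' := B + \frac{\lambda^2\langle M^2B\rangle}{1-\lambda^2\langle M^2\rangle}\mathbf{1}.
\]
In $\mathcal{D}^{\rm glob}$ we have $\beta\gtrsim 1$. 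Using Lemma~\ref{lem:Mbounds} and a Schwarz inequality, $\lambda^2|\langle M^2B\rangle| \le \lambda^2 \langle |M|^2\rangle^{1/2}\langle |M|^2 BB^*\rangle^{1/2}$. Combining this with \eqref{eq:Mtricks} should show that the extra identity component of $B'$ contributes at most $\s^{\rm av}(z;B)$ in the relevant norms. Concretely, $\langle \Im M B'B'^*\rangle/\rho \lesssim \langle \Im M BB^*\rangle/\rho$, up to constants that are harmless in the global regime.

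We then write $\E|R|^p$ as the sum of two terms, each multiplied by $\overline{R}|R|^{p-2}$.

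\textbf{Quadratic term.} We use the bootstrap input $|\langle G-M\rangle|\prec N^{-10\delta}\rho$ from \eqref{eq:bootstrap}. By the ansatz \eqref{eq:ansatzav} applied with observable $B'M$, we get $|\langle (G-M) B'M\rangle| \prec \s^{\rm av}(B'M)\,\phi$. Next, $\langle \Im M\, MB'B'^*M^*\rangle \lesssim (\lambda^2\rho)^{-2}\langle \Im M B'B'^*\rangle$, by the first bound in \eqref{eq:Mtricks} in its operator form $M\Im M M^* \le \Im M/(\lambda^4\rho^2)$. The resulting prefactor is $\lambda^2 N^{-10\delta}\rho\,(\lambda^2\rho)^{-1} = N^{-10\delta}$. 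Hence this term is $\lesssim N^{-8\delta}\s^{\rm av}\phi$, and a weighted Young inequality handles the remaining $R$ factors.

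\textbf{Linear term.} For $\E[\langle M\underline{\lambda WG}B'\rangle\overline{R}|R|^{p-2}]$ we perform the cumulant expansion as in \eqref{eq:isocumex}. The Gaussian term $\lambda^2\langle G\rangle G$ cancels the self-interaction, so the second-order term is only the one where the derivative hits $\overline{R}|R|^{p-2}$. It reads
\[
\frac{\lambda^2}{N^3}\sum_{a,b}(B'M)_{ba}G_{..}(GBG)_{ab},
\]
which is essentially $\frac{\lambda^2}{N^2}\langle GB'MGBG\rangle$. A Schwarz inequality and Ward identities bound this by
\[
\frac{\lambda^2}{N^2\eta^{2}}\,\langle \Im G B'MM^*B'^*\rangle^{1/2}\langle \Im G BB^*\rangle^{1/2}.
\]
Then \eqref{eq:ImGImM}, which is available since the isotropic law is assumed, together with $\lambda^2\rho/\eta\lesssim 1$, should give $\lesssim N^{C\delta}(\s^{\rm av})^2$. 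The third- and higher-order terms are treated exactly as in the average Gronwall estimate, Proposition~\ref{prop:zagavGron}: we bound diagonal entries $G_{aa}$ by $(|M|)_{aa}$ through the isotropic law, bound off-diagonal $(GBG)_{ab}$ via \eqref{eq:GBGtriv}, and extend summations as in \eqref{eq:extendsum}. In the global domain the factor $\lambda^2\rho/\eta$ is $O(1)$, so each term is bounded by $N^{C\delta}(\s^{\rm av})^{n}$ times $N^{-c}$-small factors, using $N\lambda^2\rho^2\ge N^{\delta}$. The truncation error $\Omega_L$ is handled by choosing $L$ as in \eqref{eq:Ldef}.

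Collecting everything gives $\E\Phi^p \lesssim N^{C\delta p} + N^{-\delta p}\phi^p$. Since $\delta$ is arbitrary, this yields \eqref{eq:globaloutcomeav}.

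The main obstacle is the second-order term. Because we cannot use norm bounds on $M$, we need the precise weighted estimate through $\langle \Im M B B^*\rangle$ and not merely $\|B\|$. This requires carefully placing Ward identities so that $B$ always ends up next to $\Im G$ and can be converted via \eqref{eq:ImGImM}. We also need to check that the rank-one correction in $B'$ does not spoil the $\Gamma$-weighted norm.
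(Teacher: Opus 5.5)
Your proposal follows essentially the same route as the paper. You trace \eqref{eq:G-Mident} against $B$ to obtain the modified observable $\widetilde B = B + \lambda^2\langle MBM\rangle/(1-\lambda^2\langle M^2\rangle)$ (your $B'$), control the quadratic term through the bootstrap input $|\langle G-M\rangle|\prec N^{-10\delta}\rho$ and the ansatz, handle the new second-order cumulant term by Schwarz, Ward identities and \eqref{eq:ImGImM}, and defer the higher-order terms to the estimates of Proposition~\ref{prop:zagavGron}.

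Two small corrections. First, for the observable $B'M$ the relevant weight is $\langle \Im M\, B'MM^*B'^*\rangle$, not $\langle \Im M\, MB'B'^*M^*\rangle$. It is bounded by $\|M\|^2\langle \Im M B'B'^*\rangle$ using positivity of $B'^*\Im M B'$, so your conclusion stands. Second, in the higher-order terms one chain is $GB'M$ rather than $GBG$, so the bounds used for $GBG$ must be replaced by the analogous ones for $GB'M$; the paper does exactly this.
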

By iteration, Lemma \ref{lem:bootstrapav} yields that 
\begin{equation*}
	\Phi(z; B)\prec 1
\end{equation*}
uniformly in $B\in \C^{N \times N}$. This finishes the proof of the average part of Lemma \ref{lem:bootstrap} and hence, together with Section \ref{subsec:isoGL}, the entire proof of Lemma \ref{lem:bootstrap}.  \qed

\bigskip 

We are hence left with proving Lemma \ref{lem:bootstrapav}. 

\begin{proof}[Proof of Lemma \ref{lem:bootstrapav}]
The argument is similar to that given in Lemma \ref{lem:bootstrapiso}. Analogously to \eqref{eq:isoGLstart}, again dropping the spectral parameter as an argument, since it remains fixed, we find that with
\begin{equation*}
 R = R^B := \big\langle (G-M)B\big\rangle
\end{equation*}
it holds that
\begin{equation} \label{eq:globalavstart}
\E [|R|^p] \le \left| \E \big[ \langle M \underline{\lambda WG} \widetilde{B}\rangle \overline{R} |R|^{p-2} \big] \right| + \lambda^2  \left| \E \big[ \langle G-M\rangle \langle M(G-M) \widetilde{B} \rangle \overline{R} |R|^{p-2} \big] \right|
\end{equation}
for any even (large) $p \in \N$. Here, we introduced the shorthand notation
\begin{equation} \label{eq:Btilde}
\widetilde{B} := \big( (\mathcal{B}^{-1})^*[B^*]\big)^* = B + \lambda^2 \frac{\langle MBM \rangle}{1 - \lambda^2 \langle M^2\rangle} \,. 
\end{equation}

We begin with discussing the second term on the rhs.~of \eqref{eq:globalavstart}. Analogously to \eqref{eq:quadiso}, we obtain
\begin{equation*}
\begin{split}
&\lambda^2 \beta^{-1} \left|\langle G-M\rangle \langle M(G-M) \widetilde{B} \rangle \right|  \\
& \qquad \lesssim \lambda^2 N^{-10 \delta} \rho \left( \s^{\rm av}(BM) + \beta^{-1} \lambda^2 |\langle MBM\rangle| \s^{\rm av}(M)\right) \phi 
\lesssim N^{-10 \delta} \s^{\rm av}(B) \phi \,. 
\end{split}
\end{equation*}
Here, we used that 
\begin{equation} \label{eq:tildetonotilde}
\s^{\rm av}(BM) \lesssim \frac{1}{\lambda^2 \rho} \s^{\rm av}(B) \quad \text{and} \quad |\langle MBM\rangle| \s^{\rm av}(M) \lesssim \frac{1}{\lambda^4 \rho} \s^{\rm av}(B)
\end{equation}
as follows from $\Vert MM^* \Vert \lesssim 1/(\lambda^4 \rho^2)$ and elementary trace inequalities, and the fact that $\beta \gtrsim 1$. Then, by a weighted Young inequality, analogously to \eqref{eq:isoglobquadratic}, we obtain
\begin{equation} \label{eq:quadav}
\lambda^2  \left| \E \big[ \langle G-M\rangle \langle M(G-M) \widetilde{B} \rangle \overline{R} |R|^{p-2} \big] \right| \lesssim \big(N^{ 4 \delta}(1 + \beta^{-1}) \s^{\rm av}(B) \phi\big)^p + N^{-\delta p } \E [|R|^p]
\end{equation}

Next, we turn to the first term on the rhs.~of \eqref{eq:globalavstart}. By a cumulant expansion, just as in \eqref{eq:isocumex}, we find that
\begin{equation} \label{eq:avcumex}
\begin{split}
\left| \E \big[ \langle M \underline{\lambda WG} \widetilde{B}\rangle \overline{R} |R|^{p-2} \big] \right| \lesssim& \frac{\lambda^2}{N^2} \bigg| \E \bigg[ \sum_{j,a,b} M_{j a} (G\widetilde{B})_{b j} (\partial_{ab} + \partial_{ba})\{\overline{R} |R|^{p-2}\} \bigg] \bigg| \\
& + \sum_{k=3}^L \frac{\lambda^k}{N^{k/2+1}} \sum_{l = 0}^k \bigg| \sum_{j,a,b} \E\bigg[ M_{j a} (\partial_{ab}^l \partial_{ba}^{k-l})\{ (G\widetilde{B})_{b j} \overline{R} |R|^{p-2}\} \bigg] \bigg| + |\Omega_L^{B}|
\end{split}
\end{equation}
where the last error term can be controlled as $|\Omega^B_L| \lesssim \big(\s^{\rm av}(B)\big)^p$ for $L = \mathcal{O}(1)$ large enough. 

First, we bound the second order term in the first line of \eqref{eq:avcumex} and focus on the contribution coming from $\partial_{ab}$; the contribution coming from $\partial_{ba}$ can be bounded completely analogously. Similarly to \eqref{eq:global2ndiso}, in particular ignoring the difference between $\overline{R}$ and $R$ and dropping the non-differentiated factor $|R|^{p-2}$, we estimate (note the additional $N^{-1}$ originates from the \emph{average} trace)
\begin{equation} \label{eq:2ndordav}
	\begin{split}
\lambda^2 N^{-3} \bigg| \sum_{a,b} (G\widetilde{B}M)_{ab} (GBG)_{ab}\bigg| &\lesssim \lambda^2 N^{-2} \langle G\widetilde{B} MM^* \widetilde{B}^* G^*\rangle^{1/2} \langle GBGG^* B^* G^* \rangle^{1/2} \\
&\lesssim  \s^{\rm av}(\widetilde{B}) \s^{\rm av}(B) \lesssim  (1 + \beta^{-1}) \big(\s^{\rm av}(B)\big)^2 \lesssim \big(\s^{\rm av}(B)\big)^2 \,, 
	\end{split}
\end{equation}
where to go to the second line, we used that $\Vert MM^* \Vert \lesssim 1/(\lambda^4\rho^2)$, $\Vert GG^* \Vert \le \eta^{-2}$, two Ward identities, and
\begin{equation*}
\langle \Im G BB^* \rangle \lesssim  \langle \Im M BB^*\rangle
\end{equation*}
as follows from the already proven isotropic law and spectral decomposition for $BB^*$; see also \eqref{eq:ImGImM} below. In the ultimate step, we then used that $\s^{\rm av}(\widetilde{B}) \lesssim (1 + \beta^{-1}) \s^{\rm av}(B)$, which can be obtained similarly to \eqref{eq:tildetonotilde}. 

The higher order terms with $k \ge 3$ in the second line of \eqref{eq:avcumex} then have the same structure as the terms arising in \eqref{eq:case1ex1av}, \eqref{eq:case1ex2av}, \eqref{eq:case2av}, \eqref{eq:case3av}, \eqref{eq:higherorderav}, or \eqref{eq:higherorderav2}, with the exception that one $GBG$-factor becomes a $G\widetilde{B}M$; cf.~\eqref{eq:2ndordav}. We can hence follow these estimates, but replace the bound
\begin{equation*}
\sqrt{\sum_{a,b}|(GBG)_{ab}|^2} + \max_{a,b} |(GBG)_{ab}| \lesssim N^{3/2} \sqrt{\frac{\rho}{\eta}} \s^{\rm av}(B) 
\end{equation*}
by 
\begin{equation*}
	\sqrt{\sum_{a,b}|(G\widetilde{B}M)_{ab}|^2} + \max_{a,b} |(G\widetilde{B}M)_{ab}| \lesssim \frac{\eta}{\lambda^2 \rho + \eta} (1 + \beta^{-1})N^{3/2} \sqrt{\frac{\rho}{\eta}} \s^{\rm av}(B)  \,. 
\end{equation*}
Hence, following the estimates in the proof of Proposition \ref{prop:zagavGron}, we find that, by means of a weighted Young inequality, 
\begin{equation*}
\left| \E \big[ \langle M \underline{\lambda WG} \widetilde{B}\rangle \overline{R} |R|^{p-2} \big] \right| \lesssim \big( N^{ 4 \delta }  \s^{\rm av}(B)\big)^p  + N^{-\delta  } \E[|R|^p] \,. 
\end{equation*}

Therefore, combining with \eqref{eq:globalavstart} and \eqref{eq:quadav}, we obtain
\begin{equation*}
\E [|\Phi(B)|^p] \lesssim N^{ 4 \delta p }  + N^{-\delta p } \phi^p \,. 
\end{equation*}
Finally, since $p$ and $\delta$ are arbitrary, we immediately infer \eqref{eq:globaloutcomeav}, similarly to the conclusion of the proof of Lemma \ref{lem:bootstrapiso}, and have thus concluded the proof of Lemma \ref{lem:bootstrapav}. 
\end{proof}

\section{Two resolvent laws: Proof of Propositions \ref{prop:global2}, \ref{prop:zigstep2}, and \ref{prop:zagstep2}} \label{sec:2GLaw}
In this section, we collect the proofs of Propositions \ref{prop:global2}, \ref{prop:zigstep2}, and \ref{prop:zagstep2} regarding the two resolvent laws in Sections \ref{subsec:zigstep2}, \ref{subsec:zagstep2}, and \ref{subsec:global2}, respectively. We recall that throughout the proof we focus on the regime with $\theta(z_1, z_2) = 1$ and consider $A$'s that are regular w.r.t.~the target spectral parameters (see the discussion around \eqref{eq:s2time}). The modifications needed for the complementary case $\theta(z_1, z_2) = 0$ are outlined in Appendix \ref{subsec:regime2}. Moreover, throughout the whole section, we suppose that $\Im z_i > 0$ for simplicity of the presentation. 

\subsection{Characteristic flow: Proof of Proposition \ref{prop:zigstep2}} \label{subsec:zigstep2} 
 In Section~\ref{subsec:zigstep} we studied the evolution of a single resolvent along the characteristic flow. The main goal of this section is to study the evolution of products of two resolvents and deterministic regular observables $A$ (see Definition~\ref{def:regobs}). 
 We conduct the proof in the complex Hermitian case, the obvious modifications in the real symmetric case\footnote{For a detailed treatment, we refer to \cite[Section 4]{cipolloni2023eigenstate}.} are left to the reader. Also, note that it suffices to prove the statement only for fixed $z$'s and $t$'s, since uniformity follows by a standard simple grid argument. 
 
 In the following, we often drop the index $k$ as it remains fixed and introduce $t_{\rm init} := t_{k-1}$ and $t_{\rm fin} := t_{k}$. For $t \in [t_{\rm init}, t_{\rm final}]$, we denote $G_{i,t}:=\big(\mathfrak{F}_{\rm zig}^{t - t_{\rm init}}[H_{k-1}] - z_{i,t}\big)^{-1}$ for $z_{i,t} := \varphi_{t, t_{\rm final}}(z_i)$ with $\varphi$ defined in \eqref{eq:flowmap}, and $M_{i,t}:= M_{\lambda, t}(z_{i,t})$. 
Moreover, for the rest of this section we always use the simpler notation $A=\mathring{A}$. 
We are specifically interest in the evolution of the quantities $\langle G_{1,t}^{(*)}AG_{2,t}^{(*)}\rangle$ and $\langle \Im G_{1,t}A\Im G_{2,t}A^*\rangle$. 
For the sake of brevity, here we only prove a local law for the former and a bound for the latter. However, inspecting the proof is clear that with analogous estimates one can obtain a local law for $\langle \Im G_{1,t}A\Im G_{2,t}A^*\rangle$ as well. In fact, the only slightly different estimate would be the analog of \eqref{eq:diffllaw}; all the rest would remain the same (see \cite{cipolloni2023eigenstate} for the Wigner case). Additionally, below we only consider the evolution of $\langle G_{1,t}AG_{2,t}\rangle$, the proof of the local laws for all other combinations of stars is completely analogous.
For these objects we have the flows (see, e.g., \cite[Section 4]{cipolloni2023eigenstate}) 
\small
\begin{equation}
	\begin{split}
		\label{eq:GAG}
		\dd \langle G_{1,t}AG_{2,t}-M_{12,t}^A\rangle&=\frac{\lambda}{\sqrt{N}}\sum_{ab=1}^N \partial_{ab}\langle G_{1,t}AG_{2,t}\rangle (\dd B_t)_{ab}\\
		&\quad+\langle G_{1,t}AG_{2,t}-M_{12,t}^A\rangle (1 + \lambda^2\langle M_{12,t}^I\rangle) \dd t+\lambda^2\langle M_{12,t}^A\rangle \langle G_{1,t}G_{2,t}-M_{12,t}^I\rangle \dd t \\
		&\quad+\lambda^2\langle G_{1,t}AG_{2,t}-M_{12,t}^A\rangle \langle G_{1,t}G_{2,t}-M_{12,t}^I\rangle \dd t  \\
		&\quad+\lambda^2\langle G_{1,t}-M_{1,t}\rangle \langle G_{1,t}^2AG_{2,t}\rangle \dd t+\lambda^2\langle G_{2,t}-M_{2,t}\rangle \langle G_{1,t}AG_{2,t}^2\rangle \dd t
	\end{split}
\end{equation}
\normalsize
and
\small 
\begin{equation}
	\begin{split}
		\label{eq:imGAimGA}
		\dd \langle \Im G_{1,t}A &\Im G_{2,t}A^*\rangle \\
		&=\lambda \sum_{a,b=1}^N\partial_{ab} \langle \Im G_{1,t}A \Im G_{2,t} A^*\rangle\frac{(\dd B_t)_{ab}}{\sqrt{N}}+\langle \Im G_{1,t} A\Im G_{2,t} A^*\rangle \dd t \\
		&\quad+\lambda^2\left(\frac{\langle \Im G_{1,t} -\Im M_{1,t}\rangle}{\eta_{1,t}}+\frac{\langle \Im G_{2,t} -\Im M_{2,t}\rangle}{\eta_{2,t}}\right)\langle \Im G_{1,t} A\Im G_{2,t} A^*\rangle \dd t \\
		&\quad +\lambda^2\langle G_{2,t}^*A^*G_{1,t}\rangle \langle \Im G_{1,t}A\Im G_{2,t}\rangle \dd t+\lambda^2\langle G_{1,t}^* A G_{2,t} \rangle\langle \Im G_{2,t} A^* \Im G_{1,t}\rangle\dd t\\
		&\quad+\lambda^2\langle \Im G_{1,t} A G_{2,t}\rangle\langle\Im G_{2,t}A^*G_{1,t}\rangle \dd t+\lambda^2\langle G_{2,t}^*A^*\Im G_{1,t}\rangle\langle G_{1,t}^*A\Im G_{2,t}\rangle\dd t \\
		&\quad+\lambda^2\langle G_{1,t}-M_{1,t}\rangle \langle \Im G_{1,t} A\Im G_{2,t} A^* G_{1,t}\rangle\dd t +\lambda^2\langle G_{2,t}-M_{2,t}\rangle \langle \Im G_{2,t} A^*\Im G_{1,t} A G_{2,t}\rangle \dd t\\
		&\quad+\lambda^2\langle G_{1,t}^*-M_{1,t}^*\rangle \langle \Im G_{1,t} A\Im G_{2,t} A^* G_{1,t}^*\rangle \dd t+\lambda^2\langle G_{2,t}^*-M_{2,t}^*\rangle \langle \Im G_{2,t} A^*\Im G_{1,t} A G_{2,t}^*\rangle \dd t.
	\end{split}
\end{equation}
\normalsize

The terms in the rhs. of the two flows \eqref{eq:GAG}--\eqref{eq:imGAimGA} will be estimated together, in fact the bound on $\langle \Im G_{1,t}A\Im G_{2,t}A^*\rangle$ will be used to estimate the rhs. of \eqref{eq:GAG} too. For this reason we introduce the stopping time
\begin{equation}
\label{eq:stoptime}
\begin{split}
\tau:=\inf\bigg\{t \in  [\tinit, \tfin] : &\big|\langle G_{1,t}AG_{2,t}-M_{12,t}^A\rangle\big|\ge N^{2\xi}\sqrt{\frac{\rho_{1,t}\rho_{2,t}}{N \eta_{1,t}\eta_{2,t}}}\s_2(t), \\
  &\langle \Im G_{1,t}A\Im G_{2,t}A^*\rangle\ge N^\xi \rho_{1,t}\rho_{2,t}\s_2(t)^2 \bigg\}
\end{split}
\end{equation}
for $N\ell_t\ge N^\epsilon$ and $\xi\le \epsilon/10$, where we used the notation $\s_2(t)$ from \eqref{eq:s2time}. 

We now start with the estimate of the terms in the rhs. of \eqref{eq:GAG}. In the remainder of the section we always assume $t \in [\tinit, \tfin \wedge \tau]$, i.e., in particular, neglect the $\wedge \tau$ in the writing henceforth. 
We start by estimating the quadratic variation of the stochastic term in \eqref{eq:GAG}:
\begin{equation*}
	\begin{split}
		&\frac{\lambda^2}{N^2\eta_{1,s}^2}\langle \Im G_{1,s}AG_{2,s}\Im G_{1,s}G_{2,s}^*A^*\rangle \dd t+  \frac{\lambda^2}{N^2\eta_{2,s}^2}\langle \Im G_{2,s}A^*G_{1,s}^*\Im G_{2,s}G_{1,s}A\rangle \dd t \\
		&\qquad\quad \lesssim \frac{\lambda^2}{N^2\eta_{1,s}\eta_{2,s}}\left(\frac{1}{\eta_{1,s}^2}+\frac{1}{\eta_{2,s}^2}\right)\langle \Im G_{1,s}A\Im G_{2,s}A^*\rangle \dd t.
	\end{split}
\end{equation*}
By the BDG inequality (see Appendix B.6, Eq.~(18) in \cite{shorack2009empirical}) we thus have 
\begin{equation} \label{eq:BDG}
	\begin{split}
		\sup_{u \in [\tinit, t]} & \left|\sum_{a,b=1}^N\int_{\tinit}^{u}\partial_{ab} \langle  G_{1,s}A  G_{2,s} \rangle\frac{\lambda (\dd B_s)_{ab}}{\sqrt{N}}\right| \\[2mm]
		&\le N^\xi \left(\int_{\tinit}^{t} \frac{\lambda^2}{N^2\eta_{1,s}\eta_{2,u}}\left(\frac{1}{\eta_{1,s}^2}+\frac{1}{\eta_{2,s}^2}\right)\langle \Im G_{1,s}A\Im G_{2,s}A^*\rangle\,\dd s\right)^{1/2} \\[2mm]
		&\lesssim \frac{N^{3\xi/2}\mathfrak{s}_2(t)^{1/2}\sqrt{\rho_{1,t}\rho_{2,t}}}{N\sqrt{\eta_{1,t}\eta_{2,t}\ell_t}} \lesssim 
		\frac{N^{3\xi/2}\mathfrak{s}_2(t)^{1/2}\sqrt{\rho_{1,t}\rho_{2,t}}}{\sqrt{N \eta_{1,t}\eta_{2,t}}}	\,, 
	\end{split}
\end{equation}
where in the last step we used that $N \ell_t \ge 1$. We point out that here we also used that $\s_2(s)\lesssim s_2(t)$ for $s\le t$. This immediately follows from the definition of $\s_2(t)$ in \eqref{eq:defs2} and the fact that for the denominator in $\s_2(t)$ we have
\[
|1- \ee^{s-1} \lambda^2 \Re \langle M_1M_2\rangle| \gtrsim  |1- \ee^{t-1} \lambda^2 \Re \langle M_1M_2\rangle|, 
\]
which easily follows by distinguishing the cases $\Re \langle M_1 M_2 \rangle > 0$ (using additionally that, by definition of the terminal time, $\ee^t \lambda^2 \Re \langle M_1 M_2 \rangle < 1$) and $\Re \langle M_1 M_2 \rangle \le 0$. 
Throughout the remainder of the proof, we will frequently use $\s_2(s)\lesssim s_2(t)$, for $s\le t$, even if not stated explicitly.
  
Similarly, we estimate
\begin{equation*}
	\begin{split}
		\lambda^2\int_{\tinit}^t\langle G_{1,s}-M_{1,s}\rangle \langle G_{1,s}^2AG_{2,s}\rangle \dd s&\le \int_{\tinit}^t \frac{\lambda^2 N^\xi}{N\eta_{1,s}^2\eta_{2,s}^{1/2}}\langle \Im G_{1,s}\rangle^{1/2}\langle \Im G_{1,s}A\Im G_{2,s}A^*\rangle^{1/2}\,\dd s \\
		&\lesssim \int_{\tinit}^t \frac{\lambda^2 N^{3\xi/2} \rho_{1,s}\rho_{2,s}^{1/2} \mathfrak{s}_2(s)}{N\eta_{1,s}^2\eta_{2,s}^{1/2}} \, \dd s \\
		&\lesssim \frac{N^{3\xi/2} \rho_{2,t}^{1/2}\mathfrak{s}_2(t)}{N\eta_{1,t}\eta_{2,t}^{1/2}}\le  \frac{N^{3\xi/2}\mathfrak{s}_2(t)\sqrt{\rho_{1,t}\rho_{2,t}}}{N\sqrt{\eta_{1,t}\eta_{2,t}\ell_t}}\,, 
	\end{split}
\end{equation*}
additionally using a Schwarz inequality and a single resolvent bound $\langle \Im G_{1,s} \rangle \lesssim \rho_{1,s}$ as a consequence from the already established single resolvent local law \eqref{eq:avLL}. 
Using \eqref{eq:Mbounds} to bound the deterministic terms, the last term in the second line of \eqref{eq:GAG} is estimated by 
\begin{equation*}
	\begin{split}
		&\lambda^2\int_{\tinit}^t \sqrt{\frac{\rho_{2,s}}{\eta_{1,s}}\vee \frac{\rho_{1,s}}{\eta_{2,s}}}\sqrt{\rho_{1,s}\rho_{2,s}}\mathfrak{s}_2(s)\cdot \frac{N^\xi}{N\eta_{1,s}\eta_{2,s}}\, \dd s\\
		&\lesssim \frac{N^\xi}{\sqrt{N\eta_{1,t}\eta_{2,t}}}\sqrt{\rho_{1,t}\rho_{2,t}}\s(t)\int_{\tinit}^t \frac{\lambda^2}{\sqrt{N\eta_{1,s}\eta_{2,s}}} \left(\frac{\rho_{2,s}^{1/2}}{\eta_{1,s}^{1/2}}+\frac{\rho_{1,s}^{1/2}}{\eta_{2,s}^{1/2}}\right)\,\dd s \\
		&\lesssim  \frac{N^\xi}{\sqrt{N\eta_{1,t}\eta_{2,t}}}\sqrt{\rho_{1,t}\rho_{2,t}}\s(t)\int_{\tinit}^t\frac{\lambda^2}{\sqrt{N}}\left(\frac{1}{\eta_{1,s}^{3/2}}+\frac{\rho_{2,s}}{\eta_{1,s}^{1/2}\eta_{2,s}}+\frac{\rho_{1,s}}{\eta_{1,s}\eta_{2,s}^{1/2}}+\frac{1}{\eta_{2,s}^{3/2}}\right)\,\dd s \\
		&\le \frac{N^\xi\mathfrak{s}_2(t)\sqrt{\rho_{1,t}\rho_{2,t}}}{N\sqrt{\eta_{1,t}\eta_{2,t}\ell_t}} \lesssim \frac{N^\xi\mathfrak{s}_2(t) \sqrt{\rho_{1,t}\rho_{2,t}}}{\sqrt{N \eta_{1,t}\eta_{2,t}}}
	\end{split}
\end{equation*}
We point out that here we used
\begin{equation}
\label{eq:boundcauch}
\big|\langle G_{1,t}G_{2,t}-M_{12,t}^I\rangle\big|\le \frac{N^\xi}{N\eta_{1,t}\eta_{2,t}},
\end{equation}
which follows by Cauchy integral formula and the single resolvent local law\footnote{More precisely, using Cauchy integral formula we can write (see, e.g., \cite[Eq. (3.14)]{cipolloni2022optimal})
\[
\langle G_{1,t}G_{2,t}-M_{12,t}^I\rangle=\int_\R \langle \Im G_t(x+\ii\zeta)-\Im M_t(x+\ii\zeta)\rangle\prod_{i=1}^2\frac{1}{x-z_{i,t}+\ii\mathrm{sgn}(z_{i,t})\zeta}\, \dd x,
\]
for any positive $\zeta$ with $\zeta<\min_i |\Im z_{i,t}|$. Then, choosing $\zeta:=\frac{1}{2}\min_i |\Im z_{i,t}|$ and using the single resolvent local law \eqref{eq:avLL} (we point out that \eqref{eq:avLL} can easily be extended to any $\eta>0$ by a simple argument as in \cite[Appendix A]{cipolloni2021edge}) in this integral representation we immediately obtain \eqref{eq:boundcauch}.}.  (Alternatively this can be proven using the flow as well.)
Proceeding similarly we estimate the term in the third line of \eqref{eq:GAG} by
\begin{equation*}
	\lambda^2\int_{\tinit}^t\frac{N^{2\xi}}{\sqrt{N \eta_{1,s}\eta_{2,s}}}\sqrt{\rho_{1,s}\rho_{2,s}}\s_2(s)\cdot \frac{N^\xi}{N\eta_{1,s}\eta_{2,s}}\,\dd s\lesssim \frac{N^{3\xi}}{N\ell_t}\cdot \frac{N^{2\xi}}{\sqrt{N \eta_{1,t}\eta_{2,t}}}\sqrt{\rho_{1,t}\rho_{2,t}}\s_2(t).
\end{equation*}

Next, we denote $X_t:=\langle G_{1,t}AG_{2,t}-M_{12,t}^A\rangle$ and bound
\begin{equation*}
	\begin{split}
	|X_t|&\le |X_{\tinit}|+\int_{\tinit}^t \frac{1}{|1-\lambda^2\langle M_{1,s}M_{2,s}\rangle|}|X_s|\,\dd s+\frac{N^{5\xi/4}\mathfrak{s}_2(t)\sqrt{\rho_{1,t}\rho_{2,t}}}{\sqrt{N\eta_{1,t}\eta_{2,t}}} \\
	&\le \int_{\tinit}^t \frac{1}{|1-\lambda^2\langle M_{1,s}M_{2,s}\rangle|}|X_s|\,\dd s+\frac{N^{3\xi/2}\mathfrak{s}_2(t)\sqrt{\rho_{1,t}\rho_{2,t}}}{\sqrt{N\eta_{1,t}\eta_{2,t}}} \,,
	\end{split} 
\end{equation*}
where we used the explicit formula (following from \eqref{eq:m12} with $A \to I$)
\begin{equation*}
1 + \lambda^2 \langle M_{12,s}^I \rangle = \frac{1}{1 - \lambda^2 \langle M_{1,s} M_{2,s} \rangle}
\end{equation*}
for the first term in the second line of \eqref{eq:GAG} and employed 
\begin{equation*}
\big|X_{\tinit}\big|\lesssim \frac{N^\xi\sqrt{\rho_{1,{\tinit}}\rho_{2,{\tinit}}}}{\sqrt{N \eta_{1,{\tinit}}\eta_{2,{\tinit}}}} \mathfrak{s}_2({\tinit}) \lesssim  \frac{N^\xi\mathfrak{s}_2(t)\sqrt{\rho_{1,t}\rho_{2,t}}}{\sqrt{N\eta_{1,t}\eta_{2,t}}}
\end{equation*}
as follows by elementary monotonicity properties. 
Then, using a Gronwall inequality we obtain
\begin{equation}
	\label{eq:calerr}
	\begin{split}
	|X_t|&\lesssim \int_{\tinit}^{t}      \frac{N^{3\xi/2}\sqrt{\rho_{1,s}\rho_{2,s}}}{\sqrt{N \eta_{1,s}\eta_{2,s}}} \mathfrak{s}_2(s)   \sqrt{\frac{\eta_{1,{s}}\eta_{2,{s}}}{\eta_{1,t}\eta_{2,t}}} \left(1 + \lambda^2 \left(\frac{\rho_{1,s} }{\eta_{1,s} } + \frac{\rho_{2,s} }{\eta_{2,s} }\right)\right) \, \dd s+ \frac{N^{3\xi/2}\sqrt{\rho_{1,t}\rho_{2,t}}}{\sqrt{N \eta_{1,t}\eta_{2,t}}} \mathfrak{s}_2(t) \\
	&\lesssim \frac{N^{7\xi/4}\sqrt{\rho_{1,t}\rho_{2,t}}}{\sqrt{N \eta_{1,t}\eta_{2,t}}} \mathfrak{s}_2(t)
	\end{split}
\end{equation}
	where  we used that
	\[
\left| \frac{1}{1-\lambda^2\langle M_{1,s}M_{2,s}\rangle}\right| \lesssim  \left(1 + \lambda^2 \left(\frac{\rho_{1,s} }{\eta_{1,s} } + \frac{\rho_{2,s} }{\eta_{2,s} }\right)\right) \quad \text{and} \quad \exp\left( \int_{s}^{t}	\left| \frac{1}{1-\lambda^2\langle M_{1,u}M_{2,u}\rangle}\right|  \dd u\right) \lesssim \sqrt{\frac{\eta_{1,{s}}\eta_{2,{s}}}{\eta_{1,t}\eta_{2,t}}}\,, 
	\]
as follow by explicit computations similarly to \cite[Lemma 5.2, Eq.~(5.13a)]{cipolloni2024eigenvector}, see also \eqref{eq:beta12bound} below. 

	We now turn to the estimate of the rhs.~of \eqref{eq:imGAimGA}. The last term in the second line and the terms in the third line of \eqref{eq:imGAimGA} can be easily seen to be negligible.  Throughout we assume that $\langle G_{1,t}^{(*)}AG_{2,t}^{(*)}\rangle$ can be estimated by the corresponding $M$-term as a consequence of \eqref{eq:Mbounds} and the local law we just proved above. We first focus on the terms in the fourth and fifth line of \eqref{eq:imGAimGA}. For these terms, using \eqref{eq:Mbounds},  we estimate
\begin{equation}
	\label{eq:diffllaw}
		\int_{\tinit}^t \mathrm{fourth \,\, and \,\, fifth \,\, lines \,\, \eqref{eq:imGAimGA}} \, \dd s\le \lambda^2\int_{\tinit}^t \left(\frac{\rho_{1,s}\rho_{2,s}}{\eta_{1,s}\eta_{2,s}}\right)^{1/2}\mathfrak{s}_2(s)^2\rho_{1,s}\rho_{2,s} \, \dd s\lesssim \log N \mathfrak{s}_2(t)^2\rho_{1,t}\rho_{2,t}.
	\end{equation}
	We point out that here we used that while the bounds in \eqref{eq:Mbounds} are asymmetric in the indices $1$ and $2$, the terms appearing in the third and fourth line of \eqref{eq:imGAimGA} are so that the product of the bounds is symmetric. Next, we estimate
	\begin{equation*}
		\int_{\tinit}^t \mathrm{sixth \,\, and \,\, seventh \,\, lines \,\, \eqref{eq:imGAimGA}} \,\,\dd s\le \int_{\tinit}^t\frac{\lambda^2N^{2\xi}}{N}\left(\frac{1}{\eta_{1,s}^2}+\frac{1}{\eta_{2,s}^2}\right)\rho_{1,s}\rho_{2,s}\mathfrak{s}_2(s)^2 \, \dd s\lesssim \frac{N^{2\xi}\rho_{1,t}\rho_{2,t}}{N\ell_t}\mathfrak{s}_2(t)^2,
	\end{equation*}
	where we used that by a Schwarz inequality we have
	\[
	\begin{split}
		\big| \langle \Im G_{1,s} A\Im G_{2,s} A^* G_{1,s}\rangle\big|&\le\frac{1}{\eta_{1,s}}\langle \Im G_{1,s} A\Im G_{2,s} A^*\rangle, \\
		\big| \langle \Im G_{2,s} A\Im G_{1,s} A^* G_{2,s}\rangle\big|&\le\frac{1}{\eta_{2,s}}\langle \Im G_{1,s} A\Im G_{2,s} A^*\rangle.
	\end{split}
	\]
	By using again the BDG inequality, as in \eqref{eq:BDG}, we conclude with the estimate of the stochastic term in \eqref{eq:imGAimGA}:
	\begin{equation*}
		\begin{split}
			&\sup_{u \in [\tinit , t]}\left|\sum_{a,b=1}^N\int_{\tinit}^u\partial_{ab} \langle \Im G_{1,s}A \Im G_{2,s} A^*\rangle\frac{\lambda (\dd B_s)_{ab}}{\sqrt{N}}\right| \\
			&\le N^{\xi/2} \left(\int_{\tinit}^t\frac{\lambda^2}{N^2}\left(\frac{1}{\eta_{1,s}^2}+\frac{1}{\eta_{2,s}^2}\right)\langle \Im G_{1,s}A\Im G_{2,s}A^*\Im G_{1,s}A\Im G_{2,s} A^*\rangle\, \dd s\right)^{1/2} \\
			&\le  N^{\xi/2}  \left(\int_{\tinit}^t\frac{\lambda^2}{N}\left(\frac{1}{\eta_{1,s}^2}+\frac{1}{\eta_{2,s}^2}\right)\langle \Im G_{1,s}A\Im G_{2,s}A^*\rangle^2\, \dd s\right)^{1/2} \\
			&\lesssim  N^{\xi/2} \frac{\rho_{1,t}\rho_{2,t}}{\sqrt{N\ell_t}}\mathfrak{s}_2(t)^2,
		\end{split}
	\end{equation*}
	where in the second inequality we used the \emph{reduction inequality} 
	\[
	\langle \Im G_{1,s}A\Im G_{2,s}A^*\Im G_{1,s}A\Im G_{2,s}A^* \rangle\le N\langle \Im G_{1,s}A\Im G_{2,s}A^*\rangle^2
	\]
	that relies on an elementary trace inequality using operator positivity of $\sqrt{\Im G_{1,s}} A \Im G_{2,s} A^* \sqrt{\Im G_{1,s}}$. 
	
	 Finally, integrating \eqref{eq:imGAimGA} in time, and using the estimate
	\begin{equation*}
\langle \Im G_{1,\tinit}A\Im G_{2,\tinit}A^*\rangle\lesssim N^{\xi/2}\rho_{1,0}\rho_{2,\tinit} \mathfrak{s}_2(\tinit)^2\lesssim N^{\xi/2}\rho_{1,t}\rho_{2,t} \mathfrak{s}_2(t)^2,
	\end{equation*}
for the initial condition we conclude that
\begin{equation} \label{eq:GAGAfinal}
\langle \Im G_{1,t}A\Im G_{2,t}A^*\rangle \le N^{\xi/2} \rho_{1,t}\rho_{2,t} \mathfrak{s}_2(t)^2\,, 
\end{equation}
additionally using that $\xi\le \epsilon/10$. 

By combining \eqref{eq:calerr} and \eqref{eq:GAGAfinal}, and recalling the definition of the stopping time \eqref{eq:stoptime}, we conclude that $\tau =  \tfin$, which finishes the proof of Proposition \ref{prop:zigstep2}, similarly to the proof of Proposition \ref{prop:zigstep}. \qed
\subsection{Green function comparison: Proof of Proposition \ref{prop:zagstep2}} \label{subsec:zagstep2}  
The goal of this section is to give the proof of Proposition \ref{prop:zagstep2}, the \emph{zag} step, i.e., remove the Gaussian component introduced during the \emph{zig} step in Proposition \ref{prop:zigstep2} by a \emph{Green function comparison} (GFT) approach. Since throughout the argument, the time $t_k$ defined in \eqref{eq:tkdef} remains fixed, we shall henceforth drop the subscript/argument $t_k$, and, additionally,  abbreviate  $s_{\rm final} := s_k$. Recall that the spectral parameters remain fixed in the zag step, and also the deterministic approximation $M$ is kept unchanged. 

Contrary to earlier GFTs, in this paper, we do \emph{not} involve isotropic resolvent chains of the form $(GAG)_{ab}$ but conclude the \emph{zag} step self-consistently within the two types of average quantities $\langle GAG\rangle$ and $\langle \Im GA \Im GA^* \rangle$. This conceptual novelty of our argument considerably simplifies the entire proof, since tracking isotropic resolvent chains now becomes redundant in both the zig and zag step. It is possible to conduct such a \emph{non-isotropic zag step}, mainly due to three reasons: First, we consider only short chains; the tricks applied in the proofs below will at least become much more tedious for longer average chains, and might even fail entirely. Second, the main object of interest, $\langle  \Im G A \Im GA^* \rangle$ is inherently positive,\footnote{Even as an operator, after using cyclicity of the trace.} and our argument below crucially uses the positivity of $\Im G$ itself in various Schwarz estimates, as the ones collected in Lemma \ref{lem:Schwarz} below. Third, our fundamental control quantity $\s_2$ measures the observables only in terms of (a non-mean field version of) the Hilbert-Schmidt norm. As is known from the Wigner case \cite{cipolloni2023eigenstate, cipolloni2022rank, cipolloni2024out}, local law estimates involving the Hilbert-Schmidt norm of observables are suboptimal in terms of the $N \ell$-power (compared to operator norm or finer Schatten norms). Hence, in our estimates we can be ''generous'' with such $N\ell$-powers. 

Our proof of Proposition \ref{prop:zagstep2} is based on the following two Gronwall estimates, formulated in Propositions \ref{prop:zagavGron2G}--\ref{prop:zagavGron2G1A}. Their proofs are given in Sections \ref{subsubsec:zagav2G} and \ref{subsubsec:zagav2G1A}, respectively. 
\begin{proposition}[Gronwall estimate for $\langle \Im G A \Im GA^*\rangle$] \label{prop:zagavGron2G}
	Adopt the setting and notations from Proposition \ref{prop:zagstep2} and, for $s \in [0, \sfin]$, define
	\begin{equation} \label{eq:Rsdef}
		R_s := \big\langle \Im G^s_1 A \Im G^s_2A^*  -  \widehat{M}_{12}^A \big\rangle \,. 
	\end{equation}
	Then, for any large (even) $p\in \N$ and arbitrarily small $\delta > 0$ it holds that
	\begin{equation} \label{eq:avGron2G}
		\left| \frac{\dd }{\dd s} \E |R_s|^p \right| \lesssim \left(1 +   N^{-2\delta} \lambda^2 \max_i\big(\rho_i \eta_i^{-1}\big) \right) \, \left[\E |R_s|^p + \left(N^{3\delta} \frac{\rho_1 \rho_2}{\sqrt{N \ell}} \big(\s_2\big)^2\right)^p\right]
	\end{equation}
	uniformly in $s \in [0,\sfin]$ and  $(z_1, z_2)$-regular $A \in \C^{N \times N}$. 
\end{proposition}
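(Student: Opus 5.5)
We differentiate $\E|R_s|^p$ along the zag flow \eqref{eq:zagflow} and perform a cumulant expansion exactly as in \eqref{eq:cumexbasicav}. Since the zag flow preserves the first two moments, the second order terms cancel. We are left with the terms of order $k=3,\dots,L$, i.e.\ $\lambda^k N^{-k/2}\sum_{a,b}\E[\partial_{ab}^l\partial_{ba}^{k-l}|R_s|^p]$, plus a truncation error. The error is made smaller than $(\rho_1\rho_2(N\ell)^{-1/2}\s_2^2)^p$ by choosing $L$ as in \eqref{eq:Ldef}, using the trivial bounds $\|G\|\le\eta^{-1}$ and \eqref{eq:kappa}. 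Each derivative $\partial_{ab}$ acting on $\langle\Im G_1A\Im G_2A^*\rangle$ produces $N^{-1}$ times an entry of a chain: $(G_1^{(*)}A\Im G_2A^*\Im G_1)_{ba}$ or similar, with one extra resolvent inserted. Further derivatives split these chains into entries of shorter chains and single resolvent entries $G_{aa},G_{ab}$. The single resolvent entries we control by the already established isotropic law \eqref{eq:isoLL} (Theorem~\ref{theo:main1}), as in \eqref{eq:case1ex1av}; in particular $|G_{aa}|\lesssim(|M|)_{aa}$ and Lemma~\ref{lem:Mmax1}.

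The key point, as announced, is that we never need an isotropic law for chains. Every chain entry is reduced to averaged positive quantities by Schwarz inequalities in the positivity of $\Im G$, using $\Im G=\sqrt{\Im G}\sqrt{\Im G}$. For example,
\[
\big|(\Im G_1 A\Im G_2A^*\Im G_1)_{ab}\big|\le\big[(\Im G_1A\Im G_2A^*\Im G_1)_{aa}(\Im G_1A\Im G_2A^*\Im G_1)_{bb}\big]^{1/2}.
\]
Summing over $a$ turns diagonal entries back into traces like $N\langle \Im G_1 A\Im G_2A^*\Im G_1\rangle\le N\eta_1^{-1}\langle\Im G_1A\Im G_2A^*\rangle$. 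Sums of squares of off-diagonal entries become traces through Ward identities, and these are again bounded by $\langle \Im G_1A\Im G_2A^*\rangle$ times powers of $\eta_i^{-1}$, using the reduction inequality from Section~\ref{subsec:zigstep2}. Chains with non-imaginary resolvents, such as $G_1A\Im G_2$, are handled by $|G|^2=\Im G/\eta$. I would collect these reductions in a Schwarz lemma (the Lemma~\ref{lem:Schwarz} alluded to). Then every term is bounded by a power of $\lambda^2\rho_i/\eta_i$, times powers of $N^{-1/2}$, times
\[
\langle\Im G_1A\Im G_2A^*\rangle\le |R_s|+|\langle\widehat M^A_{12}\rangle|\lesssim |R_s|+\rho_1\rho_2\s_2^2,
\]
where the last step uses the final bound in Lemma~\ref{lem:m12bounds}. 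The factors of $\eta^{-1}$ are traded against $N^{-1/2}$ and $\rho$ via $N\eta_i\rho_i\ge N^{\epsilon}$ and Lemma~\ref{lem:Mbounds}. Since this substitution feeds $|R_s|$ back in, the resulting self-consistent bounds are closed by Young inequalities, which yields the Gronwall form \eqref{eq:avGron2G}.

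The counting per order then goes as in the single resolvent zag step. At order $k=3$ we consider the constellations (i) $\partial^3R$, (ii) $\partial R\,\partial^2R$, (iii) $(\partial R)^3$. In (i) the trace $N^{-1}$ and one off-diagonal or diagonal $G$ summation gain $\lambda\sqrt{\rho/\eta}$ relative to $\rho_1\rho_2\s_2^2$. In (ii)–(iii) each $\partial R$ is of size at most $N^{-1}\eta^{-1}$ times $\langle\Im G_1A\Im G_2A^*\rangle$ in the Schwarz sense, so we gain $1/\sqrt{N\ell}$ per extra factor. The prefactor $1+N^{-2\delta}\lambda^2\max_i\rho_i/\eta_i$ absorbs the $\lambda^2\rho/\eta$ losses, using $\eta_0/\eta_1\le N^{\delta}$ and the monotonicity in Lemma~\ref{lem:monotone}. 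For $k\ge4$, surplus $G$ entries are estimated by $(\lambda^2\rho)^{-1}$ and yield factors $(\sqrt N\lambda\rho)^{-1}\le N^{-\epsilon/2}$. We may be generous with $N\ell$ powers because the target error is only $(N\ell)^{-1/2}$.

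The main obstacle is the order-three case (i). There all three derivatives hit a single $R$, producing terms like $N^{-1}\sum_{ab}(G_1A\Im G_2A^*\Im G_1)_{ab}G_{ba}G_{ab}$ or with diagonal $G_{aa}G_{bb}$. A naive Schwarz estimate there loses a full $\sqrt{N}$ or a factor $\eta^{-1/2}$. For the fully diagonal term I would first try to write the diagonal resolvent entries as $M_{aa}+(G-M)_{aa}$ as in \eqref{eq:case1ex2av}. The $M$ part is then summed against the chain by extending the summation as in \eqref{eq:extendsum}, giving $\langle |M|^2\rangle\lesssim\rho/(\lambda^2\rho+\eta)$. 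The fluctuation part gains $\sqrt{\rho/(N\eta)}$ from \eqref{eq:isoLL}, with the $\Gamma$-weights $\sum_a(\Im M)_{aa}/\rho=N$ keeping the sums effective.
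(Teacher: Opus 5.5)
Your proposal is correct and follows essentially the paper's own proof. The steps match: a cumulant expansion in which the second-order terms cancel along the zag flow, and lone entries $G_{aa},G_{ab}$ controlled by the already established single-resolvent isotropic law. Chain entries are reduced by positivity-based Schwarz inequalities (Lemma~\ref{lem:Schwarz}, the reduction inequality, Ward identities and summation extension) to $\langle \Im G_1A\Im G_2A^*\rangle\le |R_s|+\rho_1\rho_2\s_2^2$. The argument closes with Young inequalities, organized by the order $k$ and the number of differentiated factors.

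Only minor remarks. The fully diagonal order-three term is not a real obstacle, since $|G_{aa}|\lesssim (|M|)_{aa}$ plus summation extension already suffices, as in \eqref{eq:n12}. No $\eta_0$ or Lemma~\ref{lem:monotone} is needed, because the single-resolvent law holds directly at $\eta_i$ for all $s$. In your cases (ii)--(iii) and for $k\ge 4$ you need, besides the Ward ($\ell^2$) bound, the pointwise bound $|\partial_{ab}\langle \Im G_1A\Im G_2A^*\rangle|\lesssim (\lambda^2\ell)^{-1/2}\langle \Im G_1A\Im G_2A^*\rangle$; it follows from bounding $\|\sqrt{\Im G_1}A\Im G_2A^*\sqrt{\Im G_1}\|$ by its trace.
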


\begin{proposition}[Gronwall estimate for $\langle  G A G\rangle$] \label{prop:zagavGron2G1A}
	Adopt the setting and notations from Proposition \ref{prop:zagstep2} and, for $s \in [0, \sfin]$, define
	\begin{equation} \label{eq:Sdef}
		S_s := \big\langle  G^s_1 A G^s_2  -  M_{12}^A \big\rangle \,. 
	\end{equation}
	Assume that $R_s$ from \eqref{eq:Rsdef} satisfies $|R_s| \lesssim \rho_1 \rho_2 \big(\s_2\big)^2$ with very high probability, uniformly in $s \in [0, \sfin]$. 
	Then for any large (even) $p\in \N$ and arbitrarily small $\delta > 0$, it holds that
	\begin{equation} \label{eq:avGron2G1A}
		\left| \frac{\dd }{\dd s} \E |S_s|^p \right| \lesssim \left(1 +   N^{-2\delta} \lambda^2 \max_i\big(\rho_i \eta_i^{-1}\big) \right) \, \left[\E |S_s|^p + \left(N^{3\delta} \sqrt{\frac{\rho_1 \rho_2}{N \eta_1 \eta_2 }} \s_2 \right)^p\right]
	\end{equation}
	uniformly in $s \in [0,\sfin]$ and $(z_1, z_2)$-regular $A \in \C^{N \times N}$. 
	
	The same statement holds with $G^s_1$ and/or $G^s_2$ replaced by their adjoints $(G^s_1)^*$ and $(G^s_2)^*$, respectively. 
\end{proposition}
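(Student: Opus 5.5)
The plan is to run the same cumulant-expansion Gronwall scheme as in Propositions \ref{prop:zagisoGron}--\ref{prop:zagavGron}, now applied to the averaged two-resolvent quantity $S_s$. Along the zag flow \eqref{eq:zagflow} the first two moments are preserved, so $M$ and $M_{12}^A$ do not depend on $s$. Itô's formula combined with the cumulant expansion (cf.~\eqref{eq:cumexbasicav}) then cancels the second-order terms exactly. We are left with
\[
\Big|\frac{\dd}{\dd s}\E|S_s|^p\Big|\lesssim \sum_{k=3}^L \frac{\lambda^k}{N^{k/2}}\sum_{l=0}^k\Big|\sum_{a,b}\E\big[\partial_{ab}^l\partial_{ba}^{k-l}|S_s|^p\big]\Big| + \mathcal{O}\Big(\Big(\sqrt{\tfrac{\rho_1\rho_2}{N\eta_1\eta_2}}\s_2\Big)^p\Big),
\]
where $L$ is chosen as in \eqref{eq:Ldef} to make the truncation error negligible. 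Each derivative of $\langle G_1AG_2\rangle$ produces entries of the chains $(G_1AG_2G_1)_{ab}$ or $(G_2G_1AG_2)_{ab}$, together with single-resolvent entries $(G_i)_{ab}$. The single-resolvent entries are controlled by the already established isotropic law of Theorem \ref{theo:main1} together with Lemmas \ref{lem:Mbounds} and \ref{lem:Mmax1}. This gives $|(G_i)_{aa}|\lesssim (|M_i|)_{aa}$ and off-diagonal bounds in terms of $\Gamma$.

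The key point, following the paper's announced strategy, is to avoid isotropic two-resolvent laws. Every two-resolvent chain entry will be bounded by Schwarz inequalities that reduce it to $\langle \Im G_1 A\Im G_2A^*\rangle$, which by hypothesis is $\lesssim\rho_1\rho_2\s_2^2$ with very high probability (since $|R_s|\lesssim\rho_1\rho_2\s_2^2$ and the last bound in \eqref{eq:Mbounds} holds). Concretely, Ward identities give
\[
\sum_{a,b}|(G_1AG_2G_1)_{ab}|^2 = N\langle G_1AG_2G_1G_1^*G_2^*A^*G_1^*\rangle \le \frac{N}{\eta_1^3\eta_2}\langle \Im G_1A\Im G_2A^*\rangle,
\]
and analogous bounds hold for the max-entry and diagonal sums. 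These yield $\sum_{a,b}|(G_1AG_2G_1)_{ab}|^2 \lesssim N\rho_1\rho_2\s_2^2/(\eta_1^3\eta_2)$ and a trivial entrywise bound akin to \eqref{eq:GBGtriv}. With these in hand, the third-order terms go case by case, mirroring \eqref{eq:case1ex1av}--\eqref{eq:case3av}: one $\partial^3$ on one factor, $\partial\,\partial^2$, and $(\partial)^3$. Summation over the free index is extended using \eqref{eq:extendsum} and $\langle|M|^2\rangle\lesssim\rho/(\lambda^2\rho+\eta)$. The higher orders $k\ge4$ follow \eqref{eq:higherorderav}--\eqref{eq:higherorderav2}: each additional off-diagonal $G$ entry costs $(\sqrt N\lambda\rho)^{-1}\le N^{-\epsilon/2}$. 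Each term should come out bounded by $N^{-2\delta}\lambda^2\max_i(\rho_i/\eta_i)$ times a power $n$ of the target size $N^{3\delta}\sqrt{\rho_1\rho_2/(N\eta_1\eta_2)}\,\s_2$. Young's inequality then produces \eqref{eq:avGron2G1A}. The adjoint cases are identical, since the estimates ignore complex conjugation.

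I expect the main obstacle to be the third-order terms where $\partial^3$ hits a single $S$, for instance $\lambda^3N^{-5/2}\sum_{a,b}(G_2G_1AG_2)_{aa}(G_1)_{bb}(G_2)_{ab}$ or $(G_1AG_2G_1)_{ab}(G_1)_{ab}(G_2)_{ba}$-type terms. Here the Schwarz reduction to $\langle\Im G_1A\Im G_2A^*\rangle$ loses a factor $(\eta_1\eta_2)^{-1/2}$ relative to $\sqrt{\rho_1\rho_2}$, and this must be compensated by the $\lambda^3N^{-3/2}$ prefactor and the off-diagonal decay. My first attempt will split $(G_i)_{ab}=M_{ab}+(G-M)_{ab}$ as in \eqref{eq:case1ex2av}. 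I will use the isotropic law on the fluctuation, and on the $M$ part use Lemma \ref{lem:Mmax1} with $\sum_b (\Im M)_{bb}/\rho\lesssim N$. I will also use the generous $N\ell$ room, since \eqref{eq:2G1A} is not optimal in $N\ell$. This should absorb the loss into $\lambda\sqrt{\rho_i/\eta_i}\le 1+\lambda^2\rho_i/\eta_i$, which explains the prefactor in \eqref{eq:avGron2G1A}. If the asymmetric $\rho$'s do not balance, I will instead pair the two chain entries so that the product bound becomes symmetric, as was done in \eqref{eq:diffllaw}.
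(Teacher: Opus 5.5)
Your proposal is correct and follows the same route as the paper. The shared steps are:
\begin{itemize}
\item a cumulant expansion in which the second-order terms cancel along the zag flow;
\item a Schwarz--Ward reduction of every isotropic chain entry to $\langle \Im G_1 A \Im G_2 A^*\rangle$, which is controlled by the hypothesis on $R_s$ together with the last bound in \eqref{eq:Mbounds};
\item the single-resolvent isotropic law with summation extension for the remaining entries;
\item effective $a,b$-summation at higher orders, followed by Young's inequality.
\end{itemize}
The one imprecision is in your bookkeeping. Further derivatives also split chains into factors such as $(G_1AG_2)_{ab}$ and $(G_2G_1)_{ab}$ (cf.~\eqref{eq:n13A}, \eqref{eq:k3n2A}), so the cost per derivative, including the $\lambda/\sqrt{N}$ cumulant prefactor, is $(N\eta\rho)^{-1/2}$ as in \eqref{eq:morederivA}, not $(\sqrt{N}\lambda\rho)^{-1}$. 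This is still small in the domain, and your Schwarz--Ward toolkit handles these factors in exactly the same way.
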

\begin{proof}[Proof of Proposition \ref{prop:zagstep2}]
Armed with Propositions \ref{prop:zagavGron2G}--\ref{prop:zagavGron2G1A}, we can apply Gronwall's lemma (using the notation $\sfin = s_k$): First, from Proposition \ref{prop:zagavGron2G}, uniformly in $s \in [0, \sfin]$, we find that, analogously to \eqref{eq:integrateisoGron} and \eqref{eq:integrateavGron}, 
\begin{equation} \label{eq:integrate2G}
	\begin{split}
		\E |R_s|^p &\lesssim \exp\left[ \bigg(1 + N^{-2 \delta} \lambda^2 \max_i (\rho_i \eta_i^{-1})\bigg) (\sfin - s)\right] \, \left[\E |R_{\sfin}|^p + \left(N^{3\delta} \frac{\rho_1 \rho_2}{\sqrt{N \ell}} \big(\s_2\big)^2\right)^p\right] \\
		&\lesssim \exp(N^{-2 \delta}) \left[\E |R_{\sfin}|^p + \left(N^{3\delta} \frac{\rho_1 \rho_2}{\sqrt{N \ell}} \big(\s_2\big)^2\right)^p\right] \lesssim \left[\E |R_{\sfin}|^p + \left(N^{3\delta} \frac{\rho_1 \rho_2}{\sqrt{N \ell}} \big(\s_2\big)^2\right)^p\right]
	\end{split}
\end{equation}
where we used that $\lambda^2 \max_i (\rho_i \eta_i^{-1}) \lesssim N^{k \delta}$ by \eqref{eq:tkdef} and $\sfin \lesssim N^{-(k-1)\delta}$ by \eqref{eq:dtkdef}. 

To estimate $\E |R_{\sfin}|^p$, we recall that the $G_i^s$'s satisfy the two resolvent law \eqref{eq:2G2Azag} uniformly in $(z_1, z_2)$-regular $A$'s at $s = \sfin$. Hence, because $p$ and $\delta$ were arbitrary, we deduce that \eqref{eq:2G2Azag} holds uniformly in $(z_1, z_2)$-regular $A$ and times $s \in [0, \sfin]$. 

Second, now that we have established $|R_s| \lesssim \rho_1 \rho_2 \big(\s_2\big)^2$ with very high probability, uniformly in $s \in [0, \sfin]$, we can employ Proposition \ref{prop:zagavGron2G1A}. Indeed, uniformly in $s \in [0, \sfin]$, we find that, analogously to \eqref{eq:integrate2G}, 
\begin{equation*} 
		\E |S_s|^p  \lesssim \left[\E |S_{\sfin}|^p +\left(N^{3\delta} \sqrt{\frac{\rho_1 \rho_2}{N \eta_1 \eta_2}} \s_2 \right)^p\right] \,. 
\end{equation*}

To estimate $\E |S_{\sfin}|^p$, we recall that the $G_i^s$'s satisfy the two resolvent law \eqref{eq:2G1Azag} uniformly in $(z_1, z_2)$-regular $A$'s at $s = \sfin$. Hence, because $p$ and $\delta$ were arbitrary, we deduce that \eqref{eq:2G1Azag} holds uniformly in $(z_1, z_2)$-regular $A$ and times $s \in [0, \sfin]$. This finishes the proof of Proposition \ref{prop:zagstep2}. 
\end{proof}
Before going into the proofs of Propositions \ref{prop:zagavGron2G}--\ref{prop:zagavGron2G1A}, we collect a few bounds on isotropic resolvent chains, obtained by ''clever'' Schwarz inequalities, that shall be frequently used below. 
\begin{lemma}[Schwarz for isotropic chains] \label{lem:Schwarz} Dropping the $s$-superscript for notational simplicity, for any $\bm x, \bm y \in \C^N$, it holds that
\begin{align}
	\big(G_1 A G_2 \big)_{\bm x \bm y} &\le \Vert \bm y \Vert \sqrt{\big(G_1 A \Im G_2 A^* G_1^*\big)_{\bm x \bm x}} \eta^{-1/2}_2 \wedge \Vert \bm x \Vert \sqrt{\big(G_2 A \Im G_1 A^* G_2^*\big)_{\bm y \bm y}} \eta^{-1/2}_1 \label{eq:iso1Im0}\\
	&\prec \Vert \bm x \Vert \, \Vert \bm y \Vert \, \frac{\sqrt{N}}{\eta} \sqrt{\rho_1 \rho_2} \, \s_2\,,  \nonumber\\
	\big(\Im G_1 A G_2\big)_{\bm x \bm y} &\prec \sqrt{(\Im M_1)_{\bm x \bm x}} \sqrt{\big(G_2^* A^* \Im G_1 A G_2\big)_{\bm y \bm y}}  \label{eq:iso1Im1} \prec \Vert \bm y \Vert \, \sqrt{\frac{N}{\eta_2}} \sqrt{(\Im M_1)_{\bm x \bm x}} \, \sqrt{\rho_1 \rho_2} \, \s_2\,,  \\
	\big(\Im G_1 A \Im G_2\big)_{\bm x \bm y} &\prec \sqrt{(\Im M_1)_{\bm x \bm x}} \sqrt{\big(\Im G_2 A^* \Im G_1 A \Im G_2\big)_{\bm y \bm y}}  \nonumber \\
	&\prec  \sqrt{N} \sqrt{(\Im M_1)_{\bm x \bm x}(\Im M_2)_{\bm y \bm y}} \, \sqrt{\rho_1 \rho_2} \, \s_2 \,, \label{eq:iso1Im2}
\end{align}
as well as
\begin{align}
	\big(G_1 A \Im G_2 A^* G_1^*\big)_{\bm x \bm x} &\le \Vert \bm x \Vert \, \sqrt{\big( G_1 A \Im G_2 A^* \Im G_1 A \Im G_2 A^* G_1^*\big)_{\bm x \bm x}} \eta_1^{-1/2}\label{eq:iso2Im1} \\
	&\prec \Vert \bm x \Vert \, \sqrt{\frac{N}{\eta_1}} \sqrt{\rho_1 \rho_2} \s_2^{\rm av}(z_1, z_2;A) \sqrt{\big(G_1 A \Im G_2 A^* G_1^*\big)_{\bm x \bm x}}  \prec \Vert \bm x \Vert^2 \frac{N}{\eta_1} \rho_1 \rho_2 \big(\s_2\big)^2 \,,  \nonumber \\
	\big(\Im G_1 A \Im G_2 A^* \Im G_1\big)_{\bm x \bm x} &\prec N (\Im M_1)_{\bm x \bm x} \rho_1 \rho_2 \big(\s_2\big)^2 \,. \label{eq:iso2Im3}
\end{align}
\end{lemma}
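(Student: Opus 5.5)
The plan is to derive every bound in Lemma~\ref{lem:Schwarz} from two ingredients: Schwarz inequalities that use the positivity of $\Im G_i$, and a priori bounds available inside the zag step. Those a priori bounds are the single resolvent isotropic law from Theorem~\ref{theo:main1}, giving $(\Im G_i)_{\bm x\bm x}\prec(\Im M_i)_{\bm x\bm x}$, and the averaged bound $\langle \Im G_1 A\Im G_2 A^*\rangle\prec\rho_1\rho_2\s_2^2$, which holds (via $R_s$ and the last $M$-bound in \eqref{eq:Mbounds}) along the Gronwall argument. Each first inequality is purely deterministic. Each second inequality, marked $\prec$, comes from inserting these a priori inputs.

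For \eqref{eq:iso1Im0}, write $(G_1AG_2)_{\bm x\bm y}=\langle G_1^*\bm x, AG_2\bm y\rangle$ and apply Cauchy–Schwarz after factoring $G_2G_2^*=\Im G_2/\eta_2$ (Ward identity). This gives $\|\bm y\|\,\eta_2^{-1/2}\sqrt{(G_1A\Im G_2A^*G_1^*)_{\bm x\bm x}}$, and the symmetric version follows by exchanging roles. The final bound then follows from \eqref{eq:iso2Im1}. For \eqref{eq:iso1Im1} and \eqref{eq:iso1Im2}, split $\Im G_1=\sqrt{\Im G_1}\sqrt{\Im G_1}$ and use Cauchy–Schwarz in the form $|(\Im G_1 X)_{\bm x\bm y}|\le\sqrt{(\Im G_1)_{\bm x\bm x}}\sqrt{(X^*\Im G_1X)_{\bm y\bm y}}$, with $X=AG_2$ or $X=A\Im G_2$. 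The factor $(\Im G_1)_{\bm x\bm x}$ is replaced by $(\Im M_1)_{\bm x\bm x}$ using the isotropic law. The remaining quadratic forms are then \eqref{eq:iso2Im1}-type and \eqref{eq:iso2Im3}-type objects with indices $1\leftrightarrow2$ exchanged, and one more Ward identity $G_2G_2^*=\Im G_2/\eta_2$ produces the $\eta_2^{-1/2}$.

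The core is \eqref{eq:iso2Im1}. Write $Y:=G_1A\sqrt{\Im G_2}$, so the left side is $(YY^*)_{\bm x\bm x}$, and insert $\Im G_1$ via $1\le$ nothing. Instead, use $(YY^*)_{\bm x\bm x}=\langle G_1^*\bm x, A\Im G_2A^*G_1^*\bm x\rangle$ and apply Cauchy–Schwarz with the vector $\bm x$ and the vector $A\Im G_2A^*G_1^*\bm x$, pulling one $G_1$ out through $G_1^*G_1=\Im G_1/\eta_1$. This yields $\|\bm x\|\eta_1^{-1/2}\sqrt{(G_1A\Im G_2A^*\Im G_1A\Im G_2A^*G_1^*)_{\bm x\bm x}}$. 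The middle operator $P:=\sqrt{\Im G_2}A^*\Im G_1A\sqrt{\Im G_2}$ is positive with $\|P\|\le\Tr P=N\langle\Im G_1A\Im G_2A^*\rangle\prec N\rho_1\rho_2\s_2^2$. This trace bound, analogous to the reduction inequality used in the zig step, is the key point. Hence $(G_1A\sqrt{\Im G_2}\,P\sqrt{\Im G_2}A^*G_1^*)_{\bm x\bm x}\le\|P\|(G_1A\Im G_2A^*G_1^*)_{\bm x\bm x}$, which gives the middle expression. Solving the resulting quadratic inequality $u\le C\sqrt{u}$ for $u=(G_1A\Im G_2A^*G_1^*)_{\bm x\bm x}$ gives the last bound. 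The proof of \eqref{eq:iso2Im3} is identical: we get $(\sqrt{\Im G_1}\,\cdot\,)$ sandwiched around the same positive $P$, so $(\Im G_1A\Im G_2A^*\Im G_1)_{\bm x\bm x}\le\|P\|(\Im G_1)_{\bm x\bm x}\prec N\rho_1\rho_2\s_2^2(\Im M_1)_{\bm x\bm x}$.

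The main obstacle is only bookkeeping. One must make sure the averaged input $\langle\Im G_1A\Im G_2A^*\rangle\prec\rho_1\rho_2\s_2^2$ is legitimately available at the point where the lemma is used (it comes from Proposition~\ref{prop:zagavGron2G}, or from the induction hypothesis together with a stopping or bootstrap argument). One must also choose the $\eta_i$ powers consistently, bounding by $\eta=\min\eta_i$ where the statement is symmetric.
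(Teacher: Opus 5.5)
Your proposal is correct and follows the route the paper intends. The paper gives no separate proof, but its usage relies on the same ingredients: Schwarz inequalities with $\Im G=\sqrt{\Im G}\sqrt{\Im G}$, Ward identities, bounding the norm of a positive operator by its trace $N\langle \Im G_1 A\Im G_2 A^*\rangle$, and the isotropic single resolvent law. For \eqref{eq:iso2Im3} the sandwiched operator is really $\sqrt{\Im G_1}A\Im G_2A^*\sqrt{\Im G_1}$ rather than your $P$, but it has the same norm, so nothing changes. Proposition~\ref{prop:zagavGron2G} cannot supply the averaged a priori input, since the paper uses this lemma inside its proof. There the lemma appears in its deterministic form, with $\langle \Im G_1 A\Im G_2 A^*\rangle$ kept explicit and later controlled by $|R_s|+|\langle \widehat{M}_{12}^A A^*\rangle|$ through Young's inequality, which is exactly what your deterministic steps provide.
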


\subsubsection{Proof of Proposition \ref{prop:zagavGron2G}} \label{subsubsec:zagav2G}
Similarly to the proofs of Propositions \ref{prop:zagisoGron} and \ref{prop:zagavGron}, we perform a cumulant expansion (cf.~\eqref{eq:cumexbasiciso} and \eqref{eq:cumexbasicav}). The main technical work of these arguments is to control the various terms arising by differentiation of resolvents, which is what we will focus on here, as the conclusion of the proof by application of various Young inequalities is routine. 

For ease of notation, we shall henceforth omit the time dependence, drop all the indices of $G_i, \eta_i, \rho_i$ and further assume that $A$ is self-adjoint, i.e.~$A = A^*$. In Section \ref{subsubsec:12}, we elaborate on a few exemplary terms, where we carry the precise index structure of $G$'s, $\eta$'s, and $\rho$'s. Further, note that, by differentiation of resolvents, the total number of $\Im G$'s is preserved, since 
\begin{equation}
	\partial_{ab} \Im G = - \Im  \Delta_{ab} G  - G^* \Delta_{ab} \Im G
\end{equation}
where $\Delta_{ab}$ is the matrix having all zero entries except at $(a,b)$. 
Finally, we also do not distinguish between $G$ and $G^*$ as our argument is agnostic to the difference.  Although sometimes, if we wish to stress the positivity of certain terms or use Ward identities, we explicitly write $G^*$ and $A^*$. We also point out that, whenever we encounter isotropic chains of the form $(GA\Im GAG)_{ab}$, the middle $\Im G$ is always an actual $\Im G$, since it has not been differentiated. The common thread of the whole proof is to make sure that the summations over matrix entries indexed by $a,b$ are \emph{effective}: This means that, as a toy example for a general matrix $B$, instead of $\sum_{a,b} |B_{ab}|^2 \le N^2 \max_{a,b} |B_{ab}|^2 \le N^2 \Vert B \Vert^2$ we use that $\sum_{a,b} |B_{ab}|^2 \le N \langle BB^* \rangle \le N \Vert B \Vert^2$, effectively saving an entire power of $N$. 

We start by controlling terms of order $k=3$ in the cumulant expansion and distinguish them by the number $n=1,2,3$ of "destroyed" resolvent chains, adopting the notations and conventions from the proofs of Propositions \ref{prop:zagisoGron} and \ref{prop:zagavGron}. For $n=1$, we estimate three exemplary terms: 
\begin{equation} \label{eq:n11}
	\begin{split}
		&\frac{\lambda^3}{N^{5/2}} \sum_{a,b} \left| \big(GA \Im G A G\big)_{aa} (\Im G)_{ab} G_{bb} \right| \\
		\lesssim &\frac{\lambda^3}{N^{5/2}} \frac{1}{\lambda \sqrt{\rho}}\sum_a \big(GA \Im G A^* G^*\big)_{aa} \sum_b \sqrt{(\Im M)_{bb}} (|M|)_{bb} \\
		\lesssim & \frac{\lambda^2}{N^{3/2} \sqrt{\rho} \eta} \langle \Im GA \Im GA^* \rangle \sqrt{\sum_b (\Im M)_{bb}} \sqrt{\sum_{ab} |M_{ab}|^2} 
		\lesssim  \frac{\lambda}{\sqrt{N} \eta} \langle \Im GA \Im GA^* \rangle
	\end{split}
\end{equation}
In the first step, we used a Schwarz inequality to estimate $(GA \Im G A G)_{aa} \le (GA \Im GA^*G^*)_{aa}$ (by writing $\Im G = \sqrt{\Im G} \sqrt{\Im G}$), and employed the single resolvent local law and a Schwarz inequality to bound $|(\Im G)_{ab}| \lesssim \sqrt{(\Im M)_{aa}} \sqrt{(\Im M)_{bb}}$ together with the norm bound $\Vert M \Vert \lesssim 1/(\lambda^2 \rho)$ for controlling $\sqrt{(\Im M)_{aa}}$, and the single resolvent law again for bounding $|G_{bb}| \lesssim (|M|)_{bb}$. To go to the last line, we then carried out the $a$-summation, used a Ward identity, and employed a Schwarz inequality for the $b$-summation, where we additionally \emph{extended the summation} to control
\begin{equation} \label{eq:sumextend}
	\sum_b \big|(|M|)_{bb}\big|^2 \le \sum_{ab} |M_{ab}|^2 \,. 
\end{equation}
In the ultimate step, we then used that $\langle |M|^2 \rangle \le 1/\lambda^2$ and $\langle \Im M \rangle \sim \rho$. 

As a second exemplary term, we consider
\begin{equation} \label{eq:n12}
	\begin{split}
		&\frac{\lambda^3}{N^{5/2}} \sum_{a,b} \left| \big(GA \Im G A \Im G\big)_{ba}  G_{aa} G_{bb} \right| 
		\lesssim \frac{\lambda^3}{N^{5/2}} \sqrt{\sum_{a,b} \big|  \big(GA \Im G A \Im G\big)_{ba} \big|^2} \sum_{a,b} |M_{ab}|^2 \\
		\lesssim & \frac{\lambda}{N} \langle GA\Im GA^* \Im G \Im G A \Im G A^* G^* \rangle^{1/2} \le \frac{\lambda}{\sqrt{N} \eta} \langle \Im G A \Im GA^* \rangle
	\end{split}
\end{equation}
where in the first step we used a Schwarz inequality together with the single resolvent law $|G_{aa}| \lesssim (|M|)_{aa}$ and the summation extension \eqref{eq:sumextend}. To go to the second line, we then used $\langle |M|^2 \rangle \le 1/\lambda^2$ and in the last step, a Ward indentity, the norm bound $\| \Im G\| \le \eta^{-1}$, and the trace inequality
\begin{equation*}
	\langle \Im GA \Im G A^* \Im G A \Im GA^* \rangle  \le N \langle \Im GA \Im GA^* \rangle
\end{equation*}
as follows by writing the first and third $\Im G$ in the fourfold chain as $\sqrt{\Im G} \sqrt{\Im G}$. 

As the third an final exemplary term, analogously to \eqref{eq:n11} and \eqref{eq:n12} above, we consider
\begin{equation} \label{eq:n13}
	\begin{split}
		&\frac{\lambda^3}{N^{5/2}} \sum_{a,b} \left| \big(GA  G\big)_{ba}  (\Im G A \Im G)_{aa} G_{bb} \right|  \\
		\lesssim &\frac{\lambda^3}{N^{5/2}} \sqrt{\sum_{a,b} \big|  \big(GA G \big)_{ba} \big|^2} \sqrt{\sum_{a,b} \big|  \big(\Im GA \Im G \big)_{ba} \big|^2} \sqrt{\sum_{a,b} |M_{ab}|^2} 
		\lesssim  \frac{\lambda^2}{N \eta^2} \langle \Im G A \Im GA^* \rangle
	\end{split}
\end{equation}
All other terms of order $k=3$ with $n=1$ destroyed chain can be handled similarly. 

Next, we turn to terms with $n=2$ destroyed chains and again consider two exemplary terms. As the first one, we estimate
\begin{equation} \label{eq:k3n2a}
	\begin{split}
		&\frac{\lambda^3}{N^{7/2}} \sum_{a,b} \left| (GA\Im GA^* \Im G)_{ab} (GA \Im G)_{ab} (GA \Im G)_{ba} \right| \\
		\lesssim & \frac{\lambda^3}{N^{7/2}} \frac{N}{\sqrt{\eta \lambda^2 \rho}} \langle \Im GA \Im GA^* \rangle \sum_{a,b} \big| (GA\Im G)_{ab}\big|^2 \lesssim \frac{\lambda^2 \rho}{\eta}\frac{1}{(N \eta \rho)^{3/2}} \langle \Im GA\Im GA^* \rangle^2 \,. 
	\end{split}
\end{equation}
In the first step, we used a Schwarz inequality and \eqref{eq:iso2Im1}--\eqref{eq:iso2Im3} to estimate
\begin{equation} \label{eq:mixedImnoIm}
	|(GA\Im GA^* \Im G)_{ab}| \le \sqrt{(GA\Im GA^* G^*)_{aa}} \sqrt{(\Im GA \Im GA^* \Im G)_{bb}} \lesssim \frac{N}{\sqrt{\eta \lambda^2 \rho}} \langle \Im GA \Im GA^* \rangle
\end{equation}
where we additionally used the norm bound $\Vert \Im M \Vert \lesssim 1/(\lambda^2 \rho)$ to control the $\sqrt{(\Im M)_{bb}}$ arising by estimating $\sqrt{(\Im GA \Im GA^* \Im G)_{bb}}$ via \eqref{eq:iso2Im3}. 

As a second exemplary term, we consider 
\begin{equation} \label{eq:k3n2}
	\begin{split}
		&\frac{\lambda^3}{N^{7/2}} \sum_{a,b} \left| (GA\Im GA^* \Im G)_{ab} (GA \Im G)_{aa} (GA \Im G)_{bb} \right| \\
		\lesssim & \frac{\lambda^3}{N^{7/2}} \frac{1}{\lambda^2 \rho} \left(\sum_{a} (GA\Im GA^* G^*)_{aa}\right)^{3/2} \sqrt{\sum_{a} (\Im GA\Im GA^* \Im G)_{aa}}  \\
		\lesssim &\sqrt{\frac{\lambda^2 \rho}{\eta}}\frac{1}{(N \eta \rho)^{3/2}} \langle \Im GA\Im GA^* \rangle^2 \,,
	\end{split}
\end{equation}
where in the first step, we estimated 
\begin{equation}
	|(GA \Im G)_{aa}| \le \sqrt{( GA \Im GA^* G^*)_{aa}} \sqrt{(\Im G)_{aa}} \lesssim \lambda^{-1} \rho^{-1/2} \sqrt{( GA \Im GA^* G^*)_{aa}}
\end{equation}
and the first bound in \eqref{eq:mixedImnoIm}. To go to the last line, we then used the Ward identity $GG^* = \Im G/\eta$ and the norm bound $\Vert \Im G \Vert \le 1/\eta$. All other terms of order $k=3$ with $n=2$ destroyed chain can be handled similarly. 

As the last possibility for terms of order $k=3$, we consider the case of $n=3$ destroyed chains. In this case, similarly to \eqref{eq:k3n2}, we estimate
\begin{equation} \label{eq:k3n3}
	\begin{split}
		&\frac{\lambda^3}{N^{9/2}} \sum_{a,b} \left| (GA\Im GA^* \Im G)_{ab}  \right|^3 \\
		\lesssim & \frac{\lambda^2}{N^{7/2}} \frac{1}{\sqrt{\eta \rho}} \left(\sum_{a} (GA\Im GA^* G^*)_{aa} \right) \left(\sum_{a} (\Im GA\Im GA^* \Im G)_{aa}\right)   \\
		\lesssim &{\frac{\lambda^2 \rho}{\eta}}\frac{1}{(N \eta \rho)^{3/2}} \langle \Im GA\Im GA^* \rangle^2 \,,
	\end{split}
\end{equation}
All the above terms of order $k=3$ in \eqref{eq:n11}, \eqref{eq:n12}, \eqref{eq:n13}, \eqref{eq:k3n2a}, \eqref{eq:k3n2}, and \eqref{eq:k3n3}, together with the $p-n$ not differentiated factors of $R_s$, can then be controlled as in \eqref{eq:avGron2G} with the aid of Young inequalities. 

Next, we consider terms of order $k \ge 4$ and distinguish two cases for the number $n$ of destroyed chains: (i) $n \in \{k-1, k\}$, and (ii) $n \in \{1, ..., k-2\}$. In case (i), it is a trivial combinatorial argument to see that there are least two chains which are hit by a derivative exactly once. The remaining $k-2$ derivatives are then distributed over the remaining $n-2$ chains, which are all hit by at least one derivative. For controlling the derivatives, we use \eqref{eq:iso1Im0}--\eqref{eq:iso2Im3} and \eqref{eq:mixedImnoIm} together with the single resolvent bound $|G_{ab}| \lesssim 1/(\lambda^2 \rho)$ to find that
\begin{equation} \label{eq:morederiv}
	\big| \partial_{ab}^m \partial_{ba}^\ell \langle \Im GA \Im GA^* \rangle \big| \lesssim \left(\frac{1}{\sqrt{\eta \lambda^2 \rho}}\right)^{m + \ell} \langle \Im GA \Im GA^*\rangle
\end{equation}
where we additionally used that $\eta \lesssim \lambda^2 \rho$ in our regime of interest. Hence, in case (i), keeping only the two factors, which are hit by a derivative exactly once and estimating the other ones by \eqref{eq:morederiv}, we find these terms to be bounded by
\begin{equation} \label{eq:kge4casei}
	\begin{split}
		\frac{\lambda^k}{N^{k/2+2}} \frac{1}{(\eta \lambda^2 \rho)^{(k-2)/2}} \sum_{a,b} \big|(GA \Im GA^* \Im G)_{ab}\big|^2 \langle \Im G A \Im GA^* \rangle^{n-2} 
		\lesssim   \lambda^2 \frac{\rho}{\eta} \frac{1}{(N \eta \rho)^{k/2}} \langle \Im GA \Im GA^* \rangle^n 
	\end{split}
\end{equation}
where we used that $ \sum_{a,b} \big|(GA \Im GA^* \Im G)_{ab}\big|^2 \lesssim (N/\eta)^2 \langle \Im GA \Im GA^* \rangle^2$. 

Next we turn to case (ii), where we will estimate almost all terms arising from taking derivatives with the aid of \eqref{eq:iso1Im0}--\eqref{eq:iso2Im3}. The only exception are terms from at most two of the destroyed chains. More precisely, we distinguish three (sub)cases: 
\begin{itemize}
	\item[(a)] there exists (at least) one chain, which is hit by at least four derivatives;
	\item[(b)] there exists (at least) one chain, which is is hit by exactly three derivatives;
	\item[(c)] there exist at least two chains which are hit by exactly two derivatives. 
\end{itemize}
Note that, by simple counting and the assumption $n \in \{1, ... , k-2\}$ this list exhausts all possible cases. 

We start with case (a) and discuss two exemplary terms. First, we consider
\begin{equation} \label{eq:casea1}
	\frac{\lambda^k}{N^{k/2+1}} \left(\frac{1}{\sqrt{\eta \lambda^2 \rho}}\right)^{k-4} \sum_{a,b} \left| (GAG)_{aa} G_{bb} \right|^2 \langle \Im GA \Im GA^*\rangle^{n-1}
\end{equation}
where we bounded all terms arising from the other $n-1$ destroyed chains (having an unspecified number of derivatives hitting the chain) with the aid of \eqref{eq:morederiv}. Moreover, potential further single resolvent factors from the specific chain hit by at least four derivatives are estimated via a single resolvent law and norm bound on $M$, yielding $\max_{a,b} |G_{ab}| \lesssim 1/(\lambda^2 \rho)$, and using that $\eta \lesssim \lambda^2 \rho$ in our regime. We point out that we dropped all (potentially) preserved $\Im G$'s in \eqref{eq:casea1}, as they are irrelevant for the following estimates. 
We then continue to bound \eqref{eq:casea1} as
\begin{equation} \label{eq:casea1explain}
	\begin{split}
		\eqref{eq:casea1} &\lesssim \frac{\lambda^4 \eta^2 \rho^2 }{(N \eta \rho)^{k/2}} \frac{1}{N}\bigg(\sum_{a,b} |(GAG)_{ab}|^2\bigg) \bigg(\sum_{a,b} |M_{ab}|^2\bigg) \langle \Im GA \Im GA^*\rangle^{n-1} \\
		&\lesssim \lambda^2 \frac{\rho}{\eta} \,  \frac{1}{(N \eta \rho)^{(k-2)/2}} \langle \Im GA \Im GA^*\rangle^n \lesssim \lambda^2 \frac{\rho}{\eta} \,  \frac{1}{(N \eta \rho)^{n/2}} \langle \Im GA \Im GA^*\rangle^n \,. 
	\end{split}
\end{equation}
In the first step, we used a Schwarz inequality together with a single resolvent bound and \eqref{eq:sumextend}. To go to the second line, we employed $\langle |M|^2\rangle \le 1/\lambda^2$ and two Ward identities. In the ultimate step, we then used that, by assumption, $n \le k-2$ and $N \eta \rho \ge 1$. As a second exemplary term in case (a) we consider 
\begin{equation} \label{eq:casea2} 
	\begin{split}
		&\frac{\lambda^k}{N^{k/2+1}} \left(\frac{1}{\sqrt{\eta \lambda^2 \rho}}\right)^{k-4} \sum_{a,b} \left| (GA\Im G A^* G^*)_{aa} G_{bb} G_{bb} G_{aa}\right| \langle \Im GA \Im GA^*\rangle^{n-1} \\
		\lesssim & \frac{\lambda^3 \eta^{3/2} \rho^{3/2}}{(N \eta \rho)^{k/2}} \frac{1}{N} \sum_a (GA\Im G A^* G^*)_{aa}\sum_{a,b} |M_{ab}|^2 \langle \Im GA \Im GA^*\rangle^{n-1} \\
		\lesssim & \sqrt{\frac{\lambda^2 \rho}{\eta}} \frac{1}{(N \eta \rho)^{n/2}}\langle \Im GA \Im GA^*\rangle^{n} \,. 
	\end{split}
\end{equation}
Here, to go the second line, we estimated $G_{aa}$ by single resolvent law and norm bound on $M$, together with $\lambda^2 \rho \gtrsim \eta$, and used a single resolvent law with \eqref{eq:sumextend} to bound $\sum_b |G_{bb}|^2$. In the second step argued similarly to \eqref{eq:casea1explain}, in particular using $N \eta \rho \ge 1$ and $n \le k-2$. All other terms in case (a) can be handled similarly as the ones in \eqref{eq:casea1} and \eqref{eq:casea2}. 

Next, we turn to case (b). Here, we may bound the terms arising by exactly three derivatives exactly as in \eqref{eq:n11} and \eqref{eq:n12}. The other differentiated chains can be bounded with the aid of \eqref{eq:morederiv}. Hence, we find that 
\begin{equation} \label{eq:caseb}
	\text{(b)--terms} \lesssim \frac{\lambda^k}{N^{k/2+ 1}} \left(\frac{1}{\sqrt{\eta \lambda^2 \rho}}\right)^{k-3} \frac{N^2}{\lambda^2 \eta} \langle \Im GA \Im GA^* \rangle^n \lesssim  \sqrt{\frac{\lambda^2 \rho}{\eta}}\frac{1}{(N \eta \rho)^{n/2}} \langle \Im GA \Im GA^*\rangle^n
\end{equation}
again using using $N \eta \rho \ge 1$ and $n \le k-2$ in the last step. 

Finally, we turn to case (c), where we again consider some exemplary terms. The main differences compared to cases (a) and (b) above is, that now, similarly to case (i), we need to consider the terms arising from differentiation of \emph{two} chains (compared to only one in cases (a) and (b)) in order to carry out the summation over $a,b$ in an effective way. By a Schwarz inequality, we may assume w.l.o.g.~that both chains, which are differentiated twice, exhibit the exact same index structure. Moreover, all other differentiated chains are simply bounded via \eqref{eq:morederiv}. 

Then, as the first exemplary term, we consider
\begin{equation} \label{eq:casec1}
	\begin{split}
		&\frac{\lambda^k}{N^{k/2 + 2}} \bigg(\frac{1}{\sqrt{\eta \lambda^2 \rho}}\bigg)^{k-4} \sum_{a,b} \left| \big(GAG\big)_{aa} (\Im G A \Im G)_{bb} \right|^2 \langle \Im GA \Im G A^* \rangle^{n-2} \\
		\lesssim & \frac{\lambda^2 \eta \rho }{(N \eta \rho)^{k/2}} \bigg(\frac{1}{N} \sum_{a} (GA\Im GA^* G^*)_{aa}\bigg)  \bigg(\frac{1}{N} \sum_{b} (\Im GA\Im GA^* \Im G)_{bb}\bigg) \langle \Im GA \Im G A^* \rangle^{n-2} \\
		\lesssim & \frac{\lambda^2 \rho}{\eta} \frac{1}{(N \eta \rho)^{n/2}}  \langle \Im GA \Im G A^* \rangle^{n} \,. 
	\end{split}
\end{equation}
In the first step, we bounded $(GAG)_{aa}$ and $(\Im G A \Im G)_{bb}$ with the aid of the first estimates in \eqref{eq:iso1Im0} and \eqref{eq:iso1Im2}, respectively. In the second step, we then employed the Ward identity, a norm bound $\Vert \Im G \Vert \le 1/\eta$, and used that (trivially) $n \le k$. 

As a second exemplary term, we consider
\begin{equation} \label{eq:casec2}
	\begin{split}
		&\frac{\lambda^k}{N^{k/2 + 2}} \left(\frac{1}{\sqrt{\eta \lambda^2 \rho}}\right)^{k-4} \sum_{a,b} \left| \big(GA\Im GA^* \Im G\big)_{aa} G_{bb} \right|^2 \langle \Im GA \Im G A^* \rangle^{n-2} \\
		\lesssim & \frac{N \lambda^2 \eta^2 \rho }{(N \eta \rho)^{k/2}} \bigg(\frac{1}{N} \sum_{a} (GA\Im GA^* G^*)_{aa}\bigg)  \bigg(\frac{1}{N} \sum_{a,b} |M_{ab}|^2\bigg) \langle \Im GA \Im G A^* \rangle^{n-1} \\
		\lesssim &  \frac{1}{(N \eta \rho)^{n/2}}  \langle \Im GA \Im G A^* \rangle^{n} \,. 
	\end{split}
\end{equation}
where in the first step we used a Schwarz inequality for $(GA\Im G A^* \Im G)_{aa}$ and then estimated the arising $(\Im GA \Im GA^* \Im G)_{aa}$ via \eqref{eq:iso2Im3}. For $G_{bb}$ we argued exactly as in \eqref{eq:casea2}. In the ultimate step, we employed a Ward identity, the fact that $\langle |M|^2 \rangle \le 1/\lambda^2$,  and used $n \le k-2$. 

All the above terms of order $k\ge 4$ in \eqref{eq:kge4casei}, \eqref{eq:casea1explain}, \eqref{eq:casea2}, \eqref{eq:caseb}, \eqref{eq:casec1}, and \eqref{eq:casec2}, together with the $p-n$ not differentiated factors of $R_s$, can then be controlled as in \eqref{eq:avGron2G} with the aid of Young inequalities. 
 \qed

\subsubsection{Proof of Proposition \ref{prop:zagavGron2G1A}} \label{subsubsec:zagav2G1A}
The proof of Proposition \ref{prop:zagavGron2G1A} is very similar to that of Proposition \ref{prop:zagavGron2G}, in particular relying on cumulant expansion and recursive moment estimates. Hence, we will be very brief and only discuss a few important changes. We remark that, similarly to the previous proof, we refrain from distinguishing $G$ and $G^*$ and, moreover, drop all the indices of $G_i, \eta_i, \rho_i$, and the arguments of $\s_2^{\rm av}$.  

As the main difference, since there is no $\Im G$ involved, we do not need to watch out for particular constellations of imaginary parts enabling certain "positivity tricks" used in the proof of Proposition~\ref{prop:zagavGron2G}. Except from this, the estimates needed for the proof of Proposition \ref{prop:zagavGron2G1A} are essentially the same as in the proof of Proposition \ref{prop:zagavGron2G}. We now discuss a few exemplary terms arising in the cumulant expansion, as direct analogs to certain terms in the proof of Proposition \ref{prop:zagavGron2G}. As in that proof, the argument is completed by application of Young inequalities. 

As the first exemplary term, we consider the analog of \eqref{eq:n11} (in the nomenclature of the proof of Proposition \ref{prop:zagavGron2G} corresponding to $k=3$ and $n=1$), which we bound as
\begin{equation} \label{eq:n11A}
	\begin{split}
		&\frac{\lambda^3}{N^{5/2}} \sum_{a,b} \left| \big(GA G G\big)_{aa} G_{ab} G_{bb} \right| \\
		\lesssim &\frac{\lambda^3}{N^{5/2}} \frac{1}{\eta }\sqrt{\sum_a \big(GA \Im G A^* G^*\big)_{aa}} \sqrt{\sum_a (\Im M)_{aa}}\sum_{a,b} |M_{ab}|^2 
		\lesssim \sqrt{\frac{\lambda^2 \rho}{\eta}} \frac{1}{\sqrt{N}  \eta} \langle \Im GA \Im GA^* \rangle^{1/2} 
	\end{split}
\end{equation}
Next, we consider the analog of \eqref{eq:n13}, which we estimate as 
\begin{equation} \label{eq:n13A}
	\begin{split}
		&\frac{\lambda^3}{N^{5/2}} \sum_{a,b} \left| \big(G G\big)_{ba}  (G A G)_{aa} G_{bb} \right|  \\
		\lesssim &\frac{\lambda^3}{N^{5/2}} \sqrt{\sum_{a,b} \big|  \big(GG \big)_{ba} \big|^2} \sqrt{\sum_{a,b} \big|  \big(GA G \big)_{ba} \big|^2} \sqrt{\sum_{a,b} |M_{ab}|^2} 
		\lesssim  \frac{\lambda^2 \rho}{\eta}  \frac{1}{\sqrt{N \eta \rho}}\frac{1}{\sqrt{N} \eta} \langle \Im G A \Im GA^* \rangle^{1/2}
	\end{split}
\end{equation}

For $k=3$, $n=2$, we consider one exemplary term as an analog of \eqref{eq:k3n2}, which we bound as
\begin{equation} \label{eq:k3n2A}
	\begin{split}
		\frac{\lambda^3}{N^{7/2}} \sum_{a,b} \left| (GA G G)_{ab} (GA  G)_{aa} (GG)_{bb} \right| 
		&\lesssim  \frac{\lambda^3}{N^{7/2}} \frac{1}{\eta^{5/2}} \sum_{a} (GA\Im GA^* G^*)_{aa} \sum_{b} |(\Im G)_{bb}|^{3/2} \\
		& \lesssim {\frac{\lambda^2 \rho}{\eta}}\frac{1}{\sqrt{N \eta \rho}} \left( \frac{1}{\sqrt{N} \eta}\langle \Im GA\Im GA^* \rangle^{1/2} \right)^2\,,
	\end{split}
\end{equation}
where in the first step, we bounded $|(GAGG)_{ab}| \le \eta^{-1}\sqrt{(GA\Im G A^* G^*)_{aa} (\Im G)_{bb}}$ and similarly used $|(GAG)_{aa}| \le \eta^{-1/2} \sqrt{(GA\Im GA^* G^*)_{aa}}$ as well as $|(GG)_{bb}| \le \eta^{-1} (\Im G)_{bb}$. To go to the second line, we then bounded $\sqrt{(\Im G)_{bb}} \lesssim \frac{1}{\lambda \sqrt{\rho}}$ with the aid of a single resolvent law, and carried out the $a$- and $b$-summations, additionally using $\langle \Im G \rangle \lesssim \rho$, again by a single resolvent law. 

Lastly for $k=3$, we consider the $n=3$ case and estimate
\begin{equation} \label{eq:k3n3A}
	\begin{split}
		\frac{\lambda^3}{N^{9/2}} \sum_{a,b} \left| (GA G G)_{ab}  \right|^3 
		\lesssim & \frac{\lambda^2}{N^{4}} \frac{1}{\eta^{7/2} \rho^{1/2}} \sum_{a} (GA\Im GA^* G^*)_{aa} \sum_{b} (\Im G)_{bb} \\
		\lesssim &{\frac{\lambda^2 \rho}{\eta}}\frac{1}{(N \eta \rho)^{1/2}} \left(\frac{1}{\sqrt{N} \eta}\langle \Im GA\Im GA^* \rangle^{1/2}\right)^3 \,,
	\end{split}
\end{equation}
where we used the last bound in
\begin{equation} \label{eq:GAGGbound}
	\begin{split}
		\big| (GAGG)_{ab} \big| &\le \sqrt{(GAGG^* A^* G^*)_{aa}} \sqrt{(GG^*)_{bb}} \\
		&\lesssim \sqrt{\frac{N}{\eta^3 \lambda^2 \rho}} \langle \Im GA \Im GA^* \rangle^{1/2} \quad \text{uniformly in} \quad a,b \in [N]
	\end{split}
\end{equation}
for one of the $(GAGG)_{ab}$ factors in the first line of \eqref{eq:k3n3A}, and the intermediate bound in \eqref{eq:GAGGbound} together with a Ward identity for the other two $(GAGG)_{ab}$ factors in the first line of \eqref{eq:k3n3A}. In the second step in \eqref{eq:k3n3A} we argued similarly to the last estimate in \eqref{eq:k3n2A}. By arguing exactly as in the proof of Proposition~\ref{prop:zagavGron2G}, this concludes the discussion for terms of order $k=3$ in the cumulant expansion, since all other index constellations not explicitly covered in \eqref{eq:n11A}--\eqref{eq:k3n3A} can be handled similarly. 

We now turn to terms of order $k \ge 4$ and make the same case distinction as in the proof of Proposition~\ref{prop:zagavGron2G} (cf.~the discussion around \eqref{eq:morederiv}--\eqref{eq:kge4casei}). As an analog of the auxiliary estimate \eqref{eq:morederiv}, we now have that
\begin{equation} \label{eq:morederivA}
	\big| \partial_{ab}^m \partial_{ba}^\ell \langle GAG \rangle \big| \lesssim \left(\frac{1}{\sqrt{\eta \lambda^2 \rho}}\right)^{m + \ell} \frac{1}{\sqrt{N} \eta}\langle \Im GA \Im GA^*\rangle^{1/2} \,. 
\end{equation}

Then, the analog of \eqref{eq:kge4casei} (case (i) in the nomenclature of the proof of Proposition \ref{prop:zagavGron2G})becomes, 
\begin{equation*}
	\begin{split}
		&\frac{\lambda^k}{N^{k/2+2}} \frac{1}{(\eta \lambda^2 \rho)^{(k-2)/2}} \frac{1}{(\sqrt{N} \eta)^{n-2}}\sum_{a,b} \big|(GA  G G)_{ab}\big|^2 \langle \Im G A \Im GA^* \rangle^{\frac{n-2}{2}}  \\
		\lesssim   &\lambda^2 \frac{\rho}{\eta} \frac{1}{(N \eta \rho)^{k/2}} \frac{1}{(\sqrt{N} \eta)^n}\langle \Im GA \Im GA^* \rangle^{n/2}  \lesssim  \frac{\lambda^2 \rho}{\eta}  \frac{1}{(\sqrt{N} \eta)^n}\langle \Im GA \Im GA^* \rangle^{n/2} 
	\end{split}
\end{equation*}
where here and in the following $n \le k$ denotes the number of differentiated resolvent chains. 

In the complementary case (ii), we consider only one exemplary term in the subcase (a). In fact, the analog of \eqref{eq:casea1}--\eqref{eq:casea1explain} can be estimated as
\begin{equation} \label{eq:casea1A}
	\begin{split}
		&\frac{\lambda^k}{N^{k/2+1}} \left(\frac{1}{\sqrt{\eta \lambda^2 \rho}}\right)^{k-4} \frac{1}{(\sqrt{N} \eta)^{n-1}}\sum_{a,b} \left| (GAG)_{aa} (GG)_{aa} G_{bb} G_{bb} \right| \langle \Im GA \Im GA^*\rangle^{\frac{n-1}{2}} \\
		\lesssim &\frac{\lambda^4 \eta^2 \rho^2}{(N \eta \rho)^{k/2}} \frac{1}{(\sqrt{N} \eta)^{n-1}} \frac{1}{N}\sqrt{\sum_{a,b} |(GAG)_{ab}|^2} \sqrt{\sum_{ab} |(GG)_{ab}|^2} \sum_{ab} |M_{ab}|^2 \langle \Im GA \Im GA^*\rangle^{\frac{n-1}{2}} \\
		\lesssim & \frac{\lambda^2 \rho}{\eta} \frac{1}{(N \eta \rho)^{(k-3)/2}} \frac{1}{(\sqrt{N} \eta)^n} \langle \Im GA \Im GA^*\rangle^{\frac{n}{2}} \lesssim \frac{\lambda^2 \rho}{\eta}  \frac{1}{(\sqrt{N} \eta)^n} \langle \Im GA \Im GA^*\rangle^{\frac{n}{2}} \,. 
	\end{split}
\end{equation}
To go to the second line, we used several Schwarz inequalities, single resolvent laws, and summation extensions \eqref{eq:sumextend}. In the penultimate step, we then carried out the summations and used Ward identities. The last step only used that $N \eta \rho \ge 1$ and $k \ge 4$. 

The terms in case (b) can be bounded using the exact same reasoning as in \eqref{eq:caseb}, yielding
\begin{equation*}
	\begin{split}
		\text{(b)-terms} &\lesssim \frac{\lambda^k}{N^{k/2 +1}} \left(\frac{1}{\sqrt{\eta \lambda^2 \rho}}\right)^{k-3} \frac{N^2 \rho}{\lambda \eta^2} \frac{1}{(\sqrt{N} \eta)^{n-1}} \langle \Im GA \Im GA^*\rangle^{\frac{n}{2}} \\
		&\lesssim \frac{\lambda^2 \rho}{\eta} \frac{1}{(N \eta \rho)^{(k-3)/2}} \frac{1}{(\sqrt{N} \eta)^{n}} \langle \Im GA \Im GA^*\rangle^{\frac{n}{2}} \lesssim \frac{\lambda^2 \rho}{\eta}  \frac{1}{(\sqrt{N} \eta)^{n}} \langle \Im GA \Im GA^*\rangle^{\frac{n}{2}}
	\end{split}
\end{equation*}

Finally, we turn to case (c) and estimate the analog of \eqref{eq:casec1} as
\begin{equation} \label{eq:casec1A}
	\begin{split}
		&\frac{\lambda^k}{N^{k/2 + 2}} \bigg(\frac{1}{\sqrt{\eta \lambda^2 \rho}}\bigg)^{k-4} \frac{1}{(\sqrt{N} \eta)^{n-2}}\sum_{a,b} \left| \big(GG\big)_{aa} (G A G)_{bb} \right|^2 \langle \Im GA \Im G A^* \rangle^{\frac{n-2}{2}} \\
		\lesssim & \frac{\lambda^2  \rho }{(N \eta \rho)^{k/2}}  \frac{1}{(\sqrt{N} \eta)^{n-2}}\bigg(\frac{1}{N} \sum_{a} (\Im G \Im G)_{aa}\bigg)  \bigg(\frac{1}{N} \sum_{b} (GA\Im GA^* G^*)_{bb}\bigg) \langle \Im GA \Im G A^* \rangle^{\frac{n-2}{2}} \\
		\lesssim & \frac{\lambda^2 \rho}{\eta} \frac{1}{(N \eta \rho)^{(n-2)/2}} \frac{1}{(\sqrt{N} \eta)^{n}} \langle \Im GA \Im G A^* \rangle^{\frac{n}{2}} \lesssim \frac{\lambda^2 \rho}{\eta} \frac{1}{(\sqrt{N} \eta)^{n}} \langle \Im GA \Im G A^* \rangle^{\frac{n}{2}}\,. 
	\end{split}
\end{equation}
To go to the last line, we bounded $\langle \Im G \Im G \rangle \lesssim \rho/\eta$, using a norm bound $\Vert \Im G \Vert \le \eta^{-1}$ and a single resolvent law to estimate $|\langle \Im G \rangle| \lesssim \rho$. In the ultimate step we then used $N \eta \rho \ge 1$ and $n \ge 2$. \qed

\subsubsection{On the precise index structure} \label{subsubsec:12} In the previous Sections \ref{subsubsec:zagav2G}--\ref{subsubsec:zagav2G1A}, we neglected the two different indices of $G$'s, $\eta$'s, and $\rho$'s for notational convenience. The goal of this section is to carry the precise index structure in a few exemplary "extreme" terms (in regard of indices involved) and hence check the bounds from the previous two sections in more details. 

In the case of Section \ref{subsubsec:zagav2G} (i.e.~the proof of Proposition \ref{prop:zagavGron2G}), we consider only those terms, where the derivatives hit both $G_1$ and $G_2$. Otherwise, the provided bounds in Section \ref{subsubsec:zagav2G} hold verbatim, we just need to add the index of the differentiated $G_i$ to the $\rho$'s and $\eta$'s involved in the bounds. Hence, as the first exemplary term, we consider \eqref{eq:n13}:
\begin{equation} \label{eq:n13indices}
	\begin{split}
		&\frac{\lambda^3}{N^{5/2}} \sum_{a,b} \left| \big(G_1A  G_2\big)_{ba}  (\Im G_2 A \Im G_1)_{aa} (G_1)_{bb} \right|  \\
		\lesssim &\frac{\lambda^3}{N^{5/2}} \sqrt{\sum_{a,b} \big|  \big(G_1A G_2 \big)_{ba} \big|^2} \sqrt{\sum_{a,b} \big|  \big(\Im G_2A \Im G_1 \big)_{ba} \big|^2} \sqrt{\sum_{a,b} |(M_1)_{ab}|^2}  \\
		\lesssim & \frac{\lambda^2}{N \eta_1 \eta_2} \langle \Im G_1 A \Im G_2A^* \rangle \lesssim \max_i \left(\frac{\lambda^2 \rho_i}{\eta_i} \right) \frac{\langle \Im G_1 A \Im G_2A^* \rangle  }{\sqrt{N \ell}}
	\end{split}
\end{equation}
where we recall the notation $\ell := \min_i (\eta_i \rho_i)$ and used $N \ell \ge 1$. Next we consider \eqref{eq:k3n3}: 
\begin{equation} \label{eq:k3n3indices}
	\begin{split}
		&\frac{\lambda^3}{N^{9/2}} \sum_{a,b} \left| (G_1A\Im G_2A^* \Im G_1)_{ab}  \right|^2 \left| (G_2 A^* \Im G_1 A \Im G_2)_{ab} \right| \\
		\lesssim & \frac{\lambda^2}{N^{7/2}} \frac{1}{\sqrt{\eta_2 \rho_2}} \left(\sum_{a} (G_1A\Im G_2A^* G_1^*)_{aa} \right) \left(\sum_{a} (\Im G_1A\Im G_2A^* \Im G_1)_{aa}\right)   \\
		\lesssim &{\frac{\lambda^2 \rho_1}{\eta_1}}\frac{1}{(N \ell)^{3/2}} \langle \Im G_1A\Im G_2A^* \rangle^3 \,. 
	\end{split}
\end{equation}
As a final example, we consider the higher order cumulant term \eqref{eq:casec1}. To estimate \eqref{eq:casec1} with a precise index structure, note that \eqref{eq:morederiv} admits the bound
\begin{equation} \label{eq:morederivindices}
	\big| \partial_{ab}^p \partial_{ba}^q \langle \Im GA \Im GA^* \rangle \big| \lesssim \left(\frac{1}{\sqrt{ \lambda^2 \ell}}\right)^{p+ q} \langle \Im GA \Im GA^*\rangle \qquad \text{for} \qquad p,q \in \N_0\,, 
\end{equation}
and we hence find (ignoring the untouched factor $\langle \Im G_1A \Im G_2 A^* \rangle^{n-2}$)
\begin{equation} \label{eq:casec1indices}
	\begin{split}
		&\frac{\lambda^k}{N^{k/2 + 2}} \bigg(\frac{1}{\sqrt{ \lambda^2 \ell}}\bigg)^{k-4} \sum_{a,b} \left| \big(G_1AG_2\big)_{aa} (\Im G_2 A^* \Im G_1)_{bb} \right|^2  \\
		\lesssim & \frac{\lambda^2 \ell }{(N \ell)^{k/2}} \bigg(\frac{1}{N} \sum_{a} (G_1A\Im G_2A^* G_1^*)_{aa}\bigg)  \bigg(\frac{1}{N} \sum_{b} (\Im G_1A\Im G_2A^* \Im G_1)_{bb}\bigg)  \\
		\lesssim & \frac{\lambda^2 \rho_1}{\eta_1} \frac{1}{(N \ell)^{n/2}}  \langle \Im GA \Im G A^* \rangle^{2} \,. 
	\end{split}
\end{equation}

We conclude this section by discussing three more exemplary terms from Section \ref{subsubsec:zagav2G1A}, where we mostly focus on situations, when the derivatives hit only copies of one of the $G_i$'s, say, $G_1$ for concreteness. 

First, we consider \eqref{eq:n11A}, which we control as
\begin{equation} \label{eq:n11Aindices}
	\begin{split}
		&\frac{\lambda^3}{N^{5/2}} \sum_{a,b} \left| \big(G_1A G_2 G_1\big)_{aa} (G_1)_{ab} (G_1)_{bb} \right| \\
		\lesssim &\frac{\lambda^3}{N^{5/2}} \frac{1}{\sqrt{\eta_1 \eta_2}  }\sqrt{\sum_a \big(G_1A \Im G_2 A^* G_1^*\big)_{aa}} \sqrt{\sum_a (\Im M_1)_{aa}}\sum_{a,b} |(M_1)_{ab}|^2  \\
		\lesssim &\sqrt{\frac{\lambda^2 \rho_1}{\eta_1}} \frac{1}{\sqrt{N \eta_1 \eta_2}  } \langle \Im G_1A \Im G_2A^* \rangle^{1/2} 
	\end{split}
\end{equation}

Next, we estimate \eqref{eq:k3n2A} (now one of the three derivatives hit a copy of $G_2$):
\begin{equation} \label{eq:k3n2Aindices}
	\begin{split}
		&\frac{\lambda^3}{N^{7/2}} \sum_{a,b} \left| (G_1A G_2 G_1)_{ab} (G_1A  G_2)_{aa} (G_2G_1)_{bb} \right| \\
		&\lesssim  \frac{\lambda^3}{N^{7/2}} \frac{1}{\eta_1 \eta_2^{3/2}} \sum_{a} (G_1A\Im G_2A^* G_1^*)_{aa} \sum_{b} (\Im G_1)_{bb} \sqrt{(\Im G_2)_{bb}} \\
		& \lesssim {\frac{\lambda^2 \rho_1}{\eta_1}}\frac{1}{\sqrt{N \eta_2 \rho_2}} \left( \frac{1}{\sqrt{N \eta_1 \eta_2}}\langle \Im G_1A\Im G_2A^* \rangle^{1/2} \right)^2\,. 
	\end{split}
\end{equation}

Finally, we consider a higher order cumulant term, namely a variant of \eqref{eq:casea1Aindices}, but where all derivatives hit copies of $G_1$. Using that 
\begin{equation} \label{eq:morederivAindices}
	\big| \partial_{ab}^p \partial_{ba}^q \langle G_1AG_2 \rangle \big| \lesssim \left(\frac{1}{\sqrt{ \lambda^2 \ell}}\right)^{p + q} \frac{1}{\sqrt{N \eta_1 \eta_2}}\langle \Im G_1A \Im G_2A^*\rangle^{1/2} \quad \text{for} \quad p,q \in \N_0\,, 
\end{equation}
we bound this variant of \eqref{eq:casea1Aindices} as (ignoring the untouched factor $\langle \Im G_1A \Im G_2A^*\rangle^{\frac{n-1}{2}}$)
\begin{equation} \label{eq:casea1Aindices}
	\begin{split}
		&\frac{\lambda^k}{N^{k/2+1}} \left(\frac{1}{\sqrt{ \lambda^2 \ell}}\right)^{k-4} \frac{1}{(N \eta_1 \eta_2 )^{\frac{n-1}{2}}}\sum_{a,b} \left| (G_1AG_2 G_1)_{aa} (G_1)_{aa} (G_1)_{bb} (G_1)_{bb} \right| \\
		\lesssim &\frac{\lambda^4 \ell^2}{(N \ell)^{k/2}} \frac{1}{(N \eta_1 \eta_2 )^{\frac{n-1}{2}}} \frac{1}{\lambda^2 \rho_1 \sqrt{\eta_1 \eta_2} } \frac{1}{N}\sqrt{\sum_{a} (G_1A\Im G_2 A^* G_1^*)_{aa}}  \sqrt{\sum_a (\Im G_1)_{aa}}\sum_{ab} |(M_1)_{ab}|^2  \\
		\lesssim &  \frac{1}{(N \ell)^{(k-3)/2}} \frac{1}{(N \eta_1 \eta_2 )^{\frac{n}{2}}} \langle \Im GA \Im GA^*\rangle^{\frac{1}{2}} \lesssim  \frac{1}{(N \eta_1 \eta_2 )^{\frac{n}{2}}} \langle \Im GA \Im GA^*\rangle^{\frac{1}{2}} \,. 
	\end{split}
\end{equation}

All the other terms can be handled in a similar way. This concludes our discussion of the precise index structure.

\subsection{Global law: Proof of Proposition \ref{prop:global2}} \label{subsec:global2} 
The goal of this section is to give the proof of the global law formulated in Proposition \ref{prop:global2}. Since all matrices and spectral parameters are constant, with a slight abuse of notation, we drop the subscript/argument $t=0$ introduced in Proposition \ref{prop:global2} throughout the whole section. Moreover, we will frequently use that our spectral parameters $z_i = z_{i,0}$ are such that 
\begin{equation*} 
\lambda^2 \max_i (\rho_i \eta_i^{-1}) \le 1/c \lesssim 1\,, \quad \text{with} \quad \eta_i := |\Im z_i|\,, \quad \rho_i := \pi^{-1} |\langle \Im M(z_i) \rangle| \,. 
\end{equation*}

Analogously to the single resolvent global law proven in Section \ref{subsec:global}, the proof of Proposition \ref{prop:global2} relies on a minimalistic cumulant expansion. In order to do so, we consider the  normalized differences (recall the notation $\ell := \min (\eta_i \rho_i)$)
\begin{align}
	\label{eq:Psi1}
	\Psi_1 = \Psi_1(z_1, z_2;A) &:= \max_{*}\sqrt{N \ell} \sqrt{\frac{N \eta_1 \eta_2}{\rho_1 \rho_2}} \big(\s_2\big)^{-1}\left| \big\langle G_1^{(*)} A G_2^{(*)} \big\rangle - \big\langle M_{1^{(*)}2^{(*)}}^A\big) \big\rangle \right| \\ \label{eq:Psi2}
	\Psi_2 = \Psi_2(z_1,z_2;A) &:= \frac{\sqrt{N\ell}}{\rho_1 \rho_2 \, \big(\s_2\big)^2}	\left| \big\langle \Im G A \Im G_2 A^* \big\rangle - \big\langle \widehat{M}_{12}^A \big\rangle \right| 
\end{align}
where  $\max_{*}$ denotes the maximum over all possibilities of adjoints taken. In Proposition \ref{prop:master} below, we show certain \emph{master inequalities} for $\Psi_1, \Psi_2$, which will immediately allow us to conclude the proof of Proposition \ref{prop:global2}. We remark that we defined $\Psi_1$ in \eqref{eq:Psi1} in such a way, that when showing $\Psi_1 \prec 1$, we will have even proven a stronger bound (by a factor $1/\sqrt{N \ell}$) than the one claimed in \eqref{eq:2G1Aglobal}. 
\begin{proposition}[Master inequalities] \label{prop:master}
	Using the above notations, we have the following: Assume that $\Psi_1 \prec \psi_1$ and $\Psi_2 \prec \psi_2$ for some deterministic control quantities $\psi_1, \psi_2 \in [1,N^D]$ for some $D < \infty$. Then it holds that 
	\begin{equation} \label{eq:master}
		\Psi_1 \prec 1 + \frac{\psi_1}{N \ell}+ \frac{(\psi_2)^{1/2}}{(N \ell)^{1/4}} \quad \text{and} \quad \Psi_2 \prec 1 + \frac{\psi_1}{N \ell} + \frac{\psi_2}{(N \ell)^{1/2}} \,. 
	\end{equation}
\end{proposition}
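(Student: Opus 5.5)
We derive self-consistent equations for $G_1AG_2$ and $\Im G_1 A\Im G_2 A^*$, in the spirit of the isotropic and average bootstraps of Lemmas~\ref{lem:bootstrapiso}--\ref{lem:bootstrapav}, and then estimate high moments via a minimalistic cumulant expansion.

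\emph{Equation for $\langle G_1AG_2\rangle$.} Write $G_1 = M_1 - M_1\underline{\lambda W G_1} + \lambda^2 M_1\langle G_1-M_1\rangle G_1$. Multiplying by $AG_2$, taking traces and inverting the two-body stability operator $\mathcal{B}_{12}=1-\lambda^2 M_1\langle\cdot\rangle M_2$ gives
\[
\langle (G_1AG_2 - M_{12}^A)B\rangle \approx -\langle \underline{\lambda W G_1 A G_2}\,\widetilde B\rangle + \lambda^2\langle G_1-M_1\rangle\langle\cdots\rangle + \lambda^2\langle G_2 - M_2\rangle\langle\cdots\rangle+\lambda^2\langle G_1AG_2-M_{12}^A\rangle\langle G_1G_2 - M^I_{12}\rangle,
\]
where $\widetilde B$ is the analogue of \eqref{eq:Btilde}. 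The factor $(1-\lambda^2\langle M_1M_2\rangle)^{-1}$ from $\mathcal{B}_{12}^{-1}$ multiplies only $\langle M_1AM_2\rangle$-type terms. On $\mathcal{D}^{\rm glob}$ we have $\beta_i\gtrsim1$, and for $\theta=1$ the denominator is exactly what appears in $\s_2$, so this factor is absorbed there. The quadratic error terms are bounded by the single resolvent laws (Proposition~\ref{prop:global}) together with \eqref{eq:boundcauch}. They contribute at most $\psi_1/(N\ell)$, because each carries an extra $1/(N\eta)$ relative to the target.

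\emph{Moment estimate for the main term.} For $\E|\langle\underline{\lambda WG_1AG_2}\widetilde B\rangle|^{p}$ we run a cumulant expansion as in \eqref{eq:avcumex}. The Gaussian (second-order) term in which the derivative hits the other factors $\overline{\Psi}|\Psi|^{p-2}$ produces products like $\frac{\lambda^2}{N^3}\sum_{ab}(G_2\widetilde BM_1)_{ab}(G_1AG_2)_{ab}$-type objects times derivatives of the chains. We bound these with the Schwarz/Ward estimates of Lemma~\ref{lem:Schwarz} in terms of $\langle\Im G_1A\Im G_2A^*\rangle$. Inserting $\langle\Im G_1A\Im G_2A^*\rangle\le \langle\widehat M^A_{12}A^*\rangle + \rho_1\rho_2\s_2^2\psi_2/\sqrt{N\ell}$, and using the last bound of \eqref{eq:Mbounds}, gives exactly the $\sqrt{\psi_2}/(N\ell)^{1/4}$ contribution, since $\Psi_1$ is linear in $\s_2$ while the variance involves $\s_2^2$. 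Higher cumulants ($k\ge3$) are handled as in Section~\ref{subsubsec:zagav2G1A}. In the global regime $\lambda^2\rho_i/\eta_i\lesssim1$, so the factors $\lambda^2\rho/\eta$ appearing there are harmless.

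\emph{The $\Psi_2$ equation.} Expand $\Im G_i=\frac1{2\ii}(G_i-G_i^*)$, so that $\langle\Im G_1A\Im G_2A^*\rangle$ becomes a combination of four-resolvent chains $\langle G_1^{(*)}AG_2^{(*)}A^*\rangle$. Apply the same self-consistent equation, now with one $G_1^{(*)}AG_2^{(*)}$ factor treated as the observable. The linear terms of the resulting equation contain $\lambda^2\langle G_1^{(*)}AG_2^{(*)}\rangle\langle \cdots\rangle$; expanding these around $M$ yields $\widehat M^A_{12}$ plus errors bounded by $\psi_1$ times $\s_2$. Thanks to the $\sqrt{N\ell}$ normalisation in \eqref{eq:Psi1}, these errors give $\psi_1/(N\ell)$. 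The fluctuation term is estimated as in Proposition~\ref{prop:zagavGron2G}, using the positivity tricks and the reduction inequality. It yields $1+\psi_2/\sqrt{N\ell}$, where the $\psi_2$ enters through the input bound $\langle\Im G A\Im GA^*\rangle\lesssim\rho_1\rho_2\s_2^2(1+\psi_2/\sqrt{N\ell})$. Finally, Young inequalities convert the moment bounds into \eqref{eq:master}.

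\emph{Main obstacle.} The delicate step is the inversion of the two-body stability operator in the $\Psi_2$ equation. There the denominator $1-\lambda^2\langle M_1M_2\rangle$ may be small even in the global domain, since only $\theta=1$ is known. We must check that each appearance of this denominator is matched either by the second term of $\s_2^2$ or by regularity of $A$; regularity of $A$ kills the projection $\langle M_1AM_2\rangle$ onto the unstable direction of $\mathcal{B}_{12}$.
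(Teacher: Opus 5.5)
Your outline follows the paper's route. It uses the static identity \eqref{eq:2Gbasicident2} from the two-body stability operator $\mathcal{B}_{12}$ and underline representations for $\langle G_1AG_2\rangle$ and $\langle \Im G_1A\Im G_2A^*\rangle$ (Lemmas~\ref{lem:underlineGAG}--\ref{lem:underlineGAGA}); the latter come from the four $\pm$ constellations tested against $\widetilde{A}^*$ from \eqref{eq:XA}. It then runs a minimalistic cumulant expansion in which only the new Gaussian term needs fresh Schwarz/Ward estimates, with $\psi_2$ entering through $\langle\Im G_1A\Im G_2A^*\rangle$. What is wrong is the step you single out as the main obstacle.

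In $\mathcal{D}^{\rm glob}$ no two-body denominator can be small, whatever the value of $\theta$. By Cauchy--Schwarz, $|1-\lambda^2\langle M_1^{(*)}M_2^{(*)}\rangle|\ge 1-\lambda^2\langle|M_1|^2\rangle^{1/2}\langle|M_2|^2\rangle^{1/2}$. The MDE gives $1-\lambda^2\langle|M_i|^2\rangle=\eta_i/(\eta_i+\pi\lambda^2\rho_i)$, which is bounded below as soon as $\lambda^2\rho_i/\eta_i\le 1/c$. This is \eqref{eq:beta12bound}, and the paper uses $|\beta_{1^{(*)}2^{(*)}}|\gtrsim 1$ throughout (e.g.\ in \eqref{eq:Mbound} and \eqref{eq:under12}); it is the fact your argument is missing.

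The mechanism you propose in its place would fail if $\beta_{12}$ really were small. For $\Psi_1$ the test matrix is $B=\mathbf{1}$, so your $\widetilde B$ equals $\beta_{12}^{-1}\mathbf{1}$. Thus $\beta_{12}^{-1}$ multiplies the entire right-hand side, including the fluctuation $\langle M_1\underline{\lambda WG_1AG_2}\rangle$ and all error terms. Such a factor cannot be absorbed into $\s_2$, whose denominator only occurs multiplied by $\lambda^2|\langle M_1A\Gamma_2\rangle\langle\Gamma_1A^*M_2\rangle|$. Moreover, regularity of $A$ cancels $\langle M(z_1^+)AM(z_2^-)\rangle=\langle M_1AM_2^*\rangle$. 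That concerns $\beta_{12^*}$ and $\beta_{1^*2}$, not $\langle M_1AM_2\rangle$ and $\mathcal{B}_{12}$ with both parameters in the upper half-plane. Genuinely small denominators occur only in the local regime. There they are handled dynamically, through the time-dependent $\s_2(t)$ and the Gronwall argument of the zig step with $\exp\big(\int_s^t|\beta_{12,u}|^{-1}\dd u\big)\lesssim\sqrt{\eta_{1,s}\eta_{2,s}/(\eta_{1,t}\eta_{2,t})}$, not by a static cancellation.

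Replacing your last paragraph by \eqref{eq:beta12bound}, the argument goes through. There is one smaller inaccuracy. The quadratic term $\lambda^2\langle G_1-M_1\rangle\langle G_1AG_2M_1\rangle$ is not controlled by $\psi_1$, since the chain carries an extra $M_1$. As in \eqref{eq:cont3}, it is bounded via Schwarz by $\langle \Im G_1A\Im G_2A^*\rangle^{1/2}$. So the $\sqrt{\psi_2}/(N\ell)^{1/4}$ contribution already appears in the underline representation, not only through the cumulant expansion.
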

\begin{proof}[Proof of Proposition \ref{prop:global2}]
	Having Proposition \ref{prop:master} at hand, we can immediately deduce $\Psi_1 + \Psi_2 \prec 1$ by iterating \eqref{eq:master} finitely many times. This, in turn, directly yields \eqref{eq:2G1Aglobal}--\eqref{eq:2G2Aglobal}. 
\end{proof}
It thus remains to give the proof of Proposition \ref{prop:master}. 
\subsection{Proof of Proposition \ref{prop:master}}
As mentioned above, our argument rests on a minimalistic cumulant expansion, similarly to the single resolvent global in Section \ref{subsec:global}, and the zag steps in Sections~\ref{subsec:zagstep} and~\ref{subsec:zagstep2}. Hence, we will be rather brief and only present the key estimates. 

For both relations in \eqref{eq:master}, the starting point is the identity (denoting $M_i = M(z_i)$)
\begin{equation} \label{eq:2Gbasicident}
	\begin{split}
		\mathcal{B}_{12} \big[G_1 A G_2 - M(z_1, A, z_2)\big] = &M_1 A (G_2 - M_2) - M_1 \underline{\lambda W G_1 A G_2} \\
		&+ \lambda^2 M_1 \langle G_1 A G_2 \rangle (G_2 - M_2) + \lambda^2 M_1 \langle G_1 - M_1 \rangle G_1 A G_2
	\end{split}
\end{equation}
where we denoted the \emph{two-body stability operator} by $\mathcal{B}_{12}[\cdot] := \mathbf{1} - \lambda^2 M_1 \langle \, \cdot \, \rangle M_2$ and introduced the \emph{underline} term in \eqref{eq:2Gbasicident} as
\begin{equation*}
	\underline{\lambda W G_1 A G_2} := \lambda WG_1 A G_2 + \lambda^2 \langle G_1 \rangle G_1 A G_2 + \lambda^2 \langle G_1 A G_2 \rangle G_2\,. 
\end{equation*}
The same identity \eqref{eq:2Gbasicident} holds true when replacing $G_1 \to G_1^*$ and/or $G_2 \to G_2^*$, but for simplicity of the notation, we shall henceforth focus on the case without any adjoint taken. 

Now, after inverting $\mathcal{B}_{12}$ and taking the trace in \eqref{eq:2Gbasicident}, we arrive at
\begin{equation} \label{eq:2Gbasicident2}
	\begin{split}
		\beta_{12}	 \big\langle G_1 A G_2 - M(z_1, A, z_2)\big\rangle = &\langle M_1 A (G_2 - M_2)\rangle - \langle M_1 \underline{\lambda W G_1 A G_2}\rangle \\
		&+ \lambda^2 \langle G_1 A G_2 \rangle \langle (G_2 - M_2) M_1 \rangle+ \lambda^2  \langle G_1 - M_1 \rangle \langle G_1 A G_2 M_1 \rangle
	\end{split}
\end{equation}
where $\beta_{12} := 1 - \lambda^2 \langle M_1 M_2 \rangle$ satisfies (see, e.g., \cite[Proof of (5.12) in Appendix A.5]{cipolloni2024eigenvector})
\begin{equation} \label{eq:beta12bound}
	|\beta_{12}|^{-1} \le \frac{1}{1 - \lambda^2 \langle |M_1|^2 \rangle^{1/2} \langle |M_2|^2 \rangle^{1/2}} \le 1 + \frac{\pi}{2}\left(\frac{\lambda^2 \rho_1}{\eta_1} + \frac{\lambda^2 \rho_2}{\eta_2}\right) \lesssim 1
\end{equation}
with the last estimate being valid in the global regime. The same bound holds for $\beta_{1^*2}, \beta_{12^*}$, and $\beta_{1^*2^*}$. 

As a first step towards proving the first of the two master inequalities \eqref{eq:master} in Proposition \ref{prop:master}, we have the following \emph{underline representation lemma}. 
\begin{lemma}[Underline representation for $\langle GAG \rangle$] \label{lem:underlineGAG}
	It holds that
	\begin{equation*}
		\beta_{1^{(*)}2^{(*)}}	 \big\langle G_1^{(*)} A G_2^{(*)} - M^A_{1^{(*)}2^{(*)}}\big\rangle = - \langle M_1 \underline{\lambda W G_1^{(*)} A G_2^{(*)}}\rangle + \mathcal{O}_\prec\left( \sqrt{\frac{\rho_1 \rho_2}{N^2 \ell \eta_1 \eta_2}} \s_2 \left[1 + \frac{\psi_1}{N \ell} + \frac{(\psi_2)^{1/2}}{(N \ell)^{1/4}}\right] \right) \,. 
	\end{equation*}
\end{lemma}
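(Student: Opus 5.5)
The plan is to start from the traced identity \eqref{eq:2Gbasicident2}. The underline term already appears on its right-hand side, so it suffices to show that the three remaining terms are all bounded by the stated error. These terms are $\langle M_1 A (G_2 - M_2)\rangle$, $\lambda^2 \langle G_1 A G_2\rangle\langle (G_2-M_2)M_1\rangle$ and $\lambda^2\langle G_1-M_1\rangle\langle G_1AG_2M_1\rangle$. The cases with adjoints are handled identically, using the identity with $G_i\to G_i^*$ and the bound \eqref{eq:beta12bound} for all four $\beta$'s. Throughout we are in the global regime $\lambda^2\rho_i/\eta_i\lesssim 1$. We also have the single resolvent global law of Proposition~\ref{prop:global} at our disposal.

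\emph{First term.} We apply the averaged law \eqref{eq:avGL} with $B=M_1A$. This gives $|\langle M_1A(G_2-M_2)\rangle|\prec (N\eta_2)^{-1}\langle \Im M_2\, M_1 A A^* M_1^*\rangle^{1/2}\rho_2^{-1/2}$. The point is to compare this with $\sqrt{\rho_1\rho_2/(N^2\ell\eta_1\eta_2)}\,\s_2$. For this we write $M_1 M_1^*=\Im M_1/(\eta_1+\lambda^2\langle\Im M_1\rangle)$, which follows from the MDE. In the global regime this gives $M_1M_1^*\lesssim \Im M_1/\eta_1$. Next we use regularity: $A=\mathring A$ up to the identity component, and the subtracted identity part is precisely what the second summand of $\s_2^2$ accounts for. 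Together with the trace inequality $\langle \Gamma_2 M_1AA^*M_1^*\rangle\lesssim \eta_1^{-1}\rho_1\langle\Gamma_1 A\Gamma_2A^*\rangle$ (after cyclically moving $M_1$ and bounding with $\Im M_1/\eta_1$), this yields the bound $(N\eta_2)^{-1}\sqrt{\rho_1\rho_2/\eta_1}\,\s_2/\sqrt{\rho_2}$, up to constants. Finally, $\ell\le\eta_2\rho_2$ turns this into the required form. I expect this step to be the main obstacle. The issue is that $\s_2$ is a Hilbert--Schmidt-type quantity weighted by $\Gamma_1,\Gamma_2$, so one must route $M_1$ correctly through the trace. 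One must also handle the identity component of $A$: since $\theta=1$, $A$ is regular, so the term $\langle M_1 M_2\rangle$-multiple cancels against the $M(z_1,A,z_2)$ normalization. If the direct trace inequality fails, I would instead expand $\Im M_2$ via the spectral decomposition of $H_0$, as in Lemma~\ref{lem:Mmax1}.

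\emph{Second and third terms.} For the second term we write $\langle G_1AG_2\rangle=\langle M^A_{12}\rangle+O_\prec(\psi_1\cdot\text{error})$. The deterministic part is bounded by the $M$-bounds \eqref{eq:Mbounds}, which give $\sqrt{\rho/\eta}\sqrt{\rho_1\rho_2}\s_2$. The second factor, $\lambda^2|\langle (G_2-M_2)M_1\rangle|$, is at most $\lambda^2(N\eta_2)^{-1}\|M_1\|$-type by \eqref{eq:avGL}. In the global regime $\lambda^2\|M_1\|\lesssim \lambda^2/(\eta_1+\lambda^2\rho_1)\lesssim \rho_1^{-1}$, so after multiplying the product is $\lesssim (N\ell)^{-1/2}$ times the target. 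The fluctuation part contributes the $\psi_1/(N\ell)$ term. For the third term we use $|\langle G_1-M_1\rangle|\prec (N\eta_1)^{-1}$. For $|\langle G_1AG_2M_1\rangle|$ we apply Schwarz, $|\langle G_1AG_2M_1\rangle|\le \langle G_1AG_2G_2^*A^*G_1^*\rangle^{1/2}\langle M_1M_1^*\rangle^{1/2}$, together with Ward identities, giving $(\eta_1\eta_2)^{-1/2}\langle\Im G_1A\Im G_2A^*\rangle^{1/2}\langle|M_1|^2\rangle^{1/2}$. We then insert $\langle \Im G_1A\Im G_2A^*\rangle\le \langle\widehat M^A_{12}A^*\rangle+\rho_1\rho_2\s_2^2\psi_2/\sqrt{N\ell}$ and bound the deterministic part by the last line of \eqref{eq:Mbounds}. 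The square root of the $\psi_2$ contribution produces the $(\psi_2)^{1/2}/(N\ell)^{1/4}$ term. Using $\langle|M_1|^2\rangle\lesssim\rho_1/\lambda^2\rho_1\cdot\lambda^{-0}$ from \eqref{eq:boundM}, the factors of $\lambda$ and $\rho$ combine to the claimed size.

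Collecting the three bounds, and dividing by nothing since $\beta_{12}$ already sits on the left, gives the lemma. The routine parts are the Ward and Schwarz manipulations. The genuinely delicate part is expressing the first term through $\s_2$.
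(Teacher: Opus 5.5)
Your decomposition is the paper's: you take the three non-underlined terms of \eqref{eq:2Gbasicident2}, split $\langle G_1AG_2\rangle$ into $\langle M^A_{12}\rangle$ plus a $\psi_1$-fluctuation in the second term, and use Schwarz and Ward in the third term. Your third term is fine. The gap is in the first term, at the step you flagged.

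Applying \eqref{eq:avGL} to $B=M_1A$ in the stated form gives the weight $\langle\Gamma_2BB^*\rangle=\langle\Gamma_2M_1AA^*M_1^*\rangle$. All $M$'s are functions of $H_0$, so $M_1$ commutes with $\Gamma_2$, and this equals $\langle\Gamma_2|M_1|^2AA^*\rangle$. Both weights therefore sit on the \emph{same} side of $A$, while $\s_2^2$ contains $\langle\Gamma_1A\Gamma_2A^*\rangle$ with the weights on opposite sides. No cyclic rearrangement changes this, and your trace inequality fails in exactly the inhomogeneous setting of the paper. Take $H_0$ diagonal and $A=\bm e_i\bm e_k^*$ with $i\neq k$; this $A$ is $(z_1,z_2)$-regular and the second summand of $\s_2^2$ vanishes for it. The left side is $N^{-1}(\Gamma_2)_{ii}|(M_1)_{ii}|^2=N^{-1}\frac{\pi\rho_1}{\eta_1+\lambda^2\pi\rho_1}(\Gamma_1)_{ii}(\Gamma_2)_{ii}$. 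The right side is $N^{-1}\frac{\rho_1}{\eta_1}(\Gamma_1)_{ii}(\Gamma_2)_{kk}$. The ratio $(\Gamma_2)_{ii}/(\Gamma_2)_{kk}$ is unbounded. Your fallback of expanding in the eigenbasis of $H_0$ only exhibits this. The discussion of an ``identity component'' of $A$ is beside the point: $A=\mathring A$ already, and $\s_2^2\ge\langle\Gamma_1A\Gamma_2A^*\rangle$ directly.

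The fix is to use the averaged law in its adjoint form. Conjugate and apply \eqref{eq:avGL} at $\bar z_2$ with $B^*$, using $\Gamma(\bar z)=\Gamma(z)$; this gives the weight $\langle\Gamma_2B^*B\rangle=\langle\Gamma_2A^*M_1^*M_1A\rangle$. Now the MDE identity $M_1M_1^*=\Im M_1/(\eta_1+\lambda^2\pi\rho_1)\le\Im M_1/\eta_1$ sits between $A^*$ and $A$. Hence $\langle\Im M_2A^*M_1M_1^*A\rangle\le\pi^2\eta_1^{-1}\rho_1\rho_2\langle\Gamma_1A\Gamma_2A^*\rangle$, and so $|\langle M_1A(G_2-M_2)\rangle|\prec(N\eta_2)^{-1}\sqrt{\rho_1/\eta_1}\,\s_2\le(N\ell)^{-1/2}\sqrt{\rho_1\rho_2/(N\eta_1\eta_2)}\,\s_2$. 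This is the paper's estimate.

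A minor point on the second term. The bound $\lambda^2\|M_1\|\lesssim\rho_1^{-1}$ is weaker than the available $\lambda^2\|M_1\|\lesssim\lambda^2/\eta_1$, and by itself it does not close for general global pairs (try $H_0=0$, $\lambda=1$, $\eta_1\gg\eta_2\sim1$). It works here only because of the $\theta=1$ setting. There $\lambda^2|\langle M_1M_2^*\rangle|\gtrsim1$ still holds at time $0$, since $M_{i,0}=\ee^{-1/2}M_i$. By Schwarz this forces $\lambda^2\langle|M_i|^2\rangle=\frac{\lambda^2\pi\rho_i}{\eta_i+\lambda^2\pi\rho_i}\gtrsim1$, hence $\eta_i\sim\lambda^2\rho_i$ in the global regime. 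Either state this, or do as the paper does: use $M_1M_1^*\lesssim\Im M_1/\eta_1$, $\langle\Im M_1\Im M_2\rangle\lesssim\frac{\rho_1}{\eta_2}\wedge\frac{\rho_2}{\eta_1}$, and $\lambda^2\rho_i/\eta_i\lesssim1$.
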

Moreover, we also have the following underline representation for $\langle \Im G_1 A \Im G_2 A^* \rangle$, lying the basis for proving the second master inequality in \eqref{eq:master}. 
\begin{lemma}[Underline representation for $\langle \Im GA\Im GA\rangle$] \label{lem:underlineGAGA}
	It holds that 
	\begin{equation} \label{eq:underlineGAGA}
		\begin{split}
			&\langle \Im G_1 A \Im G_2 A^*\rangle - \langle \widehat{M}_{12}^A \rangle \\
			= &\frac{\ii}{2} \left( \langle M_1 \underline{\lambda W G_1 A \Im G_2A^*}\rangle - \langle M_1^* \underline{\lambda W G_1^* A \Im G_2A^*}\rangle  \right) \\
			&+ \frac{\lambda^2}{4 \beta_{12}}  \langle M_1 \underline{\lambda W G_1 A G_2}\rangle \langle M_1 A^* M_2\rangle - \frac{\lambda^2}{4 \beta_{1^*2}}  \langle M_1^* \underline{\lambda W G_1^* A G_2}\rangle \langle M_1^* A^* M_2\rangle \\
			&- \frac{\lambda^2}{4 \beta_{12^*}}  \langle M_1 \underline{\lambda W G_1 A G_2^*}\rangle \langle M_1 A^* M_2^*\rangle + \frac{\lambda^2}{4 \beta_{1^*2^*}}  \langle M_1^* \underline{\lambda W G_1^* A G_2^*}\rangle \langle M_1^* A^* M_2^*\rangle \\
			& + \mathcal{O}_\prec \left(\frac{\rho_1 \rho_2}{\sqrt{N \ell}} \big(\s_2\big)^2 \left[ 1 +\frac{\psi_1}{N \ell} + \frac{(\psi_2)^{1/2}}{(N \ell)^{1/4}} \right]\right) \,. 
		\end{split}
	\end{equation}
\end{lemma}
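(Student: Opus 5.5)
We start from the resolvent identity for $\Im G_2 = \frac{1}{2\ii}(G_2 - G_2^*)$ and for $\Im G_1$, and write
\[
\langle \Im G_1 A \Im G_2 A^*\rangle = \frac{1}{2\ii}\big(\langle G_1 A \Im G_2 A^*\rangle - \langle G_1^* A \Im G_2 A^*\rangle\big).
\]
For each of the two pieces we derive an analogue of \eqref{eq:2Gbasicident}, with $A\Im G_2 A^*$ playing the role of the observable. Expanding $G_1 = M_1 - M_1\underline{\lambda W G_1} + \lambda^2 M_1\langle G_1-M_1\rangle G_1$ and multiplying on the right by $A\Im G_2A^*$ gives
\[
\langle G_1 A\Im G_2 A^*\rangle = \langle M_1 A \Im G_2 A^*\rangle - \langle M_1\underline{\lambda W G_1 A\Im G_2 A^*}\rangle + \lambda^2\langle M_1 G_1 A\Im G_2 A^*\rangle\langle \Im G_2\text{-type}\rangle + \dots
\]
The underline here also carries the self-energy term from $\Im G_2$, and the full underline is as defined in the paper. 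The important point is that no stability operator is inverted at this level. The reason is that the new renormalization term $\lambda^2\langle G_1 A \Im G_2\rangle\,\cdot$ produced by the underline involves a \emph{one}-$\Im G$ chain $\langle G_1 A \Im G_2 \rangle$, not the full quantity. Next, we write $\langle M_1 A\Im G_2 A^*\rangle = \frac{1}{2\ii}\langle (G_2-G_2^*)A^*M_1A\rangle$, apply the single resolvent law \eqref{eq:avLL} with $B = A^*M_1A$, and replace it by $\langle M_1 A \Im M_2 A^*\rangle$. The error is bounded by $\frac{1}{N\eta_2}\langle \Gamma_2 A^* M_1 M_1^* A\rangle^{1/2}\|\cdots\|$, which is at most $\rho_1\rho_2\s_2^2/\sqrt{N\ell}$ after using $\|M_1\|\lesssim 1/(\lambda^2\rho_1+\eta_1)$ and regularity.

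The remaining deterministic-looking terms have the form $\lambda^2\langle G_1^{(*)} A G_2^{(*)}\rangle\langle\cdot\rangle$. They arise from the self-energy of $\Im G_2$, namely $\lambda^2\langle G_1^{(*)}AG_2^{(*)}\rangle \langle G_2^{(*)}A^*M_1^{(*)}\rangle$. For each of these we invoke Lemma~\ref{lem:underlineGAG} to rewrite $\langle G_1^{(*)}AG_2^{(*)}\rangle$ as $\langle M^A_{1^{(*)}2^{(*)}}\rangle - \beta^{-1}\langle M_1\underline{\lambda W G_1^{(*)} A G_2^{(*)}}\rangle$ plus an error. We likewise replace $\langle G_2^{(*)}A^* M_1^{(*)}\rangle$ by $\langle M_2 A^* M_1\rangle$ using \eqref{eq:avLL}, whose error is controlled via Lemma~\ref{lem:m12bounds}. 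Collecting the purely deterministic parts across the four sign combinations should reproduce exactly $\langle\widehat M_{12}^A\rangle$, with the definitions \eqref{eq:m12} and \eqref{eq:hatMfull}. The linear underline terms then yield the four $\frac{\lambda^2}{4\beta}\langle M\underline{\cdots}\rangle\langle M A^* M\rangle$ terms displayed in \eqref{eq:underlineGAGA}, with the signs dictated by the $\pm$ in $\Im G = (G-G^*)/(2\ii)$. We will verify this algebraic matching by direct comparison, using $\beta_{12}M^A_{12}$ formulas.

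For the error terms, we multiply the error $\mathcal{O}_\prec(\sqrt{\rho_1\rho_2/(N^2\ell\eta_1\eta_2)}\,\s_2[\dots])$ from Lemma~\ref{lem:underlineGAG} by $\lambda^2|\langle M_1A^*M_2\rangle|/|\beta|$. By the fourth bound of \eqref{eq:Mbounds} (via $\widehat M$) and the $\s_2$-term containing $1/(1-\lambda^2\langle M_1M_2\rangle)$, this product is at most $\frac{\rho_1\rho_2}{\sqrt{N\ell}}\s_2^2[\dots]$. Here we use $\lambda^2\rho_i/\eta_i\lesssim 1$ and $|\beta|^{-1}\lesssim 1$ from \eqref{eq:beta12bound}, which hold in the global domain.

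The quadratic fluctuation terms $\lambda^2\langle G_1-M_1\rangle\langle\cdots\rangle$ and $\lambda^2\langle (G_1^{(*)}A G_2^{(*)}) - M\rangle\langle G_2 - M_2\,\cdots\rangle$ are estimated by the single resolvent law together with the assumptions $\Psi_1\prec\psi_1$ and $\Psi_2\prec\psi_2$. The chains with an extra $G$ are handled by the Schwarz bounds of Lemma~\ref{lem:Schwarz}, for example $|\langle G_1 A\Im G_2A^*G_1\rangle|\le\eta_1^{-1}\langle\Im G_1A\Im G_2A^*\rangle$. The resulting factor $\langle\Im G_1A\Im G_2A^*\rangle\lesssim\rho_1\rho_2\s_2^2(1+\psi_2/\sqrt{N\ell})$ enters multiplied by $\lambda^2/(N\eta)$. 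This gives terms of order $\psi_2/(N\ell)$, which can be absorbed through $\psi_2/(N\ell)\le 1+\psi_2^{1/2}/(N\ell)^{1/4}\cdot\dots$, or more simply through AM-GM.

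The main obstacle is the exact bookkeeping showing that the deterministic terms assemble into $\langle \widehat M^A_{12}\rangle$ with no leftover of size $\lambda^2\langle M_1AM_2\rangle\langle M_1A^*M_2\rangle/\beta$. This requires matching the second term of \eqref{eq:m12} against the self-energy contributions. Regularity of $A$ (i.e. $\theta=1$, so $\langle M_1AM_2^*\rangle=0$) is used precisely to kill the $\beta_{12^*}$-type combinations, which are potentially unbounded.
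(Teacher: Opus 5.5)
Your route works in outline and is organized differently from the paper's.

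\textbf{Comparison of routes.} The paper inverts $\mathcal{B}_{12}$ in \eqref{eq:2Gbasicident} and tests against $A^*$. This absorbs the stability inversion into the modified observable $\widetilde A^*_{1^{(*)}2^{(*)}}$ of \eqref{eq:XA}. The paper then bounds every non-underlined term of the fourfold combination by hand, \eqref{eq:under11}--\eqref{eq:under32}.

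You instead expand only $G_1$, keeping $A\Im G_2A^*$ as the observable, so no stability operator appears. You then insert Lemma~\ref{lem:underlineGAG} into the self-energy factors $\lambda^2\langle G_1^{(*)}AG_2^{(*)}\rangle\langle G_2^{(*)}A^*M_1^{(*)}\rangle$. This gives exactly the underline terms of \eqref{eq:underlineGAGA}, with the right signs. It also bundles the paper's \eqref{eq:under12}, \eqref{eq:under22}, \eqref{eq:under22G-M} and \eqref{eq:under32} into one product: the error of Lemma~\ref{lem:underlineGAG} times $\lambda^2|\langle M_1A^*M_2\rangle|$. With the same $M$-bound \eqref{eq:Mbound} that the paper uses, this product is $\lesssim\max_i(\lambda^2\rho_i/\eta_i)\,\rho_1\rho_2\s_2^2(N\ell)^{-1}[\cdots]$, which is a genuine economy.

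The deterministic matching you call the main obstacle is an exact identity. Since $\langle M^A_{1^{(*)}2^{(*)}}\rangle=\langle M_1^{(*)}AM_2^{(*)}\rangle/\beta_{1^{(*)}2^{(*)}}$, the four products reassemble \eqref{eq:m12}--\eqref{eq:hatMfull} term by term. Regularity plays no role there; it enters only through the $M$-bounds and Lemma~\ref{lem:underlineGAG}. In the global regime all four $\beta$'s are bounded below by \eqref{eq:beta12bound}, so nothing needs to be killed.

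\textbf{The step that fails.} You replace $\langle M_1A\Im G_2A^*\rangle$ by $\langle M_1A\Im M_2A^*\rangle$ using \eqref{eq:avLL} with $B=A^*M_1A$. The error then involves $\langle\Gamma_2A^*M_1AA^*M_1^*A\rangle^{1/2}$, or $\|A\|$ after a norm bound, and neither is controlled by $\s_2$.

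This is not a technicality. The quantity $\s_2$ only sees $\Gamma_1$-weighted expressions, while a single factor $M_1$ is not dominated by $\Im M_1$; only $M_1M_1^*\le\Im M_1/\eta_1$ holds. Concretely, take $H_0=\mathrm{diag}(\mu_j)$, $z_1=z_2$, and $A=\sqrt N\,|\bm e_a\rangle\langle\bm e_b|$ with $a\neq b$. This $A$ is regular and $\s_2^2=(\Gamma_1)_{aa}(\Gamma_2)_{bb}$. Your $G_1$-piece is $(M_1)_{aa}(\Im G_2)_{bb}$. Its fluctuation, of size $|(M_1)_{aa}|\sqrt{\rho_2/(N\eta_2)}\,(\Gamma_2)_{bb}$, exceeds the target $\rho_1\rho_2\s_2^2/\sqrt{N\ell}$ by a factor of order $|(M_1)_{aa}|/(\Im M_1)_{aa}=|\mu_a-z_1-\lambda^2\langle M_1\rangle|/(\eta_1+\pi\lambda^2\rho_1)$. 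That factor is unbounded. So the $G_1$- and $G_1^*$-pieces cannot be estimated separately; only their combination is small.

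The remedy is already in your decomposition $\frac{1}{2\ii}(T_+-T_-)$. Add the two pieces \emph{before} estimating, so the prefactor becomes $\Im M_1$. Then use positivity of $A^*\Im M_1A$ and the isotropic law in its spectral decomposition, i.e.\ \eqref{eq:singleGpos}, to get $|\langle(\Im G_2-\Im M_2)A^*\Im M_1A\rangle|\prec\langle\Im M_2A^*\Im M_1A\rangle/\sqrt{N\eta_2\rho_2}\lesssim\rho_1\rho_2\s_2^2/\sqrt{N\ell}$. This is exactly the paper's \eqref{eq:under11}.

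\textbf{The quadratic term.} In your expansion this term is $\lambda^2\langle G_1-M_1\rangle\langle M_1G_1A\Im G_2A^*\rangle$; there is no chain $\langle G_1A\Im G_2A^*G_1\rangle$. To reach the stated error, split it by Schwarz as $\langle G_1A\Im G_2A^*G_1^*\rangle^{1/2}\langle\Im G_2A^*M_1M_1^*A\rangle^{1/2}$. Apply a Ward identity to the first factor, and $M_1M_1^*\le\Im M_1/\eta_1$ together with \eqref{eq:singleGpos} to the second. This gives $\frac{\lambda^2\rho_1}{\eta_1}\frac{\rho_1\rho_2\s_2^2}{N\ell}\big[1+\psi_2^{1/2}(N\ell)^{-1/4}\big]$, as in \eqref{eq:under31}.

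Your bound is instead linear in $\psi_2$, with relative size $\psi_2/(N\ell)$. That is \emph{not} dominated by $1+\psi_1/(N\ell)+\psi_2^{1/2}/(N\ell)^{1/4}$ once $\psi_2\gg(N\ell)^{3/2}$, and no AM--GM step changes this. It would still suffice for the second master inequality in \eqref{eq:master}, but it does not prove the lemma as stated.
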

The proofs of Lemmas \ref{lem:underlineGAG}--\ref{lem:underlineGAGA} are given in Section \ref{subsec:underlineproof} below. Armed with the above lemmas, we now turn to proving Proposition \ref{prop:master}. Similarly to the proofs of Propositions \ref{prop:zagavGron2G}--\ref{prop:zagavGron2G1A}, we will drop all indices of $G_i, \eta_i, \rho_i$ and refrain from distinguishing $G, G^*$ and $A, A^*$ (unless we wish to stress positivity of certain matrices). The estimates required for this argument are, moreover, very similar to the ones presented in the proofs of Propositions \ref{prop:zagavGron2G}--\ref{prop:zagavGron2G1A}. However, one new term corresponding to a second order cumulant arises, which is given special attention in the proofs below. Other terms are discussed only very briefly. 
\subsubsection{Proof of the first master inequality in \eqref{eq:master}} \label{subsubsec:master1}
As described above, the first master inequality can be obtained by a minimalistic cumulant expansion as in \eqref{eq:isocumex} and \eqref{eq:avcumex}, relying on the underline representation from Lemma \ref{lem:underlineGAG}. Exactly as in the proof of the isotropic and average single resolvent global laws, the cumulant expansion produces a second order term (analog of the first terms on the rhs.~of \eqref{eq:isocumex} and \eqref{eq:avcumex}), that is not present in a similar cumulant expansion employed in the \emph{zag} step performed in Section~\ref{subsec:zagstep2}. 

For this new term, we bound, analogously to \eqref{eq:global2ndiso} and \eqref{eq:2ndordav},
\begin{equation}
	\begin{split}
		\frac{\lambda^2}{N^3} \bigg| \sum_{a,b} (GAGM)_{ab} (GAGG)_{ab}\bigg| &\lesssim \frac{\lambda^2}{N^2} \langle GAGMM^*G^* A^* G^* \rangle^{1/2} \langle GAGGG^*G^* A^* G^* \rangle^{1/2}  \\
		&\lesssim  \left(\frac{1}{\sqrt{N \eta \rho}} \sqrt{\frac{\rho^2}{N \eta^2}} \s_2\left[1 + \frac{(\psi_2)^{1/2}}{(N \eta \rho)^{1/4}}\right]\right)^{2}
	\end{split}
\end{equation}
using $GG^* = \Im G/\eta$, norm bounds $\Vert \Im G \Vert \lesssim \eta^{-1}$ and $\Vert MM^*\Vert \lesssim \eta^{-2}$, and that $\lambda^2 \rho/\eta \lesssim 1$ in the global regime.  

Higher order cumulant terms can be estimated similarly to the ones arising in the \emph{zag} step in \eqref{eq:n11A}, \eqref{eq:n13A}, \eqref{eq:k3n2A}, \eqref{eq:k3n3A}, \eqref{eq:casea1A}, or \eqref{eq:casec1A}, as they carry the same index structure (recall the discussions around \eqref{eq:global3rdiso} and at the end of the proof of Lemma \ref{lem:bootstrapav}), and shall hence not be discussed further. This concludes the proof of the first master inequality in \eqref{eq:master} and we proceed with proving the second one. 
\subsubsection{Proof of the second master inequality in \eqref{eq:master}} \label{subsubsec:master2}
As described above, we also prove the second master inequality by cumulant expansion, now relying on the underline representation from Lemma \ref{lem:underlineGAGA}. Contrary to Section \ref{subsubsec:master1}, there are now multiple underlined terms, that have to be controlled. We start with an underlined term from the second line of \eqref{eq:underlineGAGA}. The new second order cumulant term can be bounded as
\begin{equation}
	\begin{split}
		&\frac{\lambda^2}{N^3} \bigg| \sum_{a,b} (GA\Im GA^*M)_{ab} (GA\Im G A^* \Im G)_{ab}\bigg|  \\
		\lesssim &\frac{\lambda^2}{N^3} \sum_a  (GA\Im G A^* G^*)_{aa} \sqrt{\sum_b (M^* A \Im G A^* M)_{bb}} \sqrt{\sum_b (\Im G A \Im G A^* \Im G)_{bb}} \\
		\lesssim  &\left(\frac{\rho^2 \big(\s_2\big)^2}{\sqrt{N \eta \rho}} \left[1 + \frac{\psi_2}{\sqrt{N \eta \rho}}\right]\right)^{2}
	\end{split}
\end{equation}
using that 
\begin{equation*}
	\langle \Im GA \Im GA^* \rangle \lesssim \rho^2 \big(\s_2\big)^2 \left[1 + \frac{\psi_2}{\sqrt{N \eta \rho}}\right]  \quad  \text{and} \quad 	\langle \Im G A^* MM^* A\rangle \lesssim \frac{1}{\eta} \langle \Im M A \Im M A^* \rangle \lesssim \frac{\rho^2 \big(\s_2\big)^2}{\eta}
\end{equation*}
which follow by a single resolvent local law, similarly to \eqref{eq:singleGpos} below. 

For an underlined term from the third and fourth line of \eqref{eq:underlineGAGA}, we use that $|\beta_{1^{(*)}2^{(*)}}| \gtrsim 1$ in the global regime, and the $M$-bound \eqref{eq:Mbound}. Then, the second order cumulant term can be bounded as
\begin{equation}
	\begin{split}
		&\frac{\lambda^4}{N^3} \sqrt{\frac{\rho}{\eta}} \rho \,  \s_2 \bigg| \sum_{a,b} (GAGM)_{ab} (GA\Im GA^* \Im G)_{ab} \bigg| \\
		\lesssim &\left(\frac{\lambda^2 \rho}{\eta}\right)^2 \frac{1}{(N \eta \rho)^{3/2}} \rho \, \s_2 \langle \Im GA \Im GA^* \rangle^{3/2} \lesssim \left(\frac{\rho^2 \big(\s_2\big)^2}{\sqrt{N \eta \rho}} \left[1 + \frac{\psi_2}{\sqrt{N \eta \rho}}\right]\right)^{2}
	\end{split}
\end{equation}
where we used a Schwarz inequality, $\lambda^2 \rho/\eta \lesssim 1$ and $N \eta \rho \ge 1$. 

Higher order cumulant terms can be estimated similarly to the ones arising in the \emph{zag} step in \eqref{eq:n11}, \eqref{eq:n12}, \eqref{eq:n13}, \eqref{eq:k3n2a}, \eqref{eq:k3n2}, \eqref{eq:k3n3}, or \eqref{eq:kge4casei}--\eqref{eq:casec2}, as they carry the same index structure (recall the discussions around \eqref{eq:global3rdiso} and at the end of the proof of Lemma \ref{lem:bootstrapav}), and shall hence not be discussed further. This concludes the proof of the second master inequality in \eqref{eq:master} and hence the proof of Proposition \ref{prop:master}. 
\qed
\subsubsection{Proofs of Lemmas \ref{lem:underlineGAG} and \ref{lem:underlineGAGA}} \label{subsec:underlineproof}
In this section, we collect the proofs of the underline representations in Lemmas \ref{lem:underlineGAG} and \ref{lem:underlineGAGA}. 
\begin{proof}[Proof of Lemma \ref{lem:underlineGAG}]
We focus on the case of no adjoints taken, the other cases are handled similarly. 	To establish Lemma \ref{lem:underlineGAG}, we now estimate all but the underline term in \eqref{eq:2Gbasicident2} separately. For the first one, we have that
	\begin{equation} \label{eq:cont1}
		|\langle M_1 A (G_2 - M_2)\rangle| \prec \frac{1}{N \eta_2} \sqrt{\frac{\langle \Im M_2 A^* M_1 M_1^* A \rangle}{\rho_2}} \lesssim \frac{1}{\sqrt{N \ell}} \sqrt{\frac{\rho_1 \rho_2}{N \eta_1 \eta_2}} \s_2
	\end{equation}
	as a consequence of the average single resolvent law and estimating $M_1M_1^* \lesssim \Im M_1/\eta_1$. 
	
	Next, we consider the first term in the second line of \eqref{eq:2Gbasicident2} and subtract and add the deterministic approximation $\langle M(z_1, A, z_2) \rangle$ to $\langle G_1 A G_2 \rangle$. For the deterministic approximation, we use the overestimate (since in the case without any adjoint taken, it is actually zero, but we aim to treat any constellation of stars) by the worst case bound in \eqref{eq:Mbounds} as
	\begin{equation} \label{eq:Mbound}
		|	\langle M_{1^{(*)}2^{(*)}}^A \rangle| = \frac{|\langle M_1^{(*)} A M_2^{(*)} \rangle|}{|\beta_{1^{(*)}2^{(*)}}|} \lesssim \sqrt{\frac{\rho_1}{\eta_2} + \frac{\rho_2}{\eta_1}} \sqrt{\rho_1 \rho_2}\s_2\,. 
	\end{equation}
	The $M$-contribution can then be bounded as 
	\begin{equation} \label{eq:cont2}
		\begin{split}
			&|\lambda^2 \langle (G_2 - M_2) M_1 \rangle| \sqrt{\frac{\rho_1}{\eta_2} + \frac{\rho_2}{\eta_1}} \sqrt{\rho_1 \rho_2} \s_2\\
			\prec \ & \lambda^2 \langle \Im M_1 \Im M_2\rangle^{1/2} \sqrt{\frac{\rho_1}{\eta_2} + \frac{\rho_2}{\eta_1}} \frac{1}{\sqrt{N \ell}}\sqrt{\frac{\rho_1 \rho_2}{N \eta_1 \eta_2}} \s_2 \\
			\lesssim \ &  \left(\frac{\lambda^2 \rho_1}{\eta_1} + \frac{\lambda^2 \rho_2}{\eta_2}\right)\frac{1}{\sqrt{N \ell}}\sqrt{\frac{\rho_1 \rho_2}{N \eta_1 \eta_2}} (\s_2)^{1/2} \lesssim \frac{1}{\sqrt{N \ell}}\sqrt{\frac{\rho_1 \rho_2}{N \eta_1 \eta_2}} \s_2
		\end{split}
	\end{equation}
	where in the first step, we used a single resolvent law and $M_1M_1^* \lesssim \Im M_1/\eta_1$. To go to the last line, we then bounded $\langle \Im M_1 \Im M_2\rangle^{1/2} \lesssim \sqrt{\rho_1/\eta_2 \wedge \rho_2/\eta_1}$ and used a Young inequality. In the ultimate step, we finally used that we are in the global regime. The remaining $(G-M)$-type contribution can then be estimated as
	\begin{equation} \label{eq:cont2G-M}
		\min_i \left(\frac{\lambda^2 \rho_i}{\eta_i}\right) \frac{1}{\sqrt{N \ell}}\sqrt{\frac{\rho_1 \rho_2}{N \eta_1 \eta_2}} (\s_2)^{1/2} \frac{\psi_1 }{N \ell} \lesssim \frac{1}{\sqrt{N \ell}}\sqrt{\frac{\rho_1 \rho_2}{N \eta_1 \eta_2}} (\s_2)^{1/2} \frac{\psi_1 }{N \ell} \,. 
	\end{equation}
	
	Finally, we turn to the last term in the second line of \eqref{eq:2Gbasicident2}, which we estimate as
	\begin{equation} \label{eq:cont3}
		\begin{split}
			\lambda^2  |\langle G_1 - M_1 \rangle \langle G_1 A G_2 M_1 \rangle| &\prec \frac{\lambda^2}{N \eta_1} \frac{\rho_1^{1/2}}{\eta_1 \eta_2^{1/2}} \langle \Im G_1 A \Im G_2 A^* \rangle^{1/2} \\
			&\prec \frac{\lambda^2 \rho_1}{\eta_1} \frac{1}{\sqrt{N \ell}} \sqrt{\frac{\rho_1 \rho_2}{N \eta_1 \eta_2}} \s_2\left[1 + \frac{\sqrt{\psi_2}}{(N \ell)^{1/4}}\right] \\
			&\lesssim \frac{1}{\sqrt{N \ell}} \sqrt{\frac{\rho_1 \rho_2}{N \eta_1 \eta_2}}  \s_2\left[ 1 + \frac{\sqrt{\psi_2}}{(N \ell)^{1/4}}\right] \,. 
		\end{split}
	\end{equation}
	Combining the estimates in \eqref{eq:cont1}, \eqref{eq:cont2}, \eqref{eq:cont2G-M}and \eqref{eq:cont3}, we conclude the proof of Lemma \ref{lem:underlineGAG}. 
\end{proof}

\begin{proof}[Proof of Lemma \ref{lem:underlineGAGA}] 
	From \eqref{eq:2Gbasicident}, we find that
	\begin{equation} \label{eq:fourfold} 
		\begin{split}
		&\langle G_1^{(*)} A G_2^{(*)} A^*- M_{1^{(*)}2^{(*)}}^A A^* \rangle   \\
		= \ &\langle M_1^{(*)} A (G_2^{(*)} - M_2^{(*)}) \widetilde{A}^*_{1^{(*)}2^{(*)}} \rangle - \langle M_1^{(*)} \underline{\lambda W G_1 A G_2} \widetilde{A}^*_{1^{(*)}2^{(*)}} \rangle \\
			&+ \lambda^2 \langle M_1^{(*)} \langle G_1^{(*)} A G_2^{(*)} \rangle (G_2^{(*)} - M_2^{(*)})\widetilde{A}^*_{1^{(*)}2^{(*)}} \rangle + \lambda^2 \langle M_1^{(*)} \langle G_1^{(*)} - M_1^{(*)} \rangle G_1^{(*)} A G_2^{(*)} \widetilde{A}^*_{1^{(*)}2^{(*)}} \rangle
		\end{split}
	\end{equation}
	where we denoted 
		\begin{equation} \label{eq:XA}
	\widetilde{A}^*_{1^{(*)}2^{(*)}} := 	\left(\big(\mathcal{B}_{1^{(*)}2^{(*)}}^*\big)^{-1}[A^*]\right)^* = A^* + \lambda^2 \frac{\langle M^{(*)}_1 A^* M_2^{(*)} \rangle}{\beta_{1^{(*)}2^{(*)}}} \,. 
	\end{equation}
	The three lines of underline terms in \eqref{eq:underlineGAGA} arise by taking fourfold linear combinations of $G_1/G_1^*$ and $G_2/G_2^*$ in \eqref{eq:fourfold} and additionally using the definition of $	\widetilde{A}^*_{1^{(*)}2^{(*)}}$ in \eqref{eq:XA}. 	It now remains to estimate all the other non-underlined terms arising from the fourfold linear combination of \eqref{eq:fourfold} together with the representation \eqref{eq:XA}, for which we will always discuss the two contributions in \eqref{eq:XA} separately. 
	
	Corresponding to the first term on the rhs.~of \eqref{eq:fourfold} and the first contribution of \eqref{eq:XA}, we bound
	\begin{equation} \label{eq:under11}
		|\langle \Im M_1 A (G_2 - M_2) A^* \rangle| \prec \frac{\langle \Im M_1 A \Im M_2 A^* \rangle}{\sqrt{N \ell}} \lesssim \frac{\rho_1 \rho_2}{\sqrt{N \ell}} \big(\s_2\big)^2\,, 
	\end{equation}
	where we used that, for $B \ge 0$ (here applied to $A^* \Im M_1 A$), using spectral decomposition of $B$ and the isotropic single resolvent law, it holds that
	\begin{equation} \label{eq:singleGpos}
		|\langle (G-M) B \rangle| \prec \frac{\langle \Im M B \rangle}{\sqrt{N \eta \rho}} \,. 
	\end{equation}
	We point out that in \eqref{eq:under11}, we combined the contributions of $G_1$ and $G_1^*$, which is possible since the first term on the rhs.~of \eqref{eq:XA} is unaffected by the choice of adjoints. 
	
	This is different for the second term in \eqref{eq:XA}, for which, additionally using  that $|\beta_{12}| \gtrsim 1$ in the global regime, we estimate (focusing on the case of no adjoints for simplicity of the presentation)
	\begin{equation} \label{eq:under12}
		\begin{split}
			\lambda^2 |\langle M_1 A (G_2 - M_2)\rangle| |\langle M_1 A^* M_2 \rangle| &\prec \lambda^2 \frac{\sqrt{\langle \Im M_2 A^* M_1 M_1^* A \rangle}}{N \eta_2 \rho_2^{1/2}} \sqrt{\frac{\rho_1}{\eta_2} + \frac{\rho_2}{\eta_1}} \sqrt{\rho_1 \rho_2} \s_2 \\
			& \lesssim \left(\frac{\lambda^2 \rho_1}{\eta_1} + \frac{\lambda^2 \rho_2}{\eta_2}\right) \frac{\rho_1 \rho_2}{N \ell} \big(\s_2\big)^2 \lesssim \frac{\rho_1 \rho_2}{\sqrt{N \ell}} \big(\s_2\big)^2 \,. 
		\end{split}
	\end{equation}
	Here, we used a single resolvent law, $M_1 M_1^* \lesssim \Im M_1/\eta_1$, the bound \eqref{eq:Mbound}, the fact, that we are in the global regime, and $N \ell \ge 1$. 
	
	Next we turn to the first term in the third line of \eqref{eq:fourfold} with the first contribution of \eqref{eq:XA}, and consider, for simplicity of the notation, the case of no adjoints taken. Moreover, similarly to the discussion above \eqref{eq:Mbound}, we subtract and add the deterministic approximation $\langle M(z_1, A, z_2)\rangle$ to $\langle G_1 A G_2 \rangle$ and bound the deterministic by \eqref{eq:Mbound}. The $M$-contribution can then be estimated, analogously to \eqref{eq:cont2}, as 
	\begin{equation} \label{eq:under21}
		\begin{split}
			&\lambda^2 |\langle (G_2 - M_2)A^* M_1 \rangle| \sqrt{\frac{\rho_1}{\eta_2} + \frac{\rho_2}{\eta_1}} \sqrt{\rho_1 \rho_2} \s_2 \\
			\prec \, &\lambda^2 \frac{1}{N \eta_2 \eta_1^{1/2}\rho_2^{1/2}}\sqrt{\frac{\rho_1}{\eta_2} + \frac{\rho_2}{\eta_1}} {\rho_1 \rho_2} \big(\s_2\big)^2 \lesssim \left(\frac{\lambda^2 \rho_1}{\eta_1} + \frac{\lambda^2 \rho_2}{\eta_2}\right) \frac{\rho_1 \rho_2 \big(\s_2\big)^2}{N \ell} \lesssim \frac{\rho_1 \rho_2 \big(\s_2\big)^2}{(N \ell)^{1/2}} \,,
		\end{split}
	\end{equation}
	where we again used a single resolvent law, $M_1 M_1^* \lesssim \Im M_1/\eta_1$, the characterization of the global regime, and $N \ell \ge 1$. Analogously, the $(G-M)$-type contribution, we bound as
	\begin{equation} \label{eq:under21G-M}
		\frac{\rho_1 \rho_2 \big(\s_2\big)^2}{\sqrt{N \ell}} \frac{\psi_1}{(N \ell)^{3/2} } \,. 
	\end{equation}
	For the first term in the third line of \eqref{eq:fourfold} in now remains to discuss the second contribution of \eqref{eq:XA}. Again, decomposing $\langle G_1 A G_2 \rangle$, we bound the $M$-type term, using \eqref{eq:Mbound}, the characterization of the global domain, and $\langle \Im M_1 \Im M_2\rangle^{1/2} \lesssim \sqrt{\rho_1/\eta_2 \wedge \rho_2/\eta_1}$, as
	\begin{equation} \label{eq:under22}
		\begin{split}
			&\lambda^4 |\langle (G_2 - M_2) M_1 \rangle| \left(\frac{\rho_1}{\eta_2} + \frac{\rho_2}{\eta_1}\right) \rho_1 \rho_2 \big(\s_2\big)^2 \\
			\prec \, &\frac{\lambda^4}{N \eta_2 \rho_2^{1/2} \eta_1^{1/2}} \langle \Im M_1 \Im M_2\rangle^{1/2}  \left(\frac{\rho_1}{\eta_2} + \frac{\rho_2}{\eta_1}\right) \rho_1 \rho_2 \big(\s_2\big)^2 \\
			\lesssim \, & \left(\max_i\frac{\lambda^2 \rho_i}{\eta_i} \right)^2 \frac{\rho_1 \rho_2 \big(\s_2\big)^2}{N \ell} \lesssim \frac{\rho_1 \rho_2 \big(\s_2\big)^2}{(N \ell)^{1/2}}
		\end{split}
	\end{equation}
	and the $(G-M)$-type term as
	\begin{equation} \label{eq:under22G-M}
		\begin{split}
			&\lambda^4 |\langle (G_2 - M_2) M_1 \rangle| \frac{1}{(N \eta_1 \eta_2)^{1/2}}\sqrt{\frac{\rho_1}{\eta_2} + \frac{\rho_2}{\eta_1}} \rho_1 \rho_2 \big(\s_2\big)^2 \psi_1  \\
			\prec \, &\frac{\lambda^4 \langle \Im M_1 \Im M_2\rangle^{1/2} }{N^{3/2} \eta_2^{3/2} \rho_2^{1/2} \eta_1}  \sqrt{\frac{\rho_1}{\eta_2} + \frac{\rho_2}{\eta_1}} \rho_1 \rho_2 \big(\s_2\big)^2 \psi_1   \lesssim \left(\max_i\frac{\lambda^2 \rho_i}{\eta_i} \right)^2 \frac{\rho_1 \rho_2\big(\s_2\big)^2}{(N \ell)^{3/2}} \lesssim \frac{\rho_1 \rho_2 \big(\s_2\big)^2}{(N \ell)^{1/2}} \frac{\psi_1}{N \ell} \,. 
		\end{split}
	\end{equation}
	
	We are now left with estimating the contributions arising from the last term in the third line of \eqref{eq:fourfold}, where for the first contribution of \eqref{eq:XA}, similarly to \eqref{eq:under11}, we can combine the contributions of $G_2$ and $G_2^*$, which is possible since the first term on the rhs.~of \eqref{eq:XA} is unaffected by the choice of adjoints. Hence, this contribution admits the bound
	\begin{equation} \label{eq:under31}
		\begin{split}
			\lambda^2 |\langle G_1 - M_1 \rangle| \, |\langle G_1 A \Im G_2 A^*M_1 \rangle | &\prec \frac{\lambda^2}{N \eta_1^{2}} \langle \Im G_1 A \Im G_2 A^* \rangle^{1/2} \langle \Im G_2  A^* \Im M_1 A \rangle^{1/2} \\
			&\prec \, \frac{\lambda^2 \rho_1}{\eta_1} \frac{\rho_1 \rho_2}{N \ell} \big(\s_2\big)^2 \left[1 + \frac{(\psi_2)^{1/2}}{(N \ell)^{1/4}}\right] \lesssim \frac{\rho_1 \rho_2}{\sqrt{N \ell}} \big(\s_2\big)^2\left[1 + \frac{(\psi_2)^{1/2}}{(N \ell)^{1/4}}\right] \,,
		\end{split}
	\end{equation}
	where in the first step, we used a single resolvent law and a Schwarz inequality. In the second step, we then employed the bound \eqref{eq:singleGpos} with $G \to \Im G_2$. 
	
	As the final term to estimate, we consider the second contribution of \eqref{eq:XA}. Here, no combination of $G_2$ and $G_2^*$ is accessible, and we need to estimate each contribution separately. Again, for notational simplicity, we focus on the case of no adjoints taken. Then, this term can be bounded, using \eqref{eq:Mbound}, as
	\begin{equation} \label{eq:under32}
		\begin{split}
			&\lambda^4 |\langle G_1 - M_1 \rangle| \, |\langle G_1 A G_2 M_1\rangle| \, \sqrt{\frac{\rho_1}{\eta_2} + \frac{\rho_2}{\eta_1}} \sqrt{\rho_1 \rho_2} \s_2 \\
			\prec \, & \frac{\lambda^4\rho_1^{1/2}}{N \eta_1^{2} \eta_2^{1/2} } \sqrt{\frac{\rho_1}{\eta_2} + \frac{\rho_2}{\eta_1}} \sqrt{\rho_1 \rho_2} \s_2 \langle \Im G_1 A \Im G_2 A^* \rangle^{1/2} \\ \lesssim \, & \left(\max_i\frac{\lambda^2 \rho_i}{\eta_i} \right)^2 \frac{\rho_1 \rho_2}{N \ell} \big(\s_2\big)^2 \left[1 + \frac{(\psi_2)^{1/2}}{(N \ell)^{1/4}}\right] \lesssim \frac{\rho_1 \rho_2}{\sqrt{N \ell}} \big(\s_2\big)^2 \left[1 + \frac{(\psi_2)^{1/2}}{(N \ell)^{1/4}}\right] \,. 
		\end{split}
	\end{equation}
	
	Collecting the estimates \eqref{eq:under11}, \eqref{eq:under12}, \eqref{eq:under21}, \eqref{eq:under21G-M}, \eqref{eq:under22}, \eqref{eq:under22G-M}, \eqref{eq:under31}, and \eqref{eq:under32}, we conclude the proof of Lemma \ref{lem:underlineGAGA}. 
\end{proof}

\appendix

\section{Additional technical results and proofs}
\label{app:techn}

\subsection{Proof of Lemma~\ref{lem:m12bounds}} \label{subsec:Mbounds}
The proof of the fifth relation in \eqref{eq:Mbounds} is immediate by the definitions \eqref{eq:m12} and \eqref{eq:defs2}, additionally involving simple algebra and elementary trace inequalities after having written out the formula \eqref{eq:hatMfull}. The proof of the remaining four bounds is similar among them, we thus just focus on the proof of the first bound in \eqref{eq:Mbounds}. To prove this bound we will use a \emph{meta-argument}, see, e.g., \cite{cook2018non}. 

We present only the main steps, since this proof is analogous to \cite[Proof of Lemma 4.3]{cipolloni2024optimal}. Let the matrix size $N$ be fixed, and consider the $(dN)\times (dN)$ Wigner matrix $W^{(d)}$ with entries having variance $1/(dN)$, and the deformation $H_0^{(d)}:=H_0\otimes I_d\in\C^{dN\times dN}$, where $I_d\in\C^{d\times d}$ is the identity matrix. Define the resolvent $G_\lambda^{(d)}(z):=(H_0^{(d)}+\lambda W^{(d)}-z)^{-1}$, then its deterministic approximation is still given by the solution of \eqref{eq:MDE} with $H_0$ being replaced with $H_0^{(d)}$, which we denote by $M_\lambda^{(d)}(z)$. In the following by $M_\lambda^{(d)}(z_1,A^{(d)},z_2)$ we denote the analog of \eqref{eq:m12} when $M_\lambda$ is replaced with $M_\lambda^{(d)}$ and $A^{(d)}:=A\otimes I_d$. The tensorization of the matrices is designed in such a way that we have
\begin{equation} \label{eq:metakey1}
\langle M_\lambda^{(d)}(z_1,A^{(d)},z_2) \rangle = \langle M_\lambda(z_1,A,z_2) \rangle \qquad \text{for all} \quad d \in \N
\end{equation}

The key idea of the meta-argument is to consider \eqref{eq:2Gbasicident} for $W = \mathrm{GUE}$ and using single resolvent laws, which yield that 
\begin{equation} \label{eq:metakey2}
\lim\limits_{d \to \infty} \left[ \langle M_\lambda^{(d)}(z_1,A^{(d)},{z_2})\rangle - \E_{\mathrm{GUE}}\langle G_\lambda^{(d)}(z_1)A^{(d)}G_\lambda^{(d)}({z_2})\rangle \right] = 0
\end{equation}
for any fixed $z_1, z_2 \in \C \setminus \R$, $A \in \C^{N \times N}$, Hermitian $H_0 \in \C^{N \times N}$, and $\lambda > 0$. 

Combining \eqref{eq:metakey1}--\eqref{eq:metakey2}, we can estimate $\langle M_\lambda(z_1, A, \overline{z_2})\rangle$ by writing 
\begin{equation}
\begin{split}
\label{eq:almth}
\hspace{-5mm}\langle G_\lambda^{(d)}(z_1)A^{(d)}G_\lambda^{(d)}(\overline{z_2})\rangle&=\langle G_\lambda^{(d)}(z_1)A^{(d)}G_\lambda^{(d)}(z_2)\rangle-2\ii\langle G_\lambda^{(d)}(z_1)A^{(d)}\Im G_\lambda^{(d)}(z_2)\rangle \\
&=\langle G_\lambda^{(d)}(z_1)A^{(d)}G_\lambda^{(d)}(z_2)\rangle-\langle M_\lambda^{(d)}(z_1,A^{(d)},z_2)\rangle-2\ii\langle G_\lambda^{(d)}(z_1)A^{(d)}\Im G_\lambda^{(d)}(z_2)\rangle,
\end{split}
\end{equation}
where we used that $\langle M_\lambda(z_1,A,z_2)\rangle= \langle M_\lambda^{(d)}(z_1,A^{(d)},z_2) \rangle =0$ since $A$ is $(z_1,z_2)$-regular. The first term in the second line of \eqref{eq:almth} can be controlled using \eqref{eq:metakey2}. Finally, by a Schwarz inequality we estimate
\begin{equation}
\label{eq:lastestmeta}
\big|\langle G_\lambda^{(d)}(z_1)A^{(d)}\Im G_\lambda^{(d)}(z_2)\rangle\big|\le \sqrt{\frac{| \langle \Im G_\lambda^{(d)}(z_1)A^{(d)}\Im G_\lambda^{(d)}(z_2)(A^{(d)})^*\rangle| \, |\langle\Im G_\lambda^{(d)}(z_2)\rangle|}{\eta_1}}\lesssim \sqrt{\frac{\rho_2}{\eta_1}}\mathfrak{s}_2,
\end{equation}
where in the last inequality we used that $\langle\Im G_\lambda^{(d)}(z_2)\rangle| \lesssim \rho_2$ and applied a meta-argument, analogous to \eqref{eq:metakey2}, reading 
\begin{equation} \label{eq:metaend}
\lim\limits_{d \to \infty}\left[\langle \widehat{M}_\lambda^{(d)}(z_1, A^{(d)}, z_2) (A^{(d)})^* \rangle  -  \E_{\mathrm{GUE}}\langle \Im G_\lambda^{(d)}(z_1)A^{(d)}\Im G_\lambda^{(d)}(z_2)A^*\rangle  \right] = 0
\end{equation}
together with the fifth bound in \eqref{eq:Mbounds}. Combining \eqref{eq:metakey1}--\eqref{eq:metaend}, we conclude the first bound in \eqref{eq:Mbounds} and hence the proof of Lemma \ref{lem:m12bounds}. \qed

\subsection{Proof of Lemma \ref{lem:invstab}} The identity in \eqref{eq:invstab} can easily be checked involving the definition of the stability operator in \eqref{eq:stabop}. The lower bound in \eqref{eq:stabbound} follows by writing (we drop the subscript $\lambda$ and the argument $z$, and write $\eta = |\Im z|$)
\begin{equation}
	\begin{split}
|1 - \lambda^2 \langle M^2 \rangle| &= |1 - \lambda^2 \langle MM^* \rangle - 2 \ii \lambda^2 \langle M\Im M \rangle| \\
&\ge 1 - \lambda^2 \langle MM^* \rangle + 2 \lambda^2 \langle (\Im M)^2 \rangle \gtrsim \big(\eta/(\lambda^2 \rho) \wedge 1\big) + \lambda^2 \rho^2 \,. 
	\end{split}
\end{equation}
Here, in the first step, we wrote $M = M^*+ 2 \ii \Im M$. To go to the second line, we dropped the imaginary part of $2 \ii \langle M \Im M \rangle$ and used that, as a simple consequence of the MDE \eqref{eq:MDE}, $1 - \lambda^2 \langle MM^* \rangle \ge 0$ and the positivity of $(\Im M)^2$. The last step follows since $1 - \lambda^2 \langle MM^* \rangle = \eta /(\eta + \lambda^2 |\langle \Im M \rangle| )$ and by estimating $\langle (\Im M)^2 \rangle \ge |\langle \Im M \rangle|^2$. This concludes the proof of Lemma \ref{lem:invstab}. \qed 

\subsection{Modifications in the proof of the two-resolvent bound for $\theta(z_1, z_2) = 0$} \label{subsec:regime2} 

In this section, we briefly explain, how the proof of the two resolvent bound in Theorem \ref{thm:main2G} has to be modified for the case $\theta(z_1, z_2) = 0$, complementary to the case $\theta(z_1, z_2) = 1$, which was treated in the main part of this paper (see Section \ref{subsec:proof2G}).
Completely analogously to \eqref{eq:theta=1domain}, for $\mathcal{D} \subset \C \setminus \R$, we introduce the notation
\begin{equation*}
(\mathcal{D})^{\times 2}_{\theta = 0} := \left\{(z_1, z_2) \in \mathcal{D}^{ 2} \, : \, \theta(z_1, z_2) = 0 \right\} \,. 
\end{equation*}

In the following, we only outline the changes in the key steps of the argument, compared to the proof in Sections \ref{subsec:proof2G} and \ref{sec:2GLaw}, and leave some details to the reader. In fact, the main difference is that now \emph{every} observable $A$ is regular according to Definition~\ref{def:regobs}, and we shall hence drop the $\mathring{A}$ notation entirely. We recall that, for $\theta(z_1, z_2) = 0$, the fundamental control parameter $\s_2$ is given by
\begin{equation} \label{eq:s2theta0}
\big(\s_2(z_1, z_2;A)\big)^2 = \langle \Gamma_1 A \Gamma_2 A^*\rangle + \lambda^4 \max_* \big| \langle M_1^{(*)} A M_2^{(*)} \rangle \big|^2 \langle \Gamma_1 \Gamma_2 \rangle
\end{equation}

As the first modification of the overall proof, the analog of Lemma \ref{lem:m12bounds} reads as follows. 
\begin{lemma}
	\label{lem:m12boundstheta0}
	Fix $(z_1, z_2) \in (\C\setminus\R)_{\theta = 0}^{\times 2}$  and adopt the notations from Lemma \ref{lem:m12bounds}. Let $A  \in \C^{N\times N}$. Then it holds that, for $\s_2 = \s_2(z_1, z_2;A)$, 
	\begin{equation}
		\label{eq:Mboundstheta0}
		\begin{split}
			\big|\langle M_\lambda(z_1,A,{z_2})\rangle\big|&\lesssim \big| \langle M_\lambda(z_1)A M_\lambda(z_2) \rangle\big|, \\
			\big|\langle M_\lambda(z_1,A,z_2) - M_\lambda(\overline{z}_1, A, z_2)\rangle\big|&\lesssim \sqrt{\frac{\rho_1}{\eta_2}}\sqrt{\rho_1\rho_2} \mathfrak{s}_2, \\
			\big|\langle M_\lambda({z_1},A,{z_2}) - M_\lambda(z_1, A, \overline{z}_2)\rangle\big|&\lesssim \sqrt{\frac{\rho_2}{\eta_1}}\sqrt{\rho_1\rho_2} \mathfrak{s}_2 \,. 
		\end{split}
	\end{equation}
	Moreover, for $A,B \in \C^{N \times N}$, it holds that
	\begin{equation}	\label{eq:Mboundstheta02}
	\max_*	\big|\langle \widehat{M}_\lambda(z_1^{(*)},A,z_2^{(*)})B\rangle\big|\lesssim {\rho_1\rho_2} \, \s_2(z_1, z_2;A) \,  \s_2(z_2, z_1;B) 
	\end{equation}
	and we additionally have 
	\begin{equation} \label{eq:s2Id}
\big(\s_2(z_1, z_2; \mathbf{1}) \big)^2 \lesssim \langle \Gamma_1 \Gamma_2 \rangle \lesssim \frac{1}{\rho_1 \eta_1} \wedge \frac{1}{\rho_2 \eta_2}
	\end{equation}
\end{lemma}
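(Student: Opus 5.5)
The plan is to work directly from the explicit formula \eqref{eq:m12},
\[
\langle M_\lambda(z_1,A,z_2)\rangle=\frac{\langle M_1AM_2\rangle}{1-\lambda^2\langle M_1M_2\rangle},
\]
which follows by taking the trace of \eqref{eq:m12}. In the regime $\theta(z_1,z_2)=0$ we have $\lambda^2\Re\langle M(z_1^+)M(z_2^-)\rangle<1/2$, and the first step is to show that this keeps all four denominators $\beta_{1^{(*)}2^{(*)}}=1-\lambda^2\langle M_1^{(*)}M_2^{(*)}\rangle$ bounded away from zero. For the "mixed" pair $(z_1^+,z_2^-)$ this is immediate, since $|\beta|\ge 1-\lambda^2\Re\langle M_1M_2^*\rangle\ge1/2$. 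For the "same-side" pair $(z_1,z_2)$ with both spectral parameters in the upper half plane, I would use the analogue of the argument in Lemma~\ref{lem:invstab}. Write $M_2=M_2^*+2\ii\Im M_2$. Then $\Re\langle M_1M_2\rangle=\Re\langle M_1M_2^*\rangle-2\Im\langle M_1\Im M_2\rangle$, and $\Im\langle M_1\Im M_2\rangle\ge0$ up to a cross term. This bound on the same-side denominator is the delicate point.

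The safer route for the same-side pair is to use the Cauchy--Schwarz bound \eqref{eq:beta12bound}. It gives $|\beta_{12}|\gtrsim 1-\lambda^2\langle|M_1|^2\rangle^{1/2}\langle|M_2|^2\rangle^{1/2}$, and this can be small only when both $\eta_i\ll\lambda^2\rho_i$. In that case $\langle M_1M_2^*\rangle$ is close to $\langle|M_1|^2\rangle^{1/2}\langle|M_2|^2\rangle^{1/2}\approx\lambda^{-2}$ precisely when $M_1\approx M_2$ in Hilbert--Schmidt sense, which would force $\theta=1$. I would make this quantitative through the identity $\langle |M_1-M_2|^2\rangle=\langle|M_1|^2\rangle+\langle|M_2|^2\rangle-2\Re\langle M_1M_2^*\rangle$, which together with $\theta=0$ gives $\langle|M_1-M_2|^2\rangle\gtrsim\lambda^{-2}$. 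The aim is then a lower bound on $|1-\lambda^2\langle M_1M_2\rangle|$. I expect this step to be the main obstacle, because it requires relating $\langle M_1M_2\rangle$ to $\langle M_1M_2^*\rangle$ using the resolvent identity $M_1-M_2=M_1(z_1-z_2+\lambda^2\langle M_1-M_2\rangle)M_2$. Once $|\beta_{1^{(*)}2^{(*)}}|\gtrsim1$ is in hand, the first bound in \eqref{eq:Mboundstheta0} follows at once.

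For the second and third bounds I would subtract the two formulas. The resolvent identity $M(z_2)-M(\bar z_2)=2\ii\Im M_2$ gives, schematically,
\[
\langle M_1AM_2\rangle/\beta_{12}-\langle M_1AM_2^*\rangle/\beta_{12^*},
\]
which splits into $2\ii\langle M_1A\Im M_2\rangle/\beta_{12}$ plus $\langle M_1AM_2^*\rangle\cdot 2\ii\lambda^2\langle M_1\Im M_2\rangle/(\beta_{12}\beta_{12^*})$. I would estimate the first piece by Cauchy--Schwarz with $M_1M_1^*\le\Im M_1/\eta_1$, which yields $\eta_1^{-1/2}\langle\Im M_1A\Im M_2A^*\rangle^{1/2}\langle\Im M_2\rangle^{1/2}$. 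This is of size $\sqrt{\rho_2/\eta_1}\sqrt{\rho_1\rho_2}\,\langle\Gamma_1A\Gamma_2A^*\rangle^{1/2}$. For the second piece I would use $\lambda^2|\langle M_1\Im M_2\rangle|\le\lambda^2\langle|M_1|^2\rangle^{1/2}\langle(\Im M_2)^2\rangle^{1/2}$ together with $\langle \Im M_1 \Im M_2\rangle$ written in terms of $\langle\Gamma_1\Gamma_2\rangle$, which reproduces the second summand of \eqref{eq:s2theta0}. The roles of the indices are swapped for the other bound.

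Finally, \eqref{eq:Mboundstheta02} follows by expanding $\widehat M$ via \eqref{eq:hatMfull} into terms $\langle \Im M_1 A\Im M_2 B\rangle$ and products $\lambda^2\langle M_1AM_2\rangle\langle M_1M_2B\rangle/\beta$, each bounded by Cauchy--Schwarz against $\s_2(z_1,z_2;A)\s_2(z_2,z_1;B)$. For \eqref{eq:s2Id}, take $A=\mathbf1$: the first summand is $\langle\Gamma_1\Gamma_2\rangle$, and $\lambda^4|\langle M_1M_2\rangle|^2\lesssim1$ in this regime. The bound then follows from $\langle\Gamma_1\Gamma_2\rangle\le\|\Gamma_1\|\langle\Gamma_2\rangle\lesssim(\lambda^2\rho_1)^{-1}\rho_1^{-1}$, combined with $\|\Im M\|\le\eta^{-1}$-type estimates to get $1/(\rho_1\eta_1)$, and symmetrically in the indices.
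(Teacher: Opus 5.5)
Your direct computation at the level of $M$ is a legitimate alternative to the paper's route, which gets the second and third bounds by transferring resolvent-level Schwarz inequalities to $M$ through the tensorization meta-argument of Lemma~\ref{lem:m12bounds}. However, your argument for \eqref{eq:Mboundstheta02} fails. The four products $\lambda^2\langle M_1^{(*)}AM_2^{(*)}\rangle\langle M_1^{(*)}M_2^{(*)}B\rangle/\beta_{1^{(*)}2^{(*)}}$ coming from \eqref{eq:hatMfull} cannot be bounded one at a time. The $\lambda^4|\langle M_1^{(*)}AM_2^{(*)}\rangle|^2\langle\Gamma_1\Gamma_2\rangle$ parts of $\s_2$ only control $\lambda^4|a||b|\langle\Im M_1\Im M_2\rangle$, which can be far smaller than $\lambda^2|a||b|$.

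A concrete counterexample: take $H_0=0$, $\lambda=1$, $z_1=2+\ii\eta$, $z_2=-2+\ii\eta$. Then $m_1\approx-1$, $m_2\approx 1$, and $\Re m_1\overline{m_2}\approx-1$, so $\theta=0$. With $A=B=\mathbf{1}$, each product is $\approx (m_1m_2)^2/(1-m_1m_2)\approx 1/2$. The right-hand side of \eqref{eq:Mboundstheta02}, however, is $\approx 2\rho_1\rho_2\sim\eta$.

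The missing idea is the cancellation among the four terms. Write the combination as a mixed second difference $\sum_{\sigma,\tau=\pm}\sigma\tau\, a_{\sigma\tau}b_{\sigma\tau}/\beta_{\sigma\tau}$ over adjoint choices, where $M_i^+=M_i$, $M_i^-=M_i^*$, $a_{\sigma\tau}=\langle M_1^\sigma AM_2^\tau\rangle$ and $b_{\sigma\tau}=\langle M_2^\tau BM_1^\sigma\rangle$. Expand it by the discrete Leibniz rule, so that every term carries one difference in each slot. This produces factors such as $\langle \Im M_1AM_2^{(*)}\rangle$, $\langle M_1^{(*)}A\Im M_2\rangle$, $\langle\Im M_1A\Im M_2\rangle$, $\lambda^2\langle\Im M_1M_2^{(*)}\rangle$ and $\lambda^2\langle\Im M_1\Im M_2\rangle$. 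Bound each by Cauchy--Schwarz using $|\beta_{\sigma\tau}|\ge 1/2$ and the exact identity $M_iM_i^*=\Im M_i/(\eta_i+\pi\lambda^2\rho_i)$. The sharp denominator is essential. With only $M_iM_i^*\le \Im M_i/\eta_i$, cross terms such as $\lambda^2\langle\Im M_1AM_2\rangle\langle\Im M_2BM_1\rangle$ pick up the unbounded factor $\lambda^2\sqrt{\rho_1\rho_2/(\eta_1\eta_2)}$. With $\eta_i+\pi\lambda^2\rho_i$ that factor is at most $1/\pi$.

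Two smaller repairs are needed elsewhere; below, $M_i=M(z_i^+)$.

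First, the same-side denominator, which you left as the main obstacle, is immediate and has no cross term. By positivity, $\Im\langle M_1\Im M_2\rangle=\langle\Im M_1\Im M_2\rangle\ge 0$. Hence $|1-\lambda^2\langle M_1M_2\rangle|\ge 1-\lambda^2\Re\langle M_1M_2^*\rangle+2\lambda^2\langle\Im M_1\Im M_2\rangle>1/2$, exactly as in the proof of Lemma~\ref{lem:invstab}. The other adjoint constellations follow by complex conjugation, so the unfinished detour through $\langle|M_1-M_2|^2\rangle$ is unnecessary.

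Second, in the second and third bounds, your estimate $|\langle M_1\Im M_2\rangle|\le\langle|M_1|^2\rangle^{1/2}\langle(\Im M_2)^2\rangle^{1/2}$ decouples $M_1$ from $\Im M_2$. It therefore loses the overlap $\langle\Im M_1\Im M_2\rangle=\pi^2\rho_1\rho_2\langle\Gamma_1\Gamma_2\rangle$, which must pair with $\lambda^2|\langle M_1AM_2^*\rangle|$ to match \eqref{eq:s2theta0}. The bound is too large when $\Im M_1$ and $\Im M_2$ have nearly disjoint support. Instead, reuse your first-piece splitting with $A=\mathbf{1}$: $|\langle M_1\Im M_2\rangle|\le\eta_1^{-1/2}\langle\Im M_1\Im M_2\rangle^{1/2}\langle\Im M_2\rangle^{1/2}$. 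This bounds the second piece by $\sqrt{\rho_2/\eta_1}\sqrt{\rho_1\rho_2}\,\lambda^2|\langle M_1AM_2^*\rangle|\langle\Gamma_1\Gamma_2\rangle^{1/2}\le\sqrt{\rho_2/\eta_1}\sqrt{\rho_1\rho_2}\,\s_2$ up to constants.

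With these changes, the first bound in \eqref{eq:Mboundstheta0} and the bound \eqref{eq:s2Id} go through as you describe.
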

The proof of Lemma \ref{lem:m12boundstheta0} is completely analogous to the proof of Lemma \ref{lem:m12bounds} and so omitted. We only point out the key differences to Lemma \ref{lem:m12bounds}: 
 First, since $\theta=0$, we have $|1 - \lambda^2 \langle M_\lambda(z_1) M_\lambda(z_2)\rangle| \ge 1/2$,
hence the denominator in the definition of $M_\lambda(z_1, A, z_2)$ in~\eqref{eq:m12} is harmless. 
This gives the first bound in \eqref{eq:Mboundstheta0}.
Second,  since every observable is regular, we do not have any orthogonality relation of the form $\langle MAM\rangle = 0$.
In fact, the second and third bound in \eqref{eq:Mboundstheta0} directly correspond to the first and second bound in \eqref{eq:Mbounds}, just in the latter regular case one of the two $\langle M_\lambda\rangle$ terms is zero
since $\langle M_\lambda(z_1^+, A, z_2^-)\rangle =  \langle M_\lambda (z_1^+) A M_\lambda(z_2^-) \rangle = 0$
by regularity. Thus in the regular case, in Lemma \ref{lem:m12bounds},
$M$-terms of “mixed”-type objects $\langle GA \Im G  \rangle$ are
expressed in terms of $M$-terms of $\langle GAG\rangle$ with appropriately chosen combination of $\pm$ on $z_1, z_2$, see~\eqref{eq:Mbounds}. Next,  \eqref{eq:Mboundstheta02} is the direct analogue of the last
estimate in \eqref{eq:Mbounds} for the $M$-term of a
$\langle \Im GA \Im G B \rangle$-type object (here we also need it for two different observables $A$ and  $B$).
Finally, the last estimate~\eqref{eq:s2Id} does not have an analogue in Lemma \ref{lem:m12bounds},
but it directly follows from the definition of $\s_2(z_1, z_2; \mathbf{1})$ in \eqref{eq:defs2}
with a harmless denominator. In the last step we used the trivial $\| \Gamma \| \le 1/\rho\eta$ bound
following from~\eqref{eq:boundM} and \eqref{eq:Gammadef}.
As we will see in the proof of the local law (around  \eqref{eq:modification} below), this
additional bound~\eqref{eq:s2Id} is needed to compensate for the lack of an $\s_2$ factor in
the first bound in~\eqref{eq:Mboundstheta0}.

As the second modification of the overall proof, the two resolvent local laws in the regime $\theta(z_1, z_2) = 0$, as an analog of Theorem \ref{thm:2GLL}, read as follows. 
\begin{theorem}[Two resolvent local laws for $\theta = 0$] \label{thm:2GLLtheta0}
	Adopt the notations and assumptions of Theorem~\ref{thm:2GLL}, but suppose that $(z_1, z_2) \in \big(\mathcal{D}^{\rm abv}\big)^{\times 2}_{\theta = 0}$. Then it holds that
	\begin{equation} \label{eq:2G1Atheta0}
		\left| \big\langle G_1^{(*)} A G_2^{(*)} \big\rangle - \big\langle M_\lambda\big(z_1^{(*)},  A ,z_2^{(*)}\big) \big\rangle \right| \prec \sqrt{\frac{\rho_1 \rho_2}{N \eta_1 \eta_2}} \s_2(z_1, z_2; A)
	\end{equation}
	and
	\begin{equation} \label{eq:2G2Atheta0}
		\left| \big\langle \Im G_1 A \Im G_2 B \big\rangle - \big\langle \widehat{M}_\lambda(z_1,  A ,z_2)B \big\rangle \right| \prec \frac{\rho_1 \rho_2}{\sqrt{N \ell}} \,  \s_2(z_1, z_2; A) \, \s_2(z_2, z_1;B)
	\end{equation}
all	uniformly in deterministic $A, B  \in \C^{N \times N}$. 
\end{theorem}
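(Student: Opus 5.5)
The plan is to run the same zigzag induction as in Section~\ref{subsec:proof2G}: a global law, a zig step along the characteristic flow, and a zag step via Green function comparison. The inputs are Lemma~\ref{lem:m12boundstheta0} in place of Lemma~\ref{lem:m12bounds}, and the time-dependent control parameter is replaced by the $\theta=0$ version \eqref{eq:s2theta0}, with $M_{i,t}$ in place of $M_i$. Since $\theta(z_1,z_2)=0$ means $\lambda^2\Re\langle M_1M_2\rangle<1/2$, we have $|\beta_{12}|\ge 1/2$. Because $M_{i,t}=\ee^{(t-1)/2}M_{i}$, this bound persists along the characteristics, and so does $|\beta_{1^{(*)}2^{(*)}}|\gtrsim 1$ for all adjoint constellations (by $|\beta|\ge 1-\lambda^2\langle|M_1|^2\rangle^{1/2}\langle|M_2|^2\rangle^{1/2}$ in the mixed cases). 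Consequently the factor $(1+\lambda^2\langle M^I_{12,t}\rangle)=\beta_{12,t}^{-1}$ in \eqref{eq:GAG} is $O(1)$. The Gronwall step then becomes trivial: no $\sqrt{\eta_{1,s}\eta_{2,s}/(\eta_{1,t}\eta_{2,t})}$ growth has to be absorbed. Also $\s_2(s)\lesssim\s_2(t)$ for $s\le t$ follows directly from monotonicity of $\Gamma$'s and the $\ee^{s-1}$ scaling.

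The global law is proven via master inequalities as in Proposition~\ref{prop:master}. The one change concerns \eqref{eq:2G2Atheta0}, where $A\neq B$ must be handled: the reduction and Schwarz inequalities of Lemma~\ref{lem:Schwarz} are applied with $\langle\Im G_1A\Im G_2A^*\rangle^{1/2}\langle\Im G_2B^*\Im G_1B\rangle^{1/2}$. This produces the product $\s_2(z_1,z_2;A)\s_2(z_2,z_1;B)$, and \eqref{eq:Mboundstheta02} supplies the matching deterministic bound. One could alternatively reduce to $A=B$ by polarization, but this requires care since $\s_2$ is not additive.

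The main obstacle is the loss of regularity. Terms in \eqref{eq:GAG} and \eqref{eq:imGAimGA} containing $\lambda^2\langle M^A_{12,t}\rangle\langle G_1G_2-M^I_{12}\rangle$ are now bounded only by the first line of \eqref{eq:Mboundstheta0}, namely $|\langle M_1AM_2\rangle|$, which carries no $\sqrt{\rho/\eta}$ factor and no $\s_2$. To handle them, I plan to note that $\langle G_1G_2-M^I_{12}\rangle$ is itself a two-resolvent quantity with observable $\mathbf 1$. Its fluctuation is $\prec\sqrt{\rho_1\rho_2/(N\eta_1\eta_2)}\,\s_2(z_1,z_2;\mathbf1)$ (obtained by the same induction, or via \eqref{eq:boundcauch}). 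Then \eqref{eq:s2Id} gives $\s_2(\mathbf1)\lesssim\langle\Gamma_1\Gamma_2\rangle^{1/2}$. The product $\lambda^2|\langle M_1AM_2\rangle|\langle\Gamma_1\Gamma_2\rangle^{1/2}$ is exactly the square root of the second summand in \eqref{eq:s2theta0}, so it is $\le\s_2(A)$. The remaining time integral $\int\lambda^2\rho/\eta\,\dd s$ costs only $\log N$. This is where \eqref{eq:s2Id} and the extra summand in $\s_2$ are used. I would first verify that this bookkeeping closes in the zig step, where it is most delicate.

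In the zag step, the cumulant expansions of Sections~\ref{subsubsec:zagav2G}--\ref{subsubsec:zagav2G1A} go through verbatim, because they use only positivity of $\Im G$, Ward identities and single-resolvent laws, never regularity. The only change is that in the mixed $A$/$B$ chains the Schwarz splitting is done asymmetrically, giving again products of the two $\s_2$'s. Finally, the induction combines the global law, zig and zag steps exactly as in the proof of Theorem~\ref{thm:2GLL}.
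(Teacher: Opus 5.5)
Your route is the paper's. You rerun the global law, zig and zag steps with Lemma~\ref{lem:m12boundstheta0} as input, use $|\beta|\gtrsim 1$ to make the Gronwall factor harmless, and compensate the missing $\s_2$ in $\langle M_\lambda(z_1,A,z_2)\rangle$ via $\lambda^2|\langle M_1AM_2\rangle|\,\s_2(z_1,z_2;\mathbf 1)\lesssim\lambda^2|\langle M_1AM_2\rangle|\,\langle\Gamma_1\Gamma_2\rangle^{1/2}\le\s_2(z_1,z_2;A)$. This is exactly \eqref{eq:modification}. Your treatment of $\lambda^2\langle M^A_{12,t}\rangle\langle G_{1,t}G_{2,t}-M^I_{12,t}\rangle$ in \eqref{eq:GAG} through the $A=\mathbf 1$ instance of the induction is correct, since $\mathbf 1$ is regular when $\theta=0$. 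It also spells out a term the paper's sketch does not single out. However, you misplace the compensation in the other flow, and some auxiliary steps are wrong as written.

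The main gap: \eqref{eq:imGAimGA} contains no factor $\langle G_1G_2-M^I_{12}\rangle$. The terms there lacking an $\s_2$ are the fourth-line products $\lambda^2\langle G_2^*A^*G_1\rangle\langle\Im G_1A\Im G_2\rangle$ and their mirror. Their leading part $\lambda^2\langle M_\lambda(\overline{z_2},A^*,z_1)\rangle\langle\widehat M_\lambda(z_1,A,z_2)\rangle$ is purely deterministic, so no fluctuation bound helps. What you need is \eqref{eq:Mboundstheta02} with $B=\mathbf 1$, namely $|\langle\widehat M_\lambda(z_1,A,z_2)\rangle|\lesssim\rho_1\rho_2\,\s_2(z_1,z_2;A)\,\s_2(z_2,z_1;\mathbf 1)$, plus the $B=\mathbf 1$ case of \eqref{eq:2G2Atheta0} for the fluctuation part. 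Only then does your compensation give $\rho_1\rho_2\,\s_2(z_1,z_2;A)^2$ in \eqref{eq:diffllaw}, which is precisely the paper's argument. Without it, the bound on $\langle\Im G_1A\Im G_2A^*\rangle$, which feeds every quadratic-variation estimate, does not close.

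Further corrections:

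(i) $\theta(z_1,z_2)$ is defined through $M(z_1^+)M(z_2^-)=M_1M_2^*$, so it directly controls $\beta_{12^*},\beta_{1^*2}$. For the same-sign constellations, the Schwarz bound $1-\lambda^2\langle|M_1|^2\rangle^{1/2}\langle|M_2|^2\rangle^{1/2}$ you invoke is the one behind \eqref{eq:beta12bound}. In the local regime it is only of order $\eta_1/(\lambda^2\rho_1)+\eta_2/(\lambda^2\rho_2)$. Use instead $\Re\langle M_1M_2\rangle=\Re\langle M_1M_2^*\rangle-2\langle\Im M_1\Im M_2\rangle\le\Re\langle M_1M_2^*\rangle$. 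This gives $\Re\beta>1/2$ in all four constellations, also along the characteristics.

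(ii) The alternative ``via \eqref{eq:boundcauch}'' does not give your claimed bound, since $N^\xi/(N\eta_1\eta_2)$ carries no factor $\s_2(z_1,z_2;\mathbf 1)$. Take $A$ with $\lambda^2|\langle M_1AM_2\rangle|\sim\s_2(A)$, e.g.\ $A=(M_2M_1)^{-1}$ at well-separated local energies. Then it overshoots the target by a factor of order $(\sqrt N\lambda^2\rho^2)^{-1}$, which is large for $\lambda\ll N^{-1/4}$, $\rho\sim1$. Only the induction route works.

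(iii) In the global step, $A\neq B$ is not the only change. The underline estimates that used \eqref{eq:Mbound} (e.g.\ \eqref{eq:cont2}, \eqref{eq:under12}, \eqref{eq:under22}, \eqref{eq:under32}) rest on regularity and must be redone.

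(iv) The Schwarz partner matching $\s_2(z_2,z_1;B)$ is $\langle\Im G_2B\Im G_1B^*\rangle$, not $\langle\Im G_2B^*\Im G_1B\rangle$. Polarization is in fact harmless: $\s_2$ is a seminorm in the observable and $\s_2(z_1,z_2;B^*)=\s_2(z_2,z_1;B)$.
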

Exactly as in Section \ref{subsec:proof2G}, the estimate \eqref{eq:2G2Atheta0} for $B = A^*$ immediately yields Theorem \ref{thm:main2G} for $\theta(z_1, z_2) = 0$.

Analogously to Theorem \ref{thm:2GLL}, also Theorem \ref{thm:2GLLtheta0} is obtained by following the \emph{Zigzag strategy}, whose key steps are discussed in Section \ref{subsubsec:zigzagproof}. As now $\theta(z_1, z_2) = 0$ (and hence there is no denominator term from \eqref{eq:defs2}, recall \eqref{eq:s2theta0}), the "time-dependent" version of the basic control parameter $\s_2$ (cf.~\eqref{eq:s2time}) is simply given by the time-independent $\s_2(z_1, z_2;A)$, i.e., for all times $t \in [0,1]$, 
\begin{equation*}
\big(\s_2(t)\big)^2 := \big(\s_2(z_1, z_2;A)\big)^2 \,. 
\end{equation*}

With this definition, all the main ingredients of the zigzag induction, formulated in Propositions~\ref{prop:global2},~\ref{prop:zigstep2}, and \ref{prop:zagstep2}, hold, when naturally adapted to the modified \eqref{eq:2G2Atheta0} (compared to \eqref{eq:2G2A}). Moreover, their proofs follow along the same lines as the arguments given in Sections \ref{subsec:global2}, \ref{subsec:zigstep2}, and \ref{subsec:zagstep2}, respectively. 
 The only somewhat nontrivial difference is that 
$\langle M_\lambda(z_1, A, z_2)\rangle$ in Lemma  \ref{lem:m12boundstheta0} is 
not estimated in terms of the basic parameter $\s_2$, therefore its power counting seems to be off.
A careful inspection of the leading $M$-terms of the quantities  in~\eqref{eq:GAG} and \eqref{eq:imGAimGA} shows 
that $\langle M_\lambda\rangle$ never stands alone, it comes in combination with a $\langle \widehat M_\lambda\rangle$ 
term. Specifically, after decomposing the terms in the fourth line of \eqref{eq:imGAimGA}, schematically, according to 
$G = M + (G- M)$, we encounter leading terms of the form $\lambda^2 \langle M_\lambda\rangle \langle \widehat M_\lambda\rangle$. Using the bounds from Lemma \ref{lem:m12boundstheta0}, we easily obtain
the following combined "compensated" estimate
\begin{equation} \label{eq:modification}
	\begin{split}
		&\max_* \lambda^2 			\big|\langle M_\lambda(z_1^{(*)},A,{z_2^{(*)}})\rangle\big| \, \big|\langle \widehat{M}_\lambda(z_1^{(*)},A,z_2^{(*)})\rangle\big|  \\ 
		\lesssim & \ \rho_1   \rho_2 \, \s_2(z_1, z_2;A) \s_2(z_1, z_2, \mathbf{1}) \max_* \lambda^2 			\big|\langle M_\lambda(z_1^{(*)},A,{z_2^{(*)}})\rangle\big|  \lesssim  \,\rho_1 \rho_2 \, \big( \s_2(z_1, z_2;A)\big)^2  \,.
	\end{split}
\end{equation}
We argue similarly for the $\lambda^2 \langle M_\lambda\rangle \langle \Im G A \Im G- \widehat{M}_\lambda \rangle$ term (that arises from the fourth line of \eqref{eq:imGAimGA} as well), involving the bound \eqref{eq:2G2Atheta0}.
After this detour, the necessary  $\s_2(z_1, z_2;A)$ factors are regained. The rest of the proof is analogous to the  
treatment in the $\theta(z_1,z_2) = 1$ case, just in \eqref{eq:diffllaw}, we now use
the "compensated" bound \eqref{eq:modification}. 

This completes our discussion of the modifications for the two-resolvent bound in the case $\theta(z_1, z_2) = 0$. We leave further details to the reader.

\printbibliography

\end{document}